\long\def\comment#1{} \oddsidemargin +0.2in \evensidemargin +0.2in
\long\def\comment#1{}
\newtheorem{algorithm}{Algorithm}
\newtheorem{theorem}{Theorem}
\newtheorem{corollary}{Corollary}
\newtheorem{lemma}{Lemma}
\theoremstyle{definition}
\newtheorem{definition}{Definition}
\numberwithin{definition}{section}
\numberwithin{theorem}{section}
\numberwithin{lemma}{section}
\numberwithin{proposition}{section}
\numberwithin{corollary}{section}
\numberwithin{algorithm}{section}
\newtheorem{design}{Design}
\newtheorem{step}{Step}
\newtheorem{remark}{Remark}[section]
\numberwithin{remark}{section}
\newtheorem{example}{Example}
\newcommand{\citen}{\citeasnoun}
\newcommand{\eps}{\varepsilon}
\newcommand{\be}{\begin{eqnarray}}
\newcommand{\ee}{\end{eqnarray}}
\newcommand{\Ss}{\mathcal{S}}
\newcommand{\DD}{\mathcal{D}}
\newcommand{\II}{\mathcal{I}}
\newcommand{\JJ}{\mathcal{J}}
\newcommand{\XX}{\mathcal{X}}
\newcommand{\VV}{\mathcal{V}}
\newcommand{\UU}{\mathcal{U}}
\newcommand{\MM}{\mathcal{M}}
\newcommand{\KK}{\mathcal{K}}
\newcommand{\Vol}{\mathrm{Vol}}
\newcommand{\ba}{\begin{array}}
\newcommand{\ea}{\end{array}}
\newcommand{\bs}{\begin{align}\begin{split}\nonumber}
\newcommand{\bsnumber}{\begin{align}\begin{split}}
\newcommand{\es}{\end{split}\end{align}}
\renewcommand{\[}{\left[}
\renewcommand{\]}{\right]}
\renewcommand{\hat}{\widehat}
\newcommand{\C}{\mathcal{C}}
\newcommand{\X}{\mathcal{X}}
\newcommand{\TT}{\mathcal{T}}
\newcommand{\Ep}{{\mathrm{E}}}
\newcommand{\Q}{{\mathrm{Q}}}
\newcommand{\transp}{^\mathsf{T}}
\renewcommand{\Pr}{{\mathrm{P}}}
\def\RR{ {\Bbb{R}}}
\renewcommand{\hat}{\widehat}
\renewcommand{\leq}{\leqslant}
\renewcommand{\geq}{\geqslant}
\newcommand{\Keywords}[1]{\par\noindent{\small{\em Keywords\/}: #1}}
\title[The Sorted Effects Method]{The Sorted Effects Method: Discovering Heterogeneous Effects Beyond Their Averages}
\author{Victor Chernozhukov, Ivan Fernandez-Val, Ye Luo}
\date{\today}
\thanks{MIT, BU and University of Florida. First version of October, 2011. We thank the coeditor, three anonymous referees, Alberto Abadie, Daron Acemoglu, Isaiah Andrews, Joshua Angrist, Debopam Bhattacharya, Jaroslav Borovicka, Kasey Buckles, Denis Chetverikov, Jerry Hausman, Tetsuya Kaji, Pat Kline, Roger Koenker, Siyi Luo, Maddie McKelway,  Anna Mikusheva,
Whitney Newey, Claudia Olivetti, Andres Santos, Camille Terrier, Frank Windmeijer, Abigail Wozniak,  and seminar participants at  multiple institutions
for
helpful discussions. We gratefully acknowledge research support from the National Science Foundation.}
\begin{document}
\maketitle
\begin{abstract}
{\tiny The  partial (ceteris paribus) effects of interest in nonlinear  and interactive linear models are heterogeneous as they can vary dramatically with the underlying observed or unobserved covariates. Despite the apparent importance of heterogeneity, a common practice in modern empirical work is to largely \textit{ignore it} by reporting  average partial effects (or, at best, average effects for some groups).
While average effects provide very convenient scalar summaries of  typical effects, by definition they fail to reflect the entire variety of the heterogeneous effects.  In order to discover these effects much more fully, we propose to estimate and report \textit{sorted effects} -- a collection of estimated partial effects sorted in increasing order and indexed by percentiles. By construction the sorted effect curves \textit{completely represent} and help visualize the range of the heterogeneous effects in one plot. They are as convenient and easy to report in practice as the conventional average partial effects. They also serve as a basis for \textit{classification analysis}, where we divide the observational units into most or least affected groups and  summarize their characteristics. We  provide a quantification of uncertainty (standard errors and confidence bands) for the estimated sorted effects and related classification analysis, and provide confidence sets for the most and least affected groups.
The  derived statistical results rely on establishing  key, new mathematical results on Hadamard differentiability of a multivariate sorting operator and a related classification operator, which are of independent interest.  \\

We apply the \textit{sorted effects method}  and classification analysis to demonstrate several  striking patterns in the gender wage gap.  We find that this gap is particularly
strong for married  women, ranging  from $-60\%$ to $0\%$ between the $2\%$ and $98\%$ percentiles, as a function
of observed and unobserved characteristics; while the gap for never married women ranges
from $-40\%$ to $+20\%$. The most adversely affected women tend to be married, do not have college degrees,
work in sales, and have high levels of potential experience.
 \\

\Keywords{Sorting, Partial Effect, Marginal Effect,  Sorted Effect, Classification
Analysis, Nonlinear Model, Functional Analysis, Differential Geometry, Gender Wage Gap}}
\end{abstract}

\section{introduction}\label{sec:intro}

In nonlinear and interactive linear models the partial (ceteris paribus) effects  of interest  often vary with respect to the underlying covariates. For example, consider a binary response model with conditional choice
probability  $\Pr(Y = 1 \mid X) = F(X\transp \beta)$, where $Y$ is a binary response variable, $X$ is a vector of covariates, $F$ is a distribution function such as the standard normal or logistic, and $\beta$ is a vector of coefficients. The partial or predictive  effect (PE) of a marginal change in a continuous covariate $X_j$ with coefficient $\beta_j$ on the conditional choice probability is  $$\Delta(X) = f(X\transp \beta) \beta_j, \quad  f(v) = \partial F(v)/\partial v,$$ which generally varies in the population of interest with the covariate vector $X$, as $X$ varies according to some distribution, say $\mu$.   A common empirical practice is to report the average partial effect (APE), $$\Ep [\Delta (X)] = \int \Delta(x) d\mu(x),$$ as a single summary measure of the PE  (e.g., \citen[Chap. 2]{wooldridge:text}),  or to report effects for some groups  (e.g., \citen{angrist2008mostly}). However, the APE completely disregards the heterogeneity of the PE and may give a very incomplete picture of the  impact of the covariates.


In this paper we propose complementing the APE by reporting
the entire set of PEs sorted in increasing order and indexed by a ranking with respect to the distribution of the covariates in the population of interest. These sorted effects  correspond to percentiles of the PE,
\begin{equation*}\label{define:SPE}
\Delta^*_{\mu}(u) = \text{$u^{th}$-quantile  of  }  \Delta(X), \quad X \sim \mu,
\end{equation*}
 and provide a more complete representation of the heterogeneity of $\Delta(X)$.
%
We shall call these effects as sorted predictive or partial effects (SPE) by default, as most models are predictive.\footnote{When the underlying model has a structural or causal
 interpretation, we may use the name sorted structural effects or sorted treatment effects.} We also show how to use the SPEs to carry out  classifications analysis (CA). This analysis consists of classifying the observational  units into  most or least affected  depending on whether their PEs are above or below some tail SPE, and then comparing the moments or distribution of the covariates of the most and least affected groups.


 Heterogeneous effects also arise in the most basic  linear models with interactions \cite{oaxaca73,cox84}. Consider a conditional mean model for the Mincer earnings function:
 $$
 Y =   P(T,W)\transp  \beta + \epsilon,  \quad \Ep[\epsilon \mid T, W] = 0, \quad X = (T, W),
 $$
 where $Y$ is log wage, $T$ is an indicator of gender (or race, treatment, or program participation), and $W$ is a vector of labor market  characteristics. The vector
$P(T,W)$ is a collection of transformations of $T$ and $W$, involving some interaction between $T$ and $W$. For example, \citen{oaxaca73} used the specification  $P(T,W) = (TW, (1-T)W)$. Then, the PE of changing $T=0$ to $T=1$ is
 $$
\Delta(X) =  P(1,W)\transp  \beta -  P(0,W)\transp  \beta,$$
which is a measure of the gender wage gap conditional on worker characteristics.
 The
 function $u \mapsto \Delta^*_\mu(u)$
 provides again a complete
summary of the entire range of PEs. The left panel of Figure \ref{fig:gender} illustrates the SPE of the conditional gender wage gap for women.
\begin{figure}
\centering
\includegraphics[width=.45\textwidth,height=.45\textwidth]{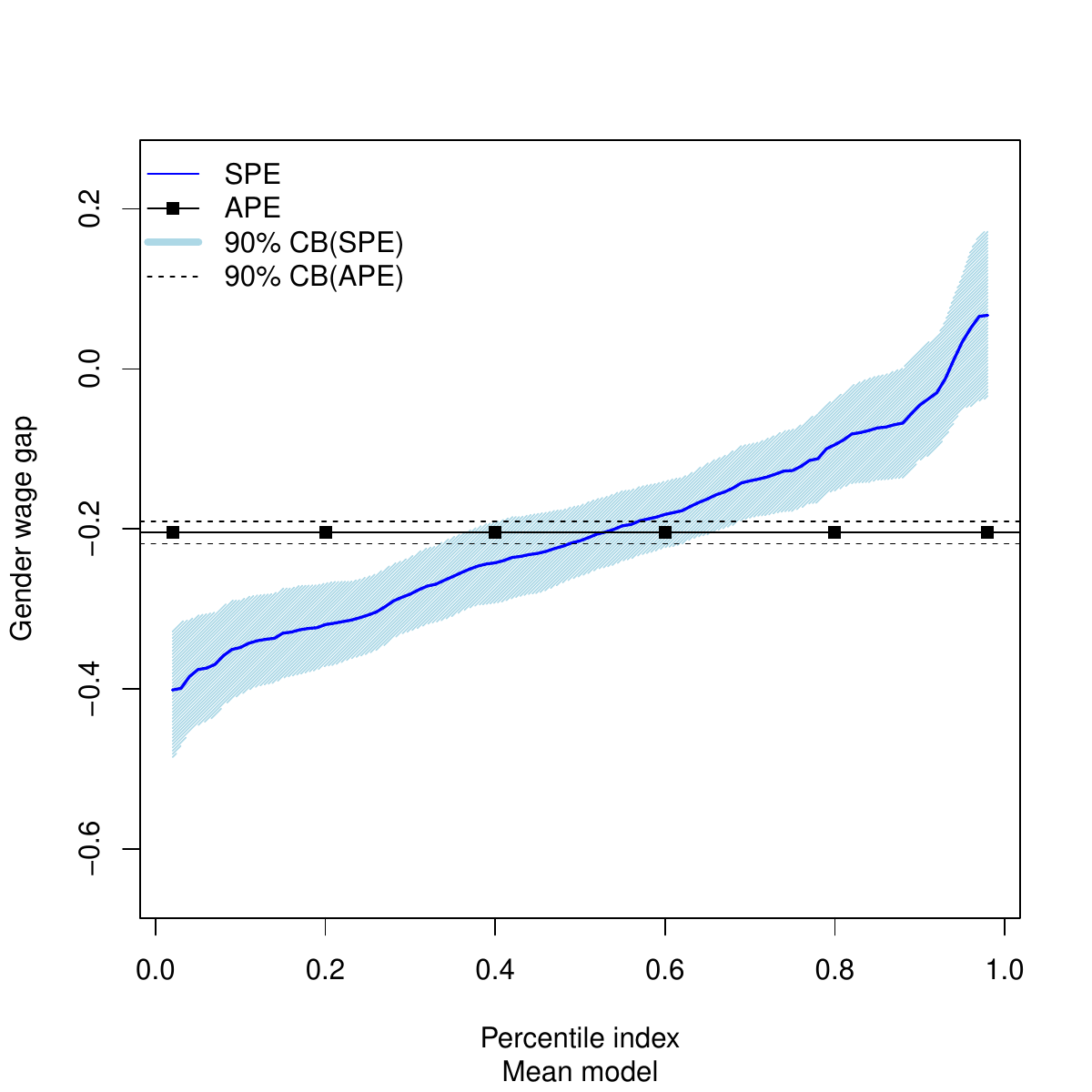}
\includegraphics[width=.45\textwidth,height=.45\textwidth]{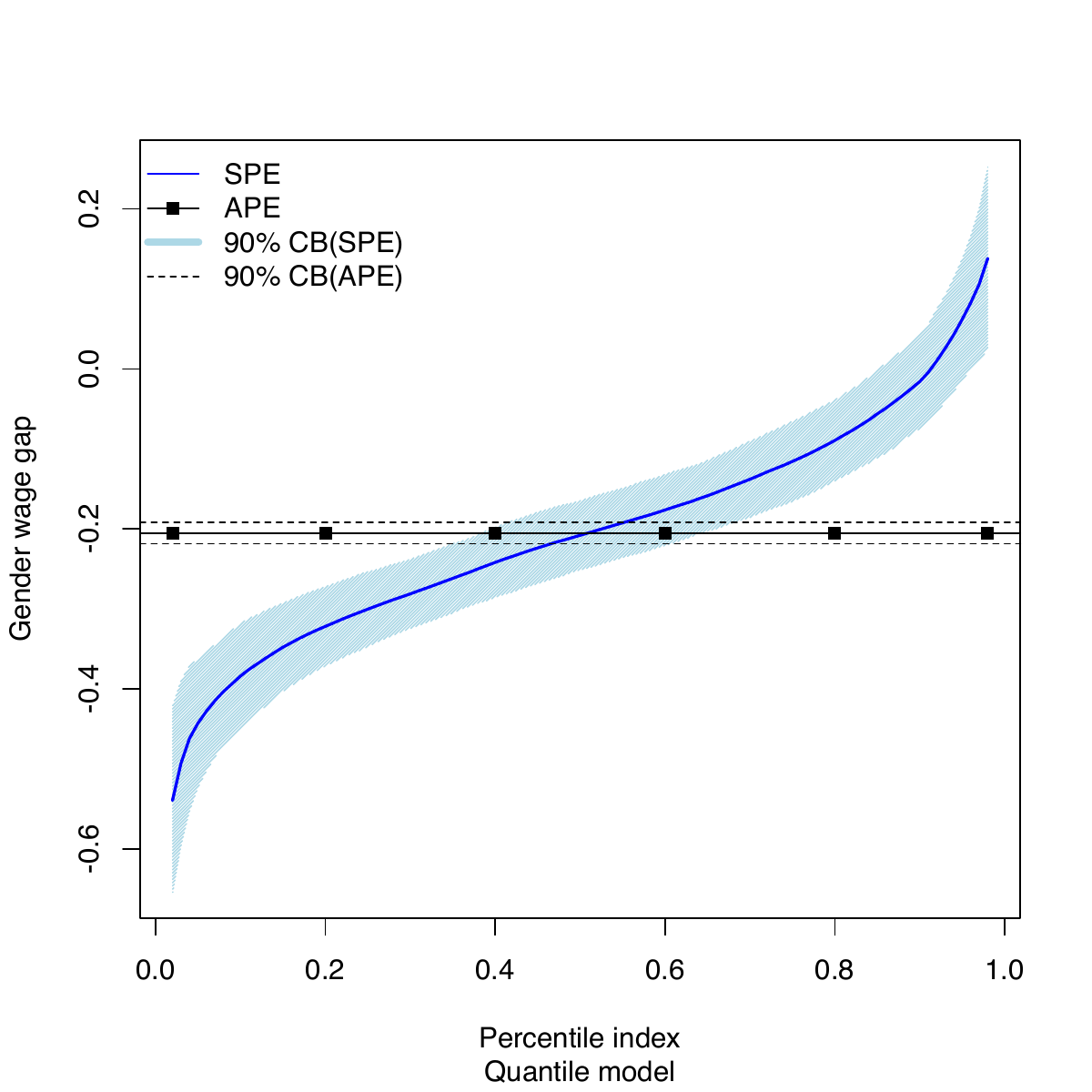}
\caption{APE and SPE (introduced in this paper) of the gender wage gap for women.  Estimates and 90\% bootstrap uniform confidence bands (derived in this paper) based on a linear model with interactions for the conditional expectation (left) and quantile (right) functions. }\label{fig:gender}
\end{figure}
The SPE  varies sharply from around $-40$ to $6.5\%$, and does not coincide with the average PE of $-20\%$.  The PE is especially
(negatively) large for women who have any of the following characteristics: married, low educated,   high  experience, and working on sales occupations  --  this follows from the classification analysis, where we compare the average characteristics of the subpopulations of women with covariate values $X$ such that $\Delta(X)$ is above the 90\% percentile and below the 10\% percentile.  We refer the reader to Section \ref{sec:empirics} for a detailed discussion of this example.


The general settings that we deal with in this paper as well as the specific results we obtain are as follows: Let $X$ denote  a covariate vector, $\Delta(X)$ denote a generic PE of interest, $\mu$ denote the distribution of $X$ in the population of interest, and $\XX$ denote the interior of the support of $X$ in this population. The SPE is obtained by sorting the multivariate function $x \mapsto \Delta(x)$ in increasing order with respect to $\mu$.
Using tools from differential geometry, we prove that this multivariate sorting operator is Hadamard differentiable with respect to the PE function $\Delta$ and the distribution $\mu$ at the regular values of $x \mapsto \Delta(x)$ on $\XX$. This key and new mathematical result allows us to derive the large sample properties of the empirical SPE, which replace $\Delta$ and $\mu$ by sample analogs, obtained from parametric or semi-parametric estimators, using the functional delta method. In particular, we derive a functional central limit theorem and a bootstrap functional central limit theorem for the empirical SPE. The main requirement of these theorems is that the empirical $\Delta$ and $\mu$ also satisfy functional central limit  theorems, which hold for many estimators used in empirical economics under general sampling conditions. We use the properties of the empirical SPE to construct confidence sets for the SPE that hold uniformly over quantile indices. We also show under the same conditions that the empirical version of the objects in the classification analysis follow functional central limit theorems and bootstrap functional central limit theorems. We derive these result by establishing the Hadamard differentiability of a classification operator related to the multivariate sorting operator.

\paragraph{\textbf{Related technical literature:} }    Previously,  \citen{CFG-10} derived the properties of the rearrangement (sorting) operator  in the univariate case with known $\mu$ (standard uniform distribution). Those results were motivated by a completely different problem -- namely, restoration of monotonicity in conditional quantile estimation -- rather than the problem of summarizing heterogeneous effects by the SPEs. These prior technical results are not applicable  to our case  as soon as the dimension of $X$ is greater than one, which is the case in all modern applications where  effects are of interest. Moreover, the previous results are not applicable even in the univariate case since the measure $\mu$ is not known in all envisioned applications.   The properties of the sorting operator are different in the multivariate case and require tools from differential geometry:  computation of functional (Hadamard) derivatives of the sorting operator with respect to perturbations of $\Delta$ require us to work with integration on $(d_x-1)$-dimensional manifolds of the type $\{\Delta(x) = \delta\}$, where $d_x = \dim X$.   Moreover, we also need to compute functional derivatives with respect to suitable perturbations of the measure $\mu$.    In econometrics or statistics, \citen{sasaki-15} also used differential geometry to characterize the structural properties of derivatives of conditional quantile functions in nonseparable models; and \citen{kim:pollard} used tools from differential geometry to derive the large sample properties of the maximum score and other cube root consistent estimators.  Relative to these papers,  we share the use of differential geometry tools as a general proof strategy, but we apply these tools to establish the analytical properties of different functionals -- namely, the SPEs.  Moreover, our results on the functional differentiability of the sorting and classification operators in the multivariate case constitute new mathematical results, which are of interest in their own right.



\paragraph{\textbf{Organization of the paper:} }  In Section \ref{sec:model} we discuss the quantities of interest in nonlinear and interactive linear models
with examples;  introduce the SPE and related  CA, along with their empirical counterparts; and outline the main inferential results.   In Section \ref{sec:empirics}
we provide an empirical application to the gender wage gap in the U.S. in 2015.    We derive the properties of the empirical SPE and CA in large samples and
show how to use these properties to make inference in Section \ref{sec:theory2}. 
Appendix  \ref{sec:theory1} provides some key mathematical results on the differentiability of the  multivariate sorting and classification operators and Appendix \ref{app:proofs}  contains the proof of the main results. All other proofs are given in  the online appendix with supplementary material (SM), which also contains additional technical material, and results from Monte Carlo simulations and an empirical application to mortgage denials  using binary response models \cite{cfy17sup}.

\section{Sorted Effects and Classification Analysis}\label{sec:model}

We start by discussing the objects of interest in nonlinear and interactive linear models.


\subsection{Effects of Interest} We consider a general model characterized by a predictive function $g(X)$, where $X$ is a $d_x$-vector  of covariates that may contain unobserved components, as in quantile regression models. The function $g$ usually arises from a model for a response variable $Y$, which can be discrete or continuous. We call the function $g$ predictive because the underlying model can be either predictive or causal under additional assumptions, but we do not insist on estimands having a causal interpretation.   For example, in a  mean regression model, $g(X) = \Ep[Y \mid X]$  corresponds to the
expectation function of $Y$ conditional on $X$;  in a
binary response model, $g(X) = \Pr[Y = 1 \mid X]$ corresponds to
the choice probability of $Y=1$ conditional on $X$; in a quantile
regression model, $g(X) = \Q_{Y}[\epsilon \mid Z]$, where the covariate $X = (\epsilon,Z)$
consists of the unobservable rank variable $\epsilon$ with a uniform distribution, $\epsilon \mid Z \sim U(0,1)$, and the observed covariate vector $Z$, and where $\Q_{Y}[ \tau \mid Z]$  is the
conditional $\tau^{th}$-quantile of $Y$ given $Z$.

Let $X = (T,W)$, where $T$ is the key covariate or treatment of interest, and $W$ is a vector of  control variables.  We are interested in the effects of changes in $T$ on the  function $g$ holding $W$ constant. These effects are usually called partial effects, marginal effects, or treatment effects. We call them   predictive effects (PE) throughout the paper, as such a name most accurately describes the meaning of the estimand (especially when
a causal interpretation is not available).  If $T$ is discrete, the PE is
\begin{equation}\label{eq: PE-D}
\Delta(x) = \Delta(t,w) = g(t_1,w) - g(t_0,w),
\end{equation}
where $t_1$ and $t_0$ are two values of $T$ that might depend on $t$ (e.g., $t_0 = 0$ and $t_1 = 1$, or $t_0 = t$ and $t_1 = t + 1$). This PE measures the effect of changing $T$ from $t_0$ to $t_1$ holding $W$ constant at $w$. If $T$ is continuous and $t \mapsto g(t,w)$ is differentiable, the PE is
\begin{equation}\label{eq: PE-D}
\Delta(x) = \Delta(t,w) = \partial_{t} g(t,w),
\end{equation}
where $\partial_{t}$ denotes $\partial/ \partial t$, the partial derivative with respect to $t$. This PE measures the effect of a marginal change of $T$ from the level $t$ holding $W$ constant at $w$.\footnote{We can also consider high-order and crossed effects. For example, $\Delta(x) =  \partial^2_{t^2} g(t,w)$ gives the second-order PE of the continuous treatment $T$ if $t \mapsto g(t,w)$ is twice differentiable; and, letting $X=(T,S,W)$ where $T$ and $S$ are discrete, $\Delta(x)  = g(t_1,s_1,w) - g(t_0,s_1,w)-g(t_1,s_0,w) + g(t_0,s_0,w)$ gives the crossed  effect or interaction of $T$ and $S$.}

We consider the following examples in the empirical
applications of Section \ref{sec:empirics} and SM.

\begin{example}[Binary response model]\label{example:binary} Let $Y$ be a binary response variable such as an indicator for
mortgage denial, and $X$ be a vector of covariates related to $Y$.  The predictive function of the probit or logit model takes the form:
$$
g(X) = \Pr(Y = 1 \mid X) = F(P(X)\transp\beta),
$$
where $P(X)$ is  a vector of known transformations of $X$, $\beta$ is a parameter vector,
and $F$ is a known distribution function (the standard normal distribution function
in the probit model or standard logistic distribution function in the logit model).  If $T$ is a binary variable such as an  indicator for black applicant and $W$ is a vector of controls such as the applicant characteristics relevant for the bank decision, the
 PE,
$$
\Delta(x) = F(P(1,w)\transp\beta) - F(P(0,w)\transp\beta),
$$
describes the difference in predicted  probability of mortgage denial for a black applicant
and a white applicant, conditional on a specific value $w$ of
the  observable characteristics $W$.   \qed \end{example}

\begin{example}[Interactive linear model with additive error]\label{example:cef} Let $Y$ be the logarithm of
the wage.  Suppose $X = (T, W)$, where $T$ is an indicator for  female worker
and $W$ are other worker characteristics. We can model the conditional expectation function of log wage using the linear interactive model:
 $$
 Y = g(X) + \epsilon =  P(T,W)\transp  \beta + \epsilon,  \quad \Ep[\epsilon \mid T, W] = 0, \quad X = (T,W),
 $$
 where $P(T,W)$ is a collection of transformations of $T$ and $W$, involving some interaction between $T$ and $W$. For example, $P(T,W) = (TW, (1-T)W)$. Then the
 PE
  $$
\Delta(x) =  P(1,w)\transp  \beta -  P(0,w)\transp  \beta$$
is the (average) gender wage gap or difference between the expected log wage of a woman and a man,
conditional on a specific value $w$ of the characteristics $W$.   \qed
 \end{example}

\begin{example}[Linear model with non-additive error, or QR model]\label{example:cqf}
Let $Y$ be log wage, $T$ be an
 indicator for female worker, and $W$ be a vector of worker
characteristics as in the previous example. Suppose we model the conditional quantile function of log wage using the linear interactive model:
 $$
 Y = g(X) =  P(T,W)\transp  \beta(\epsilon),  \quad  \epsilon \mid T, W  \sim U(0,1), \quad X = (T,W,\epsilon),
 $$
 where  $P(T,W)\transp  \beta(\tau)$ is the conditional
 $\tau^{th}$-quantile of $Y$ given $T$ and $W$.  Thus
 the covariate vector $X = (T,W,\epsilon)
$
 includes the observed covariates $(T,W)$ as well as the
 rank variable $\epsilon$, which is an unobserved factor (e.g., ``ability rank").  Here
 $P(T,W)$ is a collection of transformations of $T$ and $W$, e.g., $P(T,W) = (TW, (1-T)W)$. Then the  PE
  $$
\Delta(x) =  P(1,w)\transp  \beta(\tau) -  P(0,w)\transp  \beta(\tau), \quad x = (t,w,\tau),$$
is the ($\tau^{th}$-quantile) gender wage gap or difference between the conditional $\tau^{th}$-quantile of log-wage of a woman and a man, conditional on a specific value $w$ of the characteristics $W$.

Note that
in this case,
$$
X \sim \mu,  \quad  \mu =  F_{T,W} \times F_\epsilon,
$$
where $F_{\epsilon}$ is the distribution function of the standard uniform random variable,
and $F_{T,W}$ is the distribution of $(T,W)$. For estimation purposes, we will have to exclude the tail quantile indices,
 so $F_{\epsilon}$ will be redefined to have support on a set of the form $[\ell, 1-\ell]$, where $0 < \ell < 0.5$ is a small positive number.
\qed \end{example}

The set of examples listed above are the most basic, leading cases, arising
mostly in predictive analysis and program evaluation.  Our theoretical
results are rather general and are not limited to these cases. Thus, they allow for both $\Delta$ and $\mu$ to originate from causal or structural models and to be estimated by structural methods. For example, in treatment effects models with selection on observables \cite{rr83}, the PE is the conditional average treatment effect $\Delta(x) = \Ep(Y_1 - Y_0 \mid X=x),$ where $Y_1$ and $Y_0$ are potential outcomes in the treated and non-treated statuses and $X$ is a vector of covariates. The standard approach is to aggregate the conditional average treatment effects by integration with respect to the distribution of the covariates  in the population of interest $\mu$. This yields the average treatment effect if $\mu$ is the distribution in the entire population or the average treatment effect on the treated if $\mu$ is the distribution in the treated subpopulation.  The SPE can be used to complement the analysis by reporting the entire range of conditional average treatment effects, and also to determine the optimal treatment allocation with budget constraints.  Thus, \citen{bd12} showed that under some conditions  this optimal allocation has a cutoff determined by a tail percentile of the conditional average treatment effects, i.e. by a SPE. Another example is the welfare analysis described in \citen{hn17}, where $\Delta(x)$ is the compensating or equivalent variation of a price change conditional on covariates such as income and demographic characteristics, and $\mu$ is the distribution of covariates in the population of interest. 

In all the previous examples,  the PE  $\Delta(x)$ is a function of $x$ and therefore can be different for each observational unit. To summarize this effect in a single measure, a common practice in empirical economics  is to average the PEs.   Averaging, however, masks most of the heterogeneity in the PE allowed by nonlinear or interactive linear models. We propose reporting the entire set of values of the PE sorted in increasing order and indexed by a ranking  $u \in [0,1]$ with respect to the population of interest. 
These sorted effects provide a more complete representation of the heterogeneity in the PE than the average effects.
\begin{definition}[$u$-SPE] The \textit{$u^{th}$-sorted predictive effect} with respect to $\mu$ is
$$
\Delta_{\mu}^*(u)  :=  \inf\{ \delta \in \mathbb{R} : F_{\Delta,  \mu}(\delta)  \geq u \},  \ \  F_{\Delta, \mu}(\delta) := \Ep_{\mu}[1\{\Delta(X) \leq \delta \}],
$$
where $\Ep_{\mu}$ denotes expectation with respect to $\mu$.
\end{definition}
The $u$-SPE is the $u^{th}$-quantile of $\Delta(X)$ when $X$ is distributed according to $\mu$. As for the average effect, $\mu$ can be chosen to select a target subpopulation from the entire population. For example, when $T$ is a treatment indicator:
\begin{itemize}
\item If $\mu$ is set to the marginal distribution of $X$ in the entire population, then $\Delta_{\mu}^*(u)$ is the \textit{population $u$-SPE}.
\item If $\mu$ is set to the distribution of $X$ conditional on $T=1$, then   $\Delta_{\mu}^*(u)$ is the  \textit{$u$-SPE on the treated  or exposed}.
\end{itemize}

By considering $\Delta_{\mu}^*(u) $ at multiple quantile indices, we obtain a one-dimensional representation of the heterogeneity of the PE. Accordingly, our object of interest is the \textit{SPE-function}
$$
\{u \mapsto \Delta_{\mu}^*(u) : u \in \mathcal{U} \}, \ \ \mathcal{U} \subseteq [0,1],
$$
where $\mathcal{U}$ is the set of quantile indices of interest.

We also show how to use the $u$-SPE for classification analysis.  Let $u \in \mathcal{U},$ with $u < 1/2$, and $Z$ be a $d_z$-dimensional random vector that includes $X$ and possibly other variables such as $Y$ in Examples 1--3. By abuse of notation, we also denote the distribution of $Z$ over its support $\mathcal{Z}$ as $\mu$.
\begin{definition}[$u$-CA]\label{def:uca} The $u^{th}$-classification analysis consists of 2 steps: (i)  Assign all $Z$ with $\Delta(X) < \Delta_{\mu}^*(u)$ to the \textit{$u$-least affected} subpopulation, and all $Z$ with $\Delta(X) > \Delta_{\mu}^*(1-u)$ to  the \textit{$u$-most affected} subpopulation. (ii) Obtain the moments and distribution of  $Z$ in the least and most affected subpopulations. We denote by $\Lambda_{\Delta,\mu}^{-u}(t)$ and $\Lambda_{\Delta,\mu}^{+u}(t)$ generic objects indexed by $t \in \mathbb{R}^{d_z}$ in the least and most affected subpopulations, respectively. For example, $\Lambda_{\Delta,\mu}^{-u}(t) = \Ep_{\mu}[Z^t \mid \Delta(X) < \Delta_{\mu}^*(u)]$ corresponds to the $t$-moment  of $Z$ in the $u$-least affected subpopulation, and $\Lambda_{\Delta,\mu}^{-u}(t) = \Ep_{\mu}[1(Z \leq t) \mid \Delta(X) < \Delta_{\mu}^*(u)]$  to the distribution  of $Z$ at $t$  in the $u$-least affected subpopulation.\footnote{For a $d_z$-dimensional random variable $Z = (Z_1, \ldots, Z_{d_z})$ and $t = (t_1, \ldots, t_{d_z}) \in \mathbb{R}^{d_z}$, we denote $Z^t := \prod_{j=1}^{d_z} Z_j^{t_j}$ and $\{Z \leq t\} := \{Z_1 \leq t_1, \ldots, Z_{d_z} \leq t_{d_z}\}$.}
 We define the same quantities in the $u$-most affected subpopulation replacing $\Delta(X) < \Delta_{\mu}^*(u)$ by $ \Delta(X) > \Delta_{\mu}^*(1-u)$ in the conditioning set.
\end{definition}

\subsection{Empirical SPE}\label{sec:estimation}
In practice, we replace the PE $\Delta$
and the distribution $\mu$ by sample analogs to construct plug-in estimators of the SPE. Let $\widehat{\Delta}(x)$ and $\widehat{\mu}(x)$ be  estimators of  $\Delta(x)$ and $\mu(x)$ obtained from $\{(Y_i,T_i,W_i) : 1 \leq i \leq n \}$, an independent and identically distributed sample of size $n$ from $(Y,T,W)$.

\begin{definition}[Empirical $u$-SPE]
The estimator of $\Delta_{\mu}^*$ is
$$
\widehat{\Delta_{\mu}^*}(u)  := \hat{\Delta}_{\hat \mu}^*(u) = \inf\{ \delta \in \mathbb{R} : F_{\hat \Delta, \hat \mu}(\delta)  \geq u \}, \ \ F_{\hat \Delta, \hat \mu}(\delta) = \Ep_{\widehat \mu}[1\{\widehat{\Delta}(X) \leq \delta \}] =: \widehat{F_{\Delta,\mu}}(\delta).
$$
Then the empirical SPE-function is
$$
\{u \mapsto \widehat{\Delta_{\mu}^*}(u) : u \in \mathcal{U} \}, \ \ \mathcal{U} \subseteq [0,1],
$$
where $\mathcal{U}$ is the set of indices of interest that typically excludes tail indices and satisfies other technical conditions stated in Section \ref{sec:theory2}.
\end{definition}

\noindent \textbf{Example 1} (Binary response model, cont.) \
The estimator of the PE is
\begin{equation*}\label{eq:pe-probit}
\hat \Delta(x) = F\left(P(1,w) \transp \hat \beta\right) - F\left(P(0,w)\transp \hat \beta \right),
\end{equation*}
where $\hat \beta$ is the maximum likelihood (ML) estimator  of $\beta$,
$$
\hat \beta \in \arg \max_{b \in \mathbb{R}^{d_p}} \sum_{i=1}^n [Y_i \log F(P(T_i,W_i) \transp b) + (1-Y_i)\log \{1 - F(P(T_i,W_i) \transp b)\}], \ \ d_p = \dim P(T,W).
$$\qed

\noindent \textbf{Example 2} (Interactive linear model with additive error, cont.) \
The estimator of the PE is
\begin{equation*}\label{eq:pe-probit}
\hat \Delta(x) =  P(1,w) \transp \hat \beta -  P(0,w)\transp \hat \beta,
\end{equation*}
where $\hat \beta$ is the ordinary least squares (OLS) estimator of $\beta$,
$$
\hat \beta \in \arg \min_{b \in \mathbb{R}^{d_p}} \sum_{i=1}^n [Y_i - P(T_i,W_i) \transp b]^2, \ \ d_p = \dim P(T,W). 
$$\qed

\noindent \textbf{Example 3} (Linear model with non-additive error, cont.) \
The estimator of the PE is
\begin{equation*}\label{eq:pe-probit}
\hat \Delta(x) =  P(1,w) \transp \hat \beta(\tau) -  P(0,w)\transp \hat \beta(\tau),
\end{equation*}
where $\hat \beta(\tau)$ is the \citen{Koenker:1978} quantile regression (QR) estimator of $\beta(\tau)$,
$$
\hat  \beta(\tau) \in \arg \min_{b \in \mathbb{R}^{d_p}} \sum_{i=1}^n \rho_{\tau}(Y_i - P(T_i,W_i) \transp b), \ \ d_p = \dim P(T,W), \ \ \rho_{\tau}(v) = (\tau - 1\{v < 0\})v.
$$
\qed

\begin{remark}[Estimation of $\mu$]\label{re:cov}
Let $S$ denote the indicator for an observational unit belonging to the subpopulation of interest. For example, if $S=T$,  then $S=1$ indicates  the unit is in the subpopulation of the treated and $S = 0$ indicates  the unit is in the subpopulation of the untreated.   The indicator $S$ can also incorporate other restrictions, for example $S = 1\{X \in \mathcal{X} \}$ restricts the support of covariate $X$ to the region $\mathcal{X}$.    Finally, if $S$ is always $1$, then this means that we work with the entire population. Estimation of $\mu$ can be done using the empirical distribution:
$$
\hat \mu(x) =  \sum_{i = 1}^n S_i1\{X_i \leq x \} / \sum_{i = 1}^nS_i,
$$
provided that $ \sum_{i = 1}^nS_i > 0$. An alternative would be to use  the smoothed empirical distribution.

If $\mu$ can be decomposed into known and unknown parts, then we only need to estimate the unknown parts. Thus,   $\mu = F_{T,W} \times F_\epsilon$ in Example 3, where $F_\epsilon$ is known to be the uniform distribution and $F_{T,W}$ is unknown, but can be estimated by the empirical distribution of $(T,W)$ in the part of the population of interest. \qed
\end{remark}

\subsection{Empirical CA} The empirical version of the $u$-CA classifies the observations in the sample using the empirical PEs and $u$-SPE, and computes the moments and distributions in the resulting most and least affected subsamples.

\begin{definition}[Empirical $u$-CA] The empirical $u^{th}$-classification analysis consists of 2 steps: (1) Assign all $Z_i$ with $\widehat \Delta(X_i) < \widehat{\Delta_{\mu}^*}(u)$ to the $u$-least affected subsample, and all $Z_i$ with $\widehat \Delta(X_i) > \widehat{\Delta_{\mu}^*}(1-u) $ to  the $u$-most affected subsample.
(2) Estimate the moments and distribution of $Z$ in the least and most affected subpopulations by the empirical analogs in the least and most affected subsamples, i.e. $\widehat{\Lambda}_{\Delta,\mu}^{-u}(t) = \Lambda_{\widehat \Delta,\widehat \mu}^{-u}(t)$ and $\widehat{\Lambda}_{\Delta,\mu}^{+u}(t) = \Lambda_{\widehat \Delta,\widehat \mu}^{+u}(t)$. For example, $\widehat{\Lambda}_{\Delta,\mu}^{-u}(t) = \Ep_{\widehat \mu}\left[Z^t \mid \widehat \Delta(X) <  \widehat{\Delta_{\mu}^*}(u)\right]$ estimates the $t$-moment of $Z$ in the $u$-least affected subpopulation and $\widehat{\Lambda}_{\Delta,\mu}^{-u}(t) = \Ep_{\widehat \mu}\left[1(Z \leq t) \mid \widehat \Delta(X)<  \widehat{\Delta_{\mu}^*}(u)\right]$ the  distribution  of $Z$ at $t$ in the $u$-least affected subpopulation. The corresponding estimators in the $u$-most affected subpopulation are constructed replacing  $\widehat \Delta(X)<  \widehat{\Delta_{\mu}^*}(u)$ by  $\widehat \Delta(X) >  \widehat{\Delta_{\mu}^*}(1-u)$ in the conditioning set.   Here we use the same notation as in Definition \ref{def:uca}.
\end{definition}

\subsection{Inference on SPE}

The main inferential result  for the SPE can be previewed as follows.  Assume that the PE function  $x \mapsto \Delta(X)$ is not locally flat in the sense that the norm of its gradient does not vanish anywhere over the support, and other regularity conditions stated in Section 4. Then,  the empirical SPE-process is $\sqrt{n}$-consistent and converges in distribution to a centered Gaussian process, namely
\begin{eqnarray*}\label{eq:fclt}
\sqrt{n}(\hat\Delta^*_{\hat \mu}(u)-\Delta^*_{\mu}(u))\rightsquigarrow Z_{\infty}(u) \ \ \text{in}  \ \ \ell^\infty(\UU),
\end{eqnarray*}
the metric space of bounded functions on $\UU$, as a stochastic process  indexed by $u \in \UU$, where $\UU$ is  a compact subset of $(0,1)$.  Moreover, the exchangeable bootstrap algorithm specified in Algorithm \ref{alg:computation} estimates consistently the law of $Z_{\infty}(u)$. 

The next corollary to Theorem \ref{thm:fclt} in Section \ref{sec:theory2} provides uniform bands that cover the  SPE-function simultaneously over a region of values of $u$ with prespecified probability in large samples. It does cover pointwise confidence bands for the SPE-function at a specific quantile index $u$ as a special case by simply taking $\UU$ to be the singleton set $\{u\}$.

\begin{corollary}[Inference on SPE-function using Limit Theory and Bootstrap]\label{cor:cb} Under the assumptions of Theorem \ref{thm:fclt}, for any $0 < \alpha < 1,$
$$\Pr \left\{ \Delta^*_{\mu}(u) \in \left[\widehat{\Delta^*_{\mu}}(u) - \widehat t_{1-\alpha}(\UU) \hat \Sigma(u)^{1/2} / \sqrt{n},
\widehat{\Delta^*_{\mu}}(u) + \widehat t_{1-\alpha}(\UU) \hat \Sigma(u)^{1/2} /\sqrt{n}\right] : u \in \UU \right\} \to 1 - \alpha,$$
where $\widehat t_{1-\alpha}(\UU)$ is any consistent estimator of $t_{1-\alpha}(\UU)$, the $(1-\alpha)$-quantile of  $$t(\UU) := \sup_{u \in \UU}|Z_{\infty}(u)|   \Sigma(u)^{-1/2},$$ and $u \mapsto \hat \Sigma(u)$ is a uniformly consistent estimator of $u \mapsto \Sigma(u)$, the variance function of $u \mapsto Z_{\infty}(u)$. We provide consistent estimators of $t_{1-\alpha}(\UU)$ and $u \mapsto \Sigma(u)$ in Algorithm \ref{alg:computation}.
\end{corollary}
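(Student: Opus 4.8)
The plan is to derive the corollary from the functional central limit theorem and the bootstrap consistency of Proposition \ref{prop:fclt} by a studentization argument, combined with an anti-concentration property of the supremum of the Gaussian limit $Z_{\infty}$.

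First I would studentize the SPE process. Set $G_n(u) := \sqrt{n}(\widehat{\Delta^*_{\mu}}(u) - \Delta^*_{\mu}(u))$, so that Proposition \ref{prop:fclt} gives $G_n \rightsquigarrow Z_{\infty}$ in $\ell^\infty(\UU)$. Under the conditions of Theorem \ref{thm:fclt}, the variance function $u \mapsto \Sigma(u)$ of $Z_{\infty}$ is continuous and bounded away from zero on the compact set $\UU$, and $u \mapsto \widehat\Sigma(u)$ is uniformly consistent for it; hence $(G_n,\widehat\Sigma)$ converges jointly to $(Z_{\infty},\Sigma)$ in $\ell^\infty(\UU)\times\ell^\infty(\UU)$ (the second marginal being degenerate). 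The functional $(g,s)\mapsto\sup_{u\in\UU}|g(u)|\,s(u)^{-1/2}$ is continuous at every pair $(g,s)$ with $\inf_{u\in\UU}s(u)>0$, so the extended continuous mapping theorem yields
\begin{equation*}
T_n := \sup_{u\in\UU}|G_n(u)|\,\widehat\Sigma(u)^{-1/2} \;\rightsquigarrow\; t(\UU) = \sup_{u\in\UU}|Z_{\infty}(u)|\,\Sigma(u)^{-1/2}.
\end{equation*}
Applying the same continuous-mapping argument conditionally on the data and using the bootstrap part of Proposition \ref{prop:fclt} (the exchangeable bootstrap of Algorithm \ref{alg:computation} consistently estimates the law of $Z_{\infty}$), the bootstrap law of $T_n$ converges to the law of $t(\UU)$ in probability; provided the distribution function of $t(\UU)$ is continuous and strictly increasing at its $(1-\alpha)$-quantile, this gives $\widehat t_{1-\alpha}(\UU)\to_{\Pr}t_{1-\alpha}(\UU)$.

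The key analytic input is therefore that $t(\UU)$ has a continuous distribution at $t_{1-\alpha}(\UU)$. Since $Z_{\infty}$ is a centered Gaussian process with a.s. bounded, uniformly continuous sample paths on $\UU$ and $\inf_{u\in\UU}\Sigma(u)>0$, the normalized process $u\mapsto Z_{\infty}(u)\Sigma(u)^{-1/2}$ is a non-degenerate Gaussian process, and standard results on suprema of Gaussian processes (the Tsirelson--Ibragimov--Sudakov theorem, or modern anti-concentration bounds) imply that the law of $t(\UU)=\sup_{u\in\UU}|Z_{\infty}(u)|\Sigma(u)^{-1/2}$ is absolutely continuous on $(0,\infty)$; in particular its distribution function is continuous and strictly increasing in a neighborhood of the strictly positive quantile $t_{1-\alpha}(\UU)$. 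This both legitimizes the previous step and sets up the conclusion.

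Finally, for each $n$ the event whose probability appears in the statement is exactly $\{T_n\le\widehat t_{1-\alpha}(\UU)\}$. Combining $T_n\rightsquigarrow t(\UU)$, $\widehat t_{1-\alpha}(\UU)\to_{\Pr}t_{1-\alpha}(\UU)$, and the continuity of the distribution function of $t(\UU)$ at $t_{1-\alpha}(\UU)$ via the standard lemma on weak convergence with an estimated threshold gives
\begin{equation*}
\Pr\big(T_n\le\widehat t_{1-\alpha}(\UU)\big) \longrightarrow \Pr\big(t(\UU)\le t_{1-\alpha}(\UU)\big) = 1-\alpha,
\end{equation*}
which is the claim; the pointwise band at a single $u$ is the special case $\UU=\{u\}$, where $t(\UU)$ reduces to a half-normal and the continuity is immediate. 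The main obstacle is the anti-concentration step: if $\Sigma(u)$ were allowed to vanish on part of $\UU$ (i.e., $Z_{\infty}$ degenerate somewhere on $\UU$), then $t(\UU)$ could carry an atom and the coverage would be only asymptotically conservative rather than exact. This is precisely why $\UU$ must be a compact subset of $(0,1)$ obeying the conditions of Lemma \ref{lemma:had0} and why the tail quantile indices are excluded, guaranteeing non-degeneracy of $Z_{\infty}$ and hence $\inf_{u\in\UU}\Sigma(u)>0$.
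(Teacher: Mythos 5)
Your argument is correct and is exactly the route the paper intends: the paper prints no separate proof of this corollary, treating it as the standard consequence of Theorem \ref{thm:fclt}, the bootstrap FCLT of Theorem \ref{thm:bfclt}, studentization via the continuous mapping theorem, and anti-concentration of the supremum of the Gaussian limit, all of which you supply. The only caveat worth recording is that $\inf_{u \in \UU}\Sigma(u)>0$ is not literally delivered by the conditions on $\UU$ in Lemma \ref{lemma:had0} (which only control the Hadamard derivative); it additionally requires non-degeneracy of the limits $G_{\infty}$ and $H_{\infty}$ at the relevant points, an assumption that is implicit in the corollary's statement since it normalizes by $\Sigma(u)^{-1/2}$.
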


We  now describe a practical bootstrap algorithm to estimate the quantiles of  $t(\UU)$. Let $(\omega_{1}, \ldots, \omega_{n})$ denote the bootstrap weights, which are nonnegative random variables  independent of the data obeying the conditions stated in \citen{vdV-W}. For example, $(\omega_{1}, \ldots, \omega_{n})$ is a multinomial vector with dimension $n$ and probabilities $(1/n,\ldots,1/n)$ in the empirical bootstrap.
 In what follows $B$ is the number of bootstrap draws, such that $B \to \infty$.  In our experience, setting $B \geq 500$ suffices for good accuracy.

\begin{algorithm}[Bootstrap law of $t(\UU)$ and its quantiles]\label{alg:computation}
\textbf{1)} Draw a realization of the bootstrap weights $(\omega_{1}, \ldots, \omega_{n})$.
\textbf{2)}  For each $u \in \UU$, compute $\widetilde{\Delta_{\mu}^*}(u) = \widetilde \Delta_{\widetilde \mu}^*(u)$, a bootstrap draw of $\widehat{\Delta_{\mu}^*}(u) = \widehat \Delta_{\widehat \mu}^*(u)$, where $\widetilde \Delta$ and $\widetilde \mu$ are the bootstrap versions of $\widehat \Delta$ and $\widehat \mu$ that use $(\omega_{1}, \ldots, \omega_{n})$ as sampling weights in the computation of the estimators. Construct a bootstrap draw of $Z_{\infty}(u)$ as $\widetilde Z_{\infty}(u) = \sqrt{n}( \widetilde{\Delta_{\mu}^*}(u) - \widehat{\Delta_{\mu}^*}(u))$.
\textbf{3)} Repeat steps (1)-(2) $B$ times. \textbf{4)} For each $u \in \UU$, compute a bootstrap estimator  of $\Sigma(u)^{1/2}$ such as the bootstrap interquartile range rescaled with the normal distribution, $\widehat \Sigma(u)^{1/2} = (q_{0.75}(u) - q_{0.25}(u))/(z_{0.75} - z_{0.25}),$ where $q_p(u)$ is the $p$th sample quantile of $\widetilde Z_{\infty}(u)$ in the $B$ draws and $z_p$ is the $p$th quantile of $N(0,1)$. \textbf{5)}  Use the empirical distribution of
$\widetilde t(\UU) = \sup_{u \in \UU}  |\widetilde Z_{\infty}(u)| \widehat \Sigma(u)^{-1/2}$ across the $B$ draws to approximate the distribution of $t(\UU) = \sup_{u \in \UU}  |Z_{\infty}(u)| \Sigma(u)^{-1/2}$. In particular, construct $\widehat t_{1-\alpha}(\UU)$, an estimator of $t_{1-\alpha}(\UU)$, as the $(1-\alpha)$-quantile of the $B$ draws of
$\widetilde t(\UU)$.  \end{algorithm}

\begin{remark}[Monotonization of the bands]\label{re:mono} While the SPE-function $u \mapsto \Delta^*_{\mu}(u)$ is increasing by definition,  the end functions of the confidence band $u \mapsto \widehat{\Delta^*_{\mu}}(u) \pm\widehat t_{1-\alpha}(\UU) \widehat \Sigma(u)^{1/2} / \sqrt{n}$ might not be increasing. \citen{cfg-09} showed that monotonizing the end functions via rearrangement reduces the width of the band in uniform norm, while increases coverage in finite-samples. We use this refinement in the empirical examples.\footnote{In practice, the rearrangement simply consists in sorting the two vectors containing the discretized version of the end-functions in increasing order; see  \citen{cfg-09} for more  details.} \qed
\end{remark}

\begin{remark}[Finite-Sample Bias Corrections]\label{re:fsc} The empirical $u$-SPE might be biased in small samples, specially at the the tails. Bootstrap is also useful to improve the estimator and confidence bands.  Thus, a corrected estimator can be formed as $
2 \widehat{\Delta_{\mu}^*} - \overline{\Delta_{\mu}^*},
$ and a corrected $(1-\alpha)$-confidence band as $
\left[2 \widehat{\Delta_{\mu}^*} - \overline{\Delta_{\mu}^*} \pm \widehat t_{1-\alpha}(\UU) \widehat \Sigma(u)^{1/2} / \sqrt{n} \right]
$, where $\overline{\Delta_{\mu}^*}$ is the mean of the bootstrap draw of the estimator. In Appendix \ref{subset:numerical} of the SM, we show that this correction reduces the bias of the estimator and increases the coverage of the confidence bands in a simulation calibrated to the gender wage gap application. \qed
\end{remark}

\subsection{Inference on CA} Let $\Lambda_{\Delta,\mu}^{u}(t) := [\Lambda_{\Delta,\mu}^{-u}(t),\Lambda_{\Delta,\mu}^{+u}(t)]$ and $\widehat{\Lambda}_{\Delta,\mu}^{u}(t) := [\widehat{\Lambda}_{\Delta,\mu}^{-u}(t), \widehat{\Lambda}_{\Delta,\mu}^{+u}(t)]$. The main inferential result for CA can be previewed as follows: the empirical CA-process converges in distribution to a centered bivariate Gaussian  process, namely
\begin{eqnarray}\label{eq:fclt}
\sqrt{n}(\widehat{\Lambda}_{\Delta,\mu}^{u}(t)-\Lambda_{\Delta,\mu}^{u}(t))\rightsquigarrow Z_{\infty}^u(t) \ \  \text{in}  \ \ \ell^\infty(\mathbb{R}^{d_z})^2,
\end{eqnarray}
as a stochastic process  indexed by $t \in \mathbb{R}^{d_z}$.
Moreover,  exchangeable bootstrap  estimates consistently the law of $Z_{\infty}^{u}(t)$.  

The next corollary to Theorem \ref{thm:fclt_supp1} in Section \ref{sec:theory2} provides uniform bands that cover $L$ linear combinations of the $2$-dimensional vector $\Lambda_{\Delta,\mu}^{u}(t)$ with coefficients $c_1, \ldots, c_L$  simultaneously over $t \in \mathcal{T}$ with prespecified probability in large samples. It covers pointwise confidence intervals for the mean of the $k^{th}$ component of $Z$ for least affected as a special case with $L=1$  linear combination, $c_1 = (1,0)'$, and $\mathcal{T} = \{e_k\}$, where $e_k$ is a unit vector with a one in the $k^{th}$ position. Joint confidence intervals for $s$ differences of means of the $k_1^{th}$, $\ldots$, $k_s^{th}$ components of $Z$ between most and least affected are a special case with $L=1$ linear combination, $c_1 = (-1,1)'$, and $\mathcal{T} = \{e_{k_1}, \ldots, e_{k_s}\}$. Joint uniform bands for the distribution of the $k^{th}$ component of $Z$ for most and least affected are also a special case with $L=2$ linear combinations, $c_1 =(1,0)$, $c_2 = (0,1)$,  and $\mathcal{T} = \{ t \in \mathbb{R}^{d_z}: t_j = \bar T, j \neq k \}$, where $\bar T$ is an arbitrarily large number.
By appropriate choice of the linear combinations and the index set $T$, we can therefore conduct multiple tests while preserving the significance level from simultaneous inference problems  \cite{rsw10b,rsw10,lsx16}. We show examples in the empirical application of Section \ref{sec:empirics}.

\begin{corollary}[Inference on CA-function using Limit Theory and Bootstrap]\label{cor:cb2} Under the assumptions of Theorem \ref{thm:fclt_supp1}, for any $0 < \alpha < 1$,
$$\Pr \left\{c_{\ell}'  \Lambda_{\Delta,\mu}^{u}(t) \in c_{\ell}'\widehat{\Lambda}_{\Delta,\mu}^{u}(t) \pm \widehat t^u_{1-\alpha}(\mathcal{T},L) [c_{\ell}'\widehat \Sigma^u(t)c_{\ell}]^{1/2} / \sqrt{n}: t \in \mathcal{T}, \ell = 1,\ldots,L \right\} \to 1 - \alpha,$$
where $\widehat t^u_{1-\alpha}(\mathcal{T},L)$ is any consistent estimator of $t^u_{1-\alpha}(\mathcal{T},L)$, the $(1-\alpha)$-quantile of  $$t^u(\mathcal{T},L) := \sup_{t \in \mathcal{T}, \ell = 1,\ldots,L}|c_{\ell}'Z_{\infty}^u(t)|   [c_{\ell}'\Sigma^u(t)c_{\ell}]^{-1/2},$$ and $t \mapsto \widehat \Sigma^u(t)$ is a uniformly consistent estimator of $t \mapsto \Sigma^u(t)$, the variance function of $t \mapsto Z_{\infty}^u(t)$. A p-value of the null hypothesis $c_{\ell}'  \Lambda_{\Delta,\mu}^{u}(t) = r_{\ell}(t)$ for all $t \in \mathcal{T}$ and $\ell = 1,\ldots,L$ of the realization of the statistic $\sup_{t \in \mathcal{T}, \ell = 1,\ldots,L}|c_{\ell}'\widehat{\Lambda}_{\Delta,\mu}^{u}(t) - r_{\ell}(t)|   [c_{\ell}'\widehat \Sigma^u(t)c_{\ell}]^{-1/2} = s$ is
$$
S_{t^u(\mathcal{T},L)}(s) = \Pr \left(t^u_{1-\alpha}(\mathcal{T},L)>  s  \right).
$$
 We provide consistent estimators of $t^u_{1-\alpha}(\mathcal{T},L)$, $u \mapsto \Sigma^u(t)$ and $S_{t^u(\mathcal{T},L)}(t)$ in Algorithm \ref{alg:ca}.
\end{corollary}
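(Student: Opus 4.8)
The plan is to reduce the two displayed claims to a single continuous-mapping statement built on top of the functional central limit theorem and its bootstrap version already granted by Proposition~\ref{prop:fclt2}, treating the consistency of $\widehat t^u_{1-\alpha}(\mathcal{T},L)$ and of $t\mapsto\widehat\Sigma^u(t)$ as hypotheses since they are part of the statement of the corollary. First I would rewrite the coverage event. With $\mathbb{G}^u_n(t):=\sqrt{n}\,(\widehat{\Lambda}_{\Delta,\mu}^{u}(t)-\Lambda_{\Delta,\mu}^{u}(t))$, so that $c_{\ell}'\mathbb{G}^u_n(t)=\sqrt n\,c_{\ell}'(\widehat{\Lambda}_{\Delta,\mu}^{u}(t)-\Lambda_{\Delta,\mu}^{u}(t))$, and
$$
\widehat t^u_n \ := \ \sup_{t\in\mathcal{T},\; \ell=1,\ldots,L}\bigl|c_{\ell}'\mathbb{G}^u_n(t)\bigr|\,[\,c_{\ell}'\widehat{\Sigma}^u(t)\,c_{\ell}\,]^{-1/2},
$$
the event inside the probability in the first display is \emph{exactly} $\{\widehat t^u_n\leq\widehat t^u_{1-\alpha}(\mathcal{T},L)\}$, obtained by dividing through by $[c_{\ell}'\widehat\Sigma^u(t)c_{\ell}]^{1/2}/\sqrt n$ and taking the supremum over $(t,\ell)$. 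Hence it suffices to establish (i) $\widehat t^u_n\rightsquigarrow t^u(\mathcal{T},L)$ and (ii) that the distribution function of $t^u(\mathcal{T},L)$ is continuous at its $(1-\alpha)$-quantile; then, combining (i), (ii) with the assumed $\widehat t^u_{1-\alpha}(\mathcal{T},L)\to_{\Pr}t^u_{1-\alpha}(\mathcal{T},L)$ via Slutsky's lemma gives $\Pr\{\widehat t^u_n\leq\widehat t^u_{1-\alpha}(\mathcal{T},L)\}\to\Pr\{t^u(\mathcal{T},L)\leq t^u_{1-\alpha}(\mathcal{T},L)\}=1-\alpha$.

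For (i): Proposition~\ref{prop:fclt2} gives $\mathbb{G}^u_n\rightsquigarrow Z_\infty^u$ in $\ell^\infty(\mathbb{R}^{d_z})^2$, and since restriction to the index set $\mathcal{T}\subseteq\mathbb{R}^{d_z}$ is continuous the convergence persists in $\ell^\infty(\mathcal{T})^2$. The conditions of Theorem~\ref{thm:fclt_supp1} make $t\mapsto\Sigma^u(t)$ continuous with $\inf_{t\in\mathcal{T},\,\ell}c_{\ell}'\Sigma^u(t)c_{\ell}>0$ (non-degeneracy of $Z_\infty^u$ along the directions $c_{\ell}$ over $\mathcal{T}$), and $t\mapsto\widehat\Sigma^u(t)$ is uniformly consistent for $t\mapsto\Sigma^u(t)$ by hypothesis. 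A Slutsky / extended continuous-mapping argument in $\ell^\infty(\mathcal{T})^{2L}$ then shows that the vector of studentized processes $\bigl(t\mapsto c_{\ell}'\mathbb{G}^u_n(t)[c_{\ell}'\widehat\Sigma^u(t)c_{\ell}]^{-1/2}\bigr)_{\ell=1}^{L}$ converges weakly to $\bigl(t\mapsto c_{\ell}'Z_\infty^u(t)[c_{\ell}'\Sigma^u(t)c_{\ell}]^{-1/2}\bigr)_{\ell=1}^{L}$. Since the map $(f_1,\ldots,f_L)\mapsto\sup_{t\in\mathcal{T},\,\ell}|f_\ell(t)|$ from $\ell^\infty(\mathcal{T})^{L}$ to $\mathbb{R}$ is $1$-Lipschitz, the continuous mapping theorem yields $\widehat t^u_n\rightsquigarrow t^u(\mathcal{T},L)$.

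For (ii): $t^u(\mathcal{T},L)$ is the supremum of $|G|$ over $(t,\ell)$ for a centered Gaussian process $G$ that is non-degenerate after studentization and has a.s.\ bounded, uniformly continuous sample paths on $\mathcal{T}$ (inherited from the tightness of $Z_\infty^u$ in $\ell^\infty(\mathbb{R}^{d_z})^2$). Anti-concentration for suprema of Gaussian processes then implies that the law of $t^u(\mathcal{T},L)$ has no atoms in the interior of its support, so its distribution function is continuous and strictly increasing at the $(1-\alpha)$-quantile, which is (ii); together with (i) and the assumed consistency of $\widehat t^u_{1-\alpha}(\mathcal{T},L)$ this proves the coverage claim. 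The consistency of the estimators $\widehat t^u_{1-\alpha}(\mathcal{T},L)$, $t\mapsto\widehat\Sigma^u(t)$ and $S_{t^u(\mathcal{T},L)}$ produced by Algorithm~\ref{alg:ca} is obtained by applying the argument of (i) verbatim to the exchangeable-bootstrap draws $\widetilde{\mathbb{G}}^u_n$, whose conditional law converges to that of $Z_\infty^u$ in probability by the bootstrap part of Proposition~\ref{prop:fclt2}: the bootstrap law of the studentized supremum converges to the law of $t^u(\mathcal{T},L)$, and the continuity/strict monotonicity in (ii) upgrades this to convergence of the bootstrap $(1-\alpha)$-quantile and uniform convergence of the bootstrap survival function (a quantile-convergence lemma of the type in \citen{vdV-W}), while the interquartile-range estimator $\widehat\Sigma^u(t)$ is consistent because the bootstrap marginal laws converge to those of the Gaussian limit.

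For the $p$-value: under the null $c_{\ell}'\Lambda_{\Delta,\mu}^{u}(t)=r_\ell(t)$ for all $t\in\mathcal{T}$ and $\ell=1,\ldots,L$, the $\sqrt n$-scaled studentized statistic coincides with $\widehat t^u_n$ and hence converges in distribution to $t^u(\mathcal{T},L)$; since $S_{t^u(\mathcal{T},L)}$ is continuous by (ii), the probability-integral transform gives that the estimated $p$-value converges in distribution to a $\Uniform(0,1)$ variable under the null, so the test that rejects when the $p$-value falls below $\alpha$ has asymptotic level $\alpha$. The main obstacle is the anti-concentration step in (ii): from the primitive conditions of Theorem~\ref{thm:fclt_supp1} one must verify that the studentized Gaussian supremum $t^u(\mathcal{T},L)$ has a continuous distribution function near its $(1-\alpha)$-quantile, which is the nontrivial ingredient that licenses passing from convergence of the bootstrap law to convergence of the bootstrap quantile and survival function; everything else is a routine application of the continuous mapping theorem and Slutsky's lemma on top of the functional central limit theorems granted by Proposition~\ref{prop:fclt2}.
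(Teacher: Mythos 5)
Your proposal is correct and follows exactly the route the paper intends: the paper gives no separate proof of this corollary, treating it as an immediate consequence of Theorem \ref{thm:fclt_supp1} and Theorem \ref{thm:bfclt} via the continuous mapping theorem, Slutsky's lemma, anti-concentration for the Gaussian supremum, and the standard bootstrap quantile-convergence lemma — precisely the steps you spell out. The only point worth flagging is that your step (i) quietly requires $\inf_{t\in\mathcal{T},\ell}c_{\ell}'\Sigma^u(t)c_{\ell}>0$ and continuity of $t\mapsto\Sigma^u(t)$, which are implicit hypotheses of the corollary (the studentized statistic must be well defined) rather than consequences derived from the primitive conditions, so it is cleaner to state them as assumed alongside the consistency of $\widehat\Sigma^u$.
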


\begin{algorithm}[Bootstrap law of $t(\mathcal{T},L)$, quantiles and p-values]\label{alg:ca}
\textbf{1)} Draw a realization of the bootstrap weights $(\omega_{1}, \ldots, \omega_{n})$.
\textbf{2)}  For each $t \in \TT$, compute $\widetilde{\Lambda}_{\Delta,\mu}^{-u}(t) = \Lambda_{\widetilde \Delta,\widetilde \mu}^{-u}(t)$ and $\widetilde{\Lambda}_{\Delta,\mu}^{+u}(t) = \Lambda_{\widetilde \Delta,\widetilde \mu}^{+u}(t)$,  a bootstrap draw of $\widehat{\Lambda}_{\Delta,\mu}^{-u}(t) = \Lambda_{\widehat \Delta,\widehat \mu}^{-u}(t)$ and $\widehat{\Lambda}_{\Delta,\mu}^{+u}(t) = \Lambda_{\widehat \Delta,\widehat \mu}^{+u}(t)$, where $\widetilde \Delta$ and $\widetilde \mu$ are the bootstrap versions of $\widehat \Delta$ and $\widehat \mu$ that use $(\omega_{1}, \ldots, \omega_{n})$ as sampling weights in the computation of the estimators. Construct a bootstrap draw of $Z_{\infty}^u(t)$ as $\widetilde Z^u_{\infty}(t) = \sqrt{n}(\widetilde{\Lambda}_{\Delta,\mu}^{u}(t) - \widehat{\Lambda}_{\Delta,\mu}^{u}(t))$, where $\widetilde{\Lambda}_{\Delta,\mu}^{u}(t) = [\widetilde{\Lambda}_{\Delta,\mu}^{-u}(t),\widetilde{\Lambda}_{\Delta,\mu}^{+u}(t)]$.
\textbf{3)} Repeat steps (1)-(2) $B$ times. \textbf{4)} For each $t \in \TT$ and $\ell = 1,\ldots,L$, compute a bootstrap estimator  of $[c_{\ell}'\Sigma^u(t)c_{\ell}]^{1/2}$ such as the bootstrap interquartile range rescaled with the normal distribution $[c_{\ell}'\widehat \Sigma^u(t)c_{\ell}]^{1/2} = (q^u_{0.75}(t,\ell) - q^u_{0.25}(t,\ell))/(z_{0.75} - z_{0.25}),$ where $q^u_p(t,\ell)$ is the $p$th sample quantile of $c_{\ell}'\widetilde Z^u_{\infty}(t)$ in the $B$ draws and $z_p$ is the $p$th quantile of $N(0,1)$. \textbf{5)}  Use the empirical distribution of
$\widetilde t(\TT,L) = \sup_{t \in \TT, \ell=1,\ldots,L}  |c_{\ell}'\widetilde Z^u_{\infty}(t)| [c_{\ell}'\widehat \Sigma^u(t)c_{\ell}]^{-1/2}$ across the $B$ draws to approximate the distribution of $t(\TT,L) = \sup_{t \in \TT,\ell = 1,\ldots,L}  |c_{\ell}'Z^u_{\infty}(t)| [c_{\ell}'\Sigma^u(t)c_{\ell}]^{-1/2}$. In particular, construct $\widehat t_{1-\alpha}(\TT,L)$, an estimator of $t_{1-\alpha}(\TT,L)$, as the $(1-\alpha)$-quantile of the $B$ draws of
$\widetilde t(\TT,L)$, and an estimation of  the p-value $S_{t^u(\mathcal{T},L)}(s)$ as the proportion of the $B$ draws of $\widetilde t(\TT,L)$ that are greater than $s$.
\end{algorithm}

\subsection{Inference on Most and Least Affected Subpopulations} In addition to moments and distributions, we can conduct inference on the subpopulations of most and least affected.\footnote{Here
we follow the set inference approach described in  \citen{ckm15}, which builds on \citen{cht:bounds}. In addition our results justify the use of subsapling-based methods as in \citen{cht:bounds} and \citen{RSset}.} Let 
$$
\mathcal{M}^{-u}:=\{ (x,y) \in \mathcal{Z}:  \Delta(x) \leq \Delta^*_\mu (u) \}, \quad \mathcal{M}^{+u} := \{ (x,y) \in \mathcal{Z}
: \Delta(x) \geq \Delta^*_\mu(1-u) \},
$$
be the sets representing the $u$-least and $u$-most affected subpopulation, respectively.  Here we assume
that   $\mathcal{Z}$ is compact or that the support of $(X,Y)$ has been intersected with a compact set to form $\mathcal{Z}$. 
We can construct an outer $(1-\alpha)$-confidence set for $\mathcal{M}^{-u}$ as\footnote{Note that we can also similarly construct  an inner confidence region, which is the complement of the outer confidence region of $\mathcal{X}\setminus \mathcal{M}^{-u}$, see \citen{ckm15} for relevant discussion.}
$$
\mathcal{CM}^{-u}(1-\alpha) = \{ (x,y) \in \mathcal{Z}:   \widehat \Sigma^{-1/2}(x,u) \sqrt{n}[ \widehat \Delta(x) - \widehat \Delta^*_\mu (u) ] \leq  \widehat c(1-\alpha) \},
$$
where $\widehat c(1-\alpha)$ is a consistent estimator of $c(1-\alpha)$, the $(1-\alpha)$-quantile of the random
variable
$$
V_{\infty} = \sup_{ \{x \in \mathcal{X}: \Delta(x) =  \Delta^*_\mu (u)\}}  \Sigma^{-1/2}(x,u)  [ G_\infty (x) - Z_\infty(u) ],
$$
and $x \mapsto  \hat \Sigma(x,u)$ is a uniformly consistent estimator of  
$x \mapsto \Sigma(x,u)$, the variance function of the process $G_\infty(x) - Z_\infty(u)$
defined in Section 4. The estimator $\widehat c(1-\alpha)$ can be obtained as the $(1-\alpha)$-quantile of the bootstrap version of  $V_{\infty}$,
$$
\widetilde V^*_{\infty} = \sup_{ \{ x \in \mathcal{X}:  \widehat \Delta(x) =  \widehat \Delta^*_\mu (u) \}}   \hat \Sigma^{-1/2}(x,u)  \sqrt{n} \Big (  [\widetilde \Delta (x) - \widetilde{\Delta_{\mu}^*}(u)] - [ \widehat \Delta (x) -
 \widehat \Delta^*_\mu(u)] \Big ),
$$
where $\widetilde \Delta(x)$ and $\widetilde{\Delta_{\mu}^*}(u)$ are defined as in Algorithm \ref{alg:computation}.  A similar $(1-\alpha)$-confidence set, $\mathcal{CM}^{+u}(1-\alpha),$ can be constructed for $\mathcal{M}^{+u}$. These sets can be visualized by plotting all 2 or 3 dimensional projections of their elements.   We provide an example of such plots in Section \ref{sec:empirics}. 
An immediate consequence of the set  inference results in \citen{ckm15} and the results of this paper is the following corollary:

\begin{corollary}[Inference on Most and Least Affected Subpopulations]The sets $\mathcal{CM}^{-u}(1-\alpha)$  and $\mathcal{CM}^{+u}(1-\alpha)$  cover $\mathcal{M}^{-u}$ and $\mathcal{M}^{+u}$ with probability approaching $1-\alpha$, and $\mathcal{CM}^{-u}(1-\alpha)$  and $\mathcal{CM}^{+u}(1-\alpha)$  are consistent in the sense that they approach to $\mathcal{M}^{-u}$ and $\mathcal{M}^{+u}$ at a  $\sqrt{n}$-rate with respect to the Hausdorff distance.
\end{corollary}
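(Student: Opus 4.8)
The plan is to cast the claim as a direct application of the set-inference framework of \citen{ckm15} (which itself refines \citen{cht:bounds}) and to verify its abstract hypotheses using the functional limit theory of this paper. Write $m(x,y) := \Delta(x) - \Delta^*_\mu(u)$, so that
$$
\mathcal{M}^{-u} = \{(x,y) \in \mathcal{Z} : m(x,y) \leq 0\}, \qquad \widehat m(x,y) := \widehat\Delta(x) - \widehat\Delta^*_\mu(u),
$$
with ``boundary'' equal to the lifted contour $\{x \in \mathcal{X} : \Delta(x) = \Delta^*_\mu(u)\} \times \{y : (x,y) \in \mathcal{Z}\}$; here $\mathcal{Z}$ is compact and $m$ is continuous on it.

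First I would assemble the joint limit theory. By Theorem \ref{thm:fclt}, $\widehat\Delta$ obeys a functional central limit theorem jointly with $\widehat\mu$, and the Hadamard differentiability of the sorting operator (Lemma \ref{lemma:had0}) together with the functional delta method gives the joint weak convergence
$$
\sqrt{n}\big(\widehat m(\cdot) - m(\cdot)\big) = \sqrt{n}\big(\widehat\Delta(\cdot) - \Delta(\cdot)\big) - \sqrt{n}\big(\widehat\Delta^*_\mu(u) - \Delta^*_\mu(u)\big) \rightsquigarrow G_\infty(\cdot) - Z_\infty(u)
$$
in $\ell^\infty(\mathcal{X})$, with a tight, mean-zero Gaussian limit whose variance function $x \mapsto \Sigma(x,u)$ is bounded away from zero on the contour — the last point uses the regular-value condition built into Lemma \ref{lemma:had0} ($\|\partial\Delta(x)\| > 0$ there), which prevents the sorting-operator derivative from annihilating the relevant direction. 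The exchangeable bootstrap consistency for this joint process — hence consistency of $\widetilde V^*_\infty$ for the law of $V_\infty$, and uniform consistency of the interquartile-range estimator $\widehat\Sigma(\cdot,u)$ as in Algorithm \ref{alg:computation} — follows from the same delta-method argument applied to the bootstrap process.

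Coverage is then the standard localization argument. Fix $\epsilon > 0$ and split $\mathcal{M}^{-u}$ into the $\epsilon$-neighborhood of the contour and its complement. On the complement, $m \leq -\eta(\epsilon) < 0$ by continuity and compactness, so $\widehat\Sigma^{-1/2}\sqrt{n}\widehat m \to -\infty$ uniformly; on the $\epsilon$-neighborhood, $\widehat\Sigma^{-1/2}(x,u)\sqrt{n}\widehat m(x,y) = \widehat\Sigma^{-1/2}(x,u)\sqrt{n}(\widehat m - m)(x,y) + \widehat\Sigma^{-1/2}(x,u)\sqrt{n}\, m(x,y)$, where the second term is $\leq 0$ and the first converges, by the continuous mapping theorem, to the supremum of $\Sigma^{-1/2}(G_\infty - Z_\infty)$ over that neighborhood. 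Letting $\epsilon \downarrow 0$ and invoking sample-path continuity of the Gaussian limit yields $\sup_{(x,y) \in \mathcal{M}^{-u}} \widehat\Sigma^{-1/2}(x,u)\sqrt{n}\widehat m(x,y) \rightsquigarrow V_\infty$. Since $\{\mathcal{M}^{-u} \subseteq \mathcal{CM}^{-u}(1-\alpha)\} = \{\sup_{(x,y)\in\mathcal{M}^{-u}} \widehat\Sigma^{-1/2}(x,u)\sqrt{n}\widehat m(x,y) \leq \widehat c(1-\alpha)\}$ and $\widehat c(1-\alpha) \to_{\Pr} c(1-\alpha)$ by bootstrap consistency (with $c(1-\alpha)$ a continuity point of the law of $V_\infty$, which holds by anti-concentration given the non-degeneracy above), the probability tends to $\Pr(V_\infty \leq c(1-\alpha)) = 1-\alpha$. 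Replacing $m$ by $\Delta^*_\mu(1-u) - \Delta(x)$ gives the statement for $\mathcal{CM}^{+u}(1-\alpha)$.

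For the $\sqrt{n}$-Hausdorff consistency I would establish both inclusions up to $O_{\Pr}(n^{-1/2})$. Any $(x,y) \in \mathcal{CM}^{-u}(1-\alpha)\setminus\mathcal{M}^{-u}$ satisfies $m(x,y) > 0$ and $\widehat\Sigma^{-1/2}(x,u)\sqrt{n}\widehat m(x,y) \leq \widehat c(1-\alpha)$, so $\sqrt{n}\, m(x,y) \leq \widehat\Sigma^{1/2}(x,u)\widehat c(1-\alpha) + \sqrt{n}(m - \widehat m)(x,y) = O_{\Pr}(1)$ uniformly; because $\Delta^*_\mu(u)$ is a regular value, $\|\partial\Delta\|$ is bounded below near the contour, so $m(x,y) = O_{\Pr}(n^{-1/2})$ forces $\mathrm{dist}((x,y),\mathcal{M}^{-u}) = O_{\Pr}(n^{-1/2})$. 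Conversely, for $C$ large every $(x,y) \in \mathcal{M}^{-u}$ within $C n^{-1/2}$ of the contour lies in $\mathcal{CM}^{-u}(1-\alpha)$ with probability approaching one (again by the FCLT and the lower bound on $\widehat\Sigma$), while points of $\mathcal{M}^{-u}$ farther from the contour are covered trivially. Hence $d_H(\mathcal{CM}^{-u}(1-\alpha),\mathcal{M}^{-u}) = O_{\Pr}(n^{-1/2})$, and similarly for $\mathcal{CM}^{+u}(1-\alpha)$. The main obstacle is not any single step but the bookkeeping that certifies the joint process $\sqrt{n}(\widehat\Delta - \Delta,\ \widehat\Delta^*_\mu - \Delta^*_\mu)$ and the variance estimator meet the precise regularity hypotheses of \citen{ckm15} — tightness, non-degeneracy on the contour, and bootstrap validity; once that translation is in place, the coverage and consistency conclusions are immediate.
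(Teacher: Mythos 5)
Your proposal is correct and takes essentially the same route as the paper, which offers no written proof at all: the corollary is asserted to be ``an immediate consequence of the set inference results in \citen{ckm15}'' together with Theorems \ref{thm:fclt} and \ref{thm:bfclt}, and your argument is precisely the verification of the CKM15 hypotheses (joint FCLT for $(\widehat\Delta,\widehat\Delta^*_{\widehat\mu}(u))$ via the delta method, bootstrap consistency, the localization argument for coverage, and the two-sided $O_{\Pr}(n^{-1/2})$ inclusion for Hausdorff consistency) that the paper leaves implicit. The only point worth noting is that the non-degeneracy of $\Sigma(x,u)$ on the contour is a regularity condition inherited from \citen{ckm15} rather than something that follows from the regular-value condition alone, but the paper makes the same implicit assumption.
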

%



\section{Empirical Analysis of the Gender Wage Gap}\label{sec:empirics}

We report the main results of the application to the gender wage gap using data from the U.S. March Supplement of the Current Population Survey (CPS) in 2015.  In Appendix \ref{subset:numerical} of the SM, we complement the analysis with supporting results from a simulation calibrated to this application. There, we find that our estimation and inference methods perform well in finite samples that closely mimic the characteristics of the CPS data. This exercise serves to indirectly verify the plausibility of the main regularity conditions mentioned in Section \ref{sec:model} and formally stated in Section \ref{sec:theory2}.

Our  sample consists of white, non-hispanic individuals who are aged 25 to 64 years and work more than 35 hours per week during at least 50 weeks of the year. We  exclude self-employed workers; individuals living in group quarters; individuals in the military, agricultural or private household sectors;  individuals with inconsistent reports on earnings and employment status;  individuals with allocated or missing information in any of the variables used in the analysis; and individuals with hourly wage rate below $\$3$.  The resulting  sample  contains $32,523$ workers including $18,137$  men and $14,382$ of women.

We estimate interactive linear models with additive and non-additive errors, using mean and quantile
regressions, respectively. The outcome variable $Y$ is the logarithm of the hourly wage rate constructed as the ratio of the annual earnings to the total number of hours worked, which is constructed in turn as the product of number of weeks worked and the usual number of hours worked per week.
The key covariate $T$ is an indicator for female worker, and the control variables $W$ include 5 marital status indicators (widowed, divorced, separated, never married, and married); 5 educational attainment indicators (less than high school graduate, high school graduate, some college, college graduate, and advanced degree); 4 region indicators (midwest, south, west, and northeast);  a quartic in potential experience constructed as the maximum of age minus years of schooling minus 7 and zero, i.e., $experience = \max(age-education-7,0)$; 5 occupation indicators (management, professional and related; service; sales and office; natural resources, construction and maintenance; and production, transportation and material moving); 12 industry indicators (mining, quarrying, and oil and gas extraction; construction; manufacturing; wholesale and retail trade; transportation and utilities;  information; financial services; professional and business services;  education and health services;  leisure and hospitality; other services; and public administration); and all the two-way interactions between the education, experience, occupation and industry variables except for the occupation-industry interactions.\footnote{The sample selection criteria and the variable construction follow  \citen{Mulligan-Rubinstein-08}. The occupation and industry categories follow the 2010 Census Occupational Classification and 2012 Census Industry Classification, respectively.}  All  calculations use the CPS sampling weights to account for nonrandom sampling in the March CPS.

\begin{table}[ht]\caption{Descriptive Statistics of Workers}\label{table1}
\centering
\footnotesize{\begin{tabular}{lccclccc}
  \hline\hline
 & All & Women & Men &  & All & Women & Men\\
  \hline
 Log wage & 3.15 & 3.02 & 3.25 &  O.manager & 0.48 & 0.55 & 0.43 \\
  Female & 0.44 & 1.00 & 0.00 & O.service & 0.10 & 0.10 & 0.09 \\
  MS.married & 0.65 & 0.61 & 0.68 & O.sales & 0.23 & 0.31 & 0.16 \\
  MS.widowed & 0.01 & 0.02 & 0.01 & O.construction & 0.09 & 0.01 & 0.15 \\
  MS.separated & 0.02 & 0.02 & 0.02 & O.production & 0.11 & 0.04 & 0.17 \\
  MS.divorced & 0.13 & 0.16 & 0.10  &   I.minery & 0.03 & 0.01 & 0.04 \\
  MS.Nevermarried & 0.19 & 0.18 & 0.20 & I.construction & 0.06 & 0.01 & 0.09 \\
  E.lhs & 0.02 & 0.02 & 0.03 & I.manufacture & 0.14 & 0.08 & 0.18 \\
  E.hsg & 0.25 & 0.21 & 0.28 & I.retail & 0.13 & 0.11 & 0.14 \\
  E.sc & 0.28 & 0.29 & 0.27 & I.transport & 0.04 & 0.02 & 0.06 \\
  E.cg & 0.28 & 0.30 & 0.27 & I.information & 0.02 & 0.02 & 0.03 \\
  E.ad & 0.16 & 0.18 & 0.15 &  I.finance & 0.08 & 0.10 & 0.07 \\
  R.northeast & 0.19 & 0.19 & 0.19 & I.professional & 0.11 & 0.10 & 0.13 \\
  R.midwest & 0.27 & 0.28 & 0.27 &   I.education & 0.24 & 0.40 & 0.11 \\
  R.south & 0.35 & 0.35 & 0.35 & I.leisure & 0.05 & 0.05 & 0.04 \\
  R.west & 0.18 & 0.18 & 0.19 & I.services & 0.03 & 0.03 & 0.04 \\
  Experience & 21.68 & 21.72 & 21.65 & I.public & 0.07 & 0.06 & 0.07 \\
   \hline\hline
   \multicolumn{8}{l}{\footnotesize{Source: March Supplement CPS 2015.}}
\end{tabular}}
\end{table}

Table \ref{table1} reports sample means of the variables used in the analysis. Working women are more highly educated than working men,  have about the same potential experience, and are less likely to be married and more likely to be divorced. They work relatively more often in managerial and sales occupations and in the industries providing  education and health services. Working men  are relatively more likely to work in construction and production occupations within non-service industries. The unconditional gender wage gap is 23\%.

Figure \ref{fig:gender} of Section \ref{sec:intro}  plots estimates and 90\% confidence bands for the APE and SPE-function on the treated (women) of the conditional gender wage gap using  additive and non-additive error models. The  PEs are obtained as described in Examples \ref{example:cef} and  \ref{example:cqf} with $P(T,W) = (TW,(1-T)W)$. In this case $\dim P(T,W) = 332$, which makes it very difficult to identify any pattern about the gender wage gap just by looking at the regression coefficients.
The distribution $F_{T,W}$ is estimated by the empirical distribution of $(T,W)$ for women, and $F_{\epsilon}$ is approximated by a uniform distribution over the grid $\{.02, .03, \ldots, .98\}$. The confidence bands are constructed using Algorithm \ref{alg:computation} with standard exponential weights (weighted bootstrap) and $B=500$, and are uniform for the SPE-function over the grid $\mathcal{U} = \{.01, .02, \ldots, .98\}$. We monotonize the bands using the rearrangement method described in Remark \ref{re:mono}, and  implement the finite sample corrections described in Remark \ref{re:fsc}.
After controlling for worker characteristics, the gender wage gap for women remains on average around 20\%.  More importantly, we
uncover a striking amount of heterogeneity, with the PE ranging between -6.5 and  40\% in the additive error model and between -14 and 54\% in the non-additive error model.\footnote{In the 2016 version of the paper we found similar patterns of heterogeneity using CPS 2012 data with a specification that did not include occupation and industry indicators.}


\begin{table}[ht]\caption{\label{table:class} Classification Table --  Average Characteristics of the 10\% Least and Most Affected Women by Gender Wage Gap}
\centering
\footnotesize{\begin{tabular}{lcccclcccc}
  \hline\hline
  &  \multicolumn{2}{c}{10\% Least} &  \multicolumn{2}{c}{10\% Most} & & \multicolumn{2}{c}{10\% Least} &  \multicolumn{2}{c}{10\% Most} \\
 & Est. & S.E. & Est. & S.E. & & Est. & S.E. & Est. & S.E \\
  \hline
  Log wage & 3.08 & 0.03 & 2.97 & 0.03 &   O.manager & 0.67 & 0.04 & 0.38 & 0.04 \\
  M.married & 0.28 & 0.03 & 0.87 & 0.02 & O.service & 0.08 & 0.02 & 0.10 & 0.02 \\
  M.widowed & 0.02 & 0.01 & 0.01 & 0.01 & O.sales & 0.19 & 0.03 & 0.42 & 0.04 \\
  M.separated & 0.02 & 0.01 & 0.01 & 0.00 & O.construction & 0.02 & 0.01 & 0.01 & 0.00 \\
  M.divorced & 0.15 & 0.02 & 0.07 & 0.02 &  O.production & 0.03 & 0.01 & 0.09 & 0.02 \\
  M.nevermarried & 0.52 & 0.03 & 0.04 & 0.01 &  I.minery & 0.01 & 0.01 & 0.02 & 0.01 \\
  E.lhs & 0.01 & 0.01 & 0.06 & 0.01 & I.construction & 0.02 & 0.01 & 0.02 & 0.01 \\
  E.hsg & 0.08 & 0.02 & 0.30 & 0.04 & I.manufacture & 0.02 & 0.01 & 0.11 & 0.02 \\
  E.sc & 0.15 & 0.03 & 0.23 & 0.04 & I.retail & 0.06 & 0.02 & 0.19 & 0.03 \\
  E.cg & 0.37 & 0.04 & 0.17 & 0.03 &   I.transport & 0.01 & 0.01 & 0.04 & 0.01 \\
  E.ad & 0.39 & 0.04 & 0.24 & 0.03 & I.information & 0.04 & 0.01 & 0.05 & 0.01 \\
  R.ne & 0.24 & 0.02 & 0.18 & 0.02 & I.finance & 0.03 & 0.02 & 0.09 & 0.03 \\
  R.mw & 0.26 & 0.02 & 0.28 & 0.02 &  I.professional & 0.06 & 0.02 & 0.04 & 0.02 \\
  R.so & 0.31 & 0.02 & 0.39 & 0.03 & I.education & 0.46 & 0.04 & 0.33 & 0.04 \\
  R.we & 0.19 & 0.02 & 0.15 & 0.02 & I.leisure & 0.10 & 0.03 & 0.01 & 0.01 \\
  Experience & 13.05 & 1.03 & 26.32 & 0.75 &  I.services & 0.09 & 0.02 & 0.01 & 0.01 \\
  & & & & &   I.public & 0.09 & 0.02 & 0.08 & 0.02 \\
   \hline\hline
           \multicolumn{10}{l}{\footnotesize{PE estimated from a linear conditional quantile model with interactions.}} \\
     \multicolumn{10}{l}{\footnotesize{Standard Errors obtained by weighted bootstrap with 500 repetitions.}} \\
\end{tabular}}
\end{table}

Table \ref{table:class} shows the results of a classification analysis, exhibiting characteristics of women that are most and least affected by the gender wage gap together with standard errors obtained by weighted bootstrap. We focus here on the non-additive model, but the results from the additive model are similar. Since the PE are predominantly negative, we define the most affected as $\Delta(X) < \Delta^*_{\mu}(u)$ and the lest affected as $\Delta(X) > \Delta^*_{\mu}(1-u)$ to facilitate the interpretation.  According to this model the 10\% of the women \textit{most affected} by the gender wage gap on average earn lower wages, are much more likely to be married, much less likely to be never married, have lower education, live in the South, possess much more potential experience, are more likely to have sales and non managerial occupations, and work more often in manufacture and retail and less often in education industries than the 10\% least affected women.

\begin{table}[ht]\caption{\label{table:class2} Classification Table -- Difference in the Average Characteristics of the 10\% Most and Least Affected Women by Gender Wage Gap}
\centering
\footnotesize{\begin{tabular}{lcccclcccc}
  \hline
 & Est. & S.E. & P-val.$^1$ & JP-val.$^2$ & & Est. & S.E. & P-val.$^1$ & JP-val.$^2$ \\
  \hline
Log wage & -0.10 & 0.04 & 0.03 & 0.70 &   O.manager & -0.29 & 0.06 & 0.00 & 0.00 \\
  M.married & 0.59 & 0.04 & 0.00 & 0.00 & O.service & 0.02 & 0.03 & 0.99 & 1.00 \\
  M.widowed & -0.02 & 0.02 & 0.93 & 1.00 & O.sales & 0.22 & 0.06 & 0.00 & 0.03 \\
  M.separated & -0.01 & 0.01 & 0.89 & 1.00 & O.construction & -0.01 & 0.01 & 0.67 & 1.00 \\
  M.divorced & -0.08 & 0.04 & 0.46 & 0.86 & O.production & 0.06 & 0.02 & 0.08 & 0.42 \\
  M.nevermarried & -0.48 & 0.04 & 0.00 & 0.00 & I.minery & 0.01 & 0.01 & 0.97 & 1.00 \\
  E.lhs & 0.05 & 0.01 & 0.01 & 0.17 & I.construction & -0.00 & 0.01 & 1.00 & 1.00 \\
  E.hsg & 0.22 & 0.05 & 0.00 & 0.00 &  I.manufacture & 0.08 & 0.02 & 0.02 & 0.15 \\
  E.sc & 0.08 & 0.06 & 0.70 & 1.00 &   I.retail & 0.12 & 0.04 & 0.06 & 0.32 \\
  E.cg & -0.19 & 0.06 & 0.01 & 0.16 &  I.transport & 0.04 & 0.01 & 0.08 & 0.39 \\
  E.ad & -0.15 & 0.06 & 0.07 & 0.46 &  I.information & 0.01 & 0.02 & 1.00 & 1.00 \\
  R.ne & -0.06 & 0.04 & 0.35 & 0.99 &  I.finance & 0.06 & 0.04 & 0.78 & 0.99 \\
  R.mw & 0.02 & 0.04 & 0.95 & 1.00 & I.professional & -0.01 & 0.03 & 1.00 & 1.00 \\
  R.so & 0.08 & 0.04 & 0.23 & 0.97 & I.education & -0.13 & 0.06 & 0.29 & 0.74 \\
  R.we & -0.04 & 0.03 & 0.69 & 1.00 & I.leisure & -0.09 & 0.03 & 0.04 & 0.22 \\
  Experience & 13.27 & 1.54 & 0.00 & 0.00 &  I.services & -0.07 & 0.02 & 0.02 & 0.15 \\
   &  &  &  &  &  I.public & -0.01 & 0.03 & 1.00 & 1.00 \\
   \hline\hline
        \multicolumn{10}{l}{\footnotesize{PE estimated from a linear conditional quantile model with interactions.}} \\
     \multicolumn{10}{l}{\footnotesize{Standard Errors and p-values obtained by weighted bootstrap with 500 repetitions.}} \\
     \multicolumn{10}{l}{\footnotesize{$^1$ These p-values are adjusted for multiplicity to account
     for joint testing of zero coefficients }} \\
        \multicolumn{10}{l}{\footnotesize{ on for all variables within a category: M E, R, O, or I.}} \\
 
     \multicolumn{10}{l}{\footnotesize{$^2$ These p-values are adjusted for multiplicity to account
     for joint testing of zero coefficients }}\\
           \multicolumn{10}{l}{\footnotesize{     on all the variables in the table.}} \\
\end{tabular}}
\end{table}

Table \ref{table:class2} tests if the differences found in table \ref{table:class} are statistically significant. It reports p-values for the test of equality of means for most and least affected women. The first p-value accounts for simultaneous inference on all variables within a given category. For example, it accounts that we are conducting five tests corresponding to the five categories of marital status. For the non categorical variables log wage and experience  the p-values are for one test. The second p-value accounts for simultaneous inference of all the differences displayed in the table.\footnote{We employ the so called "single-step" methods for controlling the family-wise error rate. To generate a (somewhat) higher power, we recommend to employ the p-values generated via "step-down" methods, such as those reported in \citen{RWpvals} and \citen{lsx16}.} These p-values are obtained by Algorithm \ref{alg:ca} with the appropriate choice of vectors of linear combinations and set $\TT$, and 500 weighted bootstrap repetitions. The p-values show that most of the differences from table \ref{table:class} are statistically significant at conventional significant levels after controlling for simultaneous inference. In particular, the most affected women are significantly more likely to be married, high-school graduates, more experienced, and  in sales occupations, and less likely to be never married and in managerial occupations under the most strict simultaneous inference correction. \citen{bk17} have recently documented the importance of differences in occupation and industry to explain the gender wage gap using data from the Panel Study of Income Dynamics (PSID) 1980-2010 and a different methodology based on wage decompositions. Consistent with our findings, they argue that this importance might be due to compensating differentials.  Unlike \citen{bk17} and previous studies in the literature, our analysis uncovers significant heterogeneity in the extent of the gender wage gap and relates this heterogeneity to human capital, occupation, industry and other characteristics.



\begin{figure}[!h]
\centering
\includegraphics[width=\textwidth,height=.5\textwidth]{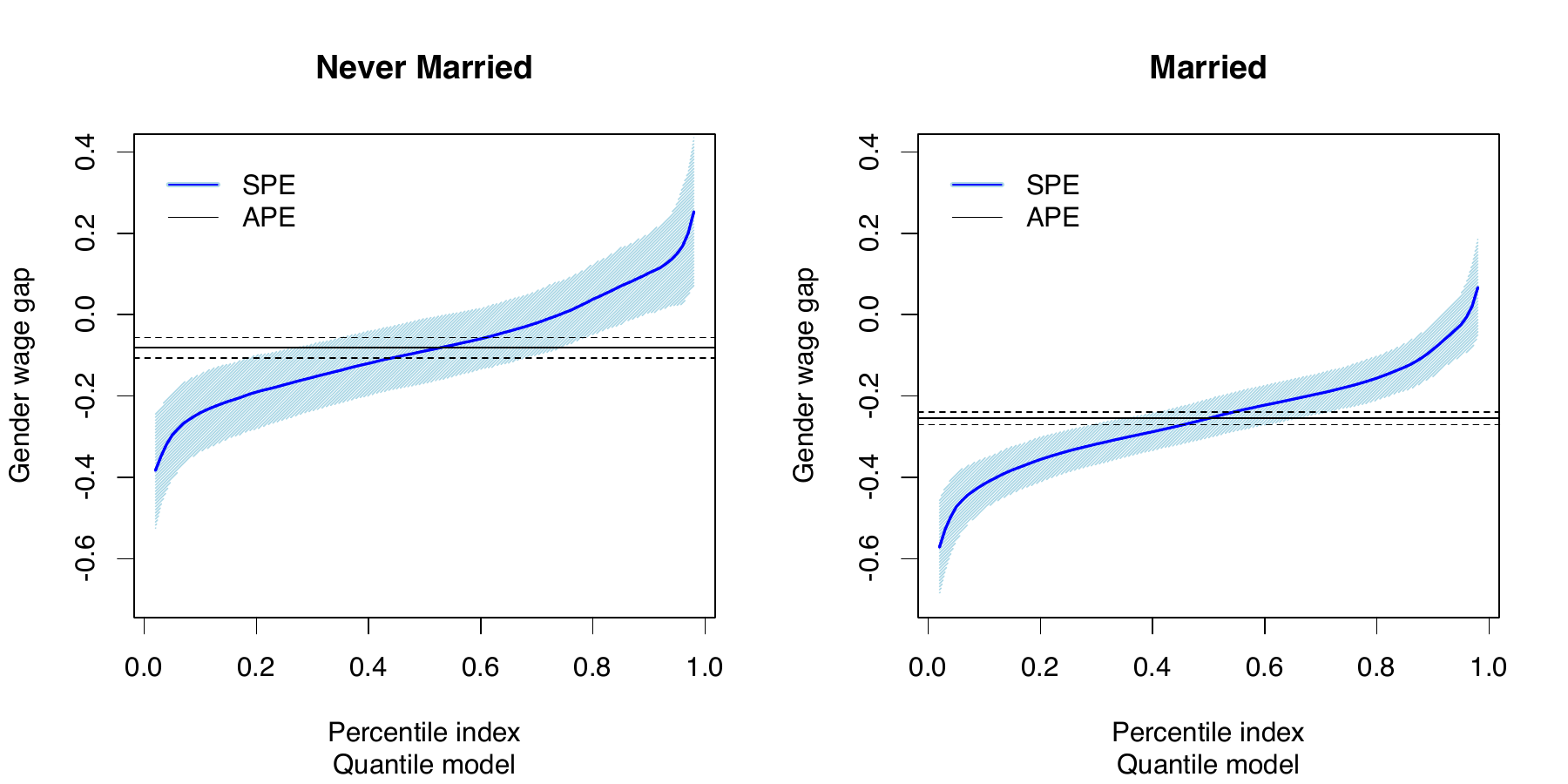}
\caption{APE and SPE  of the gender wage gap for women by marital status.  Estimates and 90\% bootstrap uniform confidence bands  for the conditional quantile function. }\label{fig:gender-ms}
\end{figure}

\begin{figure}[!h]
\centering
\includegraphics[width=\textwidth,height=.5\textwidth]{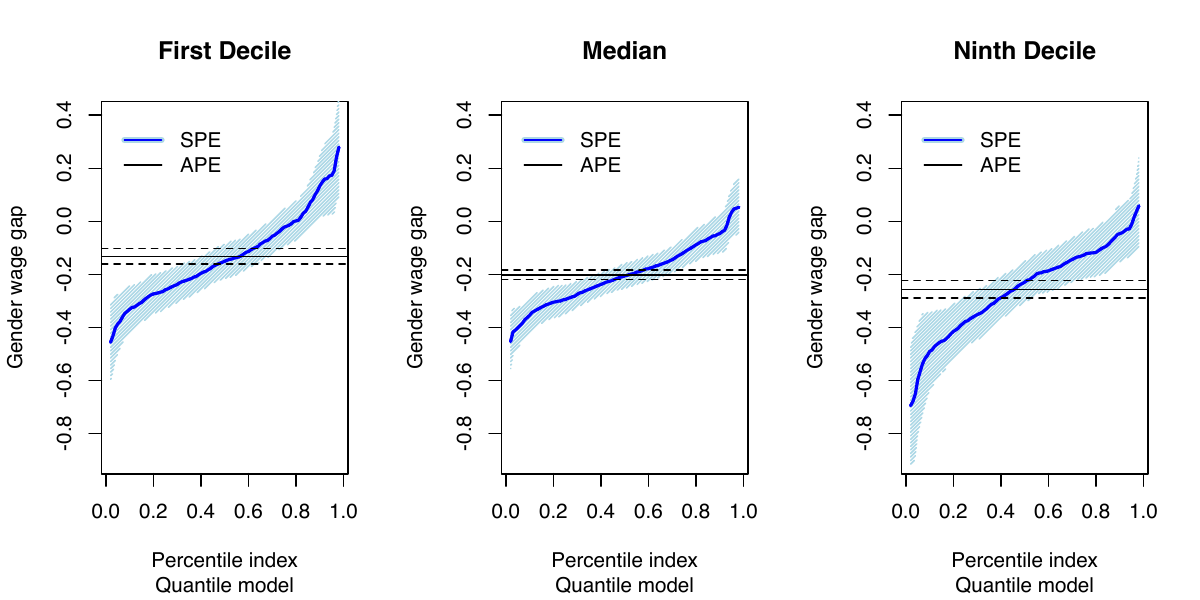}
\caption{APE and SPE  of the gender wage gap for women by unobserved ranking in the conditional distribution.  Estimates and 90\% bootstrap uniform confidence bands  for the conditional quantile function. }\label{fig:gender-rank}
\end{figure}

We further explore these findings by analyzing  the APE and SPE on the treated \textit{conditional} on
 marital status and unobserved rank in the non-additive error model. Figures \ref{fig:gender-ms} and \ref{fig:gender-rank} show estimates and 90\% confidence  bands of the APE and SPE-function of the gender wage gap for  2 subpopulations defined by marital status (married and never married) and 3 subpopulations defined by unobserved rank (first decile, median and ninth decile,  where the unobserved rank is .1, .5 and .9, respectively).  The confidence bands are constructed as in fig. \ref{fig:gender}.   We find significant heterogeneity in  the gender gap within each subpopulation, and also between subpopulations defined by marital status and unobserved rank.
The SPE-function is more negative  for married women and at the tails of the conditional distribution. Married women at the top decile suffer from the highest gender wage gaps.  This pattern is consistent with ``glass-ceiling'' effects behind the gender wage gap \cite{abv03}.

\begin{figure}[!h]
\centering
\includegraphics[width=.45\textwidth,height=.5\textwidth]{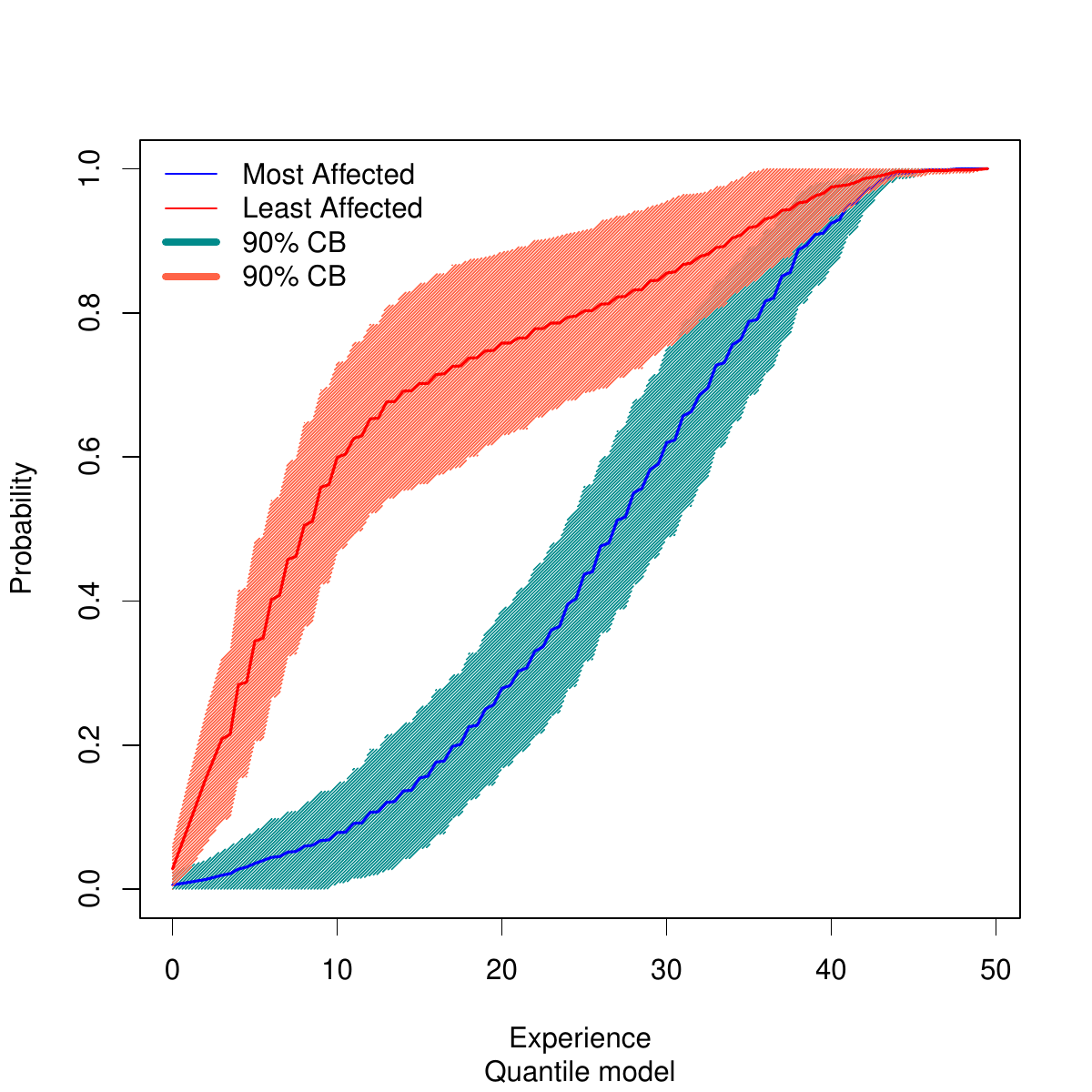}
\includegraphics[width=.45\textwidth,height=.5\textwidth]{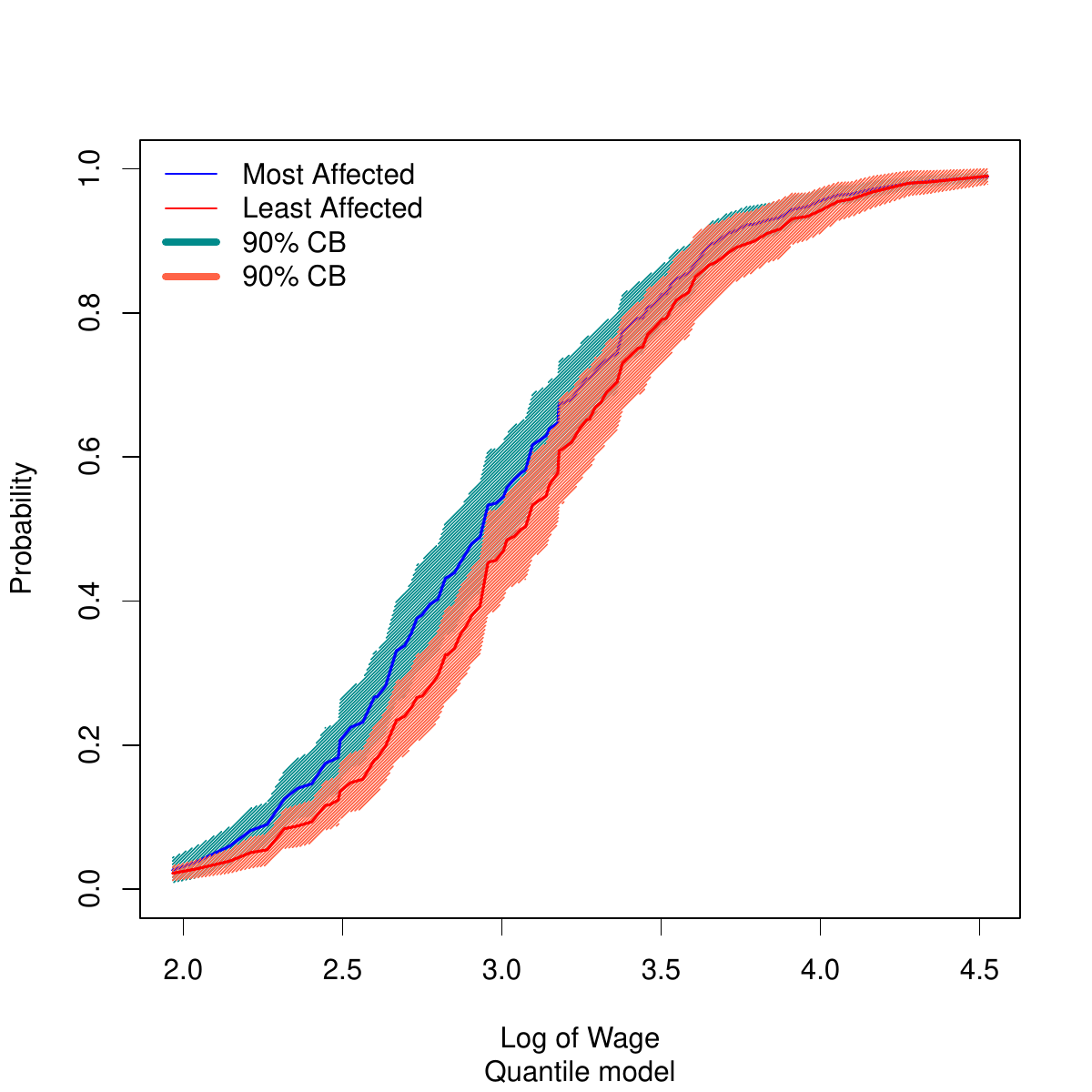}
\caption{Estimates and 90\% weighted bootstrap joint uniform confidence bands for the distributions of experience and log wages of the 10\% most and least affected women by gender wage gap.}\label{fig:ca2}
\end{figure}

Figure \ref{fig:ca2} plots simultaneous 90\% confidence bands for the distribution of experience and log wage for the most and least affected women. They are obtained by Algorithm \ref{alg:ca} with 500 weighted bootstrap replications. The  estimated distribution of experience for the most affected first-order stochastically dominates the same estimated distribution for the least affected women. Moreover, the uniform bands confirm that this dominance is statistically significant at the 90\% confidence level for the underlying distributions. The estimated (marginal) distribution of log wage for the least affected first-order dominates the same estimated distribution for most affected, but we cannot reject that the underlying distributions are equal at the 10\% significance level. The results of the classification analysis are  consistent with  preferences that make never married highly educated young women working on managerial occupations be more career-oriented.\footnote{We find similar results using the additive error model. We do not report these results for the sake of brevity.}

\begin{figure}[!h]
\centering
\includegraphics[width=.45\textwidth,height=.5\textwidth,page=1]{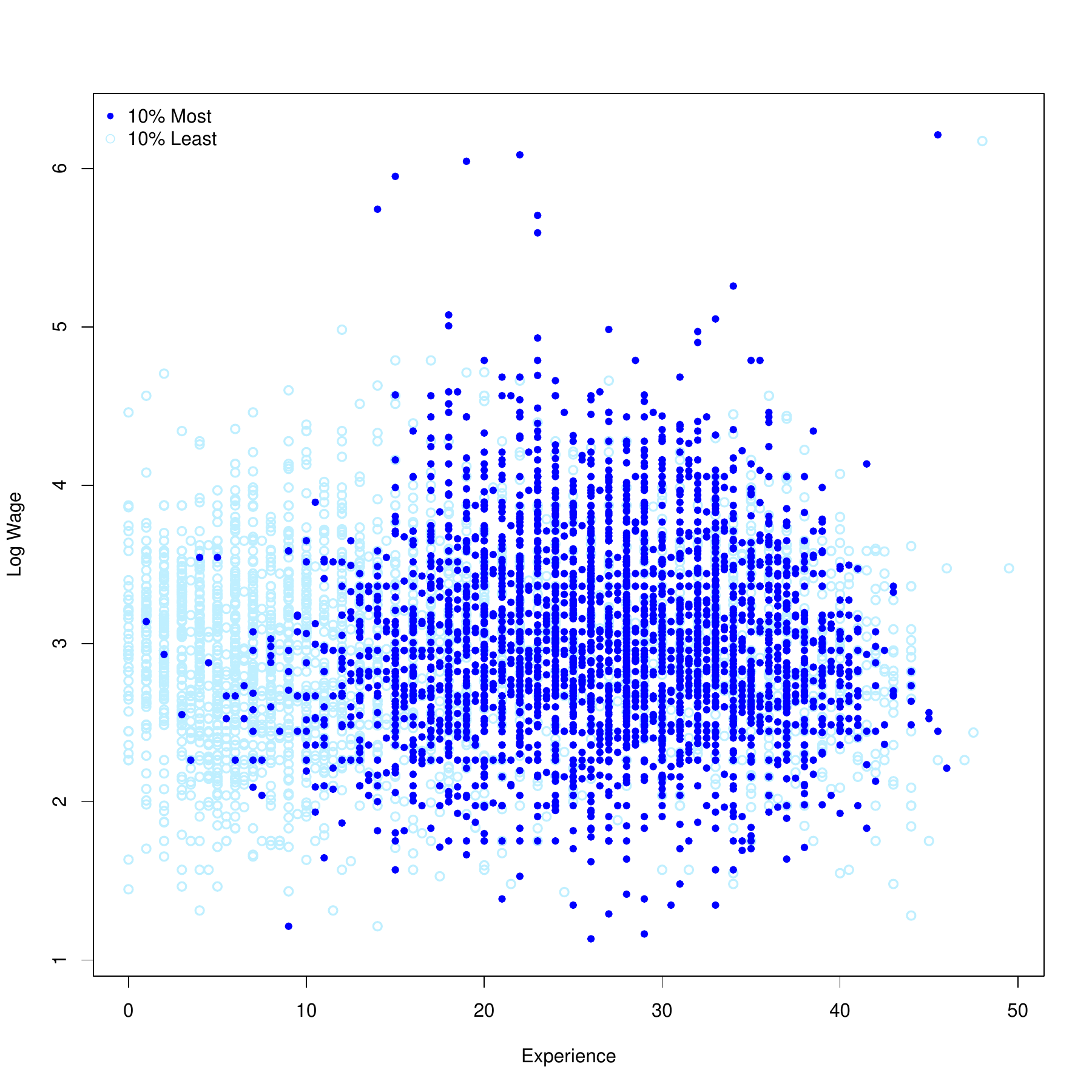}
\includegraphics[width=.45\textwidth,height=.5\textwidth,page=2]{Gender-CSPET-lm-full-ca-by_exp-nointer.pdf}
\caption{Estimates and 90\% weighted bootstrap confidence bands for projections of the confidence sets for characteristics of 10\% most and least affected women by gender wage gap.}\label{fig:ca3}
\end{figure}

Finally, Figure \ref{fig:ca3} plots two dimensional projections of experience-log wage and experience-marital status of the confidence sets for the 10\% most and least affected subpopulations. We show the results from  the additive error model for the conditional expectation. Here we use a simplified specification that excludes the two-way interactions from $W$ to get more precise estimates of all the PEs. We  obtain 90\% confidence sets for the most and least affected subpopulations by weighted bootstrap with standard exponential weights and 500 repetitions. The sets $\mathcal{CM}^{-0.1}(0.90)$ and $\mathcal{CM}^{+0.1}(0.90)$ include $23\%$ and $19\%$ of the women in the sample, respectively.\footnote{Recall that in this application the set $\mathcal{CM}^{-0.1}(0.90)$ corresponds to most affected women and  $\mathcal{CM}^{+0.1}(0.90)$ to least affected women.  We drop one woman that is included in both sets.} The projections show that there are relatively more least affected women with low experience at all wage levels, more high affected women with high wages with between 15 and 25 years of  experience, and more least affected women which are not married at all experience  levels.

\section{Detailed Large Sample Theory}\label{sec:theory2}

\subsection{Detailed Large Sample Theory for SPE}
For an open  set $\KK$, let the class $\C^1$ on $\KK$ denote the set of continuously differentiable real valued functions on  $\KK$. We make the following technical assumptions about the PE function $\Delta: \Bbb{R}^{d_x} \mapsto \Bbb{R}$ and the distribution of the covariates:

${\sf S}.1$. The part of the domain of the PE function $x \mapsto \Delta(x)$ of interest, $\mathcal{X}$, is  open and its closure $\overline{\mathcal{X}}$ is compact. The distribution $\mu$ is absolutely continuous with respect to the Lebesgue measure with density $\mu'$. There exists an open set $B(\mathcal{X})$ containing $\overline{\mathcal{X}}$ such that $x \mapsto \Delta(x)$ is $\C^1$ on $B(\mathcal{X})$, and  $x \mapsto \mu'(x)$ is continuous on  $B(\mathcal{X})$ and is zero outside the domain of interest, i.e.  $\mu'(x) = 0$ for any $x \in B(\XX) \setminus \XX$.

${\sf S}.2$. Let $\MM_{\Delta}(\delta):= \{x \in   {\XX}:  \Delta(x)=\delta\}$. For any regular value $\delta$ of $\Delta$ on $\overline{\XX}$,
we assume that the closure of $\MM_{\Delta}(\delta)$ has a finite number of connected branches.

The following property of the set $\MM_{\Delta}(\delta)$ is a useful implication of Assumptions ${\sf S}.1$ and ${\sf S}.2$ that we will exploit in the analysis.

\begin{remark}[Properties of $\MM_{\Delta}(\delta)$] By Theorem 5-1 in \citen[p. 111]{spivak-65},  ${\sf S}.1$ and ${\sf S}.2$ imply that  $\MM_{\Delta}(\delta)$ is a $(d_x-1)$-manifold without boundary in $\mathbb{R}^{d_x}$ of class $\C^1$ for any $\delta$ that is a regular value  of $x \mapsto \Delta(x)$ on  $\overline \XX$.   
\end{remark}

Assumption  ${\sf S}.1$  imposes mild smoothness conditions on the PE function $x \mapsto \Delta(x)$. It also  requires that all the components of the covariate  $X$ are continuous random variables. We defer the treatment of  the case where $X$ has both continuous and discrete  components to the SM.  As a matter of generalization, our theoretical analysis allows us to replace that $x \mapsto \mu'(x)$ vanishes on $\partial \mathcal{X}$, by the weaker condition that the intersection of $\mathcal{M}_\Delta(\delta)$ and the boundary of $\XX$  have zero volume with respect to $\mu$, namely
\begin{equation}\label{eq: Boundary}
\int_{\MM_{\Delta}(\delta)} 1\{x\in \partial \XX\} \frac{\mu'(x)}{\|\partial \Delta (x)\|} d\mathrm{Vol}=0,
\end{equation}
where $\partial \XX$ denotes the boundary of $\overline \XX$, $\partial \Delta(x)$ is the gradient of $x \mapsto \Delta(x)$, and $\int_{\MM} f(x) d\mathrm{Vol}$ denotes the integral of the function $f$ on the manifold $\MM$ with respect to volume; see Appendix \ref{sec:df} in the SM for a brief review on Differential Geometry.  This relaxation is relevant to cover the case where $X$ includes an uniformly distributed  component such as the unobserved rank in Example \ref{example:cqf}.\footnote{ In the numerical examples of Section \ref{subset:numerical} in the SM, the first  two designs only satisfy this relaxed condition.}  

Assumption ${\sf S}.2$ imposes shape restrictions on $x \mapsto \Delta(x)$ that rule out cases such as infinite cyclical oscillations or flat areas. A simple sufficient condition for  ${\sf S}.2$ is that the map $x \mapsto \Delta(x)$ does not have critical points on $\overline \XX$. This means that  $x \mapsto \Delta (x)$ is not locally flat anywhere on $\overline \XX$, which we define to mean that  the norm of the gradient, $\| \partial \Delta(x) \|$, does not vanish on $ x\in \overline \XX$. In this case, any $\delta$ 
in the image of $\overline \XX$ under $\Delta$ is regular.  This condition is probably the most relevant for practice  and can be verified in applications, at least informally.

\begin{remark}[Verification of Regularity Conditions in Practice]  The main regularity condition is that PE function $x\mapsto \Delta(x)$ be smooth and  not locally flat, namely $\| \partial \Delta(x)\|$ does not vanish. Our inferential results are developed under this assumption, and they do not apply otherwise. To verify if these results apply in practice, we strongly recommend to conduct a  Monte Carlo experiment using a data generating process that mimics the application at hand.\footnote{In fact, we recommend doing this for every econometric method.} Indeed, failure of the inference method in the simulation experiment implies failure of the regularity conditions. We provide an application of this supporting analysis to the gender wage gap example in Appendix \ref{subset:numerical}. 
Looking forward, it would be useful to develop further an inference method with good robustness properties with respect to the regularity conditions, i.e. that remains uniformly valid when the PE function is (close to being) locally flat.  We delegate this line of research to future work.\footnote{For instance, it is of interest to determine whether the use of subsampling instead of bootstrap can deliver a more robust inference method when the PE function is close to being flat; see, e.g. \citen{Romano:Shaikh:AoS}. }
\end{remark}

We make the following assumptions about the estimator of the PE.  Let $\ell^{\infty}(\mathcal{T})$ denote the set of bounded and measurable functions $g:\mathcal{T} \to \mathbb{R}$ and $\mathcal{F}$   a fixed subset of continuous  functions on $B(\mathcal{X})$. Let $\ell^{\infty} (B(\XX))$ be the set of bounded and measurable functions on $B(\XX)$ and $\rightsquigarrow$ denote weak convergence (convergence in distribution).

${\sf S}.3$. $\widehat{\Delta}$, the estimator  of $\Delta$,  belongs to $\mathcal{F}$ with probability approaching $1$ and obeys a
functional central limit theorem, namely,
$$a_n(\widehat{\Delta}-\Delta)\rightsquigarrow G_{\infty} \text{ in } \ell^{\infty}
(B(\XX)),$$ where $a_n$ is a sequence such that $a_n \to \infty$ as $n \to \infty$, and $x \mapsto G_{\infty}(x)$ is a tight process that has almost surely uniformly
continuous sample paths on $B(\XX)$.

 In the parametric and semiparametric models of Examples 1--3, ${\sf S}.3$ holds under weak conditions that guarantee  asymptotic normality of  the ML,  OLS and QR estimators. For the QR estimator in Example \ref{example:cqf} where the unobserved rank is one of the covariates, these conditions include that the density of $Y$ conditional on $X$ be bounded away from zero \cite{koenker:book}, which is facilitated by excluding tail quantile indexes.



%
%

Let $\widehat \mu$ be the estimator of the distribution $\mu$. It is convenient to identify  $\mu$  and $\widehat \mu$ with the operators: $$g \mapsto \mu (g) =\int g(x)  d\mu(x), \quad g \mapsto \widehat \mu (g) =\int g(x)  d\widehat \mu(x), $$  mapping
from the set $\mathcal{G} := \{ x \mapsto 1( f(x) \leq \delta):  f \in \mathcal{F}, \delta \in \mathcal{V}\}$   to $\Bbb{R}$, where $\mathcal{F}$  is the fixed subset of continuous  functions on $B(\mathcal{X})$ containing $\Delta$, and $\VV$ is any compact set of $\Bbb{R}$.  We  require $\mathcal{G}$ to be totally bounded under the $L^2(\mu)$ norm.
Define $\mathbb{H}$ as the set of all bounded linear operators $H$ on $\mathcal{G}$
of the form
$$
g  \mapsto H(g),
$$
which are uniformly continuous on $g \in \mathcal{G}$ under the $L^2(\mu)$ norm.
We define the boundedness of these operators
with respect to the norm:
$$\|H\|_{\mathcal{G}}=\sup_{g\in \mathcal{G}}{|H(g)|},$$
and define the corresponding distance between  two operators $H$ and $\widetilde H$ in
$\mathbb{H}$ as $\|H - \widetilde H \|_{\mathcal{G}} = \sup_{g\in \mathcal{G}}
|H(g)-\widetilde H(g)|$. Clearly,  $\mu \in \mathbb{H}$.

We make the following assumption about $\widehat \mu$.

${\sf S}.4$.
The function $x \mapsto \widehat \mu(x)$ is a distribution over $B(\XX)$ obeying in $\mathbb{H}$,
\begin{equation}
b_n(\widehat \mu - \mu) \rightsquigarrow H_{\infty},
\end{equation}
where $g\mapsto H_{\infty}(g)$ is a.s. an element of $\mathbb{H}$  (i.e. it has almost surely uniformly continuous sample paths on $\mathcal{G}$ with respect to the $L^2(\mu)$ metric)
and  $b_n$ is a sequence such that $b_n \to \infty$ as $n \to \infty$.

When $\widehat \mu$
is the  empirical distribution based on a random sample from the  population
with distribution $\mu$, then $b_n = \sqrt{n}$ and  $H_{\infty} = B_{\mu}$,  where $B_{\mu}$ is a $\mu$-Brownian Bridge, i.e. a Gaussian process with zero mean and
 covariance function $(g_1,g_2) \mapsto \mu(g_1 g_2) -  \mu(g_1) \mu(g_2)$. 
In this case condition ${\sf S}.4$ imposes that the function class  $$\mathcal{G} =  \{x \mapsto 1(f(x) \leq \delta) : f \in \mathcal{F}, \delta \in \VV \}$$ is  $\mu$-Donsker.   Note that $\mathcal{F}$ is the parameter
space that contains $\Delta(x)$ as well as $\widehat \Delta(x)$ in ${\sf S}.3$. In parametric models for the PE where $ \mathcal{F} = \{ f(x,\theta) : \theta \in \Theta \}$, $f$ is known, $\theta \subseteq \RR^{d_\theta}$ with $d_\theta < \infty$, and $x \mapsto f(x,\theta)$ is $\mathcal{C}^1$ on $\XX$ for all $\theta \in \Theta$, the class $\mathcal{G}$ is
$\mu$-Donsker under mild conditions specified for example in \citen[Chap. 19]{vdV}. Examples 1 and 2 specify the PE parametrically. Lemma  \ref{cor:donsker} in the SM gives other sufficient conditions for the Donsker property.  %


The following result is derived as a consequence of the new mathematical results on the Hadamard
differentiability of the sorting operator, stated in Lemma \ref{lemma:had3} in the Appendix (proof given in SM due to space constraints), in conjunction with the functional delta method.  It shows that the empirical SPE-function follows a FCLT over sets of quantiles corresponding to $\Delta_{\mu}^*$ pre-images of compact sets of $\mathbb{R}$.

Define $\DD$ as a compact set consisting of regular values of $x \mapsto \Delta(x)$ on $\overline \XX$,  and $\mathcal{U} :=  \{\widetilde u \in [0,1] :  \Delta^*_{\mu}(\widetilde u) \in \DD, f_{\Delta,\mu}(\Delta_{\mu}^*(\widetilde u)) > \varepsilon\},$ for a fixed $\varepsilon>0$, where $ f_{\Delta,\mu}(\Delta_{\mu}^*(\widetilde u))$ is the density of $\Delta(X)$ defined in Lemma \ref{lemma:had0}(a).  Let $r_n := a_n\wedge
b_n$, the slowest of the rates of convergence of $\widehat \Delta$ and $\widehat \mu$. Assume  $r_n/a_n \to s_{\Delta} \in [0,1]$ and $r_n/b_n \to s_{\mu} \in [0,1]$, where $s_{\Delta} = 0$ when $b_n = o(a_n)$ and $s_{\mu} = 0$ when $a_n = o(b_n)$. For example, $s_{\mu} = 0$ if $\mu$ is treated as known.

\begin{theorem}[FCLT for  $F_{\widehat \Delta,\widehat \mu}$ and $\widehat{\Delta}_{\widehat \mu}^*$]\label{thm:fclt}
Suppose that ${\sf S}.1$-${\sf S}.4$ hold, and the convergence in ${\sf S}.3$ and ${\sf S}.4$ holds jointly. Then, as $n \to \infty$,

(a) The estimator of the distribution of PE obeys a functional central limit theorem,
namely,  in $\ell^{\infty}(\DD)$,
\begin{equation*}
r_n(F_{\widehat \Delta, \widehat \mu}(\delta)-F_{\Delta,\mu}(\delta))\rightsquigarrow 
 s_{\Delta} T_{\infty}(\delta) + s_{\mu} H_{\infty}(g_{\Delta,\delta}),
\end{equation*}
as a stochastic process indexed by $\delta \in \DD$, where
$$
T_{\infty}(\delta):= -\int_{\MM_{\Delta}(\delta)}\frac{G_{\infty}(x)\mu'(x)}{\|\partial \Delta(x)\|}d\mathrm{Vol}.
$$

(b) The empirical SPE-process obeys a functional central limit
theorem, namely in $\ell^\infty (\UU)$, \begin{eqnarray}\label{eq:fclt}
r_n(\widehat\Delta^*_{\widehat \mu}(u)-\Delta^*_{\mu}(u))\rightsquigarrow  
-\frac{s_{\Delta} T_{\infty}(\Delta^*_{\mu}(u)) + s_{\mu} H_{\infty}(g_{\Delta,\Delta^*_{\mu}(u)})}{ \int \frac{\mu'(x)}{\|\partial \Delta(x)\|}d\mathrm{Vol}} 
&=:& Z_{\infty}(u),
\end{eqnarray}
as a stochastic process indexed by $u \in \UU$.
\end{theorem}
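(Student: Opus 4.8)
The plan is to prove part (a) directly by a change-of-variables/coarea argument combined with the functional delta method applied to $(\Delta,\mu)\mapsto F_{\Delta,\mu}$, and then to obtain part (b) by composing with the (inverse) quantile map and invoking the chain rule for Hadamard derivatives. Concretely, I would first decompose the scaled error into a $\Delta$-perturbation part and a $\mu$-perturbation part:
\begin{align*}
r_n\bigl(F_{\widehat\Delta,\widehat\mu}(\delta)-F_{\Delta,\mu}(\delta)\bigr)
&= \frac{r_n}{a_n}\,a_n\bigl(F_{\widehat\Delta,\mu}(\delta)-F_{\Delta,\mu}(\delta)\bigr)
 + \frac{r_n}{b_n}\,b_n\bigl(F_{\widehat\Delta,\widehat\mu}(\delta)-F_{\widehat\Delta,\mu}(\delta)\bigr) + o_{\Pr}(1),
\end{align*}
with the cross term controlled by stochastic equicontinuity (using ${\sf S}.3$, ${\sf S}.4$, and the Donsker/total-boundedness hypotheses on $\mathcal{G}$). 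For the first term, the key is the Hadamard differentiability of $\Delta\mapsto F_{\Delta,\mu}(\cdot)$ on $\ell^\infty(\DD)$, tangentially to uniformly continuous functions; its derivative in direction $h$ is $-\int_{\MM_\Delta(\delta)} h(x)\mu'(x)/\|\partial\Delta(x)\|\,d\mathrm{Vol}$. This is exactly the content of the earlier mathematical results (Lemma~\ref{lemma:had3}) — the coarea formula and the manifold structure of $\MM_\Delta(\delta)$ guaranteed by ${\sf S}.1$–${\sf S}.2$ and Sard's theorem — so I would cite that and apply the functional delta method with the tight limit $G_\infty$ from ${\sf S}.3$ to get the $s_\Delta T_\infty(\delta)$ term. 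For the second term, $\mu\mapsto\mu(g_{\Delta,\delta})$ with $g_{\Delta,\delta}(x)=1\{\Delta(x)\le\delta\}$ is linear, so its Hadamard derivative in $\mathbb{H}$ is evaluation $H\mapsto H(g_{\Delta,\delta})$, and ${\sf S}.4$ delivers $s_\mu H_\infty(g_{\Delta,\delta})$. Joint convergence in ${\sf S}.3$–${\sf S}.4$ then gives the stated joint limit in $\ell^\infty(\DD)$.

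**From (a) to (b).** For part (b) I would express $\widehat\Delta^*_{\widehat\mu}$ as the generalized inverse of $\delta\mapsto F_{\widehat\Delta,\widehat\mu}(\delta)$ restricted to the quantile indices $u\in\UU$. By construction of $\UU$, for such $u$ the value $\Delta^*_\mu(u)$ lies in the compact set $\DD$ of regular values and the density $f_{\Delta,\mu}$ is bounded below by $\varepsilon>0$ there; Lemma~\ref{lemma:had0}(a) identifies this density as $\int_{\MM_\Delta(\Delta^*_\mu(u))}\mu'(x)/\|\partial\Delta(x)\|\,d\mathrm{Vol}$, which is exactly the denominator in \eqref{eq:fclt}. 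I would then invoke the Hadamard differentiability of the quantile/inverse-map operator at continuously differentiable, strictly increasing-at-the-relevant-level distribution functions (the standard result, e.g.\ in van der Vaart–Wellner, suitably localized to $\UU$), whose derivative sends a perturbation $\eta$ of $F$ to $-\eta(\Delta^*_\mu(u))/f_{\Delta,\mu}(\Delta^*_\mu(u))$. Composing this with the derivative from part (a) via the chain rule for Hadamard derivatives, and applying the functional delta method once more, yields $Z_\infty(u)$ as written, with weak convergence in $\ell^\infty(\UU)$.

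**Main obstacle.** The delicate step is justifying the Hadamard differentiability of $\Delta\mapsto F_{\Delta,\mu}$ uniformly over $\DD$ in a way that is compatible with the inverse-map composition on $\UU$ — i.e.\ ensuring the derivative is continuous in $\delta$, that the convergence is uniform, and that the remainder in the differentiability expansion is $o(\|h_n\|_\infty)$ along sequences $h_n\to h$ in sup-norm, not merely at a fixed direction. This is where ${\sf S}.1$–${\sf S}.2$ do the real work: the $\C^1$ regularity and the finite-number-of-branches condition let one parametrize $\MM_\Delta(\delta)$ locally, apply the implicit function theorem to control how the level set moves under perturbation of $\Delta$, and bound the volume integrals uniformly; the vanishing-density-on-the-boundary condition (or its relaxation \eqref{eq: Boundary}) removes boundary contributions that would otherwise obstruct the expansion. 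Since this analytic core is precisely Lemma~\ref{lemma:had3}, the proof here reduces to verifying its hypotheses and assembling the pieces; I would spell out the verification of the tangency/domain conditions and the handling of the cross term, and relegate the manifold computations to the citation.
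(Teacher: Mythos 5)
Your proposal is correct and follows essentially the same route as the paper: the paper's proof is a one-step application of the uniform functional delta method (Lemma \ref{lemma:HS}) to the jointly Hadamard-differentiable map $(\Delta,\mu)\mapsto(F_{\Delta,\mu},\Delta^*_{\mu})$ from Lemma \ref{lemma:had3}, with $J_\infty=(s_\Delta G_\infty,s_\mu H_\infty)$, and the quantile-map composition for part (b) is already built into Lemma \ref{lemma:had3}(b). The additional decomposition into a $\Delta$-part, a $\mu$-part, and a cross term that you sketch is precisely what the paper carries out inside the proof of Lemma \ref{lemma:had3} rather than in the proof of the theorem itself, so there is no substantive difference.
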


\begin{remark}[Critical values] (a) Theorem \ref{thm:fclt} shows that  $\delta \mapsto F_{\widehat \Delta,\widehat \mu}(\delta)$  ($u \mapsto \widehat \Delta^*_{\widehat \mu}(u)$) follows a FCLT over  any compact set  $\DD$ (the $\Delta_{\mu}^*$ pre-image of $\DD$), where $\DD$ excludes the critical values of $x \mapsto \Delta(x)$ on $\overline \XX$.
Thus, we can set $\DD = \Delta(\overline \XX) := \{\Delta(x) : x \in  \overline \XX\}$ when the map $x \mapsto \Delta(x)$ does not have critical points on $\overline \XX$.
This case is nice because it allows us not to worry about critical values when performing inference, and practically relevant as it occurs very naturally in many applications. For instance, it arises whenever $\Delta(x)$ is strictly locally monotonic in some direction.
(b) In numerical examples reported in the SM, we find that the bootstrap inference method proposed performs  well even in models where $x \mapsto \Delta(x)$ has critical points, without excluding the corresponding critical values from $\DD$.  This evidence suggests that the exclusion of critical values might not be necessary for inference.
\qed
\end{remark}

\subsection{Detailed Large Sample  Theory for CA} It is convenient to modify the notation for the $u$-CA separating the dependence on $\Delta^*_{\mu}(u)$ from $\Delta$ and $\mu$ and specifying the characteristic of interest as $\varphi_t$. Moreover, when $Z = (X,Y)$ we remove the dependence on  $Y$ by taking expectations conditional on $X$. Let $\Lambda_{\widehat \Delta,\widehat \mu,\widehat\Delta^*_{\widehat \mu}(u)}(\varphi_t):= \widehat{\Lambda}_{\Delta,\mu}^{u}(t)$ and  $\Lambda_{\Delta,\mu,\Delta^*_{\mu}(u)}(\varphi_t) := \Lambda_{\Delta,\mu}^{u}(t)$, where $\varphi_t \in \mathcal{F}_M \cup \mathcal{F}_I$, $t = (t_1, \ldots, t_{d_z}) \in \mathbb{R}^{d_z}$, $u \in \UU$, $\mathcal{F}_M:=\{\int z_{1}^{t_1} \cdots  z_{d_z}^{t_{d_z}} d\mu(y \mid x): t_1,...,t_{d_z} \in \{0,1,2, \ldots\}, \int |z_{1}^{t_1} \cdots  z_{d_z}^{t_{d_z}}| d\mu(z) < \infty , t_1 + \ldots + t_{d_z} \leq  M\}$,  $M$ is some fixed integer, $\mu(y \mid x)$ is the distribution of $Y$ at $y$ conditional on $X=x$,  and $\mathcal{F}_I:=\{\int 1(z_{1}\leq t_1,...,z_{d_z}\leq t_{d_z}) d\mu(y \mid x):  t_1,...,t_{d_z} \in  \mathbb{R}\}$. For example, $\varphi_t(x) = x^{t_x} \Ep[Y^{t_y} \mid X = x]$ or $\varphi_t(x) = 1(x \leq t_x) \mu(t_y \mid x)$ for $t = (t_x,t_y)$. 
 To derive the properties of $\Lambda_{\widehat \Delta,\widehat \mu,\widehat\Delta^*_{\widehat \mu}(u)}(\varphi_t)$, we use that the class of functions $\widetilde{\mathcal{G}}=\{1(f\leq \delta)\varphi : \varphi\in \mathcal{F}_M \cup  \mathcal{F}_I,\delta\in \mathcal{V}, f\in \mathcal{F}\}$ is $\mu$-Donsker. When $x \mapsto \mu(y \mid x)$ is continuous, this property holds  by assumption ${\sf S}.4$ when $\widehat \mu$ is the empirical distribution.\footnote{Lemma \ref{lemma:Donsker_supp} in the SM gives other sufficient conditions for the Donsker property.}

The following result is derived as a consequence of the new mathematical results on the Hadamard
differentiability of the classification operator, stated in Lemma \ref{lemma:Ratio-G} in the Appendix (proof given in SM due to space constraints), in conjunction with the functional delta method. 

\begin{theorem}[FCLT for $\Lambda_{\widehat \Delta,\widehat \mu,\widehat\Delta^*_{\widehat \mu}(u)}(\varphi_t)$]\label{thm:fclt_supp1}
Suppose that ${\sf S}.1$-${\sf S}.4$ hold,  the convergence in ${\sf S}.3$ and ${\sf S}.4$ holds jointly, and $u \in \UU$. If $Z = (X,Y)$,  then assume that $\mathcal{Y}$ is compact and $x \mapsto \mu(y \mid x)$ is continuous on $B(\XX)$ for all $y \in \mathcal{Y}$. Then, as $n \to \infty$,  (a) $\Lambda_{\widehat \Delta,\widehat \mu,\widehat\Delta^*_{\widehat \mu}(u)}(\varphi_t)$ obeys a  FCLT with respect to $t \mapsto \varphi_t \in \mathcal{F}_M$,
namely,  in $\ell^{\infty}(\mathbb{R}^{d_z})^2$,
 \begin{multline*}
 r_n\left(\Lambda_{\widehat \Delta,\widehat \mu,\widehat\Delta^*_{\widehat \mu}(u)}(\varphi_t) - \Lambda_{\Delta, \mu,\Delta^*_{ \mu}(u)}(\varphi_t) \right)  \\ \rightsquigarrow \int_{\mathcal{M}_\Delta(\delta)}\widetilde \varphi_t(x)\frac{Z_{\infty}(u)-s_\Delta G_\infty(x)}{\|\partial \Delta(x)\|}\mu'(x)d\mathrm{Vol}+s_\mu H_\infty(h_{\Delta,\delta,\varphi_t}) =:  Z^u_{\infty}(t),
\end{multline*}
as a stochastic process indexed by $t \in \mathbb{R}^{d_z}$, where $\widetilde \varphi_t(x) = [\varphi_t(x) -  \Lambda_{\Delta,\mu,\delta}(\varphi_t)]/F_{\Delta,\mu}(\delta)$, $\widetilde h_{\Delta,\delta,\varphi_t}:=\widetilde \varphi_t(x)1\{\Delta(x)\leq \delta\}$, and $Z_{\infty}(u)$ is the limit process of Theorem \ref{thm:fclt}; and
 (b) if in addition Assumption ${\sf AS}.1$ holds, then $\Lambda_{\widehat \Delta,\widehat \mu,\widehat\Delta^*_{\widehat \mu}(u)}(\varphi_t)$ obeys the same  FCLT with respect to $t \mapsto \varphi_t \in \mathcal{F}_I$.
%
%
%
%
%
\end{theorem}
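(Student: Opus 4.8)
The plan is to obtain Theorem~\ref{thm:fclt_supp1} by exhibiting $\Lambda_{\widehat\Delta,\widehat\mu,\widehat\Delta^*_{\widehat\mu}(u)}(\varphi_t)$ as the image of $(\widehat\Delta,\widehat\mu)$ under a composition of two Hadamard-differentiable maps, and then invoking the functional delta method. Write the classification functional as a ratio, $\Lambda_{\Delta,\mu,\delta}(\varphi_t)=\mu(\varphi_t\,1\{\Delta\leq\delta\})/\mu(1\{\Delta\leq\delta\})$, for the least affected (and with $\delta$ replaced by the $(1-u)$-SPE and the inequality reversed for the most affected). I would view $t\mapsto\Lambda_{\widehat\Delta,\widehat\mu,\widehat\Delta^*_{\widehat\mu}(u)}(\varphi_t)$ as $\Psi(\widehat\Delta,\widehat\mu,\Phi(\widehat\Delta,\widehat\mu))$, where $\Phi:(f,\nu)\mapsto(f^*_\nu(u),f^*_\nu(1-u))$ collects the SPE maps whose Hadamard differentiability (Lemma~\ref{lemma:had3}) underlies Theorem~\ref{thm:fclt}, and $\Psi:(f,\nu,\delta)\mapsto\{t\mapsto(\Lambda_{f,\nu,\delta^-}(\varphi_t),\Lambda_{f,\nu,\delta^+}(\varphi_t))\}$ is the classification operator whose Hadamard differentiability, with explicit derivative, is the content of Lemma~\ref{lemma:Ratio-G}. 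The chain rule for Hadamard derivatives then gives differentiability of the composite map $\theta\mapsto\Psi(\theta,\Phi(\theta))$ at $\theta=(\Delta,\mu)$ tangentially to the support of the limit in ${\sf S}.3$--${\sf S}.4$, and the functional delta method (\citen[Section~3.9]{vdV-W}) delivers the stated weak convergence in $\ell^\infty(\mathbb{R}^{d_z})^2$.

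\textbf{Steps.} First I would record the joint input convergence: by ${\sf S}.3$, ${\sf S}.4$ and the joint-convergence hypothesis, after rescaling to the common rate $r_n=a_n\wedge b_n$ one has $r_n(\widehat\Delta-\Delta,\widehat\mu-\mu)\rightsquigarrow(s_\Delta G_\infty,s_\mu H_\infty)$ in $\ell^\infty(B(\XX))\times\mathbb{H}$; crucially, for part~(a) the process $\widehat\mu$ must converge not merely on $\mathcal{G}$ but on the enlarged class $\widetilde{\mathcal{G}}=\{1(f\leq\delta)\varphi:\varphi\in\mathcal{F}_M,\,\delta\in\mathcal{V},\,f\in\mathcal{F}\}$, which for the empirical distribution is exactly the assertion that $\widetilde{\mathcal{G}}$ is $\mu$-Donsker, and this is where compactness of $\mathcal{Y}$ and continuity of $x\mapsto\mu(y\mid x)$ are used, together with ${\sf S}.4$. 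Second, I would apply the functional delta method to $\Phi$, upgrading the input convergence to $r_n(\widehat\Delta-\Delta,\widehat\mu-\mu,\widehat\Delta^*_{\widehat\mu}(u)-\Delta^*_\mu(u))\rightsquigarrow(s_\Delta G_\infty,s_\mu H_\infty,Z_\infty(u))$ with $Z_\infty(u)$ as in Theorem~\ref{thm:fclt}. Third, I would check the hypotheses of Lemma~\ref{lemma:Ratio-G} at the base point $(\Delta,\mu,\Delta^*_\mu(u))$: since $u\in\UU$ the level $\Delta^*_\mu(u)$ lies in $\DD$ and is a regular value of $x\mapsto\Delta(x)$ on $\overline\XX$ with $f_{\Delta,\mu}(\Delta^*_\mu(u))>\varepsilon$; ${\sf S}.1$--${\sf S}.2$ supply the $\C^1$ structure and finiteness of the connected branches of $\MM_\Delta(\Delta^*_\mu(u))$; and $\varphi_t$ ranges over a class of the form the lemma requires ($\mathcal{F}_M$ in part~(a), $\mathcal{F}_I$ in part~(b)). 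Finally, I would read off the derivative from Lemma~\ref{lemma:Ratio-G}: the quotient rule applied to $\mu(\varphi_t1\{\Delta\leq\delta\})/\mu(1\{\Delta\leq\delta\})$ shows that the perturbations of $\Delta$ and of the level $\delta$ both localize on $\MM_\Delta(\delta)$ through the coarea formula and enter as level-direction minus $\Delta$-direction, i.e.\ $Z_\infty(u)-s_\Delta G_\infty(x)$, weighted by $\widetilde\varphi_t(x)\mu'(x)/\|\partial\Delta(x)\|$, while the perturbation of $\mu$ contributes $s_\mu H_\infty$ evaluated at $h_{\Delta,\delta,\varphi_t}=\widetilde\varphi_t\,1\{\Delta\leq\delta\}$ with $\widetilde\varphi_t=[\varphi_t-\Lambda_{\Delta,\mu,\delta}(\varphi_t)]/F_{\Delta,\mu}(\delta)$; this reproduces exactly the process $Z^u_\infty(t)$ in the statement. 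Part~(b) is then word-for-word identical once the Donsker property of $\widetilde{\mathcal{G}}$ with $\varphi\in\mathcal{F}_I$ is available, which is where Assumption~${\sf AS}.1$ is invoked.

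\textbf{Main obstacle.} The genuinely heavy mathematics — the Hadamard differentiability of the classification operator, with its derivative expressed through integration on the $(d_x-1)$-dimensional manifolds $\{\Delta=\delta\}$ — is quarantined in Lemma~\ref{lemma:Ratio-G} and is not re-proved here. Within the present argument the delicate points are two. First, one must control the empirical process $r_n(\widehat\mu-\mu)$ uniformly over the \emph{product} class $\widetilde{\mathcal{G}}$ rather than over $\mathcal{G}$ alone: for the polynomial class $\mathcal{F}_M$ this is routine given compactness of $\mathcal{Y}$, but for the indicator class $\mathcal{F}_I$ of part~(b) it is precisely the extra uniform-entropy (or bracketing) control of Assumption~${\sf AS}.1$ that makes the larger class $\mu$-Donsker. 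Second, the rate bookkeeping must be handled carefully, ensuring that the joint input convergence takes place in a space on which $\theta\mapsto\Psi(\theta,\Phi(\theta))$ is Hadamard differentiable tangentially to the (possibly degenerate, when $s_\Delta=0$ or $s_\mu=0$) support of $(s_\Delta G_\infty,s_\mu H_\infty)$, so that the chain rule and the functional delta method apply without modification.
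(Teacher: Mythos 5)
Your proposal is correct and follows essentially the paper's own route: the paper proves Theorem \ref{thm:fclt_supp1} by applying the uniform-in-index functional delta method (Lemma \ref{lemma:HS}) to the Hadamard-differentiable classification operator of Lemma \ref{lemma:Ratio-G}, with input $\widehat f_n=(\widehat\Delta,\widehat\mu,\widehat{\Delta^*_{\mu}}(u))$ converging jointly to $(s_\Delta G_\infty,s_\mu H_\infty,Z_\infty(u))$, exactly as in your composition/chain-rule framing. One small correction: Assumption ${\sf AS}.1$ is not an entropy/bracketing condition ensuring that $\widetilde{\mathcal{G}}$ is $\mu$-Donsker (that is handled separately, in Lemma \ref{lemma:Donsker_supp}, without ${\sf AS}.1$); it is a geometric condition on the volume of $\MM_\Delta(\delta)\cap\{x_k=t_k\}$ used inside the proof of the Hadamard differentiability of the classification operator for $\varphi_t\in\mathcal{F}_I$.
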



%


Assumption ${\sf AS}.1$ is a technical condition stated in Appendix \ref{app:B}
 of the SM to deal with the discontinuity of the indicator functions when $\varphi_t \in \mathcal{F}_I$. A sufficient condition for ${\sf AS}.1$ is that  $$\int_{\MM_{\Delta}(\delta) \cap \{x : x_k=t_k\}}d \Vol=0$$ holds uniformly over all $\delta\in \mathcal{V}$, $t_k \in \mathbb{R}$ and $k=1,2,...,d_x$. In other words, the manifold $\MM_\Delta(\delta)$ and the set of points $\{x : x_k = t_k\}$ can not have an intersection with positive volume of $(d_x-1)$-dimension. %

\subsection{Bootstrap Inference for SPE and CA}\label{subsec:boot}
Corollaries  \ref{cor:cb} and \ref{cor:cb2} use critical values of statistics related to the limit processes $Z_{\infty}$ and $Z^u_{\infty}$ to construct confidence bands and  p-values. These critical values can be hard to obtain in practice.  In principle one can use simulation, but it might be difficult to numerically locate and parametrize the manifold $\MM_{\Delta}(\delta)$, and to evaluate the integrals on  $\MM_{\Delta}(\delta)$ needed to compute the realizations of $Z_{\infty}(u)$ and $Z^u_{\infty}(t)$. This creates a real challenge to implement our inference methods. To deal with this challenge we employ  (exchangeable) bootstrap to compute critical values \cite{praestgaard-wellner-93,vdV-W} instead of simulation. We show that the bootstrap law is consistent to approximate the distribution of the limit processes of Theorems \ref{thm:fclt} and \ref{thm:fclt_supp1}.

To state the bootstrap validity result formally, we follow the notation and definitions in \citen{vdV-W}. Let $\mathrm{D}_n$ denote the data vector and let $\mathrm{B}_n = (\omega_1, \dots, \omega_n)$ be the vector of bootstrap weights. Consider a random element $\widetilde{Z}_n = Z_n(\mathrm{D}_n,\mathrm{B}_n)$ in a normed space $\mathbb{D}$. We say that the bootstrap law of $\widetilde{Z}_n$ consistently estimates the law of some tight random element $Z_{\infty}$ and write $\widetilde{Z}_n\rightsquigarrow_{\Pr} Z_{\infty}$ if
$$
\sup_{h \in \mathrm{BL}_1(\mathbb{D})} |\Ep_{\mathrm{B}_n} h(\widetilde{Z}_n) - \Ep_{\Pr} h(Z_{\infty})| \to_{\Pr} 0,
$$
where $ \mathrm{BL}_1(\mathbb{D})$ denotes the space of functions with Lipschitz norm at most 1; $\Ep_{\mathrm{B}_n}$ denotes the conditional expectation with respect to $\mathrm{B}_n$ given the data $\mathrm{D}_n$; $\Ep_{\Pr}$ denotes the expectation with respect to $\Pr$, the distribution of the data $\mathrm{D}_n$; and $\to_{\Pr}$ denotes convergence in (outer) probability.


The next result is a consequence of the functional delta method for the exchangeable bootstrap. Let $\Lambda_{\widetilde \Delta,\widetilde \mu,\widetilde \Delta^*_{\widetilde \mu}(u)}(\varphi_t):= \widetilde{\Lambda}_{\Delta,\mu}^{u}(t)$, the bootstrap draw of $\widehat{\Lambda}_{\Delta,\mu}^{u}(t)$ defined in Algorithm \ref{alg:ca}.
 \begin{theorem}[Bootstrap FCLT for $\widehat{ \Delta_{ \mu}^*}$ and $\Lambda_{\widehat \Delta,\widehat \mu,\widehat\Delta^*_{\widehat \mu}(u)}(\varphi_t)$] \label{thm:bfclt} Suppose
 that the bootstrap is consistent for the law of the estimator of the PE, namely $a_n(\widetilde{\Delta} - \widehat \Delta) \rightsquigarrow_{\Pr} G_{\infty}$ in $\ell^{\infty}(B(\XX))$, and
 for the law of the estimated measure, namely $b_n (\widetilde \mu - \widehat \mu) \rightsquigarrow_{\Pr} H_{\infty}$ in $ \mathbb{H}$.
Then, (1) under the assumptions of Theorem \ref{thm:fclt}, the bootstrap is consistent for the law of the empirical SPE-process, namely
$$
r_n (\widetilde{\Delta_{\mu}^*}(u)  - \widehat{\Delta_{\mu}^*}(u) ) \rightsquigarrow_{\Pr} Z_{\infty}(u) \text{ in $\ell^{\infty}(\UU)$};
$$
and (2) under the assumptions of Theorem \ref{thm:fclt_supp1}, the bootstrap is consistent for the law of the empirical CA-process, namely
$$
 r_n\left(\Lambda_{\widetilde \Delta,\widetilde \mu,\widetilde \Delta^*_{\widetilde \mu}(u)}(\varphi_t) - \Lambda_{\widehat \Delta, \widehat \mu,\widehat \Delta^*_{ \mu}(u)}(\varphi_t) \right) \rightsquigarrow  Z^u_{\infty}(t) \text{ in $\ell^{\infty}(\mathbb{R}^{d_z})^2$}.
$$
\end{theorem}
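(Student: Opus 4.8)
The plan is to derive both statements from the functional delta method for the exchangeable bootstrap (\citen[Theorem 3.9.11]{vdV-W}), using the Hadamard differentiability results already in hand: Lemma \ref{lemma:had3} for the sorting operator underlying $\widehat{\Delta^*_\mu}$, and Lemma \ref{lemma:Ratio-G} for the classification operator underlying $\Lambda_{\widehat\Delta,\widehat\mu,\widehat\Delta^*_{\widehat\mu}(u)}(\varphi_t)$. The key structural observation is that $\widehat{\Delta^*_\mu}$ and the CA functionals are exact compositions of these Hadamard-differentiable operators with the pair $(\widehat\Delta,\widehat\mu)$ (and, for the CA, with the intermediate $\widehat\Delta^*_{\widehat\mu}(u)$), and that the bootstrap draws $(\widetilde\Delta,\widetilde\mu,\widetilde\Delta^*_{\widetilde\mu}(u))$ are obtained by applying the same compositions to the bootstrapped inputs. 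Since Theorems \ref{thm:fclt} and \ref{thm:fclt_supp1} have already identified the limits $Z_\infty$ and $Z^u_{\infty}$ as precisely the Hadamard derivatives of these operators evaluated at $(s_\Delta G_\infty, s_\mu H_\infty)$, the only real work is to transport the convergence from the estimators to their bootstrap versions.

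First I would establish the joint bootstrap FCLT for the inputs: under the two hypotheses $a_n(\widetilde\Delta-\widehat\Delta)\rightsquigarrow_{\Pr} G_\infty$ in $\ell^\infty(B(\XX))$ and $b_n(\widetilde\mu-\widehat\mu)\rightsquigarrow_{\Pr} H_\infty$ in $\mathbb{H}$, together with the joint convergence postulated in ${\sf S}.3$--${\sf S}.4$, the rescaled pair $r_n((\widetilde\Delta,\widetilde\mu)-(\widehat\Delta,\widehat\mu))$ converges, conditionally on the data and in outer probability, to $(s_\Delta G_\infty, s_\mu H_\infty)$ in the product space $\ell^\infty(B(\XX))\times\mathbb{H}$; because both bootstrap processes are built from the common exchangeable weights $\mathrm{B}_n$, joint conditional convergence follows from the marginal conditional convergences exactly as in the non-bootstrap case. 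Second I would invoke Lemma \ref{lemma:had3}: the limit $(s_\Delta G_\infty, s_\mu H_\infty)$ of the non-bootstrap process $r_n((\widehat\Delta,\widehat\mu)-(\Delta,\mu))$ is tight and concentrated on the linear subspace of continuous/uniformly continuous paths on which the sorting operator is Hadamard differentiable — the regular-value requirement being encoded in the definitions of $\DD$ and $\UU$ via Lemma \ref{lemma:had0} — so the hypotheses of \citen[Theorem 3.9.11]{vdV-W} are met and yield $r_n(\widetilde{\Delta^*_\mu}(u)-\widehat{\Delta^*_\mu}(u))\rightsquigarrow_{\Pr} Z_\infty(u)$ in $\ell^\infty(\UU)$, with $Z_\infty$ the derivative process identified in Theorem \ref{thm:fclt}. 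This proves part (1).

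For part (2) I would append the intermediate coordinate: by the chain rule for Hadamard differentiability, $(\Delta,\mu)\mapsto\Lambda_{\Delta,\mu,\Delta^*_\mu(u)}(\varphi_t)$ is the composition of $(\Delta,\mu)\mapsto(\Delta,\mu,\Delta^*_\mu(u))$ (Hadamard differentiable by Lemma \ref{lemma:had3}) with the classification operator $(\Delta,\mu,\delta)\mapsto\Lambda_{\Delta,\mu,\delta}(\varphi_t)$ (Hadamard differentiable by Lemma \ref{lemma:Ratio-G}, using that $\widetilde{\mathcal{G}}$ is $\mu$-Donsker and, when $Z=(X,Y)$, the continuity of $x\mapsto\mu(y\mid x)$ on $B(\XX)$ together with compactness of $\mathcal{Y}$). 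Applying \citen[Theorem 3.9.11]{vdV-W} to this composite map and the already-verified joint bootstrap convergence of $(\widetilde\Delta,\widetilde\mu)$ gives $r_n(\Lambda_{\widetilde\Delta,\widetilde\mu,\widetilde\Delta^*_{\widetilde\mu}(u)}(\varphi_t)-\Lambda_{\widehat\Delta,\widehat\mu,\widehat\Delta^*_{\widehat\mu}(u)}(\varphi_t))\rightsquigarrow_{\Pr} Z^u_{\infty}(t)$ in $\ell^\infty(\mathbb{R}^{d_z})^2$, with $Z^u_{\infty}$ as in Theorem \ref{thm:fclt_supp1}; for $\varphi_t\in\mathcal{F}_I$ one additionally invokes Assumption ${\sf AS}.1$ to control the indicator discontinuities, exactly as in the proof of that theorem.

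The main obstacle is verifying that the hypotheses of the bootstrap delta method genuinely apply in this infinite-dimensional, manifold-valued setting: one must ensure the limit $(s_\Delta G_\infty, s_\mu H_\infty)$ puts all its mass on the subspace on which the sorting and classification derivatives are defined and continuous — this is where the restriction of $\DD$ to compact sets of regular values, the density-bounded-away condition defining $\UU$, and the uniform-continuity structure built into $\mathcal{F}$ and $\mathbb{H}$ all enter — and that the relevant maps are appropriately (outer) measurable so the conditional convergence statements are well posed for exchangeable weights. A secondary technical point, specific to the CA, is the chain-rule step: the realized intermediate value $\widehat\Delta^*_{\widehat\mu}(u)$ must be shown to lie, with probability approaching one, in the interior of the set of regular values of $x\mapsto\Delta(x)$ on $\overline{\XX}$, so that the outer composition is differentiable along the realized path; this is handled by combining ${\sf S}.1$--${\sf S}.2$ with the definition of $\UU$, as in Theorem \ref{thm:fclt_supp1}.
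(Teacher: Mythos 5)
Your proposal is correct and follows essentially the same route as the paper: both parts are obtained by applying the functional delta method for the exchangeable bootstrap (Theorem 3.9.11 of \citen{vdV-W}) to the Hadamard-differentiable maps of Lemma \ref{lemma:had3} (for the SPE) and Lemma \ref{lemma:Ratio-G} (for the CA), with limit $(s_{\Delta}G_{\infty},s_{\mu}H_{\infty})$, respectively $(s_{\Delta}G_{\infty},s_{\mu}H_{\infty},Z_{\infty})$. The only cosmetic difference is that for part (2) you phrase the argument as a chain-rule composition starting from $(\Delta,\mu)$, whereas the paper applies the bootstrap delta method directly to the classification operator viewed as a function of the triple $(\Delta,\mu,\Delta^*_{\mu}(u))$, feeding in the bootstrap convergence of the third coordinate from part (1); these are equivalent.
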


Theorem \ref{thm:bfclt} employs the high-level condition that the bootstrap can approximate consistently the laws of $\widehat \Delta$  and $\widehat \mu$, after suitable rescaling.   In  Examples 1-3  when  $\widehat \mu$ is the empirical measure based on the random sample of size $n$,  the exchangeable bootstrap method entails randomly reweighing the sample using the weights $(\omega_{1}, \ldots, \omega_{n})$, which include empirical boostrap and i.i.d. exponential weights, for example.  In this case the high level condition holds if the weights satisfy the conditions stated in equation (3.6.8)  of \citen{vdV-W}. We refer to \citen{vdV-W} and \citen{CFM} for bootstrap FCLT for parametric and semi parametric estimators of $\Delta$ including least squares, quantile regression, and distribution regression, as well as nonparametric estimators of $\mu$ including the empirical distribution function.

\appendix

\section{Key New Mathematical Results: Hadamard Differentiability of Sorting and Classification Operators}\label{sec:theory1}

\subsection{Notation}
We denote the PE as $\Delta(x)$, the empirical PE as $\widehat{\Delta}(x)$, and
 $\partial \Delta(x) := \partial \Delta(x)/\partial x$, the gradient of $x \mapsto \Delta(x)$.  For a vector $v = (v_1, \ldots, v_{d_v}) \in \mathbb{R}^{d_v}$, $\|v\|$ denotes the Euclidian norm of $v$, that is $\|v\| = \sqrt{v\transp v}$, where the superscript  $\transp$ denotes transpose.

\subsection{Basic Analytical Properties of Sorted Functions}

The following lemma establishes the properties of the distribution function  $\delta \mapsto F_{\Delta,\mu}(\delta)$ and the SPE-function  $u \mapsto \Delta_{\mu}^*(u)$.

Define $\DD$ as a compact set consisting of regular values of $x \mapsto \Delta(x)$ on $\overline \XX$.

\begin{lemma}[Basic Properties of $F_{\Delta,\mu}$ and
$\Delta_{\mu}^*$]\label{lemma:had0}
 Under conditions ${\sf S}.1$ and ${\sf S}.2$:

1. For any  $\delta \in \DD$,
the derivative of $F_{\Delta,\mu}(\delta)$ with respect to $\delta$ is:
\begin{equation}
f_{\Delta,\mu}(\delta):= \partial_{\delta} F_{\Delta,\mu}(\delta)
=\int_{\MM_{\Delta}(\delta)}\frac{\mu'(x)}{\|\partial \Delta (x)\|}d\mathrm{Vol}.
\end{equation}
This integral is well-defined because the gradient $x \mapsto \partial \Delta(x)$ is finite,
continuous, and bounded away from $0$ on $\MM_{\Delta}(\delta)\subseteq
\overline \XX$.  The map $ \delta \mapsto f_{\Delta,\mu}( \delta)$ is uniformly continuous on $\mathcal{D}$.

2. Fix $\varepsilon>0$, then for any $u \in \mathcal{U} :=  \{\widetilde u \in [0,1] :  \Delta^*_{\mu}(\widetilde u) \in \DD, f_{\Delta,\mu}(\Delta_{\mu}^*(\widetilde u)) > \varepsilon\},$
the derivative of  $\Delta^*_\mu(u)$  respect to $u$ is:
\begin{equation}
\partial_u \Delta^*_\mu(u)= \frac{1}{f_{\Delta,\mu}(\Delta_{\mu}^*(u))}.
\end{equation}
\label{equation}
Moreover,  the derivative map $ u \mapsto \partial_u \Delta^*_\mu ( u)$ is
uniformly continuous on $\mathcal{U}$.\end{lemma}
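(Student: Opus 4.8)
The plan is to prove Part 1 by a localized change of variables that foliates a neighborhood of each level set by the level sets of $\Delta$, reducing the Lebesgue integral $F_{\Delta,\mu}(\delta)=\int 1\{\Delta(x)\le \delta\}\,\mu'(x)\,dx$ to an iterated integral whose inner integral is a surface integral over $\MM_\Delta(\delta)$; this is the coarea formula specialized to our setting, and its hypotheses hold because $\delta\in\DD$ being a regular value together with ${\sf S}.1$ makes $x\mapsto\|\partial\Delta(x)\|$ continuous and, by compactness, bounded away from $0$ on a neighborhood of $\MM_\Delta(\delta)$. Concretely, I would fix $\delta_0\in\DD$, note that $\overline{\MM_\Delta(\delta_0)}$ is a compact $\C^1$ $(d_x-1)$-manifold with finitely many connected branches (regular value theorem plus ${\sf S}.2$), cover it by finitely many coordinate charts in which some partial $\partial_j\Delta$ is bounded away from zero, and in each chart use the implicit function theorem to write $\{\Delta(x)\le t\}$ locally as a subgraph $\{x_j\le\psi(x_{-j},t)\}$ with $\psi\in\C^1$ and $\partial_t\psi=1/\partial_j\Delta$. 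Differentiating the corresponding piece of $\int 1\{\Delta\le t\}\mu'$ under the integral sign (legitimate by dominated convergence, since $\mu'$ is bounded and $\partial_t\psi$ is bounded on the chart) and recombining with a partition of unity subordinate to the cover yields the stated surface-integral formula for $f_{\Delta,\mu}(\delta_0)$, which is finite because each branch is compact of class $\C^1$ and $\mu'/\|\partial\Delta\|$ is bounded on it.

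For the continuity of $\delta\mapsto f_{\Delta,\mu}(\delta)$ on $\DD$, I would use that $\DD$ is compact and consists of regular values, so $\bigcup_{\delta\in\DD}\overline{\MM_\Delta(\delta)}$ is a compact subset of $B(\XX)$ on which $\|\partial\Delta\|$ is continuous and bounded below by some $c>0$; as $\delta$ varies the subgraph functions $\psi(\cdot,\delta)$ depend continuously on their arguments, $\mu'$ is uniformly continuous on this compact set, and the induced surface measures vary continuously, so $\delta\mapsto\int_{\MM_\Delta(\delta)}\mu'(x)/\|\partial\Delta(x)\|\,d\Vol$ is continuous; continuity on the compact set $\DD$ then upgrades automatically to uniform continuity.

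Part 2 is the inverse-function statement for quantile functions. On $\UU$ we have $f_{\Delta,\mu}(\Delta_\mu^*(u))>\varepsilon>0$, so near $\Delta_\mu^*(u)$ the map $\delta\mapsto F_{\Delta,\mu}(\delta)$ is $\C^1$ with strictly positive derivative, hence locally a $\C^1$ diffeomorphism onto its image; since $\Delta_\mu^*$ is the generalized inverse of $F_{\Delta,\mu}$ and coincides locally with the genuine inverse, the one-dimensional inverse function theorem gives $\partial_u\Delta_\mu^*(u)=1/f_{\Delta,\mu}(\Delta_\mu^*(u))$. Uniform continuity of $u\mapsto\partial_u\Delta_\mu^*(u)$ then follows by viewing it as the composition $u\mapsto\Delta_\mu^*(u)\mapsto 1/f_{\Delta,\mu}(\Delta_\mu^*(u))$: the inner map is $(1/\varepsilon)$-Lipschitz on $\UU$ by the derivative bound just obtained, while $f_{\Delta,\mu}$ is uniformly continuous on $\DD\supseteq\Delta_\mu^*(\UU)$ and bounded below by $\varepsilon$ there, so $\delta\mapsto 1/f_{\Delta,\mu}(\delta)$ is uniformly continuous on $\Delta_\mu^*(\UU)$, and the composition of uniformly continuous maps is uniformly continuous.

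I expect the main obstacle to be Part 1: making the differentiation-under-the-integral / coarea argument fully rigorous while tracking the dependence on $\delta$, in particular establishing the \emph{uniform} lower bound on $\|\partial\Delta\|$ over the whole family $\{\MM_\Delta(\delta):\delta\in\DD\}$ and the joint continuity of the chart representations in $(x,\delta)$. This is exactly where compactness of $\overline{\XX}$ and $\DD$ and the finite-branch hypothesis ${\sf S}.2$ are used; once the surface-integral formula and its continuity are in place, Part 2 is routine.
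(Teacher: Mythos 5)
Your proposal is correct and follows essentially the same route as the paper's proof: a finite cover of $\MM_{\Delta}(\delta)$ by rectangles on which some partial derivative of $\Delta$ is bounded away from zero, the implicit function theorem to write the level sets locally as graphs, a partition of unity, and the change-of-variables/Jacobian computation (your ``coarea specialization'') converting the difference quotient of $F_{\Delta,\mu}$ into the surface integral, with compactness of $\overline{\XX}$ and $\DD$ supplying the uniform bounds and Part 2 following from the inverse function theorem. No substantive differences worth noting.
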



\subsection{Functional Derivatives of Sorting-Related Operators}
We consider the properties of the distribution function and the SPE-function as functional operators $(\Delta,\mu) \mapsto F_{\Delta,\mu}$ and $(\Delta,\mu) \mapsto \Delta^*_{\mu}$. We show that these operators are Hadamard differentiable with respect to $(\Delta,\mu)$. These results are critical ingredients to deriving the large sample distributions of the empirical versions of $F_{\Delta,\mu}$ and $\Delta^*_{\mu}$  in Section \ref{sec:theory2}.

We now recall the definition of uniform Hadamard differentiability from \citen{vdV-W}.


\begin{definition}[Hadamard Derivative Uniformly in an Index]
Suppose the linear spaces $\mathbb{D}$ and $\mathbb{E}$ are equipped with
the norms $\|\cdot \|_{\mathbb{D}}$ and $\|\cdot \|_{\mathbb{E}}$, and $\Theta$
is a compact subset of a metric space. A map
$\phi_\theta: \mathbb{D}_\phi\subseteq \mathbb{D}\rightarrow \mathbb{E}$
is called Hadamard-differentiable uniformly in $\theta \in \Theta$ at $f\in \mathbb{D}_{\phi}$ tangentially to a subspace $\mathbb{D}_0 \subseteq \mathbb{D}$ if
there is a continuous linear map $\partial_f\phi_{\theta}:
\mathbb{D}_0\rightarrow \mathbb{E}$ such that uniformly in $\theta \in \Theta$:
\begin{equation}
 \frac{\phi_{\theta}(f+t_nh_n)-\phi_{\theta}(f)}{t_n} - \partial_f\phi_{\theta}[h]  \to 0, \ \
n\rightarrow \infty,
\end{equation}
for all converging real sequences $t_n\rightarrow 0$ and
$\|h_n-h\|_{\mathbb{D}}\rightarrow 0$ such that $f+t_nh_n\in \mathbb{D}_\phi$ for
every $n$, and $h \in \mathbb{D}_0$; moreover,  the map
$(\theta, h) \mapsto \partial_f\phi_{\theta}[h]$ is continuous on $\Theta \times \mathbb{D}_0$.\end{definition}




In what follows, we let $\mathbb{F}$ denote the space of continuous functions on
$B(\mathcal{X})$ equipped with the sup-norm, and  $\mathbb{F}_0$ denote
a subset of $\mathbb{F}$ that contains  uniformly continuous functions.




\begin{lemma}[Hadamard differentiability of $(\Delta,\mu) \mapsto F_{\Delta,\mu}$ and  $(\Delta,\mu) \mapsto \Delta^*_{\mu}$ ]\label{lemma:had3}  Let  $\mathbb{D}:=\mathbb{F} \times \mathbb{H}$ and
$\mathbb{D}_0:= \mathbb{F}_0\times \mathbb{H}$.
 Assume that ${\sf S}.1$-${\sf S}.2$ hold. Then,

(a) The map $(\Delta, \mu) \mapsto F_{\Delta,\mu}(\delta)$, mapping $\mathbb{D} \to  \Bbb{R}$, is Hadamard differentiable uniformly in $\delta \in \mathcal{D}$ at
$(\Delta,\mu)$ tangentially to $\mathbb{D}_0$  with the derivative map $\partial_{\Delta,\mu} F_{\Delta,\mu}(\delta):  \mathbb{D}_0  \to \Bbb{R} $ defined by
$$(G,H)\mapsto \partial_{\Delta,\mu} F_{\Delta,\mu}(\delta)[G,H]:=-\int_{\MM_{\Delta}(\delta)}\frac{G(x)\mu'(x)}{\|\partial
\Delta(x) \|}d\mathrm{Vol} + H(g_{\Delta, \delta}).$$

(b) The map $(\Delta, \mu) \mapsto \Delta_{\mu}^*(u)$, mapping $\mathbb{D} \to  \Bbb{R}$ is Hadamard differentiable uniformly in $u \in \UU$ at
$(\Delta,\mu)$ tangentially to $\mathbb{D}_0$  with
the derivative map, $\partial_{\Delta,\mu} \Delta^*_{\mu}(u): \mathbb{D}_0 \to \Bbb{R}$, defined by
$$(G,H)\mapsto \partial_{\Delta,\mu} \Delta^*_{\mu}(u)[G,H]:=
-\frac{\partial_{\Delta,\mu} F_{\Delta,\mu}(\Delta^*_\mu(u))[G,H]}{
f_{\Delta,\mu}(\Delta^*_\mu(u))}. $$
\end{lemma}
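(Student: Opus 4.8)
The plan is to establish part (a) directly from the co-area formula and then deduce part (b) by the chain rule for Hadamard derivatives combined with the implicit-function structure of the quantile. For part (a), I would first write, for perturbed inputs $\Delta_n := \Delta + t_n G_n$ with $\|G_n - G\|_\infty \to 0$ and $\mu_n := \mu + t_n H_n$, the difference quotient
\begin{equation*}
\frac{F_{\Delta_n,\mu_n}(\delta) - F_{\Delta,\mu}(\delta)}{t_n}
= \frac{\mu_n(1\{\Delta_n \le \delta\}) - \mu(1\{\Delta \le \delta\})}{t_n},
\end{equation*}
and split it into two pieces by adding and subtracting $\mu(1\{\Delta_n \le \delta\})$: the ``$\mu$-perturbation'' piece $H_n(1\{\Delta_n \le \delta\})$ plus a remainder $t_n^{-1}\bigl[\mu(1\{\Delta_n \le \delta\}) - \mu(1\{\Delta \le \delta\})\bigr]$. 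The first piece converges to $H(g_{\Delta,\delta})$ where $g_{\Delta,\delta} = 1\{\Delta \le \delta\}$, using the uniform continuity of $H \in \mathbb{H}$ in the $L^2(\mu)$ norm together with $1\{\Delta_n \le \delta\} \to 1\{\Delta \le \delta\}$ in $L^2(\mu)$ (the symmetric difference of the sub-level sets shrinks because $\|\Delta_n - \Delta\|_\infty \to 0$ and $\mu$ puts no mass on $\mathcal{M}_\Delta(\delta)$, which has zero Lebesgue measure by the regular-value assumption ${\sf S}.2$ and Sard-type reasoning from the Remark after ${\sf S}.2$).

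The heart of part (a) is the second piece. Here I would write $\mu(1\{\Delta_n \le \delta\}) - \mu(1\{\Delta \le \delta\}) = \int_{B(\XX)} \bigl(1\{\Delta(x) + t_n G_n(x) \le \delta\} - 1\{\Delta(x) \le \delta\}\bigr)\mu'(x)\,dx$, and change variables along the gradient flow of $\Delta$ near the manifold $\mathcal{M}_\Delta(\delta)$. Concretely, since $\delta$ is a regular value, $\|\partial\Delta\|$ is bounded away from zero on a neighbourhood of $\mathcal{M}_\Delta(\delta)$, so by the co-area / tubular-neighbourhood formula the integrand is supported on a thin shell of width $O(t_n)$ around $\mathcal{M}_\Delta(\delta)$, and dividing by $t_n$ and passing to the limit gives $-\int_{\mathcal{M}_\Delta(\delta)} \frac{G(x)\mu'(x)}{\|\partial\Delta(x)\|}\,d\mathrm{Vol}$, with the sign coming from whether $G(x)$ pushes points into or out of the sub-level set. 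The uniformity in $\delta \in \mathcal{D}$ follows because $\mathcal{D}$ is a compact set of regular values, so the lower bound on $\|\partial\Delta\|$, the width of the tubular neighbourhood, and the modulus of continuity of all the ingredients can be taken uniform over $\mathcal{D}$; continuity of $(\delta, G, H) \mapsto \partial_{\Delta,\mu}F_{\Delta,\mu}(\delta)[G,H]$ on $\mathcal{D}\times\mathbb{F}_0\times\mathbb{H}$ then follows from Lemma \ref{lemma:had0}(1) (continuity of $\delta \mapsto f_{\Delta,\mu}(\delta)$ and of the surface integrals) together with linearity in $(G,H)$.

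For part (b), I would invoke the standard Hadamard differentiability of the quantile (left-inverse) map: since $F_{\Delta,\mu}$ is continuously differentiable with derivative $f_{\Delta,\mu}$ bounded away from zero on $\mathcal{D}$ (Lemma \ref{lemma:had0}), and $u \in \UU$ forces $\Delta^*_\mu(u)$ to lie in $\mathcal{D}$ with $f_{\Delta,\mu}(\Delta^*_\mu(u)) > \varepsilon$, the inversion map $F \mapsto F^{-1}$ is Hadamard differentiable uniformly in $u\in\UU$ (cf. Lemma 3.9.23 in \citen{vdV-W}) with derivative $-(\cdot)/f_{\Delta,\mu}(\Delta^*_\mu(u))$ evaluated at $\Delta^*_\mu(u)$. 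Composing this with the derivative from part (a) via the chain rule for uniform Hadamard differentiability gives exactly the claimed formula $\partial_{\Delta,\mu}\Delta^*_\mu(u)[G,H] = -\partial_{\Delta,\mu}F_{\Delta,\mu}(\Delta^*_\mu(u))[G,H] / f_{\Delta,\mu}(\Delta^*_\mu(u))$, and the required joint continuity of $(u,G,H)\mapsto\partial_{\Delta,\mu}\Delta^*_\mu(u)[G,H]$ follows from that of part (a) together with the continuity of $u\mapsto\Delta^*_\mu(u)$ and $u\mapsto f_{\Delta,\mu}(\Delta^*_\mu(u))$ on $\UU$. I expect the main obstacle to be the second piece of part (a): making the tubular-neighbourhood change of variables rigorous and uniform in $\delta$, in particular controlling the remainder from replacing $G_n$ by $G$ and from the curvature of $\mathcal{M}_\Delta(\delta)$, and handling the boundary term $\mathcal{M}_\Delta(\delta)\cap\partial\XX$ via assumption ${\sf S}.1$ (or its relaxation \eqref{eq: Boundary}).
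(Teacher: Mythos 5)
Your proposal is correct and follows essentially the same route as the paper: the same decomposition into a $\mu$-perturbation term $H_n(1\{\Delta_n\le\delta\})\to H(g_{\Delta,\delta})$ (handled via $\|H_n-H\|_{\mathcal{G}}\to 0$ and the $L^2(\mu)$-continuity of $H$) plus a $\Delta$-perturbation term whose limit is the surface integral over $\MM_\Delta(\delta)$, followed by the inverse/quantile-map lemma of \citen{vdV-W} and the chain rule for part (b). The only difference is cosmetic: where you invoke a co-area/tubular-neighbourhood formula for the thin-shell limit, the paper carries out the same computation by hand with a finite rectangle cover, a partition of unity, the implicit function theorem, and a $G\pm\zeta$ sandwich to pass from $G_n$ to $G$.
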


\subsection{Functional Derivatives of Classification Operators}
Let $\widetilde{\mathbb{D}} := \mathbb{F} \times \widetilde{\mathbb{H}}\times \mathbb{R}$ and $\widetilde{\mathbb{D}}_0:= \mathbb{F}_0\times \widetilde{\mathbb{H}} \times \mathbb{R}$, where $\mathbb{F}$ and $\mathbb{F}_0$ are defined as before; $\widetilde{\mathbb{H}}$ is  the set of bounded linear operators mapping from the set $\widetilde{\mathcal{G}}:=\{\varphi 1(\Delta \leq \delta) : f \in \mathcal{F}, \varphi \in \mathcal{F}_I \cup \mathcal{F}_M, \delta\in \mathcal{V}\}$ to $ \mathbb{R}$, with norm
$$\|H\|_{\widetilde{\mathcal{G}}}=\sup_{g\in \widetilde{\mathcal{G}}}|H(g)|,$$
 where the map $g \mapsto H(g)$ is uniformly continuous on $g \in \widetilde{\mathcal{G}}$ under the $L^2(\mu)$ norm. We derive the properties of the least affected classification operator $\Lambda^{-}_{\Delta, \mu,\delta}: \widetilde{\mathbb{D}} \to \mathbb{R}$ defined  by
\begin{equation*}\label{eq:G1_supp}
\Lambda^{-}_{\Delta,\mu,\delta}(\varphi_t):=\int \varphi_t(x)1\{\Delta(x)\leq \delta\}d\mu(x)/\int 1\{\Delta(x)\leq \delta\}d\mu(x),
\end{equation*}
where $\varphi_t \in \mathcal{F}_M$ for moments  and $\varphi_t \in \mathcal{F}_I$ for distributions of the components of $Z$, and $\delta = \Delta^*_{\mu}(u)$ for some $u \in \UU$. The properties of the  most affected operator $\Lambda^{+}_{\Delta, \mu,\delta}: \widetilde{\mathbb{D}} \to \mathbb{R}$ can be derived using similar arguments, which are omitted for brevity.

\begin{lemma}[Hadamard differentiability of $(\Delta, \mu,\delta) \mapsto \Lambda^{-}_{\Delta, \mu,\delta}$]\label{lemma:Ratio-G}
Assume that Assumptions S.1 and S.2 hold, $\delta \in \DD$, and $F_{\Delta,\mu}(\delta) > 0$. Then,

(a) The map $\Lambda^{-}_{\Delta,\mu,\delta}(\varphi_t): \widetilde{\mathbb{D}} \to \mathbb{R}$ is Hadamard-differentiable uniformly  in
$\varphi_t \in \mathcal{F}_M$ at $(\Delta,\mu,\delta)$ tangentially to $\widetilde{\mathbb{D}}_0$.

(b) If in addition Assumption ${\sf AS}.1$ stated in Appendix \ref{app:B}
 of the SM holds, the map $\Lambda^{-}_{\Delta,\mu,\delta}(\varphi_t): \widetilde{\mathbb{D}} \to \mathbb{R}$ is Hadamard-differentiable uniformly  in
$\varphi_t \in \mathcal{F}_I$ at $(\Delta,\mu,\delta)$ tangentially to $\widetilde{\mathbb{D}}_0$.

(c) The derivative map $\partial_{\Delta,\mu,\delta} \Lambda^{-}_{\Delta,\mu,\delta}(\varphi_t): \widetilde{\mathbb{D}} \to \mathbb{R}$ is defined by:
$$(G,H,K)\mapsto \partial_{\Delta,\mu,\delta}\Lambda^{-}_{\Delta,\mu,\delta}(\varphi_t)[G,H,K]:= \int_{\mathcal{M}_\Delta(\delta)}\widetilde \varphi_t(x)\frac{K-G(x)}{\|\partial \Delta(x)\|}d\Vol +H(\widetilde h_{\Delta,\delta,\varphi_t}),$$
where $\widetilde \varphi_t(x) = [\varphi_t(x) -  \Lambda^{-}_{\Delta,\mu,\delta}(\varphi_t)]/\int 1(\Delta(x)\leq \delta)d\mu(x)$ and $\widetilde h_{\Delta,\delta,\varphi_t}:=\widetilde \varphi_t(x)1\{\Delta(x)\leq \delta\}$.
\end{lemma}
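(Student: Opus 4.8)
The plan is to present $\Lambda^{-}_{\Delta,\mu,\delta}$ as a ratio and reduce the problem to the Hadamard differentiability of its numerator and denominator, then invoke the chain rule. Write
$$N_{\Delta,\mu,\delta}(\varphi_t):=\int \varphi_t(x)\,1\{\Delta(x)\leq \delta\}\,d\mu(x),\qquad D_{\Delta,\mu,\delta}:=\int 1\{\Delta(x)\leq\delta\}\,d\mu(x)=F_{\Delta,\mu}(\delta),$$
so that $\Lambda^{-}_{\Delta,\mu,\delta}(\varphi_t)=N_{\Delta,\mu,\delta}(\varphi_t)/D_{\Delta,\mu,\delta}$. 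Since $D_{\Delta,\mu,\delta}=F_{\Delta,\mu}(\delta)>0$ by hypothesis, the scalar division map $(a,b)\mapsto a/b$ is Fr\'echet (hence Hadamard) differentiable at $(N_{\Delta,\mu,\delta}(\varphi_t),D_{\Delta,\mu,\delta})$, and this differentiability is uniform over $\varphi_t$ because $N_{\Delta,\mu,\delta}(\varphi_t)$ stays bounded while $D_{\Delta,\mu,\delta}$ is a fixed positive number. By the chain rule for Hadamard-differentiable maps \citen{vdV-W}, it therefore suffices to show that $(\Delta,\mu,\delta)\mapsto N_{\Delta,\mu,\delta}(\varphi_t)$ and $(\Delta,\mu,\delta)\mapsto D_{\Delta,\mu,\delta}$ are Hadamard differentiable on $\widetilde{\mathbb{D}}$ tangentially to $\widetilde{\mathbb{D}}_0$, uniformly in $\varphi_t$ over $\mathcal{F}_M$ (resp.\ $\mathcal{F}_I$), and then to assemble the derivative of the quotient as $\tfrac1D\,\partial N-\tfrac{N}{D^2}\,\partial D$; substituting $N/D=\Lambda^{-}_{\Delta,\mu,\delta}(\varphi_t)$ and $\widetilde\varphi_t=(\varphi_t-\Lambda^{-}_{\Delta,\mu,\delta}(\varphi_t))/D$ collapses this expression to the formula in part (c).

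The denominator is essentially the content of Lemma \ref{lemma:had3}(a) specialized to $\varphi_t\equiv 1$; since $\delta\mapsto F_{\Delta,\mu}(\delta)$ is $\mathcal{C}^1$ with derivative $f_{\Delta,\mu}(\delta)$ by Lemma \ref{lemma:had0}, adjoining the free third argument $\delta$ merely adds the term $K\,f_{\Delta,\mu}(\delta)$ to the derivative. The real work is in the numerator, where I would replay the proof of Lemma \ref{lemma:had3}(a) carrying the extra multiplicative weight $\varphi_t$. Given sequences $t_n\to 0$, $G_n\to G$ in $\mathbb{F}$, $H_n\to H$ in $\widetilde{\mathbb{H}}$, $K_n\to K$ in $\mathbb{R}$ with $(\Delta+t_nG_n,\mu+t_nH_n,\delta+t_nK_n)$ admissible, I split the increment of $N$ into a measure-perturbation part, which is exactly $t_n\,H_n\big(\varphi_t\,1\{\Delta+t_nG_n\leq\delta+t_nK_n\}\big)$ and converges after division by $t_n$ to $H(\varphi_t\,1\{\Delta\leq\delta\})$ because $\varphi_t\,1\{\Delta\leq\delta\}\in\widetilde{\mathcal{G}}$, the indicators converge in $L^2(\mu)$ off the $\mu$-null manifold $\mathcal{M}_\Delta(\delta)$, and elements of $\widetilde{\mathbb{H}}$ are uniformly $L^2(\mu)$-continuous with $H_n\to H$; and a level-set-perturbation part obtained with $\mu$ held fixed. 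For the latter, since $\delta$ is a regular value, $\|\partial\Delta\|$ is bounded away from zero on $\mathcal{M}_\Delta(\delta)$, so the implicit function theorem supplies a $\mathcal{C}^1$ tubular neighborhood of $\mathcal{M}_\Delta(\delta)$ in which the symmetric difference of the sublevel sets $\{\Delta+t_nG_n\leq\delta+t_nK_n\}$ and $\{\Delta\leq\delta\}$ is a thin signed tube whose width along the unit normal at $x$ equals $t_n\big(K_n-G_n(x)\big)/\|\partial\Delta(x)\|+o(t_n)$. Applying the coarea formula together with the continuity of $\mu'$ (from ${\sf S}.1$) and of $\varphi_t$ yields
$$\frac{1}{t_n}\int \varphi_t(x)\big[1\{\Delta(x)+t_nG_n(x)\leq\delta+t_nK_n\}-1\{\Delta(x)\leq\delta\}\big]\,d\mu(x)\ \longrightarrow\ \int_{\mathcal{M}_\Delta(\delta)}\varphi_t(x)\,\frac{K-G(x)}{\|\partial\Delta(x)\|}\,\mu'(x)\,d\mathrm{Vol},$$
which combined with the measure-perturbation term gives Hadamard differentiability of $N$ and, through the quotient rule above, parts (a) and (c) for $\varphi_t\in\mathcal{F}_M$.

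Uniformity in $\varphi_t$ over $\mathcal{F}_M$ and joint continuity of $(\varphi_t,G,H,K)\mapsto\partial_{\Delta,\mu,\delta}\Lambda^{-}_{\Delta,\mu,\delta}(\varphi_t)[G,H,K]$ are immediate because $\mathcal{F}_M$ is a finite family of continuous functions on the compact set $\overline{\mathcal{X}}$. For part (b) the obstacle is that elements of $\mathcal{F}_I$ are no longer continuous: $\varphi_t$ jumps across the hyperplanes $\{x:x_k=t_k\}$, and the passage to the coarea limit above fails unless these jump sets meet $\mathcal{M}_\Delta(\delta)$ in a $\mathrm{Vol}$-null set. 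This is precisely Assumption ${\sf AS}.1$; under its uniform version one decomposes each $\varphi_t\in\mathcal{F}_I$ into a piece that is continuous off a $(d_x-1)$-null subset of $\mathcal{M}_\Delta(\delta)$ (treated as in the $\mathcal{F}_M$ case, with uniformity over $\mathcal{F}_I$ following from the uniform boundedness and the monotone/VC structure of the indicators $1(x\leq t)$) plus a remainder supported in a shrinking neighborhood of the jump hyperplanes whose contribution to the tube integral is $o(t_n)$ uniformly in $t$; the measure-perturbation term is unchanged since $\varphi_t\,1\{\Delta\leq\delta\}\in\widetilde{\mathcal{G}}$ by construction. I expect this uniform control of the discontinuities of the indicator-type characteristics — validating the coarea limit uniformly over $\mathcal{F}_I$ — to be the main difficulty; the quotient-rule assembly, the denominator, and the measure-perturbation term are routine once $F_{\Delta,\mu}(\delta)>0$ and the linear-operator framework of $\widetilde{\mathbb{H}}$ and the Donsker property of $\widetilde{\mathcal{G}}$ are in place.
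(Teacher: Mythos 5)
Your proposal is correct and follows essentially the same route as the paper: the paper likewise writes $\Lambda^{-}_{\Delta,\mu,\delta}(\varphi)=\Upsilon_{\Delta,\mu,\delta}(\varphi)/\Upsilon_{\Delta,\mu,\delta}(1)$, proves Hadamard differentiability of the numerator operator by splitting the increment into a measure-perturbation term (handled through the $L^{2}(\mu)$-continuity of operators in $\widetilde{\mathbb{H}}$) and a level-set term (absorbing the $\delta$-perturbation into $\widetilde G_n = G_n - K_n$ and running the partition-of-unity/implicit-function change of variables, which is your tube/coarea computation), invokes ${\sf AS}.1$ exactly as you do to control the jump hyperplanes of $\varphi_t\in\mathcal{F}_I$, and assembles the result by the quotient rule using $F_{\Delta,\mu}(\delta)>0$. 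The only cosmetic differences are your coarea phrasing versus the paper's explicit rectangle-cover construction, and your finiteness observation for $\mathcal{F}_M$ versus the paper's continuous-convergence argument for uniformity.
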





%
%
%

\section{Proofs of Section \ref{sec:theory2}}\label{app:proofs}

We first recall Theorem 3.9.4 of \citen{vdV-W}.

\begin{lemma}[Delta-method]\label{lemma:HS}
Let $\mathbb{D}$ and $\mathbb{E}$ be metrizable topological
vector spaces, and $\Theta$
is a compact subset of a metric space. Let $\phi_{\theta}:\mathbb{D}_\phi\subseteq \mathbb{D}\rightarrow
\mathbb{E}$ be a Hadamard differentiable mapping uniformly in $\theta \in \Theta$ at $f\in
\mathbb{D}$ tangentially to $\mathbb{D}_0 \subseteq \mathbb{D}$, with derivative $\partial_f \phi_{\theta}$. Let $\widehat f_n:\Omega_n \rightarrow \mathbb{D}_\phi$ be stochastic maps taking values in $\mathbb{D}_\phi$ such that
 $r_n(\widehat f_n-f)\rightsquigarrow J_{\infty}$ for some sequence of constants $r_n\rightarrow \infty$,
where $J_{\infty}$ is separable and takes values in $\mathbb{D}_0$. Then
$r_n(\phi_{\theta}(\widehat f_n)-\phi_{\theta}(f))\rightsquigarrow \partial_f \phi_{\theta}[J_{\infty}]$, as a stochastic process indexed by $\theta \in \Theta$.
\end{lemma}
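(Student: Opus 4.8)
The plan is to deduce the statement from the extended continuous mapping theorem (Theorem~1.11.1 in \citen{vdV-W}), in the form that permits non-measurable maps and a domain that varies with $n$; this is the standard route to the functional delta-method. First I would write $\widehat f_n = f + r_n^{-1} Z_n$, where $Z_n := r_n(\widehat f_n - f) \rightsquigarrow J_{\infty}$, so that
$$
r_n\bigl(\phi_\theta(\widehat f_n) - \phi_\theta(f)\bigr) = g_n(Z_n)(\theta), \qquad g_n(h)(\theta) := r_n\bigl(\phi_\theta(f + r_n^{-1} h) - \phi_\theta(f)\bigr),
$$
with $g_n$ defined on $\mathbb{D}_n := \{h \in \mathbb{D} : f + r_n^{-1} h \in \mathbb{D}_\phi\}$ and taking values in the space of maps $\Theta \to \mathbb{E}$ equipped with the uniform metric $\rho(a,b) := \sup_{\theta \in \Theta}\|a(\theta) - b(\theta)\|_{\mathbb{E}}$. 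The candidate limit map is $g: \mathbb{D}_0 \to \ell^\infty(\Theta;\mathbb{E})$, $g(h)(\theta) := \partial_f\phi_\theta[h]$.

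The key step is verifying the sequential-continuity hypothesis of the extended continuous mapping theorem: whenever $h_n \in \mathbb{D}_n$, $h \in \mathbb{D}_0$ and $h_n \to h$ in $\mathbb{D}$, one needs $\rho\bigl(g_n(h_n), g(h)\bigr) \to 0$. Taking $t_n := r_n^{-1} \to 0$, this is precisely the defining display of Hadamard differentiability of $\phi_\theta$ uniformly in $\theta \in \Theta$ at $f$ tangentially to $\mathbb{D}_0$, because $f + t_n h_n \in \mathbb{D}_\phi$ by the definition of $\mathbb{D}_n$. Along the way I would also record that for each fixed $h \in \mathbb{D}_0$ the path $\theta \mapsto \partial_f\phi_\theta[h]$ is continuous on the compact set $\Theta$ --- this is contained in the hypothesis that $(\theta,h)\mapsto \partial_f\phi_\theta[h]$ is jointly continuous --- hence bounded; so $g$ maps into $C(\Theta;\mathbb{E}) \subseteq \ell^\infty(\Theta;\mathbb{E})$, and in particular the limit $\theta \mapsto \partial_f\phi_\theta[J_{\infty}]$ has continuous, hence bounded, sample paths, which is what makes the conclusion ``as a stochastic process indexed by $\theta$'' legitimate.

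It then remains to collect the three inputs the extended continuous mapping theorem requires: (i) $Z_n$ takes values in $\mathbb{D}_n$, which holds because $\widehat f_n = f + r_n^{-1} Z_n$ takes values in $\mathbb{D}_\phi$ by assumption; (ii) $J_{\infty}$ is separable and takes values in $\mathbb{D}_0$, by assumption; and (iii) $Z_n \rightsquigarrow J_{\infty}$ in $\mathbb{D}$, by assumption. The theorem then delivers $g_n(Z_n) \rightsquigarrow g(J_{\infty})$ in $\ell^\infty(\Theta;\mathbb{E})$, i.e. $r_n(\phi_\theta(\widehat f_n) - \phi_\theta(f)) \rightsquigarrow \partial_f\phi_\theta[J_{\infty}]$ as processes indexed by $\theta \in \Theta$, which is the assertion.

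The only genuinely delicate point --- the main obstacle --- is the measurability bookkeeping: $\phi_\theta \circ \widehat f_n$ need not be Borel measurable and $g_n$ is in general neither continuous nor measurable, so the ordinary continuous mapping theorem does not apply. The resolution is to work throughout with outer probabilities and to invoke exactly the ``extended'' continuous mapping theorem of \citen{vdV-W}, whose hypotheses are precisely the sequential-continuity property established above together with the range conditions on $Z_n$ and $J_{\infty}$. An equivalent route, which I would mention as a remark, replaces the extended continuous mapping theorem by the almost-sure representation theorem for weak convergence of possibly non-measurable maps: pass to versions $\widetilde Z_n$ of $Z_n$ and $\widetilde J$ of $J_{\infty}$ on a common probability space with $\widetilde Z_n \to \widetilde J$ outer-almost surely and $\widetilde J$ valued in $\mathbb{D}_0$, apply the uniform Hadamard differentiability pathwise to get $\rho\bigl(g_n(\widetilde Z_n), g(\widetilde J)\bigr) \to 0$ almost surely, observe that $\omega \mapsto g(\widetilde J(\omega))$ is a Borel map into $C(\Theta;\mathbb{E})$ thanks to joint continuity of the derivative, and transfer the conclusion back by equality in law.
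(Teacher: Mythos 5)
Your proposal is correct, but note that the paper does not prove this lemma at all: it simply recalls it as Theorem 3.9.4 of \citen{vdV-W}, and your argument --- writing $r_n(\phi_\theta(\widehat f_n)-\phi_\theta(f))=g_n(Z_n)(\theta)$ with $g_n(h)(\theta)=r_n(\phi_\theta(f+r_n^{-1}h)-\phi_\theta(f))$, deriving the sequential-continuity hypothesis from the uniform Hadamard derivative with $t_n=r_n^{-1}$, and invoking the extended continuous mapping theorem (Theorem 1.11.1 there) in outer probability, with joint continuity of $(\theta,h)\mapsto\partial_f\phi_\theta[h]$ giving convergence in $\ell^\infty(\Theta;\mathbb{E})$ and continuity of the limit paths --- is exactly the standard proof of that cited result, so you have in effect reconstructed the proof the paper delegates to the reference. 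The only cosmetic slip is using $\|\cdot\|_{\mathbb{E}}$ although $\mathbb{E}$ is only assumed to be a metrizable topological vector space; the uniform metric on $\Theta$-indexed paths should be built from a metric on $\mathbb{E}$ (in the paper's applications $\mathbb{E}=\mathbb{R}$, so nothing changes).
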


\begin{proof}[Proof of Theorem \ref{thm:fclt}]
The statements follow directly from Lemma \ref{lemma:had3}, and Lemma \ref{lemma:HS}, by setting $\phi_{\theta} = F_{\Delta,\mu}(\delta)$ with $\theta = \delta$ or $\phi_{\theta} = \Delta^*_{\mu}(u)$ with $\theta = u$,
$\mathbb{D}_\phi = \mathbb{D}=\mathbb{F}\times\mathbb{H}$, $\mathbb{E}=\mathbb{R}$,  $\mathbb{D}_0=\mathbb{F}_0\times\mathbb{H}$, $f=(\Delta,\mu)$, $\widehat f_n=(\widehat{\Delta}, \widehat \mu)$, and $J_{\infty} = (s_{\Delta}G_{\infty}, s_{\mu} H_{\infty})$. The expression of $\partial_f \phi_{\theta}$ for each statement is the Hadamard derivative in the corresponding statement of Lemma \ref{lemma:had3}.
\end{proof}

\begin{proof}[Proof of Theorem \ref{thm:fclt_supp1}]
The statements follow directly from Lemma \ref{eq:G1_supp}, and Lemma \ref{lemma:HS}, by setting $\phi_{\theta} = \Lambda^{-}_{\Delta,\mu,\delta}$, $\theta=t$,
$\mathbb{D}_\phi=\mathbb{D}=\mathbb{F} \times \widetilde{\mathbb{H}} \times \mathbb{R}$, $\mathbb{E}=\mathbb{R}$, $\mathbb{D}_0=\mathbb{F}_0 \times \widetilde{\mathbb{H}} \times  \mathbb{R}$, $f=(\Delta,\mu,\Delta_{\mu}^*(u))$, $\widehat f_n=(\widehat{\Delta},\widehat{\mu},\widehat{\Delta_{\mu}^*}(u))$, and $J_{\infty} = (s_{\Delta}G_{\infty},s_{\mu}H_{\infty},Z_{\infty})$. The expression of $\partial_f \phi_{\theta}$ for each statement is the Hadamard derivative in the corresponding statement of Lemma \ref{eq:G1_supp}.
\end{proof}

To prove Theorem \ref{thm:bfclt}, we recall Theorem 3.9.11 of \citen{vdV}. Here we use the notation for bootstrap convergence  $\rightsquigarrow_{\Pr}$ defined in Section \ref{subsec:boot}.

\begin{lemma}[Delta-method for bootstrap in probability]\label{lemma: bfdelta}
Let $\mathbb{D}$ and $\mathbb{E}$ be metrizable topological
vector spaces, and $\Theta$
is a compact subset of a metric space. Let $\phi_{\theta}:\mathbb{D}_{\phi}\subseteq
\mathbb{D}\mapsto \mathbb{E}$ be a Hadamard-differentiable
mapping uniformly in $\theta \in \Theta$ at $f$ tangentially to $\mathbb{D}_0$ with derivative $\partial_f \phi_{\theta}$. Let
$\widehat f_n$ be a random element such that
$r_n(\widehat f_n-f)\rightsquigarrow J_{\infty}$. Let
$\widetilde{f}_n$  be a stochastic map in $\mathbb{D}$, produced
by a bootstrap method, such that
$r_n(\widetilde{f}_n-\widehat f_n)\rightsquigarrow_{\Pr}
J_{\infty}$.
Then,
$r_n(\phi_{\theta}(\widetilde{f}_n)-\phi_{\theta}(\widehat f_n))\rightsquigarrow_{\Pr} \partial_f \phi_{\theta}[J_\infty]$, as a stochastic process indexed by $\theta \in \Theta$.
\end{lemma}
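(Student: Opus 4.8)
The plan is to mirror the proof of the ordinary uniform functional delta method (Lemma~\ref{lemma:HS}), replacing weak convergence by its conditional bootstrap counterpart and exploiting the linearity of $h\mapsto\partial_f\phi_\theta[h]$. The starting point is an elementary algebraic identity. Set $h_n':=r_n(\widehat f_n-f)$ and $h_n:=r_n(\widetilde f_n-\widehat f_n)$, so that $\widehat f_n=f+r_n^{-1}h_n'$ and $\widetilde f_n=f+r_n^{-1}(h_n+h_n')$; both points lie in $\mathbb{D}_\phi$ (the former by hypothesis on $\widehat f_n$, the latter because $\widetilde f_n$ is produced by the bootstrap as an element of the domain), and
$$r_n\bigl(\phi_\theta(\widetilde f_n)-\phi_\theta(\widehat f_n)\bigr)=\frac{\phi_\theta\bigl(f+r_n^{-1}(h_n+h_n')\bigr)-\phi_\theta(f)}{r_n^{-1}}-\frac{\phi_\theta\bigl(f+r_n^{-1}h_n'\bigr)-\phi_\theta(f)}{r_n^{-1}}.$$
Thus the bootstrap increment is the difference of two difference quotients of $\phi_\theta$ at the \emph{fixed} base point $f$, taken along the perturbation directions $h_n+h_n'$ and $h_n'$.

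Next I would bring in randomness via an almost sure representation. Since $\rightsquigarrow_{\Pr}$ of a sequence of $\mathbb{E}$-valued processes amounts to convergence in probability of the associated bounded-Lipschitz distances, it suffices by the subsequence principle to verify the conclusion along an arbitrary subsequence; passing to a further subsequence and invoking the almost sure representation theorem for the exchangeable bootstrap (\citen[Section~3.9]{vdV-W}; available since $J_\infty$ is tight and takes values in $\mathbb{D}_0$), one may realize $\widehat f_n$ together with its bootstrap version $\widetilde f_n$ on a common probability space, with the same joint and conditional laws, such that $h_n'=r_n(\widehat f_n-f)\to\bar J$ and $h_n=r_n(\widetilde f_n-\widehat f_n)\to\bar J'$ almost surely, where $\bar J$ and $\bar J'$ take values in the tangent subspace $\mathbb{D}_0$ and $\bar J\overset{d}{=}J_\infty$, $\bar J'\overset{d}{=}J_\infty$. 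Because $\mathbb{D}_0$ is a linear subspace, $h_n+h_n'\to\bar J+\bar J'\in\mathbb{D}_0$ as well.

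Now I would apply the definition of Hadamard differentiability uniformly in $\theta\in\Theta$ at $f$ tangentially to $\mathbb{D}_0$, with $t_n=r_n^{-1}\downarrow0$, to each of the two difference quotients above: the first tends to $\partial_f\phi_\theta[\bar J+\bar J']$ and the second to $\partial_f\phi_\theta[\bar J]$, \emph{uniformly in $\theta$}. Subtracting and using the linearity of $h\mapsto\partial_f\phi_\theta[h]$,
$$r_n\bigl(\phi_\theta(\widetilde f_n)-\phi_\theta(\widehat f_n)\bigr)\ \longrightarrow\ \partial_f\phi_\theta[\bar J']\qquad\text{uniformly in }\theta\in\Theta,\ \text{almost surely}.$$
Joint continuity of $(\theta,h)\mapsto\partial_f\phi_\theta[h]$ (part of the uniform Hadamard differentiability hypothesis) makes $\theta\mapsto\partial_f\phi_\theta[\bar J']$ a tight, measurable element of the space of continuous $\mathbb{E}$-valued maps on $\Theta$; hence this a.s. uniform convergence upgrades to weak convergence of the $\theta$-indexed process, and unwinding the representation and the subsequence reduction yields $r_n(\phi_\theta(\widetilde f_n)-\phi_\theta(\widehat f_n))\rightsquigarrow_{\Pr}\partial_f\phi_\theta[J_\infty]$, as claimed.

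The step I expect to cause the most trouble is the representation in the second paragraph: one has to build the joint space so that the ordinary a.s. convergence $r_n(\widehat f_n-f)\to\bar J$ and the \emph{conditional} a.s. convergence of the bootstrap increment $r_n(\widetilde f_n-\widehat f_n)\to\bar J'$ coexist, keep track of the outer-measurability issues that come with $\rightsquigarrow_{\Pr}$, and, crucially for the applicability of the uniform Hadamard differentiability hypothesis (which only constrains sequences whose limit lies in $\mathbb{D}_0$), ensure that $\bar J$ and $\bar J'$ are supported on the separable subspace $\mathbb{D}_0$. All of this bookkeeping is exactly what is carried out in \citen[Section~3.9]{vdV-W}. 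An alternative that sidesteps the representation would be to show that the sequence of maps $h\mapsto r_n\bigl(\phi_\theta(\widehat f_n+r_n^{-1}h)-\phi_\theta(\widehat f_n)\bigr)$ converges continuously along $\mathbb{D}_0$ to $\partial_f\phi_\theta[\,\cdot\,]$, uniformly in $\theta$, and then invoke the extended continuous mapping theorem for the bootstrap; the measurability bookkeeping there is comparable.
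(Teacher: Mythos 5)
The paper itself does not prove this lemma: it is recalled as Theorem 3.9.11 of \citen{vdV-W} (the bootstrap delta method), here stated with Hadamard differentiability uniform in $\theta$. Your sketch reconstructs the core of that proof correctly — the identity expressing $r_n(\phi_\theta(\widetilde f_n)-\phi_\theta(\widehat f_n))$ as a difference of two difference quotients at the fixed point $f$, the use of an a.s.\ representation of the joint unconditional convergence of $\bigl(r_n(\widehat f_n-f),\,r_n(\widetilde f_n-\widehat f_n)\bigr)$, uniform Hadamard differentiability, and linearity plus joint continuity of $(\theta,h)\mapsto\partial_f\phi_\theta[h]$ — and those are exactly the right ingredients.

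The gap is in the last step. The conclusion $\rightsquigarrow_{\Pr}$ is convergence in outer probability of the data-measurable quantity $\sup_{h\in\mathrm{BL}_1}\bigl|\Ep_{\mathrm{B}_n}h\bigl(r_n(\phi_\theta(\widetilde f_n)-\phi_\theta(\widehat f_n))\bigr)-\Ep h\bigl(\partial_f\phi_\theta[J_\infty]\bigr)\bigr|$. An almost-sure representation of the \emph{unconditional} joint law, followed by pathwise application of Hadamard differentiability, yields only \emph{unconditional} weak convergence of $r_n(\phi_\theta(\widetilde f_n)-\phi_\theta(\widehat f_n))$, which is strictly weaker; "unwinding the representation" does not recover the conditional statement, because a.s.\ convergence of realized paths says nothing directly about conditional laws. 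Nor can you condition on the data and argue pathwise, since for a fixed data sequence $r_n(\widehat f_n-f)$ does not converge, so the differentiability hypothesis cannot be applied conditionally; and the asserted representation "with the same joint and conditional laws" under which both increments converge a.s.\ is not delivered by the a.s.\ representation theorem — it is essentially what has to be proved. The standard repair, which is the actual structure of the cited proof, is to split off the linear term: use your decomposition (with the a.s.\ representation of the unconditional joint convergence) only to show that, uniformly in $\theta$, $R_n:=r_n\bigl(\phi_\theta(\widetilde f_n)-\phi_\theta(\widehat f_n)\bigr)-\partial_f\phi_\theta\bigl[r_n(\widetilde f_n-\widehat f_n)\bigr]\to 0$ in outer probability; then transfer the hypothesis $r_n(\widetilde f_n-\widehat f_n)\rightsquigarrow_{\Pr}J_\infty$ through the continuous linear map $h\mapsto\partial_f\phi_\theta[h]$ (conditional continuous mapping), and finally note that the remainder is harmless because $\Ep_{\mathrm{B}_n}\bigl(\|R_n\|\wedge 2\bigr)\to 0$ in probability. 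With that restructuring your ingredients do yield the lemma.
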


\begin{proof}[Proof of Theorem \ref{thm:bfclt}]
The statement (1) follows directly from Lemma \ref{lemma:had3}, and Lemma \ref{lemma: bfdelta}, by setting
$\phi_{\theta} = \Delta^*_{\mu}(u)$, $\theta=u$,
$\mathbb{D}_\phi=\mathbb{D}=\mathbb{F} \times \mathbb{H}$, $\mathbb{E}=\mathbb{R}$, $\mathbb{D}_0=\mathbb{F}_0 \times \mathbb{H}$, $f=(\Delta,\mu)$, $\widehat f_n=(\widehat{\Delta},\widehat{\mu})$, and $J_{\infty} = (s_{\Delta}G_{\infty},s_{\mu}H_{\infty})$. The expression of $\partial_f \phi_{\theta}$ is the Hadamard derivative in statement (b) of Lemma \ref{lemma:had3}. The statement (2) follows directly from Lemma \ref{lemma:Ratio-G}, and Lemma \ref{lemma: bfdelta}, by setting
$\phi_{\theta} = \Lambda^{-}_{\Delta,\mu,\delta}$, $\theta=t$,
$\mathbb{D}_\phi=\mathbb{D}=\mathbb{F} \times \widetilde{\mathbb{H}} \times \mathbb{R}$, $\mathbb{E}=\mathbb{R}$, $\mathbb{D}_0=\mathbb{F}_0 \times \widetilde{\mathbb{H}} \times  \mathbb{R}$, $f=(\Delta,\mu,\Delta_{\mu}^*(u))$, $\widehat f_n=(\widehat{\Delta},\widehat{\mu},\widehat{\Delta_{\mu}^*}(u))$, and $J_{\infty} = (s_{\Delta}G_{\infty},s_{\mu}H_{\infty},Z_{\infty})$. The expression of $\partial_f \phi_{\theta}$ is the Hadamard derivative in statement (c) of Lemma \ref{lemma:Ratio-G}.
\end{proof}

\newpage

\begin{center}
\Large{Supplement to ``The Sorted Effects Method: Discovering Heterogeneous Effects Beyond Their Averages''}

\normalsize{Victor Chernozhukov, Iv\'an Fern\'andez-Val, Ye Yuo}
\end{center}

\begin{abstract}
The supplementary material contains 7 appendices with additional results and some omitted proofs. Appendix \ref{sec:notation} introduces some notation. Appendix \ref{sec:df} includes a brief review of differential geometry. Appendix \ref{app:B} gathers the proofs of the key mathematical results in Appendix \ref{sec:theory1}. Appendix \ref{app:C} provides sufficient conditions for  the $\mu$-Donsker properties in Section \ref{sec:theory2}. Appendix \ref{sec:discrete} extends the theoretical analysis to include discrete covariates. Appendices \ref{subset:numerical} and \ref{sec:mortgage} report the results of 3 numerical simulations and an empirical application to the effect of race on mortgage denials, respectively.

%
\end{abstract}

\appendix

\setcounter{section}{2}
\setcounter{lemma}{9}

\section{Notation}\label{sec:notation}
For a possibly multivariate random variable $X$, $\XX$ denotes the interior of the support of $X$ in the part of the population of interest, $\mu$ denotes the distribution of $X$ over $\XX$, and $\widehat \mu$ denotes an estimator of $\mu$. We denote   the expectation with respect to the distribution $\widetilde \mu$  by  $\Ep_{\widetilde \mu}$.
We denote the PE as $\Delta(x)$, the empirical PE as $\widehat{\Delta}(x)$, and
 $\partial \Delta(x) := \partial \Delta(x)/\partial x$, the gradient of $x \mapsto \Delta(x)$. We also use
$a\wedge b$ to denote the minimum of $a$ and $b$. For a vector $v = (v_1, \ldots, v_{d_v}) \in \mathbb{R}^{d_v}$, $\|v\|$ denotes the Euclidian norm of $v$, that is $\|v\| = \sqrt{v\transp v}$, where the superscript  $\transp$ denotes transpose.
For a non-negative integer $r$ and an open  set $\KK$, the class $\C^r$ on $\KK$ includes the set of $r$ times continuously differentiable real valued functions on  $\KK$.
The symbol $\rightsquigarrow$ denotes  weak convergence (convergence in distribution), and $\to_{\Pr}$ denotes convergence in (outer) probability.

\section{Background on Differential Geometry}\label{sec:df}

We recall some definitions from differential geometry that
are used in the analysis. For a continuously differentiable
function $\Delta: B(\mathcal{X}) \rightarrow \mathbb{R}$ defined on an open set
$B(\mathcal{X}) \subseteq \mathbb{R}^{d_x}$ containing the set $\XX$, $x\in \mathcal{X}$ is a \textit{critical point} of $\Delta$ on $\XX$,
if
\begin{equation}
\partial \Delta(x)=0,
\end{equation}
where $\partial \Delta(x)$ is the gradient of $\Delta(x)$; otherwise $x$ is a \textit{regular point} of $\Delta$ on $\XX$. A value $\delta$ is a
\textit{critical value} of $\Delta$ on $\XX$ if the set $\{x \in \XX: \Delta(x) = \delta \} $ contains  at least one  critical point; otherwise $\delta$ is
a \textit{regular value} of $\Delta$ on $\XX$.

In the multi-dimensional space, $d_x>1,$ a function $\Delta$ can have  continuums
of critical points. For example, the function
$\Delta(x_1,x_2)=\cos(x_1^2+x_2^2)$ has continuums of critical points  on the circles $x_1^2+x_2^2=k\pi$ for each positive integer $k$.


We recall now several core concepts related to manifolds from \citen{spivak-65}  and \citen{munkres-91}.

\begin{definition}[Manifold] Let $d_k$, $d_x$ and $r$ be positive integers such that $d_x \geq d_k$. Suppose that
$\MM$ is a subspace of $\mathbb{R}^{d_x}$ that satisfies the following
property: for each point $m\in \MM$, there is a set $\VV$ containing $m$
that is open in $\MM$, a set $\KK$ that is open in $\mathbb{R}^{d_k}$,
and a continuous map $\alpha_m: \KK \rightarrow \VV$ carrying $\KK$ onto $\VV$
in a one-to-one fashion, such that: (1) $\alpha_m$ is of class $\C^r$ on $\KK$,
(2) $\alpha_m^{-1}:\VV\rightarrow \KK$ is continuous, and (3) the Jacobian matrix of $\alpha_m$, $D\alpha_m(k)$, has
rank $d_k$ for each $k\in \KK$. Then $\MM$ is called a $d_k$-\textit{manifold
without boundary} in $\mathbb{R}^{d_x}$ of class $\C^r$. The map
$\alpha_m$ is called a \textit{coordinate patch} on $\MM$ about $m$. A set of
coordinate patches that covers $\MM$ is called an \textit{atlas}.
\end{definition}

\begin{definition}[Connected Branch]
For any subset $\MM$ of a topological space, if any two points $m_1$ and
$m_2$ cannot be connected via path in $\MM$, then we say that $m_1$
and $m_2$ are not connected. Otherwise, we say that $m_1$ and $m_2$
are connected. We say that $\VV \subseteq \MM$ is a \textit{connected branch} of $\MM$ if all
points of $\VV$ are connected to each other and do not connect to any
points in $\MM \setminus \VV$.
\end{definition}

%

\begin{definition}[Volume] \label{def:volume} For a $d_x \times d_k$ matrix $A=(x_1,x_2,...,x_{d_k})$
 with $x_i \in \mathbb{R}^{d_x}$, $1\leq
i\leq d_k \leq d_x$, let $\mathrm{Vol}(A)=\sqrt{\det(A\transp A)}$, which is the \textit{volume} of the parallelepiped $P(A)$ with edges given by the columns of $A$, $P(A) = \{c_1 x_1 + \cdots + c_{d_k} x_{d_k} : 0 \leq c_i \leq 1, i = 1, \ldots, d_k \}$.
\end{definition}

The volume measures the amount of  mass in $\mathbb{R}^{d_k}$ of a $d_k$-dimensional
parallelepiped in $\mathbb{R}^{d_x}$, $d_k \leq d_x$. This concept is
essential for integration on  manifolds, which we will discuss shortly. First we
recall the concept of integration on parameterized manifolds:

\begin{definition}[Integration on a parametrized manifold] Let $\KK$ be open in $\mathbb{R}^{d_k}$, and let $\alpha: \KK \to \mathbb{R}^{d_x}$ be of class $\C^r$ on $\KK$, $r\geq 1$. The set  $\MM=\alpha (\KK)$ together with the map $\alpha$ constitute a \textit{parametrized $d_k$-manifold} in $\mathbb{R}^{d_x}$ of class $\C^r$.
Let $g$ be a real-valued continuous function defined at each point
of $\MM$. The \textit{integral of $g$ over $\MM$ with
respect to volume} is defined by
\begin{equation}
\int_{\MM}g(m) d\mathrm{Vol}:=\int_{\KK}(g \circ \alpha)(k) \mathrm{Vol}(D\alpha(k)) dk,
\end{equation}
provided that the right side integral exists. Here $D\alpha(k)$ is the Jacobian
matrix of the mapping $k \mapsto \alpha(k)$, and $\mathrm{Vol}(D\alpha(k))$ is the volume of
matrix $D\alpha(k)$ as defined in Definition \ref{def:volume}.
\end{definition}

The above definition coincides with the usual interpretation of
integration. The integral can be extended to  manifolds that do not admit a global parametrization $\alpha$ using the notion of partition of unity.  This partition is a set of smooth local functions defined in a
neighborhood of the manifold. The following Lemma  shows the  existence of the partition of unity  and is proven in Lemma 25.2 in \citen{munkres-91}.

\begin{lemma}[Partition of Unity on $\MM$ of class $\C^{\infty}$]\label{lemma:pou} Let $\MM $ be a $d_k$-manifold without boundary in $\mathbb{R}^{d_x}$ of class $\C^r$, $r \geq 1$, and let $\vartheta$ be an open cover of
$\MM$. Then, there is a collection $\mathcal{P} = \{ p_i  \in \C^\infty : i \in \II \}$, where $p_i$ is defined on an open set containing $\MM$ for all $i \in \II$, with the
following properties: (1) For each $m\in \MM$ and $i \in \II$,  $0\leq p_i(m)\leq
1$, (2) for each $m\in \MM$ there is an open set $\VV \in \vartheta$ containing $m$
such that all but finitely many $p_i \in \mathcal{P}$ are $0$ on $\VV$,
(3) for each $m\in \MM$, $\sum_{p_i \in \mathcal{P}}{p_i(m)}=1$, and (4)
for each $p_i \in \mathcal{P}$ there is an open set $\UU \in \vartheta$,
such that $supp(p_i)\subseteq \UU$.
\end{lemma}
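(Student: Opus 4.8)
The plan is to follow the classical construction of a $\C^{\infty}$ partition of unity subordinate to an open cover -- this is precisely the content of Lemma 25.2 of \citen{munkres-91} -- so I would mainly have to assemble the standard ingredients and carry out the bookkeeping. First I would record the topological properties of $\MM$ that make the construction go through: as a subspace of $\mathbb{R}^{d_x}$ it is Hausdorff and second countable, hence Lindel\"of, and as a manifold it is locally compact; a locally compact Hausdorff Lindel\"of space is $\sigma$-compact, so I can fix an exhaustion $\MM=\bigcup_{j\geq 1}C_j$ by compact sets with $C_j\subseteq \operatorname{int}_{\MM}(C_{j+1})$ (setting $C_j=\emptyset$ for $j\leq 0$), and write $\MM=\bigcup_{j\geq 1}K_j$ with $K_j:=C_j\setminus\operatorname{int}_{\MM}(C_{j-1})$ compact.

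Next I would build, for each $m\in\MM$, a $\C^{\infty}$ bump function on $\mathbb{R}^{d_x}$ whose support is a small closed Euclidean ball around $m$ contained in a chosen cover element $U^{(m)}\in\vartheta$: the standard radial construction (gluing $t\mapsto e^{-1/t}$ appropriately) produces $\psi_m$ with $0\leq\psi_m\leq 1$, $\psi_m>0$ on $B(m,\rho_m/2)$, and $\psi_m=0$ outside $B(m,\rho_m)\subseteq U^{(m)}$. To obtain local finiteness I would, for each $j$, shrink the radii of the bumps centered in $K_j$ so that their supports also lie inside a fixed ambient open set $A_j$ with $A_j\cap\MM=\operatorname{int}_{\MM}(C_{j+1})\setminus C_{j-2}$ (such an $A_j$ exists because that slice is open in $\MM$, and it contains $K_j$), and then extract from the half-radius balls a finite subcover $\{B(m,\rho_m/2):m\in F_j\}$ of the compact set $K_j$. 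Taking $\II=\bigsqcup_j F_j$ and relabelling the bumps $\{\psi_i\}_{i\in\II}$, the half-radius balls still cover $\MM$, each $\operatorname{supp}\psi_i$ sits inside some $\vartheta$-element, and the family $\{\operatorname{supp}\psi_i\}$ is locally finite along $\MM$ (any $m_0\in\operatorname{int}_{\MM}(C_{j_0})$ is disjoint from $\operatorname{supp}\psi_i\cap\MM\subseteq A_j\cap\MM$ for all $i\in F_j$ with $j\geq j_0+3$, since then $\operatorname{int}_{\MM}(C_{j_0})\subseteq C_{j-2}$). Finally I would normalize: with $\Psi:=\sum_{i\in\II}\psi_i$, which is a locally finite and hence $\C^{\infty}$ sum that is strictly positive on $\MM$ because the half-radius balls cover $\MM$, I set $p_i:=\psi_i/\Psi$ on the ambient open set $\{\Psi>0\}\supseteq\MM$. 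Then property (1) and property (3) are immediate from $0\leq\psi_i\leq\Psi$ and $\sum_i p_i\equiv 1$ on $\MM$, property (4) follows from $\operatorname{supp}p_i\subseteq\operatorname{supp}\psi_i\subseteq U^{(i)}\in\vartheta$, and property (2) is the local finiteness of the supports established along $\MM$.

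The hard part will be the bookkeeping in the second step: arranging the radii so that the family of supports is locally finite \emph{along} $\MM$ while each support still lies inside a single element of $\vartheta$. This is the only place where care is needed, precisely because $\MM$ need not be closed in $\mathbb{R}^{d_x}$, so ``locally finite'' must be understood relative to $\MM$ rather than relative to all of $\mathbb{R}^{d_x}$ (if one insists on a $\vartheta$-neighborhood in property (2) rather than merely an $\MM$-neighborhood, one shrinks the radii further so the supports lie in a shrinking tube about $\MM$); the compact exhaustion together with the per-annulus finite subcovers is exactly the device that handles this. Everything else -- the existence of $\C^{\infty}$ bump functions on Euclidean space and the smoothness of the quotient of a smooth function by a nowhere-vanishing smooth function -- is routine, and the full details can be read off from Lemma 25.2 of \citen{munkres-91}.
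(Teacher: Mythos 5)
Your proposal is correct and is essentially the same as the paper's treatment: the paper simply cites Lemma 25.2 of Munkres' \emph{Analysis on Manifolds} for this result, and your construction (compact exhaustion, Euclidean bump functions with supports inside cover elements, local finiteness along $\MM$, and normalization by the sum) is exactly the standard argument underlying that citation. No further comparison is needed.
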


Now we are ready to recall the definition of integration on a manifold.

\begin{definition}[Integration on a manifold with partition of
unity]\label{def:intmani} Let $\vartheta:=\{\vartheta_j : j\in \JJ \}$ be an open cover of a
$d_k$-manifold without boundary $\MM$ in $\mathbb{R}^{d_x}$ of class $\C^r$, $r \geq 1$. Suppose there is an
coordinate patch $\alpha_j: \VV_j\subseteq \mathbb{R}^{d_k} \rightarrow
\vartheta_j$, that is one-to-one and of class $\C^r$ on $\VV_j$ for each $j\in \JJ$. Denote
$\KK_j=\alpha_j^{-1}(\MM \cap \vartheta_j)$.
Then for a real-valued
continuous function $g$ defined on an open set that contains
$\MM$, the \textit{integral of $g$ over $\MM$ with respect to volume} is defined by:
\begin{equation}
\int_{\MM} g(m) d\mathrm{Vol} :=\sum_{j\in \JJ}\sum_{i\in \II} \int_{\KK_j} [(p_i g)\circ
\alpha_j](k) \mathrm{Vol}(D\alpha_j(k))dk,
\end{equation}
provided that the right side integrals exist, where $\{p_i \in \C^\infty : i \in \II\}$ is a partition of unity on $\MM$ of class $\C^{\infty}$ that satisfies the conditions of Lemma \ref{lemma:pou}.  \citen[p. 212]{munkres-91} shows that  the integral does not depend on the choice of cover and partition of unity. \end{definition}

\section{Proofs of Appendix \ref{sec:theory1}}\label{app:B}

To analyze the analytical properties of the SPE-function, it is convenient to treat the PE  as a multivariate real-valued function $$\Delta:  B(\mathcal{X}) \to \mathbb{R},$$ where  $B(\mathcal{X}) \subseteq \mathbb{R}^{d_x}$ contains the set $\XX$.
Let  $\mu$ be
a  distribution function.  The distribution of  $\Delta$ with respect to $\mu$ is the function $F_{\Delta,\mu}: \mathbb{R} \to [0,1]$ with
\begin{equation}\label{eq: dpe}
F_{\Delta,\mu}(\delta)= \int 1\{\Delta(x) \leq \delta \} d\mu(x).
\end{equation}
The SPE-function is the map
$$\Delta_{\mu}^*:\mathcal{U} \subseteq [0,1] \rightarrow \mathbb{R},$$
defined at each point as the left-inverse  function of $F_{\Delta,\mu}$, i.e.,
\begin{equation}\label{eq: qpe}
\Delta_{\mu}^*(u) := F_{\Delta,\mu}^{\leftarrow}(u):=\inf_{\delta
\in \Bbb{R} } \{F_{\Delta,\mu}(\delta)\geq u\}.
\end{equation}
 From this functional  perspective,  the map $u \mapsto \Delta_{\mu}^*(u)$ is the result of applying a sorting operator to the map  $x \mapsto \Delta(x)$ that sorts the values of $\Delta$ in increasing order weighted by $\mu$. The next subsections provide the proofs of 3 results:
\begin{itemize}
\item[1)] Lemma \ref{lemma:had0}, which characterizes some analytical properties of the distribution function $\delta \mapsto F_{\Delta,\mu}(\delta)$  and the sorted function $u \mapsto \Delta_{\mu}^*(u)$,

\item[2)] Lemma \ref{lemma:had3}, which derives the functional derivatives of $F_{\Delta,\mu}$ and $ \Delta_{\mu}^*$ with respect to $\Delta$ and $\mu$, and

\item[3)] Lemma \ref{lemma:Ratio-G}, which derives the functional derivatives of the related classification operator $\Lambda^{-}_{\Delta, \mu,\delta}$ with respect to $\Delta$, $\mu$ and $\delta$.

\end{itemize}

\subsection{Proof of Lemma \ref{lemma:had0}}

We use the following results in the proof of Lemma \ref{lemma:had0}.

\begin{lemma}\label{manifold}
If $\Delta: B(\XX) \rightarrow \mathbb{R}$ is $\C^1$ on an open set $B(\XX) \subseteq
\mathbb{R}^{d_x}$, then for any compact subset $\overline \XX$ of $B(\XX)$, the sets of
critical points and critical values of $x \mapsto \Delta(x)$ on $\overline \XX$ are closed.
\end{lemma}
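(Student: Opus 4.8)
The plan is to reduce both claims to two elementary facts: preimages of closed sets under continuous maps are closed, and continuous images of compact sets are compact. The $\C^1$ hypothesis on $\Delta$ enters only to guarantee that the gradient map $x \mapsto \partial \Delta(x)$ is continuous on $B(\XX)$, and hence on the subset $\overline{\XX}$.

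First I would identify the set of critical points of $\Delta$ on $\overline{\XX}$ with
\[
C := \{x \in \overline{\XX} : \partial \Delta(x) = 0\} = (\partial \Delta)^{-1}(\{0\}) \cap \overline{\XX}.
\]
Since $\partial \Delta : B(\XX) \to \mathbb{R}^{d_x}$ is continuous and $\{0\}$ is closed, the set $(\partial \Delta)^{-1}(\{0\})$ is relatively closed in $B(\XX)$; intersecting it with $\overline{\XX}$, which is compact and therefore closed in $\mathbb{R}^{d_x}$, shows that $C$ is closed in $\mathbb{R}^{d_x}$. Being a closed subset of the compact set $\overline{\XX}$, the set $C$ is itself compact.

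Next I would observe that, by the definition of critical value, the set of critical values of $\Delta$ on $\overline{\XX}$ is exactly the image $\Delta(C)$. Since $\Delta$ is continuous and $C$ is compact, $\Delta(C)$ is a compact subset of $\mathbb{R}$, hence closed, which gives the second claim. There is no genuinely hard step: the only points that merit a line of justification are that $\overline{\XX}$ is closed in $\mathbb{R}^{d_x}$ (so that intersecting with it preserves closedness) and that the critical-value set coincides with $\Delta(C)$ (so that the compact-image argument applies verbatim). In particular, no Sard-type theorem or measure-theoretic input is needed for this statement; those tools will only become relevant for the later manifold structure results that build on this lemma.
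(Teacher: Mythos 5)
Your proof is correct and follows essentially the same route as the paper: continuity of $\partial\Delta$ on the compact set $\overline{\XX}$ gives closedness of the critical-point set, and compactness plus continuity of $\Delta$ gives closedness of the critical-value set. The only cosmetic difference is that you obtain the second claim directly as the continuous image $\Delta(C)$ of the compact critical-point set, whereas the paper runs the equivalent sequential argument (extracting a convergent subsequence of critical points); your packaging is slightly cleaner but it is the same idea.
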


\begin{proof}

(1) Critical points: since $x \mapsto \partial \Delta(x)$ is continuous on $\overline{\XX}$ and $\overline \XX$ is compact, the set of
points $x \in \overline \XX$ such that $\partial \Delta(x)=0$
is closed.

(2) Critical values: since $x \mapsto \Delta(x)$ is continuous and $\overline \XX$ is compact, the image set $\Delta(\overline \XX)$ is a compact set in $\mathbb{R}$. For any sequence of critical values $\{\delta_i\}_{i\geq
1}^{\infty}$ in $\Delta(\overline \XX)$, there is a corresponding sequence
$\{x_i\}_{i\geq 1}$ in $ \overline \XX$ such that $\Delta(x_i)=\delta_i$. Suppose $\{\delta_i\}_{i\geq
1}^{\infty}$ converges to $\delta_0\in \Delta(\overline \XX)$. By compactness of $\overline \XX$, we can find a converging subsequence of $\{x_i\}_{i\geq 1}$ with limit $x_0\in \overline \XX$ such that $\Delta(x_i)=\delta_i$. Then by continuity of
$x \mapsto \partial \Delta(x)$, $\partial \Delta(x_0)=0$. By continuity of $x \mapsto \Delta(x)$, $\Delta(x_0)=\delta_0$, and therefore $\delta_0=\Delta(x_0)$ is a critical value of $\Delta(x)$.
Hence the set of critical values is closed.
\end{proof}

\begin{lemma}\label{lemma:cover}
For a compact set $\mathcal{V}$ in a metric space $\mathbb{D}$, suppose
there is an open cover $\{\theta_i : i\in I\}$ of $\mathcal{V}$. Then there exists a
finite open sub-cover of $\mathcal{V}$ and  $\eta>0$, such that
for every point  $x \in \mathcal{V}$, the $\eta$-ball around $x$ is contained
in the finite sub-cover.
\end{lemma}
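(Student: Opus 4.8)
The plan is to recognize this as (the finite-subcover extraction followed by) a Lebesgue-number argument. First I would apply compactness of $\mathcal{V}$ directly to the given open cover $\{\theta_i : i \in I\}$ to extract a finite subcover $\theta_{i_1}, \dots, \theta_{i_N}$; this supplies the "finite open sub-cover" named in the statement, so the only remaining task is to produce a uniform radius $\eta > 0$ that works simultaneously at every point of $\mathcal{V}$.

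Discarding the trivial case in which some $\theta_{i_j}$ equals all of $\mathbb{D}$ (where any $\eta$ works), for each $j = 1, \dots, N$ I would define $\rho_j : \mathbb{D} \to (0,\infty]$ by $\rho_j(x) := \inf\{ d(x,y) : y \in \mathbb{D}\setminus \theta_{i_j}\}$, the distance from $x$ to the complement of $\theta_{i_j}$. Each $\rho_j$ is $1$-Lipschitz, hence continuous, and since $\theta_{i_j}$ is open, $\rho_j(x) > 0$ for every $x \in \theta_{i_j}$. Setting $f(x) := \max_{1 \leq j \leq N} \rho_j(x)$ gives a continuous function on $\mathbb{D}$. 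For each $x \in \mathcal{V}$ the finite subcover provides some $j$ with $x \in \theta_{i_j}$, so $f(x) \geq \rho_j(x) > 0$; thus $f$ is strictly positive on $\mathcal{V}$.

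Next I would invoke compactness of $\mathcal{V}$ a second time: the continuous function $f$ attains its infimum over $\mathcal{V}$, so $\eta := \min_{x \in \mathcal{V}} f(x) > 0$. For an arbitrary $x \in \mathcal{V}$, choose $j$ with $\rho_j(x) = f(x) \geq \eta$; then whenever $d(x,y) < \eta \leq \rho_j(x)$ we must have $y \in \theta_{i_j}$, i.e. the open ball $B(x,\eta)$ satisfies $B(x,\eta) \subseteq \theta_{i_j} \subseteq \bigcup_{k=1}^N \theta_{i_k}$, which is the asserted property.

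I do not expect a genuine obstacle, as this is a classical argument; the only points needing care are (i) the degenerate case in which a cover element is the whole space (so $\rho_j \equiv +\infty$), handled trivially; (ii) using openness of the $\theta_{i_j}$ to get strict positivity of $f$ on $\mathcal{V}$; and (iii) using compactness twice — once to pass to a finite subcover, once to attain the positive minimum. An alternative route is a direct contradiction via sequential compactness: if no $\eta$ worked, pick $x_n \in \mathcal{V}$ with $B(x_n, 1/n)$ contained in no $\theta_{i_j}$, pass to a convergent subsequence $x_{n_k} \to x \in \mathcal{V}$, and contradict the existence of a ball around $x$ sitting inside a single cover element.
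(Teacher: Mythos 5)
Your proposal is correct, and its main line of argument differs from the paper's. The paper argues by contradiction using sequential compactness: assuming no $\eta$ works, it produces points $x_i$ at distance less than $1/i$ from $\mathcal{V}$ but outside the union $\Theta$ of the finite subcover, extracts a convergent subsequence of nearest points $v_i \in \mathcal{V}$ with limit $v_0$, and contradicts the existence of an open ball around $v_0$ inside $\Theta$ — essentially the ``alternative route'' you sketch in your last sentence. Your primary argument is instead the constructive Lebesgue-number proof: the distance-to-complement functions $\rho_j$ are $1$-Lipschitz, their maximum $f$ is continuous and strictly positive on the compact set $\mathcal{V}$, and $\eta := \min_{\mathcal{V}} f > 0$ does the job. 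Your version buys two things: it exhibits $\eta$ explicitly rather than only proving existence, and it establishes the stronger conclusion that each $\eta$-ball lies inside a \emph{single} element of the finite subcover, whereas the paper's statement (and proof) only place the ball inside the union. Either conclusion suffices for the way the lemma is used downstream, so both proofs are acceptable; yours is the more standard textbook treatment, and your handling of the degenerate case $\theta_{i_j} = \mathbb{D}$ is the right precaution.
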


\begin{proof}[Proof of Lemma \ref{lemma:cover}]

Since $\mathcal{V}$ is a compact set in the metric space $\mathbb{D}$ (with metric
$\|\cdot\|_{\mathbb{D}}$), then any open cover $\{\theta_i : i\in I\}$ of $\mathcal{V}$ has a
finite open subcover $\{\widetilde \theta_i : i=1,2,...,m\}$ which covers $\mathcal{V}$.

Let $\Theta=\cup_{i=1}^m \widetilde \theta_i$. We prove the statement of the lemma by contradiction.
Suppose for any $i>0$, there exists some point $x_i \in \mathbb{D}$ such that $d(x_i,\mathcal{V}):=\inf_{v \in
\mathcal{V}}\|x_i-v\|_{\mathbb{D}} < i^{-1}$ and $x_i \notin \Theta$. Then, by compactness of $\mathcal{V}$ there
exists $v_i \in \mathcal{V}$ such that
$d(x_i,\mathcal{V})=d(x_i,v_i) < i^{-1}$. Let $v_0$ be the limit of  $\{v_i : i\geq 1\}$.  By compactness of $\mathcal{V}$, $v_0\in
\mathcal{V}$. Since $d(x_i,v_0) \to 0$ as $i \to \infty$ and $\Theta$
is an open cover of $\mathcal{V}$, there must be a open ball
$B(v_0)$ around $v_0$ such that $B(v_0) \subseteq \Theta$, which
contradicts with  $x_i\notin
\Theta$, for $i$ large enough. Therefore there must be an $\eta$ such that the $\eta$-ball around any $x \in \mathcal{V}$ is covered by $\Theta$.
\end{proof}

\begin{proof}[Proof of Lemma \ref{lemma:had0}]
The proof of statement (2) follows directly from the inverse
function theorem.

The proof of statement (1) is divided in two steps. Step 1 constructs a finite set of open rectangles that covers the set $\MM_{\Delta}(\delta)$ and has certain properties that allow us to apply a change of variable to the derivative of $\delta \mapsto F_{\Delta,\mu}(\delta)$. Step 2 expresses the derivative as an integral on a manifold.

For a subset $\Ss \subseteq \mathbb{R}^{d_x}$ and $\eta>0$, define $B_\eta(\Ss):=\{x \in \mathbb{R}^{d_x}: d(x,\Ss) = \inf_{s \in \Ss} \|x - s \| < \eta\}$. Similarly, for any $\delta\in \mathbb{R}$ and $\eta>0$, define $B_{\eta}(\delta):=(\delta-\eta,\delta+\eta)$. Without loss of
generality, we assume that $\MM_{\Delta}(\delta)$ only has one connected branch. We will discuss the case
where $\MM_{\Delta}(\delta)$ has multiple connected branches at the end of the proof of this lemma.

\begin{step}

For any regular value $\delta\in \mathcal{D}$,  the set $\MM_{\Delta}(\delta)$ is a $(d_x-1)$-manifold in $\mathbb{R}^{d_x}$ of class $\C^1$ by Theorem 5-1 in \citen[p. 111]{spivak-65}. Denote $\widetilde{\MM}_{\Delta}(\delta):=\{x\in B(\XX) : \Delta(x)=\delta\}$ and $\widetilde{\MM}_{\Delta}(B_{\eta}(\delta)):=\cup_{\delta'\in B_{\eta}(\delta)}\widetilde{\MM}_{\Delta}(\delta')$ for $\eta>0$. These enlargements of the set $\MM_{\Delta}(\delta)$ are used to apply a change of variable technique to integrals on $\MM_{\Delta}(\delta)$.

By  assumptions ${\sf S}.1$-${\sf S}.2$, there exists  $\eta_1>0$ small enough and $C>c>0$ such that:

(1)
$\overline{B_{\eta_1}(\delta)}:=[\delta-\eta_1,\delta+\eta_1]\subseteq \Delta(\overline{\mathcal{X}}):=\{\Delta(x):x\in \overline{\mathcal{X}}\}$ and  contains no critical values of
$\Delta$ on $\overline{\mathcal{X}}$, and $B_{\eta_1}(\overline{\XX}) \subseteq B(\XX)$.

(2) $\inf_{x\in \overline{\widetilde{\MM}_{\Delta}(B_{\eta_1}(\delta))\cap B_{\eta_1}(\overline{\XX})}}\|\partial \Delta(x)\|>c$.

(3) $\sup_{x\in \overline{\widetilde{\MM}_{\Delta}(B_{\eta_1}(\delta))\cap B_{\eta_1}(\overline{\XX})}}\|\partial \Delta(x)\|<C$.

(4) For any $\eta<\eta_1$, $\widetilde{\MM}_{\Delta}(\delta)\cap B_{\eta}(\overline{\XX})$ is a $(d_x-1)$-manifold in $\mathbb{R}^{d_x}$ of class $\C^1$.

Indeed, by Lemma \ref{manifold}, the set of regular values is open. Therefore, there exists a small neighborhood $B_{\eta}(\delta)$ with $\eta>0$ such that there exists no critical value of $\Delta$ on $\overline{\XX}$ in $B_{\eta}(\delta)$. Then any $\eta_1<\eta$ satisfies statement (1). Statements (2) and (3) follow by the compactness of $\overline{\mathcal{X}}$, the continuity of mapping $x\mapsto \partial \Delta(x)$, and assumptions ${\sf S}.1$ and ${\sf S}.2$. Statement (4) is implied by Theorem 5-1 in \citen[p. 111]{spivak-65}.

Next, we establish a finite cover of $\widetilde{\MM}_{\Delta}(B_{\eta_2}(\delta))\cap B_{\eta_2}(\overline{\XX})$ with certain good properties, for some $\eta_2<\eta_1$.

For any $\eta_3<\eta_1$, $\widetilde{\MM}_{\Delta}(B_{\eta_3}(\delta))\cap B_{\eta_3}(\overline{\XX})$ satisfies the properties (2)--(4) stated above. Consider the rectangles $\theta(x):=X_1(x)\times...\times X_{d_x}(x)$ centered at $x=(x_1,...,x_{d_x})$ where $X_k(x):=(x_k-a_k(x),x_k+a_k(x))$, with $a_k(x)>0$, $k=1,2,...,d_x$. Let $A(x):=\sup_{1\leq k\leq d_x}a_k(x)$ be such that:
$$\overline{\widetilde{\MM}_{\Delta}(B_{\eta_3}(\delta))\cap B_{\eta_3}({\XX})}\subseteq \cup_{x\in \widetilde{\MM}_{\Delta}(\delta)\cap B_{\eta_3}{(\mathcal{X})}}\theta(x)\subseteq \widetilde{\MM}_{\Delta}(B_{\eta_1}(\delta))\cap B_{\eta_1}(\overline{\XX}),$$
which can be fulfilled by using small enough $\eta_3$.

By continuity of $x \mapsto \partial \Delta(x)$,  for small enough $A(x)$ and any $x'\in \theta(x)$, there always exists an index $\text{i}(x)\in\{1,2,...,d_x\}$ such that $|\partial_{x_{\text{i}(x)}}\Delta(x')|\geq \frac{c}{2\sqrt{d_x}}$ since $\|\partial \Delta(x') \| \geq c$ for all $x' \in \theta(x)$ by the property (2) above, where $\partial_{x} := \partial / \partial_{x}$. Also we can find a finite set of $\theta(x)$'s, denoted as $\Theta:=\{\theta(x^i)\}_{i=1}^m$, such that
 $\Theta$ forms a finite open cover of $\overline{\widetilde{\MM}_{\Delta}(B_{\eta_3}(\delta))\cap B_{\eta_3}(\overline{\XX})}$. We rename these open rectangles as $\theta_i:=\theta(x^i)$, $i\in \{1,2,...,m\}$, where $\theta_i = X_{i1}\times...\times X_{id_x}$ and $X_{ik} := X_k(x^i)$, $k \in \{1, \ldots, d_x\}$.


For a given $i\in \{1,2,...,m\}$, consider the center of $\theta_i$, denoted as $x^i$. Without loss of generality, we can assume that $\text{i}(x^i)=d_x$. Then, for all $x'\in \theta(x^i)$, $|\partial_{x_{d_x}} \Delta(x')|\geq c/2\sqrt{d_x}$. This means that $\Delta(x)$ is partially monotonic in $x_{d_x}$ on $\theta(x^i)$. By the implicit function theorem, there exists $g$ such that $g(x_1',x_2',...,x_{d_x-1}',\delta')=x_{d_x}'$,
for any $x'=(x_1',x_2',...,x_{d_x}')\in \widetilde{\MM}_{\Delta}(B_{\eta_3}(\delta))\cap \theta(x^i)$ and $\delta'=\Delta(x')$.  Also by the implicit function theorem, $$\partial g(x_1',...,x_{d_x-1}',\delta')= \frac{-(\partial_{x_1} \Delta(x'),\partial_{x_2} \Delta(x'),..., \partial_{x_{d_x-1}} \Delta(x'),-1)}{\partial_{d_x} \Delta(x')}.$$ So $\|\partial g(x_1',...,x_{d_x-1}',\delta')\|\leq \frac{\|\partial \Delta(x')\|}{|\partial_{x_{d_x}}\Delta(x')|}\leq \frac{2(C+1)\sqrt{d_x}}{c}:=\Lambda$ because $|\partial_{x_{d_x}}\Delta(x')| \geq c/2\sqrt{d_x}$ and $\|\partial \Delta(x')\| \leq C$.
Therefore,
\begin{multline*}|g(x_1',x_2',...,x_{d_x-1}',\delta')-x^i_{d_x}|=|g(x_1',x_2',...,x_{d_x-1}',\delta')-g(x^i_1,x^i_2,...,x^i_{d_x-1},\delta)| \\
\leq \sup_{x'\in \theta(x),\delta'=\Delta(x')}\|\partial g(x_1,x_2,...,x_{d_x-1},\delta')\| \cdot \|(x_{1}-x^{i}_{1},x_2-x^{i}_{2}...,x_{d_x-1}-x^{i}_{d_x-1},\delta'-\delta)\|\\
\leq \Lambda (\sqrt{a_1^2(x^i)+...+a_{d_x-1}^2(x^i)}+\eta_3),
\end{multline*}

since $\|(x_1-x^i_1,...,x_{d_x-1}-x^{i}_{d_x-1},\delta'-\delta)\|\leq \|(x_1-x^i_1,...,x_{d_x-1}-x^{i}_{d_x-1})\|+|\delta'-\delta|$, with $\|(x_1-x^i_1,...,x_{d_x-1}-x^{i}_{d_x-1})\|\leq \sqrt{a_1^2(x^i)+...+a_{d_x-1}^2(x^i)}$ and $|\delta'-\delta|<\eta_3$.

We can choose  $a_1(x^i)=a_2(x^i)=...=a_{d_x-1}(x^i)=\eta_4$ and $a_{d_x}(x^i)=2(1+\eta_3)\Lambda(\sqrt{d_x-1}\eta_4+\eta_3)$, using $\eta_4$ small enough in order to fulfill the following property of $\theta_i$: with $\eta_4$ small enough,  $$\widetilde{\MM}_{\Delta}(B_{\eta_3}(\delta))\cap \theta_i\subseteq  X_{i1}\times...\times X_{i,d_x-1}\times \left(x^i_{d_x}-\frac{a_{d_x}(x^i)}{2(1+\eta_3)},x^i_{d_x}+\frac{a_{d_x}(x^i)}{2(1+\eta_3)}\right),$$ or geometrically, the tube $\widetilde{\MM}_{\Delta}(B_{\eta_3}(\delta))$ does not intersect $\theta_i$'s faces except at the ones which are \textit{parallel} to the vector $(0,...,0,1)\in \RR^{d_x}$. In such a case, we say that
$\widetilde{\MM}_{\Delta}(B_{\eta_3}(\delta))$ intersects $\theta_i$ at the axis $x_{d_x}$. More generally, for all $i\in \{1,2,...,m\}$, $\widetilde{\MM}_{\Delta}(B_{\eta_3}(\delta))$ intersects $\theta_i$ at axis $\text{i}(x^i)$, where $x^i$ is the center of
$\theta_i$. This property implies that $g$ is a well-defined injection from $X_{i1}\times...\times X_{i,d_x-1}\times B_{\eta_3}(\delta)$  to
$X_{i1}\times ...\times X_{i,d_x}$, for $i \in \{1,\ldots,m\}$, which will allow us to perform a change of variable in the equation (\ref{eq: maindec}). Such a property holds for any
$\eta_2 <\eta_3$. 
\end{step}

\begin{step}  Let $\eta_2$ be such that $0 < \eta_2 < \eta_3$.
We first apply partition of unity to the open cover $\Theta = \{\theta_i\}_{i=1}^m$ of $\overline{\widetilde{\MM}_{\Delta}(B_{\eta_2}(\delta))\cap B_{\eta_2}(\overline{\XX})}$ of Step 1.

By Lemma \ref{lemma:pou}, for the finite open cover $\Theta$ of the
manifold $\widetilde{\MM}_{\Delta}(B_{\eta_2}(\delta))\cap B_{\eta_2}(\overline{\XX})$, we can find a set of
$\C^\infty$ partition of unity $p_j$, $1\leq j\leq J$ on $\Theta$ with the properties
given in the lemma.

Our main goal is to compute
$$
\partial_{\delta} F_{\Delta,\mu}(\delta) = \lim_{h \to 0} \frac{F_{\Delta,\mu}(\delta + h) - F_{\Delta,\mu}(\delta)}{h}.
$$

Denote $B_{\eta}^+(\delta)=[\delta,\delta+\eta]$, for any $\delta\in \mathbb{R}$ and $\eta>0$. Denote ${\MM}_{\Delta}(B_{\eta}^+(\delta))=\cup_{\delta'\in B_{\eta}^+(\delta)}{\MM}_{\Delta}(\delta')$, and $\widetilde{\MM}_{\Delta}(B_{\eta}^+(\delta))
=\cup_{\delta'\in B_{\eta}^+(\delta)}\widetilde{\MM}_{\Delta}(\delta')$.

For any $0<\eta<\eta_2$, $\widetilde{\MM}_{\Delta}(B_{\eta}^+(\delta))\subseteq \widetilde{\MM}_{\Delta}(B_{\eta}(\delta))$. Therefore, the properties (1) to (4) stated in Step 1 are satisfied when we replace $\widetilde{\MM}_{\Delta}(B_{\eta}(\delta))$ by $\widetilde{\MM}_{\Delta}(B_{\eta}^+(\delta))$. Note that,
\begin{multline}\label{eq: Decp}
F_{\Delta,\mu}(\delta+\eta)-F_{\Delta,\mu}(\delta)=\int_{x\in \mathcal{X}}1(\delta\leq \Delta(x)\leq \delta+\eta)\mu'(x)dx
\\=\int_{{\MM}_{\Delta}(B_{\eta}^+(\delta))}\mu'(x)dx
=\int_{\widetilde{\MM}_{\Delta}(B_{\eta}^+(\delta))}\mu'(x)dx
=\int_{\widetilde{\MM}_{\Delta}(B_{\eta}^+(\delta))\cap \Theta}\mu'(x)dx
\\=\int_{\widetilde{\MM}_{\Delta}(B_{\eta}^+(\delta))\cap (\cup_{i=1}^m\theta_i)}\mu'(x){\sum_{j=1}^J p_j(x)dx}
=\sum_{1\leq i\leq m,1\leq
j\leq J}{\int_{\widetilde{\MM}_{\Delta}(B_{\eta}^+(\delta))\cap\theta_i}p_j(x)\mu'(x) dx}.
\end{multline}
This third and fourth equalities hold because $\mu'(x)=0$ for any $x \in \widetilde{\MM}_{\Delta}(B_{\eta}^+(\delta))\setminus \MM_{\Delta}(B_{\eta}^+(\delta))$ and $x\in \widetilde{\MM}_{\Delta}(B_{\eta}^+(\delta)) \setminus \Theta$, respectively.

For any $i\in \{1,2,...,m\}$, without loss of generality, suppose that $\MM_\Delta(B_{\eta}^+(\delta))$ intersects
$\theta_i=X_{i1}\times...\times X_{id_x}$ at the $x_{d_x}$ axis. Then, $|\partial_{x_{d_x}}\Delta(x)|\geq c/\sqrt{d_x}$ on $\theta_i$, and we can apply the implicit function theorem to
show existence of the $\mathcal{C}^1$ implicit function
$
g:X_{i1}\times...\times X_{i (d_x-1)}\times
B_\eta^+(\delta)\rightarrow X_{id_x},
$
such that $\Delta(x_1,...,x_{d_x-1},g(x_1,...,x_{d_x-1},\delta'))=\delta'$ for all
$(x_1,...,x_{d_x-1},\delta')\in X_{i1}\times ...\times X_{i
(d_x-1)}\times B_\eta^+(\delta)$. Define the injective mapping $\psi_{d_x}$ as:
\begin{eqnarray*}
&& \psi_{d_x}:X_{i1}\times ...\times X_{i(d_x-1)}\times
 B_\eta^+(\delta)\rightarrow X_{i1}\times...\times
X_{i(d_x-1)}\times X_{i(d_x)}, \\
& & \psi_{d_x}(x_{-d_x},\delta')=(x_{-d_x},g(x_{-d_x},\delta')) \text{ for } x_{-d_x} := (x_1,x_2,...,x_{d_x-1}).
\end{eqnarray*}

In equation (\ref{eq: Decp}),  we apply  a change of variable defined by the map $\psi_{d_x}$ to the $(i,j)$-th element of the sum:
$$\int_{\theta_i\cap \widetilde{\MM}_\Delta(B_{\eta}^+(\delta))}p_j(x)\mu'(x) dx=\int_{X_{i1}\times ...\times X_{i(d_x-1)}\times B_\delta^+(\eta)}{(p_j\circ \psi_{d_x})\cdot(\mu'\circ \psi_{d_x}) |\text{det}( D \psi_{d_x})}|d\delta' dx_{-d_x}$$
$$=\int_{X_{i1}\times ...\times X_{i(d_x-1)}}{\int_{B_\delta^+(\eta)}\frac{(p_j\circ\psi_{d_x})\cdot (\mu'\circ \psi_{d_x}) }{|\partial_{x_{d_x}} \Delta \circ \psi_{d_x}|}}d\delta' dx_{-d_x}$$
\begin{equation}\label{eq: maindec}=\eta \int_{X_{i1}\times ...\times X_{i(d_x-1)}}{\frac{(p_j\circ \psi_{d_x})\cdot (\mu'\circ \psi_{d_x})}{|\partial_{x_{d_x}} \Delta \circ \psi_{d_x}|}}dx_{-d_x}+o(\eta).\end{equation}
The second equality follows because
$$ D {\psi_{d_x}} (x_{-d}, \delta) = \[
\begin{array}{ccccc}
1&0&...&0\\
0&1&...&0\\
...&...&...&...\\
0&...&...&\partial_{\delta} g(x_{-d_x}, \delta)
\end{array}
\] =  \[
\begin{array}{ccccc}
1&0&...&0\\
0&1&...&0\\
...&...&...&...\\
0&...&...&1/\partial_{x_{d_x}} \Delta (\widetilde x)
\end{array}
\],$$
where $\widetilde x = \psi_{d_x}(x_{-d}, \delta)$.

The last equality follows as $\eta \to 0$, because by the uniform continuity of $$(x_{-d_x},\delta') \mapsto (p_j\circ\psi_{d_x})\cdot (\mu'\circ \psi_{d_x}) / |\partial_{x_{d_x}} \Delta \circ \psi_{d_x}| \Big|_{(x_{-d_x},\delta')}$$
over $(x_{-d_x},\delta')\in X_{i1}\times ...\times X_{i(d_x-1)}\times B_{\eta}^+(\delta)$.
In (\ref{eq: maindec}), the last component of $\psi_{d_x}$
is fixed to be $\delta$ without being specified for simplicity. We will maintain this convention in the rest of the proof whenever the variable of integration is $x_{-d_x}$ (excluding $x_{d_x}$).

Next, we write the first term of (\ref{eq: maindec}) as an  integral
on a manifold, which is
\begin{equation}\label{eq: md}
\eta \int_{X_{i1}\times ...\times X_{i(d_x-1)}}{\frac{(p_j\circ \psi_{d_x}) \cdot (\mu'\circ
\psi_{d_x})}{|\partial_{x_{d_x}} \Delta \circ
\psi_{d_x}|}}dx_{-d_x} = \eta \int_{\widetilde{\MM}_{\Delta}(\delta) \cap \theta_i}
 \frac{p_j(x)\mu'(x)}{\parallel \partial \Delta(x)\parallel}d\mathrm{Vol}.
\end{equation}
Summing up over $i$ and $j$ in  (\ref{eq: maindec}) and using  Definition 5.5,
\begin{equation}\label{eq: maineq}
\sum_{1\leq i\leq m,1\leq
j\leq J}{\int_{\widetilde{\MM}_{\Delta}(B_{\eta}^+(\delta))\cap\theta_i}p_j(x)\mu'(x) dx} =\eta\int_{\widetilde{\MM}_{\Delta}(\delta)\cap \Theta} \frac{\mu'(x)}{\parallel \partial
\Delta(x)\parallel}d\mathrm{Vol}+o(\eta).
\end{equation}

Let us explain(\ref{eq: md}). Equation (\ref{eq: md}) is calculated using the following fact: The mapping $\alpha:
X_{i1}\times ...\times X_{id_x-1}\rightarrow X_{i1}\times ...
\times X_{id_x}$ such that
$\alpha(x_1,...,x_{d_x-1})=(x_1,...,x_{d_x-1},g(x_1,...,x_{d_x-1}, \delta))$ has
Jacobian matrix
$$D\alpha\transp(x_{-d_x})= \[
\begin{array}{ccccc}
1&0&...&0&\partial_{x_1} g (x_{-d_x})\\
0&1&...&0&\partial_{x_2} g (x_{-d_x})\\
...&...&...&...&...\\
0&...&...&1&\partial_{x_{d_x-1}} g(x_{-d_x})
\end{array}
\] =
\[
\begin{array}{ccccc}
1&0&...&0&(\partial_{x_1} \Delta/\partial_{x_{d_x}}
\Delta)(\widetilde x) \\
0&1&...&0&(\partial_{x_2} \Delta/ \partial_{x_{d_x}}
\Delta) (\widetilde x)\\
...&...&...&...&...\\
0&...&...&1&(\partial_{x_{d_x-1}} \Delta/ \partial_{x_{d_x}}\Delta)
(\widetilde x)
\end{array}
\],
$$
where $\widetilde x =  (x_1,...,x_{d_x-1},g(x_1,...,x_{d_x-1}, \delta))$.
The volume of $D\alpha$ is $\mathrm{Vol}(D\alpha)=\sqrt{\det(D\alpha\transp D\alpha)}$, where $D\alpha\transp D\alpha=I_{d_x-1}+\partial g \partial g\transp$. By the Matrix Determinant Lemma,
$$\mathrm{Vol}(D\alpha)(x_{-d_x}) = \sqrt{1 + \partial g \transp \partial g}=\|\partial \Delta \|/|\partial_{x_{d_x}}
\Delta| \Big |_{x = \widetilde x}.$$
Hence, the left hand side of equation (\ref{eq: md}) is:
\begin{equation*}
\eta \int_{X_{i1}\times ...\times
X_{i(d_x-1)}}{ \frac{(p_j\circ \psi_{d_x}) \cdot (\mu'\circ \psi_{d_x})}{\|\partial
\Delta\circ \psi_{d_x}\|}}\mathrm{Vol}(D\alpha)d{x_{-d_x}},
\end{equation*}
and it can be further re-expressed as the right side of (\ref{eq: md}) using Definition 5.4.

By equations (\ref{eq: Decp}) and  (\ref{eq: maineq}),
\begin{equation}\label{eq: asy}
\frac{F_{\Delta,\mu}(\delta+\eta)-F_{\Delta,\mu}(\delta)}{\eta}=\int_{\MM_{\Delta}(\delta)}
\frac{\mu'(x)}{\parallel \partial \Delta(x)\parallel}d\mathrm{Vol}+o(1),
\end{equation}
where we use that $\mu'(x) = 0$ for all $x \in \widetilde{\MM}_{\Delta}(\delta) \setminus \MM_{\Delta}(\delta)$. Similarly, we can show that
\begin{equation*}
\frac{F_{\Delta,\mu}(\delta)-F_{\Delta,\mu}(\delta-\eta)}{\eta}=\int_{\MM_{\Delta}(\delta)}
\frac{\mu'(x)}{\parallel \partial \Delta(x)\parallel}d\mathrm{Vol}+o(1).
\end{equation*}
Thus, we conclude that $F_{\Delta,\mu}(\delta)$ is
differentiable at $\delta\in \mathcal{D}$ with derivative
\begin{equation*}
f_{\Delta,\mu}(\delta) := \partial_{\delta} F_{\Delta,\mu}(\delta) =\int_{\MM_{\Delta}(\delta)} \frac{\mu'(x)}{\parallel
\partial \Delta(x)\parallel}d\mathrm{Vol}.
\end{equation*}
\end{step}

Finally, if $\MM_{\Delta}(\delta)$ has multiple branches but a finite number of them, we can repeat Step 1 and 2 in the proof above for each individual branch. Since the number of connected branches is finite,  the remainders in equation (\ref{eq: asy}) converge to $0$ uniformly. Thus, adding up the results for all connected branches in equation (\ref{eq: asy}), the statements of Lemma \ref{lemma:had0} hold.
\end{proof}

\subsection{Proof of Lemma \ref{lemma:had3}}

We use the following results in the proof of Lemma \ref{lemma:had3}.

\begin{lemma}[Continuity]\label{lemma:continuity}
Let $f$ be a measurable function defined on $B_\eta(\mathcal{X})\subset B(\mathcal{X})$ which vanishes outside $\mathcal{X}$, where $\eta>0$ is a constant. Let $\delta$ be a regular value of $\Delta$ on $\overline{\mathcal{X}}$. Suppose $f$ is continuous on ${\widetilde{\MM}_{\Delta}(B_{\eta_1}(\delta))\cap B_{\eta_1}(\overline{\XX})}$ for any $\delta \in \DD$ and some small $\eta_1$ such that $0<\eta_1<\eta$. Then, $\delta \mapsto \int_{\MM_\Delta(\delta)}fd \mathrm{Vol}$ is continuous on $\DD$.
\end{lemma}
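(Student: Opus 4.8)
The plan is to localize: it suffices to prove continuity of $\delta\mapsto\int_{\MM_\Delta(\delta)}f\,d\mathrm{Vol}$ at each fixed $\delta_0\in\DD$. The first move would be to note that $f$ vanishes on $\mathbb{R}^{d_x}\setminus\mathcal{X}\supseteq B(\XX)\setminus\XX$, so that $\int_{\MM_\Delta(\delta)}f\,d\mathrm{Vol}=\int_{\widetilde\MM_\Delta(\delta)}f\,d\mathrm{Vol}$ with $\widetilde\MM_\Delta(\delta)=\{x\in B(\XX):\Delta(x)=\delta\}$; this lets me work with the ``full'' level set, which behaves well under the implicit function theorem. The heart of the argument is to recycle the construction in the first step of the proof of Lemma~\ref{lemma:had0}: around $\delta_0$ there are $0<\eta_2<\eta_1$, a finite cover $\{\theta_i\}_{i=1}^m$ of $\overline{\widetilde\MM_\Delta(B_{\eta_2}(\delta_0))\cap B_{\eta_2}(\overline\XX)}$ by open rectangles $\theta_i=X_{i1}\times\cdots\times X_{id_x}$ (with closures inside the open tube on which $f$ is continuous), and, after relabeling coordinates, on each $\theta_i$ the map $\Delta$ is strictly monotone in $x_{d_x}$ with $|\partial_{x_{d_x}}\Delta|\geq c/(2\sqrt{d_x})$, $\|\partial\Delta\|\leq C$, and for every $\delta\in B_{\eta_2}(\delta_0)$ the slice $\widetilde\MM_\Delta(\delta)\cap\theta_i$ is the graph of a $\C^1$ implicit function $x_{-d_x}\mapsto g_i(x_{-d_x},\delta)$ with $\|\partial g_i\|\leq\Lambda$ uniformly.

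With a $\C^\infty$ partition of unity $\{p_j\}_{j=1}^J$ subordinate to $\{\theta_i\}$ (Lemma~\ref{lemma:pou}), I would run verbatim the change of variables and volume computation that produces \eqref{eq: md} in the proof of Lemma~\ref{lemma:had0}, but with the continuous function $f$ in place of $\mu'/\|\partial\Delta\|$, to obtain, for every $\delta\in B_{\eta_2}(\delta_0)$, a representation of the form
\begin{equation*}
\int_{\widetilde\MM_\Delta(\delta)} f\, d\mathrm{Vol}
= \sum_{i=1}^m\sum_{j=1}^J \int_{X_{i1}\times\cdots\times X_{i,d_x-1}} (p_j f)\big(x_{-d_x},g_i(x_{-d_x},\delta)\big)\,\frac{\|\partial\Delta\|}{|\partial_{x_{d_x}}\Delta|}\Big|_{(x_{-d_x},g_i(x_{-d_x},\delta))}\, dx_{-d_x}.
\end{equation*}
The crucial gain is that every domain of integration on the right is now a fixed bounded rectangle, independent of $\delta$, and all the $\delta$-dependence has been pushed into the integrand through $g_i(\cdot,\delta)$.

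The final step is to let $\delta\to\delta_0$ inside each of the finitely many integrals. Here I would use that $(x_{-d_x},\delta)\mapsto(x_{-d_x},g_i(x_{-d_x},\delta))$ is jointly continuous (the $g_i$ are jointly $\C^1$, with gradient bounded by $\Lambda$) and that $f$, the $p_j$, $\|\partial\Delta\|$, and $x\mapsto 1/|\partial_{x_{d_x}}\Delta(x)|$ are continuous, hence uniformly continuous and bounded, on the compact tube $\overline{\widetilde\MM_\Delta(B_{\eta_1}(\delta_0))\cap B_{\eta_1}(\overline\XX)}$; composing and multiplying, the integrand becomes a uniformly continuous, uniformly bounded function of $(x_{-d_x},\delta)$ on a compact set, so it converges uniformly in $x_{-d_x}$ as $\delta\to\delta_0$, and dominated convergence over the fixed rectangles gives continuity of each summand, hence of the sum. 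If $\widetilde\MM_\Delta(\delta_0)$ has more than one connected branch, ${\sf S}.2$ ensures there are finitely many, and I would treat each with its own finite cover and add the contributions, exactly as at the end of the proof of Lemma~\ref{lemma:had0}. Since $\delta_0\in\DD$ was arbitrary, continuity on $\DD$ follows.

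The main obstacle is conceptual rather than computational: the domain of integration $\MM_\Delta(\delta)$ itself varies with the parameter, so at face value this is an ``integral over a moving manifold,'' and one cannot pass to the limit under the integral sign directly. The genuinely non-trivial ingredient is the local graph parametrization from Lemma~\ref{lemma:had0}, which converts the moving domain into a fixed domain carrying a smoothly perturbed integrand; once that is in place the continuity is routine. Beyond invoking that machinery, the only care needed is to verify that the finite cover and partition of unity built at $\delta_0$ remain valid throughout a whole neighborhood $B_{\eta_2}(\delta_0)$ (which is exactly what the first step of that proof delivers) and to dispose of the finitely many connected branches separately.
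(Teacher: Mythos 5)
Your proposal is correct and follows essentially the same route as the paper's proof: the same finite rectangle cover and partition of unity from Step 1 of Lemma \ref{lemma:had0}, the same implicit-function/change-of-variables reduction to integrals over fixed rectangles with the $\delta$-dependence pushed into a uniformly continuous integrand, and the same limiting argument. The only differences are cosmetic (you invoke dominated convergence explicitly and spell out the multi-branch case, which the paper handles by reference to the end of Lemma \ref{lemma:had0}).
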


\begin{proof}
First, we follow Step 1 in the Proof of Lemma \ref{lemma:had0}. Suppose we have a set of open rectangles $\Theta=\{\theta_1,...,\theta_m\}$ such that $\overline{\widetilde{\MM}_{\Delta}(B_{\eta_2}(\delta))\cap B_{\eta_2}(\overline{\XX})}\subset\cup_{i=1}^m \theta_i\subset\overline{\cup_{i=1}^m \theta_i}\subset \widetilde{\MM}_{\Delta}(B_{\eta_1}(\delta))\cap B_{\eta_1}(\overline{\XX})$ for any $\eta_2<\eta_3$, where $\eta_3$ is a small enough positive number, $\eta_3<\eta_1$. Moreover, let $\eta_3$ be small enough such that all $\delta'\in B_{\eta_3}(\delta)$ are regular values. By compactness of $\overline{\cup_{i=1}^m \theta_i}$, $f$ is bounded and uniformly continuous on $\cup_{i=1}^m\theta_i$.

By construction, $\theta_i$, $i=1,2,...,m$, satisfies that $\widetilde{\MM}_{\Delta}(B_{\eta_3})$ intersects $\theta_i$ at axis $\text{i}(\theta_i)$, for any $\eta_2<\eta_3$.

Then, following Step 2 in the Proof of Lemma \ref{lemma:had0}, there exists a set of $\C^{\infty}$ partition of unity functions $x \mapsto p_j(x)$ of $\Theta$, $j=1,2,...,J$.

Then, for any $\delta' \in B_{\eta_3}(\delta)$, by the definition of partition of unity,

\begin{equation}\label{eq: intdcp}
\int_{{\MM_\Delta(\delta')}}f d\mathrm{Vol}=\sum_{1\leq i\leq m,1\leq
j\leq J}{\int_{\widetilde{\MM}_{\Delta}(\delta')\cap\theta_i}p_j(x)f(x) d\mathrm{Vol}}.
\end{equation}

The equation (\ref{eq: intdcp}) holds since $f(x)=0$ for all $x \notin \mathcal{X}$.

To show that $\int_{\MM_\Delta(\delta')}f d\mathrm{Vol}$ converges to $\int_{\MM_\Delta(\delta)}f d\mathrm{Vol}$ as $\delta'$ converges to $\delta$, it suffices to show that $\int_{\widetilde{\MM}_{\Delta}(\delta')\cap\theta_i}p_j(x)f(x) d\mathrm{Vol}$ converges to $\int_{\widetilde{\MM}_{\Delta}(\delta)\cap\theta_i}p_j(x)f(x) d\mathrm{Vol}$ as $\delta'$ converges to $\delta$, for all $i=1,2,...,m$ and $j=1,2,...,J$.

Without loss of generality, assume that $\widetilde{\MM}_\Delta(B_{\eta_3}(\delta))$ intersects $\theta_i$ at axis $\text{i}(\theta_i)=d_x$. Then, there exists constants $c>0$ and $C>0$ such that $\partial_{x_{d_x}} \Delta(x)>c$ and $\|\partial \Delta(x)\|<C$ for all $x\in \theta_i$, $i=1,2,...,m$.

We can apply the implicit function theorem to establish existence of the $\mathcal{C}^1$ function
$
g:X_{i1}\times...\times X_{i (d_x-1)}\times
B_\eta^+(\delta)\rightarrow X_{id_x},
$
such that $\Delta(x_1,...,x_{d_x-1},g(x_1,...,x_{d_x-1},\delta'))=\delta'$ for all
$(x_1,...,x_{d_x-1},\delta')\in X_{i1}\times ...\times X_{i
(d_x-1)}\times B_\eta(\delta)$. Define the one-to-one mapping $\psi_{d_x}$ as:
\begin{equation*}
\psi_{d_x}:X_{i1}\times ...\times X_{i(d_x-1)}\times
B_\eta^+(\delta)\rightarrow X_{i1}\times...\times
X_{i(d_x-1)}\times X_{i(d_x)},
\end{equation*}
where
$\psi_{d_x}(x_{-d_x},\delta')=(x_{-d_x},g(x_{-d_x},\delta'))$ for $x_{-d_x} := (x_1,x_2,...,x_{d_x-1})$. Note that $\psi_{d_x}$ and $g$ are both $\C^1$ functions.

For any $\delta'$ such that $|\delta'-\delta|<\eta_3$, by the change of variables we have:
\begin{equation}\label{eq: continuity}
\int_{\widetilde{\MM}_{\Delta}(\delta')\cap\theta_i}p_j(x)f(x) d\mathrm{Vol} = \int_{X_1\times X_2\times ...\times X_{d_x-1}} (p_j f)\circ \psi_{d_x}(x_{-d_x},\delta')\frac{\|\partial \Delta \circ \psi_{d_x}(x_{-d_x},\delta')\|}{|\partial_{x_{d_x}} \Delta \circ \psi_{d_x}(x_{-d_x},\delta')|}dx_{-d_x}.
\end{equation}

Since $|\partial_{x_{d_x}} \Delta \circ \psi_{d_x}(x_{-d_x},\delta')| = |\partial_{x_{d_x}} \Delta|_{x=\psi_{d_x}(x_1,...,x_{d_x-1},\delta')}>c$ for all $\delta'\in B_{\eta_3}(\delta)$ and $x_{-d_x}\in X_1\times X_2\times...\times X_{d_x-1}$ and $p_j$, $f$, $\partial \Delta$ and $\partial_{x_{d_x}} \Delta$ are uniformly continuous functions on $\widetilde{\MM}_{\Delta}(B_{\eta_3}(\delta))\cap B_i$, conclude that the map
\begin{quote}
$ (p_j f)\circ \psi_{d_x}\frac{\|\partial \Delta \circ \psi_{d_x}\|}{|\partial_{x_{d_x}} \Delta \circ \psi_{d_x}|}$ is uniformly continous on $X_1\times ...\times X_{d_x-1}\times B_{\eta_3}(\delta)$.
\end{quote}

Since $X_1\times...\times X_{d_x-1}$ and is bounded, it  immediately follows that
$\delta' \mapsto \int_{\widetilde{\MM}_{\Delta}(\delta')\cap\theta_i}p_j(x)f(x) d\mathrm{Vol}$ is continuous at $\delta'=\delta$, and hence $$ \delta' \mapsto \int_{\MM_\Delta(\delta')}f d\mathrm{Vol}=\sum_{1\leq i\leq m, 1\leq j\leq J}\int_{\widetilde{\MM}_{\Delta}(\delta')\cap\theta_i}p_j(x)f(x) d\mathrm{Vol}$$ is continuous at $\delta'=\delta$.

This argument applies to every $\delta \in \DD$, and by compactness
of $\DD$ the continuity claim extends to the entire $\DD$.
 \end{proof}

\begin{lemma}[Hadamard differentiability of $\Delta \mapsto F_{\Delta,\mu}$ and $\Delta \mapsto \Delta^*_{\mu}$ ] \label{lemma:had1}
 Suppose that ${\sf S}.1$-${\sf S}.2$ hold.
Then:

(a) The map  $F_{\Delta,\mu}(\delta):
\mathbb{F}\rightarrow \mathbb{R}$ is Hadamard-differentiable  uniformly in $\delta \in \DD$ at
$\Delta$ tangentially to $\mathbb{F}_0$, with the derivative
map  $\partial_\Delta F_{\Delta,\mu}(\delta): \mathbb{F}_0 \to \Bbb{R}$ defined by
$$
G \mapsto \partial_\Delta F_{\Delta,\mu}(\delta)[G] :=
-\int_{\MM_{\Delta}(\delta)}\frac{G(x) \mu'(x)}{\|\partial \Delta(x) \|}d\mathrm{Vol}.
$$

(b) The map $\Delta^*_{\mu}(u): \mathbb{F}\rightarrow \mathbb{R}$ is Hadamard-differentiable uniformly in $u \in \UU$ at
$\Delta$  tangentially to $\mathbb{F}_0$, with the
derivative map $\partial_\Delta \Delta^*_{\mu}(u):  \mathbb{F}_0 \to \Bbb{R}$ defined by:
$$
G \mapsto \partial_\Delta \Delta^*_{\mu}(u)[G] : =
-\frac{\partial_\Delta F_{\Delta,\mu}(\Delta^*_{\mu}(u))[G]}{
f_{\Delta,\mu}(\Delta^*_{\mu}(u))}.
$$

\end{lemma}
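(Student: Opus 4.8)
The plan is to establish part (a) by a first-order ``shell-volume'' computation that refines the proof of Lemma~\ref{lemma:had0}, and then to deduce part (b) from it via the chain rule for Hadamard derivatives composed with the left-inverse map. For part (a), fix real sequences $t_n \to 0$ and functions $h_n \to G$ in sup-norm with $G \in \mathbb{F}_0$ and $\Delta + t_n h_n \in \mathbb{F}$ for all $n$. I must show that
\[
\frac{F_{\Delta + t_n h_n,\mu}(\delta) - F_{\Delta,\mu}(\delta)}{t_n} \;\longrightarrow\; -\int_{\MM_{\Delta}(\delta)}\frac{G(x)\,\mu'(x)}{\|\partial\Delta(x)\|}\,d\Vol
\]
uniformly over $\delta \in \DD$. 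The numerator equals $\int \big(1\{\Delta(x)+t_n h_n(x) \leq \delta\} - 1\{\Delta(x) \leq \delta\}\big)\mu'(x)\,dx$, i.e.\ the signed $\mu$-mass of a thin shell wrapped around the level set $\MM_{\Delta}(\delta)$; heuristically the shell has signed normal width $-t_n h_n(x)/\|\partial\Delta(x)\|$, which integrates to the claimed expression.

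To make this rigorous I would reuse the apparatus of the proof of Lemma~\ref{lemma:had0}: since $\|h_n\|_\infty$ is bounded and $t_n \to 0$, for large $n$ the shell sits inside an arbitrarily thin tube $\widetilde{\MM}_{\Delta}(B_\eta(\delta))$, hence (as $\mu'$ vanishes off $\XX$) inside a finite union of ``good'' rectangles $\theta_1,\dots,\theta_m$ on each of which $\Delta$ is strictly monotone in one coordinate and $c \leq \|\partial\Delta\| \leq C$. Introducing a $\C^\infty$ partition of unity $\{p_j\}$ subordinate to $\{\theta_i\}$ and the implicit-function coordinate change $\psi$, I would evaluate each $(i,j)$ contribution by fixing the transverse coordinates $x_{-d_x}$: the vertical section of the shell is an interval around the unique solution $x^0$ of $\Delta(x_{-d_x},\cdot)=\delta$, of length $t_n|h_n(x^0)|/|\partial_{x_{d_x}}\Delta(x^0)| + o(t_n)$, contributing $-t_n h_n(x^0)\mu'(x^0)/\partial_{x_{d_x}}\Delta(x^0) + o(t_n)$ (the dependence of $h_n$ on $x$ only feeds into the $o(t_n)$ term, via its uniform continuity and bounded sup-norm). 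Summing over $i,j$, using $\sum_j p_j \equiv 1$ on $\MM_{\Delta}(\delta)$ and the volume identity $\Vol(D\alpha)=\|\partial\Delta\|/|\partial_{x_{d_x}}\Delta|$ from Lemma~\ref{lemma:had0}, gives $\big(F_{\Delta+t_nh_n,\mu}(\delta)-F_{\Delta,\mu}(\delta)\big)/t_n = -\int_{\MM_{\Delta}(\delta)}h_n(x)\mu'(x)/\|\partial\Delta(x)\|\,d\Vol + o(1)$, and I then replace $h_n$ by $G$ using $\|h_n - G\|_\infty \to 0$ together with the uniform boundedness of $\mu'/\|\partial\Delta\|$ and of $\Vol(\MM_{\Delta}(\delta)\cap\overline{\XX})$ over $\delta\in\DD$ (from compactness of $\DD$, continuity, and ${\sf S}.1$--${\sf S}.2$). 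Linearity and continuity of $G\mapsto \partial_\Delta F_{\Delta,\mu}(\delta)[G]$ are read off the formula, and joint continuity of $(\delta,G)\mapsto\partial_\Delta F_{\Delta,\mu}(\delta)[G]$ on $\DD\times\mathbb{F}_0$ follows from linearity in $G$ plus the continuity of $\delta\mapsto\int_{\MM_{\Delta}(\delta)}f\,d\Vol$ supplied by Lemma~\ref{lemma:continuity}.

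The main obstacle is the uniformity over $\delta\in\DD$. Because the regular values of $\Delta$ on $\overline{\XX}$ form an open set (Lemma~\ref{manifold}) and $\DD$ is a compact subset of it, a Lebesgue-number argument (Lemma~\ref{lemma:cover}) yields a single tube radius $\eta$ and finitely many ``good'' covers attached to base points $\delta_1,\dots,\delta_L$ covering all of $\DD$, with uniform gradient bounds $c,C$; running the above computation over these finitely many local pieces makes the $o(1)$ remainder uniform. Propagating the purely local geometric estimates into a bound that is genuinely uniform over all regular values in $\DD$ --- while simultaneously absorbing the $x$-dependence of the perturbation $h_n$ into the remainder --- is the delicate part.

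For part (b), once part (a) and Lemma~\ref{lemma:had0} are available, the result follows from the chain rule for uniform Hadamard derivatives applied to $\Delta \mapsto F_{\Delta,\mu} \mapsto F_{\Delta,\mu}^{\leftarrow} = \Delta^*_{\mu}$. On $\DD$ the map $\delta \mapsto F_{\Delta,\mu}(\delta)$ is $\C^1$ with $f_{\Delta,\mu}(\delta) \geq \varepsilon > 0$ for indices $u \in \UU$, so the left-inverse operation is Hadamard differentiable at $F_{\Delta,\mu}$ tangentially to continuous functions, uniformly in $u \in \UU$, with derivative $\ell \mapsto -\ell(\Delta^*_{\mu}(u))/f_{\Delta,\mu}(\Delta^*_{\mu}(u))$; equivalently, one expands the defining relation $F_{\Delta+t_nh_n,\mu}(\Delta^*_{\mu,n}(u)) = u$ to one term in $\delta$ about $\Delta^*_{\mu}(u)$, the expansion being controlled uniformly by the uniform continuity of $f_{\Delta,\mu}$ and its lower bound $\varepsilon$. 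Composing with part (a) gives
\[
G \;\mapsto\; \partial_\Delta\Delta^*_{\mu}(u)[G] \;=\; -\,\frac{\partial_\Delta F_{\Delta,\mu}(\Delta^*_{\mu}(u))[G]}{f_{\Delta,\mu}(\Delta^*_{\mu}(u))},
\]
and the required continuity of $(u,G)\mapsto\partial_\Delta\Delta^*_{\mu}(u)[G]$ on $\UU\times\mathbb{F}_0$ follows from Lemma~\ref{lemma:had0} (continuity of $u\mapsto 1/f_{\Delta,\mu}(\Delta^*_{\mu}(u))$ and of $u\mapsto\Delta^*_{\mu}(u)$) together with the continuity established in part (a).
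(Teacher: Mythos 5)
Your proposal is correct and follows essentially the same route as the paper's proof: for part (a) it uses the same rectangle cover, partition of unity, and implicit-function change of variables built in the proof of Lemma \ref{lemma:had0}, with uniformity in $\delta\in\DD$ ultimately resting on Lemma \ref{lemma:continuity}, and for part (b) the same composition with the Hadamard-differentiable inverse (quantile) map. The only cosmetic differences are that the paper handles the $x$-dependent perturbation by sandwiching the indicators between $1\{\Delta+t_n(G\mp\zeta)\leq\delta\}$ and letting $\zeta\to 0$ (rather than computing with $h_n$ and substituting $G$ at the end), and obtains uniformity in $\delta$ via the equivalence of uniform and continuous convergence rather than a finite-subcover argument.
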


\begin{proof}[Proof of Lemma \ref{lemma:had1}] To shows statement (a), for any $G_n\rightarrow G\in \mathbb{F}_0$ under
sup-norm such that $\Delta+t_nG_n \in \mathbb{F}$, and $t_n\rightarrow 0$,
we consider
$$\frac{F_{\Delta+t_nG_n,\mu}(\delta)-F_{\Delta,\mu}(\delta)}{t_n}.$$
By assumption, any function $G\in \mathbb{F}_0$ is bounded  and uniformly continuous on
$B(\mathcal{X})$. Hence, $G_n$ is uniformly
bounded for $n\geq N$ , since $G_n\rightarrow G$ in sup-norm.

For any $\delta\in \mathcal{D}$ we consider a procedure similar to Lemma \ref{lemma:had0}. We use the same notation as in Step 1 of the proof of Lemma \ref{lemma:had0}. Suppose for $\eta_1 > 0$ small enough,
we have a rectangle cover $\Theta = \cup_{i=1}^m \theta_i\subseteq B(\mathcal{X})$ of
$\overline{\widetilde{\MM}_\Delta(B_{\eta_1}(\delta))\cap B_{\eta_1}(\overline\XX)}$ such that for all $\eta<\eta_1$,  $\widetilde{\MM}_\Delta(B_{\eta}(\delta))$ intersects each $\theta_i$ at some axis $\text{i}(\theta_i)$, $1\leq i\leq m$. As before, there is a partition of unity $\{p_j\}_{j=1}^J$ on the cover sets $\Theta=\{\theta_i\}_{i=1}^m$. As in the proof of Lemma \ref{lemma:had0}, we can rewrite
\begin{eqnarray*}
& &  \frac{\int_\mathcal{X} \left[1\{\Delta(x)+t_nG_n(x)\leq \delta\}-1\{\Delta(x)\leq \delta\}\right] \mu'(x)dx}{t_n}  \\
&& = \sum_{1\leq i\leq m,1\leq
j\leq J} \int_{\widetilde{\MM}_{\Delta}(B_{\eta}^+(\delta))\cap\theta_i}p_j(x)\frac{\left[1\{\Delta(x)+t_nG_n(x)\leq \delta\}-1\{\Delta(x)\leq \delta\}\right] \mu'(x)}{t_n}  dx.
\end{eqnarray*}

Then, for any fixed positive number $|\zeta|$, there exist $N$ large enough
such that $\sup_{x\in B(\mathcal{X}),n\geq N}
|G_n-G|<|\zeta|$. Moreover,  for any $x\in B(\XX)$, and large enough $n$,
\begin{equation*}
1\{\Delta(x)+t_nG_n(x)\leq \delta\}\leq
1\{\Delta(x)+t_n(G(x)-\zeta)\leq \delta\}.
\end{equation*}

As in  Step 2 of the proof of Lemma \ref{lemma:had0}, suppose $\theta_i=X_{i1} \times...\times X_{id_x}$ intersects
$\widetilde{\MM}_\Delta(B_{\eta}(\delta))$ at  $\text{i}(\theta_i) = x_{d_x}$.
Define the  parametrization $$\psi_{d_x}: X_{i1} \times...\times X_{i,d_x-1}\times B_\eta(\delta)\mapsto \theta_i, $$
$$\psi_{d_x}(x_{-d_x},\delta')=(x_{-d_x},g(x_{-d_x},\delta')),$$
where $g(x_{-d_x},\delta')$ is the implicit function derived
from equation $\Delta(x)=\delta'$, for any $\delta'\in B_\eta(\delta)$. Therefore, for large enough $n$,
\begin{eqnarray*}
& &  \int_{\widetilde{\MM}_{\Delta}(B_{\eta}^+(\delta))\cap\theta_i}p_j(x)\frac{\left[1\{\Delta(x)+t_nG_n(x)\leq \delta\}-1\{\Delta(x)\leq \delta\}\right] \mu'(x)}{t_n}  dx \\
&\leq&
\frac{\int_{\widetilde{\MM}_\Delta(B_{\eta}(\delta))\cap\theta_i} \left[1\{\Delta(x)+t_n(G(x)-\zeta)\leq \delta\}-1\{\Delta(x)\leq \delta\} \right] \mu'(x)dx}{t_n}. \notag \\
\end{eqnarray*}

Next, by a change of variables $\psi_{d_x}^{-1}$ from $\theta_i$ to $X_{i1}\times ...\times X_{i,d_x-1}\times B_\eta(\delta)$,
$$\int_{\widetilde{\MM}_\Delta(B_{\eta}(\delta)) \cap \theta_i}p_j(x) {\frac{1\{ \delta\leq \Delta(x)\leq \delta-t_n(G(x)-\zeta)\}\mu'(x)}{t_n}} dx$$
$$= \int_{X_{i1}\times ...\times X_{i,d_x-1}}\int_{B_\eta(\delta)}\frac{(p_j \cdot \mu')\circ \psi_{d_x}(x_{-d_x},\delta')}{|\partial_{x_{d_x}} \Delta \circ \psi_{d_x}(x_{-d_x},\delta')|} \frac{1\{ \delta\leq \delta' \leq \delta-t_n(G\circ \psi_{d_x}(x_{-d_x},\delta)-\zeta)\} }{t_n}d{\delta'}dx_{-d_x} $$
$$= \int_{X_{i1}\times ...\times X_{i,d_x-1}}\int_{B_\eta(\delta)\cap[\delta, \delta-t_n(G\circ \psi_{d_x}(x_{-d_x},\delta)-\zeta)]}\frac{(p_j \cdot \mu')\circ \psi_{d_x}(x_{-d_x},\delta')}{|\partial_{x_{d_x}} \Delta \circ \psi_{d_x}(x_{-d_x},\delta')|t_n} d{\delta'}dx_{-d_x} $$
$$ \leq - \int_{X_{i1} \times...\times X_{i,d_x-1}}\frac{(p_j \cdot \mu') \circ \psi_{d_x}(x_{-d_x},\delta)}{|\partial_{x_{d_x}} \Delta \circ \psi_{d_x}(x_{-d_x},\delta)|} (G\circ \psi_{d_x}(x_{-d_x},\delta)-\zeta) dx_{-d_x}+o(\eta)$$
\begin{eqnarray*}\label{eq: hadamard2}
&& =-\int_{\theta_i\cap \widetilde{M}_{\Delta}(\delta)}p_j(x)\mu'(x)\frac{G(x)-\zeta}{\parallel \partial
\Delta(x)\parallel}d\mathrm{Vol}+o(\eta)\\
&&  = -\int_{\theta_i\cap {M}_{\Delta}(\delta)}p_j(x)\mu'(x)\frac{G(x)-\zeta}{\parallel \partial
\Delta(x)\parallel}d\mathrm{Vol}+o(\eta),
\end{eqnarray*}
where the inequality in the above equation holds by continuity of $(x_{-d_x},\delta') \mapsto (p_j \cdot \mu') \circ \psi_{d_x}(x_{-d_x},\delta')/|\partial_{x_{d_x}} \Delta \circ \psi_{d_x}(x_{-d_x},\delta')| $. More specifically, fixing $\eta>0$ and $x_{-d_x}$, for $t_n \to 0$,
$$B_\eta(\delta)\cap[\delta, \delta-t_n(G\circ \psi_{d_x}(x_{-d_x},\delta)-\zeta)] = [\delta, \delta-t_n(G\circ \psi_{d_x}(x_{-d_x},\delta)-\zeta)] $$
and
$$\frac{(p_j \cdot \mu')\circ \psi_{d_x}(x_{-d_x},\delta')}{|\partial_{x_{d_x}} \Delta \circ \psi_{d_x}(x_{-d_x},\delta')|} \to \frac{(p_j \cdot \mu')\circ \psi_{d_x}(x_{-d_x},\delta)}{|\partial_{x_{d_x}} \Delta \circ \psi_{d_x}(x_{-d_x},\delta)|}$$
 as $\delta' \to \delta$.
 The last equality above holds because $\mu'(x)=0$ for all $x \in \widetilde{\MM}_{\Delta}(\delta) \setminus \MM_{\Delta}(\delta)$.

Since  $m$ and $J$ are fixed for any $n\geq N$, and $|G\circ \psi_{d_x}(x_{-d_x},\delta)-\zeta|$ is bounded by some absolute constant, $\sum_j p_j(x)=1$ and
$p_j(x)\geq 0$, we can let $\zeta \to 0$ to conclude that:
$$\lim_{n\rightarrow
\infty}\frac{F_{\Delta+t_nG_n,\mu}(\delta)-F_{\Delta,\mu}(\delta)}{t_n}\leq
\sum_{i=1}^m\sum_{j=1}^J -\int_{\theta_i\cap {M}_{\Delta}(\delta)}p_j(x)\mu'(x)\frac{G(x)}{\parallel \partial
\Delta(x)\parallel}d\mathrm{Vol}.$$
The right side is given by:
$$
-\int_{\MM_{\Delta}(\delta)}\frac{\mu'(x)G(x)}{\|\partial \Delta(x)\|}d\mathrm{Vol}.
$$

On the other hand, $$1(\Delta(x)+t_nG_n(x)\leq \delta)\geq 1(\Delta(x) + t_n(G(x)+\zeta) \leq \delta)$$ for some $\zeta>0$. So,
\begin{eqnarray*}
& &  \int_{\widetilde{\MM}_{\Delta}(B_{\eta}^+(\delta))\cap\theta_i}p_j(x)\frac{\left[1\{\Delta(x)+t_nG_n(x)\leq \delta\}-1\{\Delta(x)\leq \delta\}\right] \mu'(x)}{t_n}  dx \\
&\geq&
\frac{\int_{\widetilde{\MM}_\Delta(B_{\eta}(\delta))\cap\theta_i} \left[1\{\Delta(x)+t_n(G(x)+\zeta)\leq \delta\}-1\{\Delta(x)\leq \delta\} \right] \mu'(x)dx}{t_n}. \notag \\
\end{eqnarray*}
And,  by a change of variables $\psi_{d_x}^{-1}$ from $\theta_i$ to $X_{i1}\times ...\times X_{i,d_x-1}\times B_\eta(\delta)$,
$$\int_{\widetilde{\MM}_\Delta(B_{\eta}(\delta)) \cap \theta_i}p_j(x) {\frac{1\{ \delta\leq \Delta(x)\leq \delta-t_n(G(x)+\zeta)\}\mu'(x)}{t_n}} dx$$
$$= \int_{X_{i1}\times ...\times X_{i,d_x-1}}\int_{B_\eta(\delta)}\frac{(p_j \cdot \mu')\circ \psi_{d_x}(x_{-d_x},\delta')}{|\partial_{x_{d_x}} \Delta \circ \psi_{d_x}(x_{-d_x},\delta')|} \frac{1\{ \delta\leq \delta' \leq \delta-t_n(G\circ \psi_{d_x}(x_{-d_x},\delta)+\zeta)\} }{t_n}d{\delta'}dx_{-d_x} $$
$$= \int_{X_{i1}\times ...\times X_{i,d_x-1}}\int_{B_\eta(\delta)\cap[\delta, \delta-t_n(G\circ \psi_{d_x}(x_{-d_x},\delta)+\zeta)]}\frac{(p_j \cdot \mu')\circ \psi_{d_x}(x_{-d_x},\delta')}{|\partial_{x_{d_x}} \Delta \circ \psi_{d_x}(x_{-d_x},\delta')|t_n} d{\delta'}dx_{-d_x} $$
$$ \geq - \int_{X_{i1} \times...\times X_{i,d_x-1}}\frac{(p_j \cdot \mu') \circ \psi_{d_x}(x_{-d_x},\delta)}{|\partial_{x_{d_x}} \Delta \circ \psi_{d_x}(x_{-d_x},\delta)|} (G\circ \psi_{d_x}(x_{-d_x},\delta)+\zeta) dx_{-d_x} - o(\eta)$$
\begin{eqnarray*}
&& =-\int_{\theta_i\cap \widetilde{M}_{\Delta}(\delta)}p_j(x)\mu'(x)\frac{G(x)+\zeta}{\parallel \partial
\Delta(x)\parallel}d\mathrm{Vol}-o(\eta)\\
&&  = -\int_{\theta_i\cap {M}_{\Delta}(\delta)}p_j(x)\mu'(x)\frac{G(x)+\zeta}{\parallel \partial
\Delta(x)\parallel}d\mathrm{Vol}-o(\eta).
\end{eqnarray*}
Let $\zeta\rightarrow 0$ and $\eta\rightarrow 0$, it follows that
$$\lim_{n\rightarrow
\infty}\frac{F_{\Delta+t_nG_n,\mu}(\delta)-F_{\Delta,\mu}(\delta)}{t_n}\geq
-\int_{\MM_{\Delta}(\delta)}\frac{\mu'(x)G(x)}{\|\partial \Delta(x)\|}d\mathrm{Vol}.$$

Combining the two inequalities, we conclude that
$F_{\Delta,\mu}(\delta)$ is Hadamard-differentiable at $\Delta$
tangentially to $\mathbb{F}_0$ with derivative
$$
\partial_{\Delta} F_{\Delta,\mu}(\delta)[G] = -\int_{\MM_{\Delta}(\delta)} \frac{\mu'(x)G(x)}{\|\partial \Delta(x)\|}d\mathrm{Vol}.
$$

To show that the result holds uniformly in $\delta \in \DD$, we use the equivalence between uniform convergence and continuous convergence (e.g., \citen[p.2]{Resnick87}). Take a sequence $\delta_n$ in $\DD$ that converges to $\delta \in \DD$. Then, the preceding argument applies to this sequence and $\partial_{\Delta} F_{\Delta,\mu}(\delta_n)[G] \to  \partial_{\Delta} F_{\Delta,\mu}(\delta)[G]$ by uniform continuity of
$\delta \mapsto \partial_{\Delta} F_{\Delta,\mu}(\delta)[G]$ on $\DD$,
which holds by Lemma \ref{lemma:continuity}
because $G$, $\mu'$, and $\|\partial \Delta \|$ are continuous on $\overline \XX$ and  $\DD$ excludes neighborhoods of the critical values of $\Delta$ in $\overline \XX$. 


Tho show statement (b), note that  by statement (a),  Hadamard differentiability of the quantile map, see e.g., Lemma 3.9.20 in \citen{vdV-W},  and the chain rule for Hadamard differentiation, the inverse map $\Delta^*_{\mu}(u)$ is Hadamard differentiable at $\Delta$
tangentially to $\mathbb{F}_0$ with the derivative map
$$\partial_{\Delta} \Delta^*_{\mu}(u)[G]=  -\left. \frac{\partial_{\Delta} F_{\Delta,\mu}(\delta)[G]}{ \partial_{\delta} F_{\Delta,\mu}(\delta)}\right|_{\delta=\Delta^*_\mu(u)} = \frac{\partial_{\Delta} F_{\Delta,\mu}(\Delta^*_\mu(u))[G]}{ f_{\Delta,\mu}(\Delta^*_\mu(u))},$$
uniformly in the index $u \in \mathcal{U}= \{ u \in (0,1): \Delta^*_{\mu}(u)\in \mathcal{D},  \ f_{\Delta,\mu}(\Delta^*_\mu(u)) \geq \eps\}$. \end{proof}

\begin{proof}[Proof of Lemma \ref{lemma:had3} ]
To show Statement (a), Consider $t_n\rightarrow 0$ and $(G_n,H_n)\rightarrow (G,H)\in \mathbb{D}_0:=\mathbb{F}_0\times \mathbb{H}$
 as $n \to \infty$, such that $(\Delta+t_n G_n, \mu+t_nH_n) \in  \mathbb{D}$.
Let $\Delta_n:=\Delta+t_n G_n$ and $\mu_n:=\mu+t_nH_n$. Then, we can decompose
$$F_{\Delta_n,{\mu}_n}(\delta)-F_{\Delta,\mu}(\delta)=
[F_{\Delta_n,{\mu}_n}(\delta)-F_{\Delta_n,{\mu}}(\delta)]+[F_{\Delta_n,\mu}(\delta)-F_{\Delta,\mu}(\delta)].$$

By Lemma \ref{lemma:had1}, $$\frac{F_{\Delta_n,\mu}(\delta)-F_{\Delta,\mu}(\delta)}{t_n}=-\int_{\MM_{\Delta}(\delta)}
\frac{G(x)\mu'(x)}{\|\partial \Delta(x)\|}d\Vol+o(1).$$

Let $g_{\Delta,\delta} := 1(\Delta(x) \leq \delta)$. By definition of $F_{\Delta_n,{\mu}_n}(\delta)$,
$$\frac{F_{\Delta_n,\mu_n}(\delta)-F_{\Delta_n,\mu}(\delta)}{t_n}=H_{n}(g_{\Delta_n,\delta}).$$
Note that
$$H_{n}(g_{\Delta_n,\delta})-H(g_{\Delta,\delta})= [H_{n}(g_{\Delta_n,\delta})-H_{n}(g_{\Delta,\delta})] +[H_n-H](g_{\Delta,\delta}).$$
The second term goes to 0 by the assumption $H_n \to H$ in $\mathbb{H}$.
For the first term, we further decompose
$$|H_{n}(g_{\Delta_n,\delta})-H_{n}(g_{\Delta,\delta})|\leq |H_n(g_{\Delta_n,\delta})-H(g_{\Delta_n,\delta})|+
|H_n(g_{\Delta,\delta})-H(g_{\Delta,\delta})| + |H(g_{\Delta_n,\delta})-H(g_{\Delta,\delta)}|.$$
The first two terms go to 0 by  $\|H_n-H\|_{\mathcal{G}}\rightarrow 0$.
Moreover, $H(g_{\Delta_n,\delta}) \to H(g_{\Delta,\delta})$ because $g_{\Delta_n, \delta}(X)
= 1(\Delta_n(X) \leq \delta) \to
g_{\Delta, \delta}(X) = 1(\Delta(X) \leq \delta)$ in the $L^2 (\mu)$ norm, since $\Delta_n \to \Delta$ in the sup norm and $\Delta(X)$ has an absolutely continuous distribution, and since we require the operator $H$ to be continuous under the $L^2(\mu)$ norm.

We conclude that for any $\delta \in \mathcal{D}$,
$$\frac{F_{\Delta_n,\mu_n}(\delta)-F_{\Delta,\mu}(\delta)}{t_n}\to
-\int_{\MM_{\Delta}(\delta)} \frac{G(x)\mu'(x)}{\|\partial
\Delta(x)\|}d\mathrm{Vol}+H(g_{\Delta, \delta}) = \partial_{\Delta,\mu} F_{\Delta,\mu}(\delta)[G,H].$$
By an argument similar to the proof of Lemma \ref{lemma:had1}, it can be shown that the convergence is uniform in $\delta \in \DD$.

Statement (b) follows by statement (a) and the Hadamard differentiability of the quantile map uniformly in the quantile index, see, e.g., Lemma 3.9.20 in \citen{vdV-W}.\end{proof}

\subsection{Proof of Lemma \ref{lemma:Ratio-G}}

We will denote the functions in the classes $\mathcal{F}_M$ and $\mathcal{F}_I$ by $\varphi_t(x)$ whenever we want  to distinguish $x=(x_1,\ldots,x_{d_x})$, the argument of the function, from $t := (t_1, \ldots, t_{d_z})$, the index of the function in the class. Otherwise, we will use $\varphi(x)$.
To analyze $\Lambda^{-}_{\Delta, \mu,\delta}$ it is convenient to introduce  the operator  $\Upsilon_{\Delta, \mu,\delta}: \widetilde{\mathbb{D}} \to \mathbb{R}$ defined by
\begin{equation*}\label{eq:G_supp}
\Upsilon_{\Delta,\mu,\delta}(\varphi)  :=\int \varphi(x)1\{\Delta(x)\leq \delta\}d\mu(x),\end{equation*}
since $\Lambda^{-}_{\Delta, \mu,\delta}(\varphi) = \Upsilon_{\Delta,\mu,\delta}(\varphi) /\Upsilon_{\Delta,\mu,\delta}(1)$.

%
%


Let $\widetilde{\MM}_{\Delta}(B_{\eta}(\delta)):=\cup_{\delta'\in B_{\eta}(\delta)}\widetilde{\MM}_{\Delta}(\delta')$, where  $\widetilde{\MM}_{\Delta}(\delta):=\{x\in B(\XX) : \Delta(x)=\delta\}$ and  $B_{\eta}(\delta):=(\delta-\eta,\delta+\eta)$ for any $\delta\in \mathcal{V}$ and $\eta>0$. When $\varphi_t \in \mathcal{F}_I$ we make the following technical assumption to deal with the discontinuity of the indicator functions:

${\sf AS}.1$.
Define the set $\widetilde{\mathcal{Z}}_{k,\eta}(\delta,t_k):=\{x_{-k} : (x_k,x_{-k})\in\widetilde{\MM}_\Delta(B_\eta(\delta)), x_{k}=t_k \}$ for any $\eta>0$, $\delta\in \mathcal{V}$, $k=1,2,...,d_x$, and  $t_k \in \mathbb{R}$. Then, for any $\epsilon>0$, there exist $\eta_0>0$ such that for any $\eta<\eta_0$, $\int_{\widetilde{\mathcal{Z}}_{k,\eta}(\delta,t_k)}  d \mu(x_{-k}) \leq \epsilon$ holds uniformly over all $\delta\in \mathcal{V}$, $t_k \in \mathbb{R}$ and $k=1,2,...,d_x$.

The next result shows that $(\Delta, \mu,\delta) \mapsto \Upsilon_{\Delta, \mu,\delta}$ is Hadamard differentiable.

\begin{lemma}[Hadamard differentiability of $(\Delta, \mu,\delta) \mapsto \Upsilon_{\Delta, \mu,\delta}$]\label{lemma:had4}
Assume that Assumptions S.1 and S.2 hold and $\delta \in \DD$. Then,

(a) The map $\Upsilon_{\Delta,\mu,\delta}(\varphi): \widetilde{\mathbb{D}} \to \mathbb{R}$ is Hadamard-differentiable uniformly  in
$\varphi \in \mathcal{F}_M$ at $(\Delta,\mu,\delta)$ tangentially to $\widetilde{\mathbb{D}}_0$.

(b) If in addition Assumption ${\sf AS}.1$ holds, the map $\Upsilon_{\Delta,\mu,\delta}(\varphi): \widetilde{\mathbb{D}} \to \mathbb{R}$ is Hadamard-differentiable uniformly  in
$\varphi \in \mathcal{F}_I$ at $(\Delta,\mu,\delta)$ tangentially to $\widetilde{\mathbb{D}}_0$.

(c) The derivative map $\partial_{\Delta,\mu,\delta} \Upsilon_{\Delta,\mu,\delta}(\varphi): \widetilde{\mathbb{D}} \to \mathbb{R}$ is defined by:
%
%
$$(G,H,K)\mapsto \partial_{\Delta,\mu,\delta}\Upsilon_{\Delta,\mu,\delta}(\varphi)[G,H,K]:= \int_{\mathcal{M}_\Delta(\delta)}\varphi(x)\frac{K-G(x)}{\|\partial \Delta(x)\|}d\Vol +H(h_{\Delta,\delta,\varphi}),$$
where $h_{\Delta,\delta,\varphi}:=\varphi(x)1\{\Delta(x)\leq \delta\}$.
\end{lemma}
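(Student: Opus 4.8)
The plan is to reduce the analysis of $\Upsilon_{\Delta,\mu,\delta}(\varphi)=\int\varphi(x)1\{\Delta(x)\le\delta\}\,d\mu(x)$ to the machinery already developed for $F_{\Delta,\mu}$ in Lemmas \ref{lemma:had0}, \ref{lemma:had1}, and \ref{lemma:had3}, since $F_{\Delta,\mu}(\delta)=\Upsilon_{\Delta,\mu,\delta}(1)$. First I would decompose the perturbed increment along the three arguments. Writing $\Delta_n=\Delta+t_nG_n$, $\mu_n=\mu+t_nH_n$, $\delta_n=\delta+t_nK_n$ with $t_n\to0$ and $(G_n,H_n,K_n)\to(G,H,K)$ in $\widetilde{\mathbb{D}}_0$, I split
\[
\Upsilon_{\Delta_n,\mu_n,\delta_n}(\varphi)-\Upsilon_{\Delta,\mu,\delta}(\varphi)
=\underbrace{\big[\Upsilon_{\Delta_n,\mu_n,\delta_n}(\varphi)-\Upsilon_{\Delta_n,\mu,\delta_n}(\varphi)\big]}_{\text{(I): perturbation of }\mu}
+\underbrace{\big[\Upsilon_{\Delta_n,\mu,\delta_n}(\varphi)-\Upsilon_{\Delta,\mu,\delta}(\varphi)\big]}_{\text{(II): perturbation of }(\Delta,\delta)}.
\]
Term (I) equals $t_nH_n(\varphi\,1\{\Delta_n\le\delta_n\})$ by linearity of $H_n$ in its argument; dividing by $t_n$ and arguing exactly as in the proof of Lemma \ref{lemma:had3} (continuity of $H$ under the $L^2(\mu)$ norm, together with $1\{\Delta_n\le\delta_n\}\to1\{\Delta\le\delta\}$ in $L^2(\mu)$ because $\Delta(X)$ is absolutely continuous and $\delta$ is fixed in $\DD$), this converges to $H(h_{\Delta,\delta,\varphi})$ with $h_{\Delta,\delta,\varphi}=\varphi\,1\{\Delta\le\delta\}$. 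For term (II), the combined perturbation of $\Delta$ and $\delta$ is handled by a local-coordinate argument paralleling Step 1–Step 2 of Lemma \ref{lemma:had0}: cover $\MM_\Delta(\delta)$ by finitely many rectangles $\theta_i$ on which $\Delta$ is partially monotone in some coordinate $x_{\mathrm{i}(\theta_i)}$, apply a partition of unity $\{p_j\}$, use the implicit function theorem to parametrize each $\widetilde{\MM}_\Delta(\cdot)\cap\theta_i$ via $\psi_{d_x}$, and perform the same change of variables. Sandwiching $1\{\Delta_n(x)\le\delta_n\}$ between $1\{\Delta(x)+t_n(G(x)-\zeta)\le\delta+t_n(K-\zeta)\}$ and the analogous upper bound, then letting $n\to\infty$, $\zeta\to0$, and the tube radius $\eta\to0$, the difference quotient converges to
\[
\int_{\MM_\Delta(\delta)}\varphi(x)\,\frac{K-G(x)}{\|\partial\Delta(x)\|}\,d\Vol,
\]
where the extra $K$ term (absent in Lemma \ref{lemma:had1}) arises precisely because the threshold $\delta_n$ also drifts, shifting the effective level set by $t_nK$. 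Adding the limits of (I) and (II) gives part (c); uniformity in $\varphi\in\mathcal{F}_M$ (equivalently in the index $t$) and continuity of $(t,G,H,K)\mapsto\partial\Upsilon[G,H,K]$ follow, as in Lemma \ref{lemma:had1}, from the continuous-convergence characterization of uniform convergence (e.g. \citen[p.2]{Resnick87}) combined with Lemma \ref{lemma:continuity}, noting that $\varphi_t$, $\mu'$, and $\|\partial\Delta\|$ are continuous on $\overline\XX$ and $\DD$ avoids critical values — this establishes part (a).

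The main obstacle is part (b), where $\varphi=\varphi_t\in\mathcal{F}_I$ contains its own indicator $1\{z\le t\}$, so $x\mapsto\varphi_t(x)$ is discontinuous, and the change-of-variables/uniform-continuity steps used for $\mathcal{F}_M$ break down on the hyperplanes $\{x_k=t_k\}$. The resolution is Assumption ${\sf AS}.1$: it guarantees that the portion of the tube $\widetilde{\MM}_\Delta(B_\eta(\delta))$ lying on any such hyperplane has $\mu$-measure $\le\epsilon$ uniformly in $\delta$, $t_k$, $k$, for $\eta$ small. I would therefore split the integrals in (II) into the part away from $\bigcup_k\{x_k=t_k\}$, where $\varphi_t$ is locally continuous and the $\mathcal{F}_M$ argument applies verbatim, and the part within an $\epsilon$-neighborhood of these hyperplanes, which ${\sf AS}.1$ shows contributes at most $O(\epsilon)$ to the difference quotient uniformly in $t$; letting $\epsilon\to0$ after $n\to\infty$ recovers the same derivative formula. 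The uniformity over $t\in\mathbb{R}^{d_z}$ (the index set is now unbounded) is delicate but is exactly what the uniform-in-$t$ version of ${\sf AS}.1$ is designed to deliver, so the continuous-convergence argument goes through on $\mathcal{F}_I$ as well. Finally, part (c)'s formula is a single expression valid for both classes, and the statement of Lemma \ref{lemma:Ratio-G} then follows by the quotient rule for Hadamard derivatives applied to $\Lambda^{-}_{\Delta,\mu,\delta}=\Upsilon_{\Delta,\mu,\delta}(\varphi)/\Upsilon_{\Delta,\mu,\delta}(1)$, using $F_{\Delta,\mu}(\delta)>0$.
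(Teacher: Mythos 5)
Your proposal is correct and follows essentially the same route as the paper: the same two-term decomposition into a $\mu$-perturbation (handled by linearity of $H_n$ and $L^2(\mu)$-continuity of $H$) and a joint $(\Delta,\delta)$-perturbation, which the paper likewise reduces to the Lemma \ref{lemma:had1} argument by absorbing the threshold drift into an effective perturbation $\widetilde G_n = G_n - K_n$ and replacing $\mu'$ by $\varphi\,\mu'$, with ${\sf AS}.1$ used exactly as you describe to control the neighborhood of the discontinuity hyperplanes for $\varphi\in\mathcal{F}_I$, and uniformity obtained via continuous convergence. No substantive differences from the paper's proof.
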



%
%
%

\begin{proof}[Proof of Lemma \ref{lemma:had4}]

Statements (a) and (b) follow by similar arguments. For brevity, we focus on the proof of Statement (b) and mention the changes needed for the proof of Statement (a), which is simpler. 


To show Statement (b), consider  $s_n \to 0$ and $(G_n,H_n,K_n) \to (G,H,K) \in \widetilde{\mathbb{D}}_0$ as $n\to \infty$, such that $(\Delta + s_n G_n, \mu + s_n H_n, \delta + s_n K_n) \in \widetilde{\mathbb{D}}$. Let $\Delta_n := \Delta + s_n G_n$, $\mu_n := \mu + s_n H_n$, and $\delta_n := \delta + s_n K_n$. Then, we can decompose
\begin{equation}\label{eq:gamma}
\Upsilon_{\Delta_n, \mu_n, \delta_n}(\varphi) -  \Upsilon_{\Delta, \mu, \delta}(\varphi) = [\Upsilon_{\Delta_n, \mu_n, \delta_n}(\varphi) -  \Upsilon_{\Delta_n, \mu, \delta_n}(\varphi)] + [\Upsilon_{\Delta_n, \mu, \delta_n}(\varphi) -  \Upsilon_{\Delta, \mu, \delta}(\varphi)].
\end{equation}

%

The first term of (\ref{eq:gamma}) satisfies
$$
\frac{\Upsilon_{\Delta_n, \mu_n, \delta_n}(\varphi) -  \Upsilon_{\Delta_n, \mu, \delta_n}(\varphi)}{s_n} = H_n(h_{\Delta_n,\delta_n,\varphi}) = H(h_{\Delta,\delta,\varphi}) + o(1).
$$
The first equality follows from  linearity of $\mu \mapsto \Upsilon_{\Delta_n, \mu, \delta_n}(\varphi)$ and $h_{\Delta_n,\delta_n,\varphi} = \varphi(x) 1\{\Delta_n(x) \leq \delta_n \}$.  To show the second equality note that
$$
H_n(h_{\Delta_n,\delta_n,\varphi}) - H(h_{\Delta,\delta,\varphi}) = H_n(h_{\Delta_n,\delta_n,\varphi}) - H_n(h_{\Delta,\delta,\varphi}) +  [H_n - H](h_{\Delta,\delta,\varphi}),
$$
where the second term goes to zero by the assumption $H_n \to H$ in $\widetilde{\mathbb{H}}$.
For the first term, we further decompose
\begin{multline*}
|H_{n}(h_{\Delta_n,\delta_n,\varphi})-H_{n}(h_{\Delta,\delta,\varphi})|\leq |H_n(h_{\Delta_n,\delta_n,\varphi})-H(h_{\Delta_n,\delta_n,\varphi})| \\ +
|H_n(h_{\Delta,\delta,\varphi})-H(h_{\Delta,\delta,\varphi})| + |H(h_{\Delta_n,\delta_n,\varphi})-H(h_{\Delta,\delta,\varphi})|.
\end{multline*}
By definition of the space $\widetilde{\mathbb{H}}$, the first two terms go to 0 by  $\|H_n-H\|_{\widetilde{\mathcal{G}}}\rightarrow 0$.


Moreover, $H(h_{\Delta_n,\delta_n,\varphi}) \to H(h_{\Delta,\delta,\varphi})$ because
$$h_{\Delta_n, \delta_n,\varphi}(X)
= \varphi(X)1(\Delta_n(X) \leq \delta_n) \to
h_{\Delta, \delta, \varphi}(X) = \varphi(X) 1(\Delta(X) \leq \delta)$$ in the $L^2 (\mu)$ norm, since $\Delta_n \to \Delta$ in the sup norm and $\Delta(X)$ has an absolutely continuous distribution, and since we require the operator $H$ to be continuous under the $L^2(\mu)$ norm.


Next we show that the second term of (\ref{eq:gamma}) satisfies
$$
\frac{\Upsilon_{\Delta_n, \mu, \delta_n}(\varphi) -  \Upsilon_{\Delta, \mu, \delta}(\varphi)}{s_n} = \int_{\mathcal{M}_\Delta(\delta)} \varphi(x)\frac{K-G(x)}{\|\partial \Delta(x)\|}\mu'(x)d\Vol + o(1).
$$
The proof follows the same steps as the proof of Lemma \ref{lemma:had1} after noticing that we can write
$$
\Upsilon_{\Delta_n, \mu, \delta_n}(\varphi) = \Upsilon_{\widetilde \Delta_n, \mu, \delta}(\varphi),
$$
where $\widetilde \Delta_n  = \Delta + s_n \widetilde G_n$ with $\widetilde G_n = G_n -K_n$, and replacing $\mu'(x)$ by $\widetilde \mu'(x) = \varphi(x) \mu'(x)$.


Specifically, following the notation in the proof of Lemma  \ref{lemma:had0},
$$
\Upsilon_{\widetilde \Delta_n, \mu, \delta}(\varphi)
=\sum_{i=1}^m\sum_{j=1}^J\int_{\widetilde{\MM}_\Delta(B_{\eta}(\delta)) \cap \theta_i}p_j(x)\varphi(x) {\frac{1\{ \delta\leq \Delta(x)\leq \delta-s_n \widetilde G_n(x) \}}{s_n}} \widetilde \mu'(x) dx.$$

Without loss of generality, assume that  $\theta_i$  intersects with $\widetilde{\MM}_\Delta(B_{\eta}(\delta))$ at the axis $x_{k_i} = x_{d_x}$.  When $\varphi(x) \in \mathcal{F}_I$, each component in the above summation satisfies:
\begin{multline*}\label{eq:partition}
\int_{\widetilde{\MM}_\Delta(B_{\eta}(\delta)) \cap \theta_i}p_j(x)\varphi(x) {\frac{1\{ \delta\leq \Delta(x)\leq \delta-s_n \widetilde G_n(x)\} \widetilde \mu'(x) }{s_n}} dx
\\=\int_{X_{i1}\times...\times X_{i,d_x-1}}{\int_{B_\eta(\delta)} \frac{(p_j\cdot \varphi \cdot \widetilde \mu') \circ \psi_{d_x}(x_{-d_x}, \delta')}{|\partial_{x_{d_x}} \Delta \circ \psi_{d_x}(x_{-d_x}, \delta')|} \times \frac{1\{ \delta\leq \delta' \leq \delta-s_n \widetilde G_n \circ \psi_{d_x}(x_{-d_x}, \delta')\}}{s_n}}d{\delta'}dx_{-d_x}
\\ =\int_{\widetilde{\mathcal{X}}^c_{d_x,\eta}(\delta,t_{d_x}) }{\int_{B_\eta(\delta)} \frac{(p_j \cdot \widetilde \mu') \circ \psi_{d_x}(x_{-d_x}, \delta')}{|\partial_{x_{d_x}} \Delta \circ \psi_{d_x}(x_{-d_x}, \delta')|} \times \frac{1\{ \delta\leq \delta' \leq \delta-s_n \widetilde G_n \circ \psi_{d_x}(x_{-d_x}, \delta')\}}{s_n}}d{\delta'}dx_{-d_x} \\ +\int_{\widetilde{\mathcal{X}}_{d_x,\eta}(\delta,t_{d_x})\}}{\int_{B_\eta(\delta)} \frac{(p_j \cdot \widetilde \mu') \circ \psi_{d_x}(x_{-d_x}, \delta')}{|\partial_{x_{d_x}} \Delta \circ \psi_{d_x}(x_{-d_x}, \delta')|} \times \frac{1\{ \delta\leq \delta' \leq \delta-s_n \widetilde G_n \circ \psi_{d_x}(x_{-d_x}, \delta')\}}{s_n}}d{\delta'}dx_{-d_x},
\end{multline*}
where $\widetilde{\mathcal{X}}_{d_x,\eta}(\delta,t_{d_x}) := [X_{i1}\times...\times X_{i,d_x-1}] \cap \widetilde{\mathcal{Z}}_{d_x,\eta}(\delta,t_{d_x})$ and $\widetilde{\mathcal{X}}^c_{d_x,\eta}(\delta,t_{d_x}) := X_{i1}\times...\times X_{i,d_x-1}\setminus \widetilde{\mathcal{X}}_{d_x,\eta}(\delta,t_{d_x})$. When $\varphi(x)\in \mathcal{F}_M$, then we could simply let $\widetilde{\mathcal{X}}_{d_x,\eta}(\delta,t_{d_x})=\emptyset$ in the rest of the proof.





Partition $t = (t_x,t_y)$ corresponding to $Z=(X,Y)$. Although $x \mapsto \varphi(x)=1(x \leq t_{x}) \mu(t_y \mid x)$ is a discontinuous function, $\delta \mapsto \varphi(x)\circ \psi_{d_x}(x_{-d_x},\delta)$ is continuous for those $x$ such that $x_{-d_x} \in\widetilde{\mathcal{X}}^c_{d_x,\eta}(\delta,t_{d_x})$ and $\delta = \Delta(x)$.
Accordingly, we partition the integral in two regions because the integrand is not necessarily continuous on $\widetilde{\X}_{d_x,\eta}(\delta,t_{d_x})\times B_\eta(\delta)$. We use Assumption {\sf AS}.1 to bound the integral in this region.
Thus, for any $\epsilon>0$, for $\eta$ being small enough, the area of $\widetilde{\mathcal{X}}_{d_x,\eta}(\delta,t_{d_x})$, defined as $\int_{\widetilde{\mathcal{X}}_{d_x,\eta}(\delta,t_{d_x})}\mu'(x_{-d_x}) d x_{-d_x}$, is less than or equal to $\epsilon$ by {\sf AS}.1 uniformly over $\delta$ and $t_{d_x}$. Then,  for large enough $n$, $k=1,2,...,d_x$ and some arbitrarily small $\zeta>0$, by continuity of the integrand,
\begin{multline*}
\int_{ \widetilde{\mathcal{X}}^c_{d_x,\eta}(\delta,t_{d_x})}{\int_{B_\eta(\delta)} \frac{(p_j\cdot  \widetilde \mu') \circ \psi_{d_x}(x_{-d_x}, \delta')}{|\partial_{x_{d_x}} \Delta \circ \psi_{d_x}(x_{-d_x}, \delta')|} \times \frac{1\{ \delta\leq \delta' \leq \delta-s_n \widetilde G_n \circ \psi_{d_x}(x_{-d_x}, \delta')\}}{s_n}}d{\delta'}dx_{-d_x} \\
 \leq - \int_{ \widetilde{\mathcal{X}}^c_{d_x,\eta}(\delta,t_{d_x})}\frac{(p_j \cdot \widetilde \mu') \circ \psi_{d_x}(x_{-d_x},\delta)}{|\partial_{x_{d_x}} \Delta \circ \psi_{d_x}(x_{-d_x},\delta)|} (\widetilde G\circ \psi_{d_x}(x_{-d_x},\delta)-\zeta) dx_{-d_x}+o(\eta)\\
 =- \int_{ X_{i1} \times...\times X_{i,d_x-1} }\frac{(p_j \cdot \widetilde \mu') \circ \psi_{d_x}(x_{-d _x},\delta)}{|\partial_{x_{d_x}} \Delta \circ \psi_{d_x}(x_{-d_x},\delta)|} (\widetilde G\circ \psi_{d_x}(x_{-d_x},\delta)-\zeta) dx_{-d_x} \\ +\int_{ \widetilde{\mathcal{X}}_{d_x,\eta}(\delta,t_{d_x})}\frac{(p_j \cdot \widetilde \mu') \circ \psi_{d_x}(x_{-d_x},\delta)}{|\partial_{x_{d_x}} \Delta \circ \psi_{d_x}(x_{-d_x},\delta)|} (\widetilde G\circ \psi_{d_x}(x_{-d_x},\delta)-\zeta) dx_{-d_x}+o(\eta),
\end{multline*}
where $\widetilde G = G - K$.

The inequality above holds by continuity of the integrand $(x_{-d_x},\delta') \mapsto (p_j \cdot \widetilde \mu') \circ \psi_{d_x}(x_{-d_x},\delta')/|\partial_{x_{d_x}} \Delta \circ \psi_{d_x}(x_{-d_x},\delta')|$ on $\widetilde{\mathcal{X}}^c_{d_x,\eta}(\delta,t_{d_x}) \times B_\eta(\delta)$, and
\begin{equation*}
\left|\int_{\widetilde{\mathcal{X}}_{d_x,\eta}(\delta,t_{d_x})}\frac{(p_j \cdot \widetilde \mu') \circ \psi_{d_x}(x_{-d_x},\delta)}{|\partial_{x_{d_x}} \Delta \circ \psi_{d_x}(x_{-d_x},\delta)|} (\widetilde G\circ \psi_{d_x}(x_{-d_x},\delta)-\zeta) dx_{-d_x}\right| \leq \int_{\widetilde{\mathcal{X}}_{d_x,\eta}(\delta,t_{d_x})}C dx_{-d_x} \leq  C\epsilon,
\end{equation*}
for
$$C:=\sup_{x_{-d_x}\in X_{i1} \times...\times X_{i,d_x-1}} \left|\frac{(p_j \cdot \widetilde \mu) \circ \psi_{d_x}(x_{-d_x},\delta)}{|\partial_{x_{d_x}} \Delta \circ \psi_{d_x}(x_{-d_x},\delta)|} (\widetilde G\circ \psi_{d_x}(x_{-d_x},\delta)-\zeta)\right|,$$ which is bounded from above, because all components in $C$ are bounded from above and $|\partial_{x_{d_x}} \Delta \circ \psi_{d_x}(x_{-d_x},\delta)|$ is bounded away from zero. Similarly, for $s_n$ large enough,
$$\left|\int_{\widetilde{\mathcal{X}}_{d_x,\eta}(\delta,t_{d_x})}{\int_{B_\eta(\delta)} (p_j \cdot \psi_{d_x}) \times \frac{(1\{ \delta\leq \Delta \leq \delta-s_n \widetilde G_n(x)\}\widetilde \mu') \circ \psi_{d_x}}{s_n|\partial_{x_{d_x}} \Delta \circ \psi_{d_x}|}}d{\delta'}dx_{-d_x} \right|\leq C\epsilon.$$

Therefore, combining the previous results
\begin{multline*}
\int_{\widetilde{\MM}_\Delta(B_{\eta}(\delta)) \cap \theta_i}p_j(x) \varphi(x){\frac{1\{ \delta\leq \Delta(x)\leq \delta-s_n \widetilde G_n(x)\}}{s_n}} \widetilde \mu'(x) dx \\
\leq -\int_{ \widetilde{\mathcal{X}}^c_{d_x,\eta}(\delta,t_{d_x})}\frac{(p_j \cdot \widetilde \mu') \circ \psi_{d_x}(x_{-d_x},\delta)}{|\partial_{x_{d_x}} \Delta \circ \psi_{d_x}(x_{-d_x},\delta)|} (\widetilde G\circ \psi_{d_x}(x_{-d_x},\delta)-\zeta) dx_{-d_x}+o(\eta) + C\epsilon \\
= -\int_{X_{i1} \times...\times X_{i,d_x-1} }\frac{(p_j\cdot  \widetilde \mu') \circ \psi_{d_x}(x_{-d_x},\delta)}{|\partial_{x_{d_x}} \Delta \circ \psi_{d_x}(x_{-d_x},\delta)|} (\widetilde G\circ \psi_{d_x}(x_{-d_x},\delta)-\zeta) dx_{-d_x} \\ + \int_{\widetilde{\mathcal{X}}_{d_x,\eta}(\delta,t_{d_x})}\frac{(p_j\cdot  \widetilde \mu') \circ \psi_{d_x}(x_{-d_x},\delta)}{|\partial_{x_{d_x}} \Delta \circ \psi_{d_x}(x_{-d_x},\delta)|} (\widetilde G\circ \psi_{d_x}(x_{-d_x},\delta)-\zeta) dx_{-d_x} + o(\eta)+C\epsilon\\
\leq - \int_{X_{i1} \times...\times X_{i,d_x-1}}\frac{(p_j \cdot  \widetilde \mu) \circ \psi_{d_x}(x_{-d_x},\delta)}{|\partial_{x_{d_x}} \Delta \circ \psi_{d_x}(x_{-d_x},\delta)|} (\widetilde G\circ \psi_{d_x}(x_{-d_x},\delta)-\zeta) dx_{-d_x}+o(\eta) +2C\epsilon \\
= -\int_{\theta_i \cap \widetilde{\MM}_{\Delta}(\delta)}p_j(x)\varphi(x)\cdot \widetilde \mu'(x) \frac{\widetilde G(x)-\zeta}{\| \partial
\Delta(x)\|}d\mathrm{Vol}+o(\eta) +2C\epsilon \\
= -\int_{\theta_i \cap \MM_{\Delta}(\delta)}p_j(x)\varphi(x)\cdot \widetilde \mu'(x) \frac{\widetilde G(x)-\zeta}{\| \partial
\Delta(x)\|}d\mathrm{Vol} +o(\eta) +2C\epsilon,
\end{multline*}
where $\zeta$, $\eta$ and $\epsilon$ can be arbitrarily small for large enough $n$.

Similarly, we can show that
\begin{multline*}
\int_{\widetilde{\MM}_\Delta(B_{\eta}(\delta)) \cap \theta_i}p_j(x) \varphi(x){\frac{1\{ \delta\leq \Delta(x)\leq \delta-s_n \widetilde G_n(x)\}}{s_n}} \widetilde \mu'(x) dx\\
\geq -\int_{\theta_i \cap \MM_{\Delta}(\delta)}p_j(x)\varphi(x)\cdot \widetilde \mu'(x) \frac{\widetilde G(x)-\zeta}{\| \partial
\Delta(x)\|}d\mathrm{Vol} -o(\eta) -2C\epsilon
\end{multline*}
Since we can choose $\eta$ and $\epsilon$ to be arbitrarily small, we conclude that for any $\varphi \in \mathcal{F}_I$,
$$
\frac{\Upsilon_{\Delta_n, \mu_n, \delta_n}(\varphi) -  \Upsilon_{\Delta, \mu, \delta}(\varphi)}{s_n} \to \int_{\mathcal{M}_\Delta(\delta)} \varphi(x)\frac{K-G(x)}{\|\partial \Delta(x)\|}\mu'(x)d\Vol  + H(\varphi(x) 1\{\Delta(x) \leq \delta\}).
$$

To show that the result holds uniformly in  $\varphi \in \mathcal{F}_I$, we use the equivalence between uniform convergence and continuous convergence (e.g., \citen[p.2]{Resnick87}). Take a sequence $\varphi^n \in \mathcal{F}_I$ that converges to $\varphi \in  \mathcal{F}_I$ in the $L^1(\mu)$ norm, i.e., $\int_{\mathcal{X}} |\varphi^n -\varphi | d\mu\rightarrow 0$ as $n \to \infty$. Then, the preceding argument applies to this sequence and $\partial_{\Delta,\mu,\delta} \Upsilon_{\Delta,\mu,\delta}(\varphi^n)[K,G,H] \to  \partial_{\Delta,\mu,\delta} \Upsilon_{\Delta,\mu,\delta}(\varphi)[K,G,H]$ by linearity of the map
$\varphi \mapsto \partial_{\Delta,\mu,\delta} \Upsilon_{\Delta,\mu,\delta}(\varphi)[K,G]$.

\end{proof}

\begin{proof}[Proof of Lemma \ref{lemma:Ratio-G}]

Note that $\Lambda^-_{\Delta,\mu,\delta}(\varphi) = \Upsilon_{\Delta,\mu,\delta}(\varphi)/ \Upsilon_{\Delta,\mu,\delta}(1)$, where $\Upsilon_{\Delta,\mu,\delta}(1) = \int 1(\Delta(x)\leq \delta) d\mu(x) = F_{\Delta,\mu}(\delta)$.

By Lemma \ref{lemma:had4}, $\Upsilon_{\Delta,\mu,\delta}(\varphi)$ and $\Upsilon_{\Delta,\mu,\delta}(1)$ are Hadamard-differentiable at $(\Delta,\mu,\delta)$ tangentially to $\widetilde{\mathbb{D}}_0$. Then, by the chain rule for Hadamard-differentiable mappings, $\Lambda^-_{\Delta,\mu,\delta}(\varphi)$ is Hadamard-differentiable at $(\Delta,\mu,\delta)$ tangentially to $\widetilde{\mathbb{D}}_0$ since $\Upsilon_{\Delta,\mu,\delta}(1) > 0$. The derivative map is obtained from
$$
\partial_{\Delta,\mu,\delta} \Lambda^-_{\Delta,\mu,\delta}(\varphi)  = \frac{\partial_{\Delta,\mu,\delta} \Upsilon_{\Delta,\mu,\delta}(\varphi)}{F_{\Delta,\mu}(\delta)} - \Lambda_{\Delta,\mu,\delta}(\varphi) \frac{\partial_{\Delta,\mu,\delta} \Upsilon_{\Delta,\mu,\delta}(1)}{F_{\Delta,\mu}(\delta)},
$$
after replacing the expressions of $\partial_{\Delta,\mu,\delta} \Upsilon_{\Delta,\mu,\delta}(\varphi)$ and $\partial_{\Delta,\mu,\delta} \Upsilon_{\Delta,\mu,\delta}(1)$ from Lemma \ref{lemma:had4} and grouping terms.

%

\end{proof}

\section{Sufficient Conditions for $\mu$-Donsker Properties in Section \ref{sec:theory2}}\label{app:C}

\begin{lemma}[Sufficient conditions for $\mathcal{G}$ being $\mu$-Donsker]\label{cor:donsker}
Suppose  ${\sf S}.1$-${\sf S}.2$ hold, and $\VV$ is the union of a finite number of compact intervals. Suppose that $\mathcal{F}$ satisfies:
$$\sup_{\widetilde{\Delta}\in \mathcal{F}} \sup_{x\in B(\XX)}\|\partial \widetilde{\Delta}(x)-\partial \Delta(x)\| + \sup_{\widetilde{\Delta} \in \mathcal{F}} \sup_{x\in B(\XX)} |\widetilde{\Delta}(x)- \Delta(x) |<c_0.$$
Let $N(\epsilon,\mathcal{F},\|\cdot\|_{\infty})$ be the $\epsilon$-covering number of the class $\mathcal{F}$ under $L_\infty$ norm. Suppose that $\int_{0}^1\sqrt{\log N(\epsilon^2,\mathcal{F},\|\cdot\|_{\infty}) } d\epsilon<\infty$. If $c_0$ is small enough, then $\mathcal{G}$ is $\mu$-Donsker.
\end{lemma}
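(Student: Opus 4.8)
Looking at Lemma \ref{cor:donsker}, I need to show that the class $\mathcal{G} = \{x \mapsto 1(f(x) \leq \delta) : f \in \mathcal{F}, \delta \in \VV\}$ is $\mu$-Donsker under the stated smoothness and entropy conditions.

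\textbf{The plan.} The strategy is to control the uniform entropy integral of $\mathcal{G}$ and invoke a standard Donsker theorem for classes with finite uniform entropy integral (e.g., \citen[Thm. 2.5.2]{vdV-W}). Since $\mathcal{G}$ consists of indicator functions of subgraphs $\{f \leq \delta\}$, the natural route is to bound the $L^2(\mu)$-bracketing or covering numbers of $\mathcal{G}$ in terms of the $L^\infty$-covering numbers of $\mathcal{F}$ and the (trivially controlled) covering numbers of the interval parameter $\delta \in \VV$. The key geometric fact making this work is that, near any regular value, the level sets $\MM_\Delta(\delta)$ are $(d_x-1)$-manifolds with the gradient bounded away from zero, so a small perturbation of $f$ in $L^\infty$ or a small change in $\delta$ moves the indicator $1(f(x) \leq \delta)$ only on a thin tube around the manifold, whose $\mu$-measure is $O(\|\text{perturbation}\|)$ by the coarea-type computation already established in Lemma \ref{lemma:had0} and its proof.

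\textbf{Key steps, in order.} First, I would fix the finite collection of compact intervals comprising $\VV$ and, using ${\sf S}.1$–${\sf S}.2$ together with the smallness of $c_0$, arrange that every $\delta \in \VV$ is a regular value of every $\widetilde \Delta \in \mathcal{F}$ on $\overline{\XX}$, with $\|\partial \widetilde\Delta(x)\|$ bounded uniformly below by some $c > 0$ and above by some $C < \infty$ on the relevant tubes — this uses the hypothesis that $\sup_{\widetilde\Delta \in \mathcal F}\sup_x (\|\partial\widetilde\Delta - \partial\Delta\| + |\widetilde\Delta - \Delta|) < c_0$ and the fact that $\partial\Delta$ is continuous and bounded away from zero on compact neighborhoods of the level sets in $\DD$. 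Second, for $f_1, f_2 \in \mathcal{F}$ and $\delta_1,\delta_2 \in \VV$, I would estimate
$$
\mu\bigl(\{x : 1(f_1(x) \leq \delta_1) \neq 1(f_2(x) \leq \delta_2)\}\bigr) \lesssim \|f_1 - f_2\|_\infty + |\delta_1 - \delta_2|,
$$
by noting the symmetric difference is contained in a set of the form $\{|f_1(x) - \delta_1| \leq \|f_1 - f_2\|_\infty + |\delta_1 - \delta_2|\}$, whose $\mu$-measure is controlled by integrating the density over a band around $\MM_{f_1}(\delta_1)$ — precisely the computation behind equation (\ref{eq: asy}) in the proof of Lemma \ref{lemma:had0}, which gives that $\delta \mapsto F_{f_1,\mu}(\delta)$ is Lipschitz with constant $\sup_\delta f_{f_1,\mu}(\delta) \leq \Vol(\MM_{f_1}(\delta)) \cdot \sup \mu' / c < \infty$ uniformly over $f_1 \in \mathcal F$. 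Third, this Lipschitz estimate converts an $\epsilon^2$-cover of $\mathcal{F}$ in $L^\infty$ (of cardinality $N(\epsilon^2, \mathcal{F}, \|\cdot\|_\infty)$) and an $O(\epsilon^2)$-grid on $\VV$ (of cardinality $O(1/\epsilon^2)$) into an $O(\epsilon)$-cover of $\mathcal{G}$ in $L^2(\mu)$; taking logarithms, $\log N(\epsilon, \mathcal{G}, L^2(\mu)) \lesssim \log N(c\epsilon^2, \mathcal{F}, \|\cdot\|_\infty) + \log(1/\epsilon)$. Fourth, the hypothesis $\int_0^1 \sqrt{\log N(\epsilon^2, \mathcal{F}, \|\cdot\|_\infty)}\, d\epsilon < \infty$ plus the integrability of $\sqrt{\log(1/\epsilon)}$ gives that $\mathcal{G}$ has finite uniform entropy integral; since $\mathcal{G}$ has the constant envelope $1$ and is pointwise measurable (being built from the fixed countable-dense parametrization of $\mathcal F$ and $\VV$), the standard Donsker theorem applies.

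\textbf{Main obstacle.} The delicate point is step two — showing the band estimate holds \emph{uniformly over all} $f \in \mathcal{F}$, not just at the nominal $\Delta$. This requires that the smallness of $c_0$ is genuinely enough to (a) keep all level sets $\MM_f(\delta)$, $f \in \mathcal F$, $\delta \in \VV$, within a fixed compact tube where $\partial\Delta$ — hence $\partial f$ — is bounded away from $0$, and (b) make the coarea/change-of-variables argument of Lemma \ref{lemma:had0} go through with constants independent of $f$. One has to be careful that the finite cover of rectangles constructed in Step 1 of that proof, and the partition of unity on it, can be chosen once and for all to work simultaneously for every $f \in \mathcal F$ — this is where the uniform $C^1$-closeness hypothesis $\sup_{\widetilde\Delta\in\mathcal F}\sup_x\|\partial\widetilde\Delta - \partial\Delta\| < c_0$ is essential, since it guarantees the implicit-function-theorem constructions remain valid with uniform bounds. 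Once the uniform Lipschitz constant for $\delta \mapsto F_{f,\mu}(\delta)$ (equivalently, the uniform band estimate) is in hand, the rest is routine entropy bookkeeping.
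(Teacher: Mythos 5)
Your geometric core is exactly the paper's: use the uniform $C^1$-closeness $c_0$ to keep all gradients $\|\partial\widetilde\Delta\|$ bounded away from $0$ and above on a fixed tube around the level sets, deduce from the coarea computation of Lemma \ref{lemma:had0} a uniform bound $K_1$ on the densities $f_{\widetilde\Delta,\mu}(\delta)$, and hence the band estimate $\mu(\{x: 1(f_1(x)\leq\delta_1)\neq 1(f_2(x)\leq\delta_2)\})\lesssim \|f_1-f_2\|_\infty+|\delta_1-\delta_2|$; then translate an $\epsilon^2$-net of $\mathcal{F}$ in sup-norm and an $\epsilon^2$-grid on $\mathcal{V}$ into entropy control for $\mathcal{G}$. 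Your identification of the uniformity over $f\in\mathcal{F}$ as the delicate point is also correct and is precisely what the paper's conditions (1)--(3) at the start of its proof establish.

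There is, however, a genuine gap in your final step. The band estimate is intrinsically tied to the measure $\mu$ (it uses the density $\mu'$ and the manifold structure of the level sets under Lebesgue measure), so what you obtain is a bound on $N(\epsilon,\mathcal{G},L^2(\mu))$ only. From this you cannot conclude that $\mathcal{G}$ "has finite uniform entropy integral" and invoke the uniform-entropy Donsker theorem (Theorem 2.5.2 of van der Vaart and Wellner): that theorem requires $\sup_Q N(\epsilon,\mathcal{G},L^2(Q))$ over \emph{all} finitely discrete probability measures $Q$, and your estimate says nothing about measures concentrated near a level set $\{f=\delta\}$, where the indicators separate points arbitrarily. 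The repair is immediate and is what the paper does: because $(f,\delta)\mapsto 1(f\leq\delta)$ is monotone in both arguments, the sup-norm ball $\|\widetilde\Delta-\Delta_j\|_\infty\leq\zeta$ and the interval $[a_i,b_i]$ yield the pointwise bracket $[1(\Delta_j+\zeta\leq a_i),\,1(\Delta_j-\zeta\leq b_i)]$, whose $L^2(\mu)$-size is controlled by the same band estimate (giving $N_{[]}(\zeta,\mathcal{G},\|\cdot\|_{2,\mu})\lesssim \zeta^{-2}N(\zeta^2/(3K_1),\mathcal{F},\|\cdot\|_\infty)$), and then the \emph{bracketing} Donsker theorem applies, which only needs entropy with brackets under the single measure $\mu$. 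With that substitution your argument coincides with the paper's proof.
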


\begin{proof}[Proof of Lemma \ref{cor:donsker}]
Since $\VV$ is a union of finite number of closed intervals, for any $\zeta>0$, we can construct a collection of closed intervals $\mathcal{I}:=\{[a_i,b_i]:i=1,2,...,r\}$ such that: (1) $|b_i-a_i|<\zeta$, (2)  $[a_i,b_i]\subset \VV$, (3) $\cup_{i=1}^r[a_i,b_i]=\VV$,
(4) $a_i\leq b_i\leq a_{i+1}\leq b_{i+1}$, for all $i=1,2,...,r-1$, and (5) $r\leq \frac{C_0}{\zeta}$, where $C_0$ is a constant.

Using ${\sf S}.1$ and ${\sf S}.2$ and the assumptions of the Lemma, there
exists $\eta>0$ small enough such that the following conditions hold:

 (1) There exist  constants $c$ and $C$ such that $\|\partial \Delta(x)\|\leq C$ for all $x\in \overline{\XX}$ and $\|\partial \Delta(x)\|\geq c$ in $\widetilde{\MM}_{\Delta}(B_{\eta}(\delta))$ for some small $\eta>0$ and all $\delta\in \DD$.

(2) Uniformly in $\widetilde{\Delta} \in \mathcal{F}$, $$\frac{c}{2}\leq \inf_{x\in \widetilde{\MM}_{\Delta}(B_{\eta}(\delta))}\|\partial \widetilde{\Delta}(x)\|\leq \sup_{x\in\widetilde{\MM}_{\Delta}(B_{\eta}(\delta))}\|\partial \widetilde{\Delta}(x)\|\leq \frac{c}{2}+C.$$

Moreover, using arguments similar to those used to show Lemma \ref{lemma:had0}, we can verify that:

(3) Uniformly in $\widetilde{\Delta} \in \mathcal{F}$, uniformly in $\delta\in \VV$,
$$f_{\widetilde{\Delta},\mu}(\delta)=\int_{\MM_{\widetilde{\Delta}}(\delta)}\frac{\mu'(x)}{\|\partial \widetilde{\Delta}(x)\|}d\mathrm{Vol}< K_1,$$
for some finite constant $K_1$.

Define the norm $\|g\|_{2,\mu}^2:=\int_{\XX} g(x)^2\mu'(x)dx$. For $\eta>0$ small enough, for any $\delta\in \VV$ and $\widetilde{\Delta} \in \mathcal{F}$,  $$\|1(\widetilde{\Delta} \leq \delta)-1(\widetilde{\Delta}\leq \delta+\eta)\|_{2,\mu}^2= \int 1(\delta\leq \widetilde{\Delta}(x)\leq \delta+\eta)\mu'(x)dx=\int_{\delta'\in B_{\eta}^+(\delta)} f_{\widetilde{\Delta},\mu}(\delta')d\delta'\leq K_1\eta.$$
Similarly, $\|1(\widetilde{\Delta} \leq \delta)-1(\widetilde{\Delta} \leq \delta-\eta)\|_{2,\mu}^2\leq K_1\eta$.

Let $B_{\zeta,\infty}(\Delta_1),...,B_{\zeta,\infty}(\Delta_{q_\zeta})$ be a set of $\zeta$-balls centered at $\Delta_1,...,\Delta_{q_\zeta}$ under sup norm that covers $\mathcal{F}$, where $q_\zeta=N(\zeta,\mathcal{F},\|\cdot\|_{\infty})$. Then, $[\Delta_{j}-\zeta, \Delta_{j}+\zeta]$ are covering brackets of $\mathcal{F}$, $j=1,2,...,q_{\zeta}$. For any $\widetilde{\Delta}\in [\Delta_{j}-\zeta, \Delta_{j}+\zeta]$ and $\delta\in [a_i,b_i]$, $i=1,2,...,r$, then the bracket $[1(\Delta_{j}+\zeta\leq a_i),1(\Delta_{j}-\zeta\leq b_i)]$ covers $1(\widetilde{\Delta}\leq \delta)$.
For $\zeta$ small enough, the size of the bracket $[1(\Delta_{j}+\zeta\leq a_i),1(\Delta_{j}-\zeta\leq b_i)]$ under the norm $\|\cdot\|_{2,\mu}$ is:
\begin{multline*}
\|1(\Delta_j+\zeta\leq a_i)-1(\Delta_j-\zeta\leq b_i )\|_{2,\mu}^2= \|1(\Delta_j\leq b_i+\zeta)-1(\Delta_j\leq a_i-\zeta)\|_{2,\mu}^2\leq  3K_1\zeta,
\end{multline*}
since $|b_i - a_i| < \zeta$ by construction.
Therefore, for $\zeta$ small enough, $\{[1(\Delta_{j}+\zeta\leq a_i),1(\Delta_{j}-\zeta\leq b_i)]: j=1,2,...,q_\zeta, i=1,2,...,r\}$, form a set of $\sqrt{3K_1\zeta}$-brackets under the norm $\|\cdot\|_{2,\mu}$ that covers $\mathcal{G}$. The total number of brackets is $rq_\zeta\leq \frac{C_0}{\zeta}N(\zeta, \mathcal{F}, \|\cdot\|_{\infty})$. Or equivalently, for $\zeta$ small enough,
$$N_{[]}(\zeta,\mathcal{G},\|\cdot\|_{2,\mu})\leq \frac{3K_1C_0}{\zeta^2}N(\zeta^2/(3K_1),\mathcal{F},\|\cdot\|_{\infty}).$$
Then by assumption,
\begin{multline*}
\int_{0}^1\sqrt{\log(N_{[]}(\zeta,\mathcal{G},\|\cdot\|_{2,\mu}))}d\zeta\leq \int_{0}^1\sqrt{\log\left(\frac{3K_1C_0}{\zeta^2}N(\zeta^2/(3K_1),\mathcal{F},\|\cdot\|_{\infty})\right)}d\zeta\\ \lesssim \int_{0}^1\sqrt{\log\left(\frac{3K_1C_0}{\zeta^2}\right)}d\zeta+\int_{0}^{1}\sqrt{\log(N(\zeta^2/(3K_1),\mathcal{F},\|\cdot\|_{\infty}))}d\zeta <\infty.
\end{multline*}

We conclude that $\mathcal{G}$ is $\mu$-Donsker by Donsker theorem \cite[Theorem 19.5]{vdV}.
\end{proof}

\begin{lemma}[Sufficient conditions for $\widetilde{\mathcal{G}}$ being $\mu$-Donsker]\label{lemma:Donsker_supp}
Suppose  ${\sf S}.1$-${\sf S}.2$ hold, and $\mathcal{V}$ is the union of a finite number of compact intervals. Suppose that $\mathcal{F}$ satisfies:
$$\sup_{\widetilde{\Delta}\in \mathcal{F}} \sup_{x\in B(\XX)}\|\partial \widetilde{\Delta}(x)-\partial \Delta(x)\| + \sup_{\widetilde{\Delta} \in \mathcal{F}} \sup_{x\in B(\XX)} |\widetilde{\Delta}(x)- \Delta(x) |<c_0.$$
Let $N(\epsilon,\mathcal{F},\|\cdot\|_{\infty})$ be the $\epsilon$-covering number of the class $\mathcal{F}$ under $L_\infty$ norm. Suppose that $\int_{0}^1\sqrt{\log N(\epsilon^2,\mathcal{F},\|\cdot\|_{\infty}) } d\epsilon<\infty$. If $c_0$ is small enough, then $\widetilde{\mathcal{G}}$ is $\mu$-Donsker.
\end{lemma}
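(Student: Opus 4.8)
The plan is to reduce Lemma~\ref{lemma:Donsker_supp} to Lemma~\ref{cor:donsker} by exhibiting $\widetilde{\mathcal{G}}$ as a product of a $\mu$-Donsker class with a uniformly bounded, well-controlled class of factors. Recall $\widetilde{\mathcal{G}}=\{\varphi\, 1(\Delta\leq \delta): f\in\mathcal{F},\ \varphi\in\mathcal{F}_I\cup\mathcal{F}_M,\ \delta\in\mathcal{V}\}$, while Lemma~\ref{cor:donsker} already gives that $\mathcal{G}=\{1(f\leq\delta): f\in\mathcal{F},\delta\in\mathcal{V}\}$ is $\mu$-Donsker under exactly the hypotheses imposed here. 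So the first step is to analyze the multiplier classes $\mathcal{F}_M$ and $\mathcal{F}_I$ separately. For $\mathcal{F}_M$, whose elements are $x\mapsto \int z_1^{t_1}\cdots z_{d_z}^{t_{d_z}}\,d\mu(y\mid x)$ with total degree bounded by the fixed integer $M$ and with the integrability constraint built in, I would argue that this is a finite-dimensional-type class: it is a fixed finite collection of continuous, uniformly bounded functions on the compact $\overline{\mathcal{X}}$ (finitely many multi-indices $t$ with $|t|\le M$), hence trivially Donsker with a bracketing entropy integral that is finite. For $\mathcal{F}_I$, whose elements are $x\mapsto \int 1(z_1\leq t_1,\dots,z_{d_z}\leq t_{d_z})\,d\mu(y\mid x)$, I would use that indicators of lower-left orthants in $\mathbb{R}^{d_z}$ form a VC class, and that (by the assumed continuity of $x\mapsto\mu(y\mid x)$ on $B(\mathcal{X})$ for all $y\in\mathcal{Y}$, with $\mathcal{Y}$ compact) each $\varphi_t$ is a continuous $[0,1]$-valued function; the map $t\mapsto\varphi_t$ is monotone in each coordinate of $t$, so bracketing numbers of $\mathcal{F}_I$ in $L^2(\mu)$ are polynomial in $1/\epsilon$ and its bracketing entropy integral is finite. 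Hence both $\mathcal{F}_M$ and $\mathcal{F}_I$ are $\mu$-Donsker and, crucially, uniformly bounded.

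The second step is the standard stability argument: if $\mathcal{G}$ is $\mu$-Donsker and uniformly bounded (here the indicators are bounded by $1$), and $\mathcal{H}:=\mathcal{F}_I\cup\mathcal{F}_M$ is $\mu$-Donsker and uniformly bounded (say by a constant $C_1$, which holds because moments up to degree $M$ of $Z$ over the compact support are finite and conditional distributions lie in $[0,1]$), then the pointwise product class $\mathcal{G}\cdot\mathcal{H}=\{g\cdot h: g\in\mathcal{G},h\in\mathcal{H}\}$ is $\mu$-Donsker. This is a classical preservation result (e.g.\ \citen[\S2.10]{vdV-W}): the product of two bounded Donsker classes is Donsker, via the bracketing-number bound $N_{[]}(\epsilon(C_1+1),\mathcal{G}\cdot\mathcal{H},\|\cdot\|_{2,\mu})\leq N_{[]}(\epsilon,\mathcal{G},\|\cdot\|_{2,\mu})\cdot N_{[]}(\epsilon,\mathcal{H},\|\cdot\|_{2,\mu})$, so that the uniform entropy (or bracketing entropy) integral of $\mathcal{G}\cdot\mathcal{H}$ is controlled by the sum of the two individual integrals, each finite by the first step and by Lemma~\ref{cor:donsker}. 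Since $\widetilde{\mathcal{G}}\subseteq\mathcal{G}\cdot\mathcal{H}$, and subsets of $\mu$-Donsker classes are $\mu$-Donsker, the conclusion follows.

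I would organize the write-up as: (i) quote Lemma~\ref{cor:donsker} to get the bracketing entropy bound $\int_0^1\sqrt{\log N_{[]}(\zeta,\mathcal{G},\|\cdot\|_{2,\mu})}\,d\zeta<\infty$; (ii) show $\mathcal{F}_M$ is a fixed finite family of bounded continuous functions, hence $N_{[]}(\zeta,\mathcal{F}_M,\|\cdot\|_{2,\mu})$ is bounded; (iii) show $\mathcal{F}_I$ has polynomial $L^2(\mu)$-bracketing numbers using the monotone/VC structure of orthant indicators together with compactness of $\mathcal{Y}$ and continuity of $x\mapsto\mu(y\mid x)$, so its bracketing entropy integral is finite, and conclude $\mathcal{H}=\mathcal{F}_M\cup\mathcal{F}_I$ has finite bracketing entropy integral and is uniformly bounded; (iv) invoke the product-of-bounded-Donsker-classes preservation theorem to conclude $\mathcal{G}\cdot\mathcal{H}$ is $\mu$-Donsker, then note $\widetilde{\mathcal{G}}\subseteq\mathcal{G}\cdot\mathcal{H}$ and apply the Donsker theorem \citen[Theorem~19.5]{vdV}.

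The main obstacle I anticipate is the control of the $\mathcal{F}_I$ part when $Z=(X,Y)$ genuinely involves the outcome $Y$: one must argue carefully that $x\mapsto\varphi_t(x)=\int 1(z\leq t)\,d\mu(y\mid x)$ inherits enough regularity (continuity in $x$, uniformly over $t$, plus the right monotonicity in $t$) for the bracketing numbers to be polynomial. This is where the hypotheses that $\mathcal{Y}$ is compact and $x\mapsto\mu(y\mid x)$ is continuous on $B(\mathcal{X})$ for every $y$ do the work — they let us build finitely many $L^2(\mu)$-brackets for $\mathcal{F}_I$ by gridding the $t$-space and using the monotone approximation $1(\delta\le\varphi_t\le\varphi_{t'})$ whose $L^2(\mu)$-size is governed by $\mu(\{z: t\le z\le t'\})$, which is small for a fine grid. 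Everything else is bookkeeping with preservation theorems and the triangle inequality for bracketing entropy integrals.
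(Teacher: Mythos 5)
Your proposal is correct and follows essentially the same route as the paper's own (very terse) proof: invoke Lemma \ref{cor:donsker} for the indicator class $\mathcal{G}$, observe that $\mathcal{F}_M\cup\mathcal{F}_I$ is a uniformly bounded $\mu$-Donsker class, and apply the preservation result that products of bounded Donsker classes are Donsker. You supply considerably more detail than the paper does (the paper simply asserts that $\mathcal{F}_I$ and $\mathcal{F}_M$ are $\mu$-Donsker), and your attention to the uniform boundedness needed for the product-preservation step is a point the paper glosses over.
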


\begin{proof}[Proof of Lemma \ref{lemma:Donsker_supp}]
First, $\mathcal{F}_I$ and $\mathcal{F}_M$ are both $\mu$-Donsker. By Lemma \ref{cor:donsker}, the class $\mathcal{F}$ is $\mu$-Donsker. Since the class of the product of two functions from Donsker classes is Donsker, $\widetilde{\mathcal{G}}$ is $\mu$-Donsker.
\end{proof}

\section{Extension of Theoretical Analysis to Discrete variables}\label{sec:discrete}


We consider the case where the covariate $X$ includes discrete components. Without loss of generality we assume that the first component of $X$ is discrete and the rest are continuous. Accordingly, we consider the  partition $X = (D,C)$. Let  $\XX_{c \mid d}$ denote the interior of the support of $C$ conditional on $D=d$,  $\XX_d$ denote the support of $D$,  $\mu_{c \mid d}$ denote the distribution of $C$ conditional on $D = d$, $\mu_d$ denote the distribution of $D$, and $\pi_d(d) = \Pr(D = d)$.
As above, $d_x = \dim(X)$, and $\DD$ is a compact set consisting of regular values of $\Delta$ on $\overline{\XX} := \cup_{d \in \XX_d} \{d\} \times \overline \XX_{c \mid d}$, where $\overline \XX_{c \mid d}$ is the closure of $ \XX_{c \mid d}$.

%

We adjust ${\sf S}.1$-${\sf S}.4$ to hold conditionally at each value of the discrete covariate.

${\sf S}.1'$. The set $\XX_d$ is finite. For any $d \in \XX_d$: the set $\mathcal{X}_{c \mid d}$
is open and its closure $\overline{\mathcal{X}}_{c \mid d}$ is compact;  the distribution $\mu_{c \mid d}$ is absolutely continuous with respect to the Lebesgue measure with density $\mu'_{c \mid d}$; and there exists an open set $B(\mathcal{X}_{c \mid d})$ containing $\overline{\mathcal{X}}_{c \mid d}$ such that $c \mapsto \Delta(d,c)$ is $\C^1$ on  $B(\mathcal{X}_{c \mid d})$,
and $c \mapsto \mu'_{c\mid d}(c)$ is continuous on  $B(\mathcal{X}_{c \mid d})$ and  is zero outside $\mathcal{X}_{c \mid d}$, i.e.  $\mu'(x) = 0$ for any $x \in B(\XX_{c \mid d}) \setminus \XX_{c \mid d}$.

${\sf S}.2'$. For any $d \in \XX_d$ and any regular value $\delta$ of $\Delta$ on $\overline{\mathcal{X}}_{c \mid d}$,  $\MM_{\Delta \mid d}(\delta) := \{c \in
\overline{\mathcal{X}}_{c \mid d} :  \Delta(d,c)= \delta\}$ is either a
$(d_x-2)-$ manifold without boundary on $\mathbb{R}^{d_x-1}$ of class $\C^1$ with finite number of connected branches, or an
empty set.


${\sf S}.3'$. $\widehat{\Delta}$, the estimator  of $\Delta$,  obeys a
functional central limit theorem, namely,
$$a_n(\widehat{\Delta}-\Delta)\rightsquigarrow G_{\infty} \text{ in } \ell^{\infty}
(B({\mathcal{X}})),$$ where $a_n$ is a sequence such that $a_n \to \infty$ as $n \to \infty$, and $c \mapsto G_{\infty}(d,c)$ is a tight process that has almost surely uniformly
continuous sample paths on $B({\mathcal{X}}_{c\mid d})$ for all $d \in \XX_d$.

Let $B(\XX):=
\cup_{d \in \XX_d} \{d\} \times B(\XX_{c \mid d})$;  $\mathcal{F}$ denote a set of continuous functions on $B(\mathcal{X})$ equipped with the sup-norm;  $\VV$ be any compact subset of $ \mathbb{R}$;  $\mathbb{H}$ be the set of all bounded operators $H: g \mapsto
H(g)$ uniformly continuous on $\mathcal{G} = \{1 ( f \leq \delta): f \in \mathcal{F}, \delta \in \VV \}$ with respect to the $L^2(\mu)$ norm, which are represented as:
$$
H(g) = \sum_{d \in \XX_d} H_{d}(d) \int g (c,d)  d\mu_{c \mid d}(c)  + \sum_{d \in \XX_d} \pi_d(d)
H_{c \mid d}(g (\cdot,d)),
$$
 where  $d \mapsto H_d(d)$ is a function that takes on finitely many values and  $g \mapsto H_{c \mid d}(g)$ is a bounded linear operator on $\mathcal{G}$.  Equip the space $\mathbb{H}$
with the sup norm $\| \cdot \|_{\mathcal{G}}$:
$\|H\|_{\mathcal{G}}=\sup_{g\in \mathcal{G}}{|H(g)|}$. Let $\mu(x) = \mu_d(d) \mu_{c\mid d}(c)$ and $\widehat \mu(x) = \widehat \mu_d(d) \widehat \mu_{c\mid d}(c)$.

${\sf S}.4'$. The function $x \mapsto \widehat \mu(x)$ is a distribution over $B(\XX)$ obeying in $\mathbb{H}$,
$$
b_n (\widehat \mu - \mu) \rightsquigarrow H_{\infty},
$$
where $ H_{\infty} \in  \mathbb{H}$ a.s.,  $b_n$ is a sequence such that $b_n \to \infty$ as $n \to \infty$, and  $H_{\infty}$ can be represented as:
$$
H_{\infty}(g) = \sum_{d \in \XX_d} H_{d,\infty}(d) \int g (c,d)  d\mu_{c \mid d}(c)  + \sum_{d \in \XX_d} \pi_d(d)
H_{c \mid d,\infty}(g (\cdot,d)).
$$

We generalize Lemmas \ref{lemma:had0} and \ref{lemma:had3} to the case where $X$ includes  discrete components.

Define $\mathbb{D}:=\mathbb{F}\times \mathbb{H}$ and
$\mathbb{D}_0:=\mathbb{F}_0\times  \mathbb{H}$, where $\mathbb{F}$ is the set of continuous functions on $B(\mathcal{X})$ and $\mathbb{F}_0$ is a subset of $\mathbb{F}$ containing uniformly
continuous functions.

\begin{lemma}[Properties  of $F_{\Delta,\mu}$ and $\Delta^{*}_{\mu}$ with discrete $X$]\label{lemma:dhad}
Suppose that ${\sf S}.1'$ and ${\sf S}.2'$ hold. Then, $\delta \mapsto F_{\Delta,\mu}(\delta)$ is differentiable at any $\delta \in \DD$, with
derivative function $f_{\Delta,\mu}(\delta)$ defined as:
$$f_{\Delta,\mu}(\delta):= \partial_{\delta} F_{\Delta,\mu}(\delta) =\sum_{d\in \XX_{d}} \pi_d(d) \int_{\MM_{\Delta \mid d}(\delta)} \frac{\mu'_{c \mid d}(c)}{\| \partial_c
\Delta(d,c)\|}d\mathrm{Vol}.$$
The map $\delta \mapsto f_{\Delta,\mu}(\delta)$ is uniformly continuous on $\DD$.

(1) The map $F_{\Delta,\mu}(\delta): \mathbb{D} \to \mathbb{R}$ is Hadamard differentiable uniformly in $d \in \DD$ at $(\Delta,\mu)$ tangentially  to $\mathbb{D}_0$, with  derivative map  $\partial_{\Delta,\mu} F_{\Delta,\mu}(\delta): \mathbb{D}_0 \to \mathbb{R}$ defined by:
\begin{eqnarray*}
(G,H)\mapsto \partial_{\Delta,\mu} F_{\Delta,\mu}(\delta)[G,H] := & - &\sum_{d \in
\XX_d}\pi_d(d) \int_{\MM_{\Delta \mid d}(\delta)}\frac{G(d,c)\mu'_{c\mid d}(c)}{\|\partial_c
\Delta(d,c)\|}d\mathrm{Vol}(c)\\
&+& \sum_{d\in\XX_d} H_d(d) \int 1\{\Delta(d,c) \leq \delta\}\mu'_{c|d}(c)dc \\ &+&
\sum_{d\in\XX_d}\pi_d(d)H_{c \mid d}(1\{\Delta(\cdot, d)\leq \delta\}).
\end{eqnarray*}
(2) The map $\Delta^*_{\mu}(u): \mathbb{D} \to \mathbb{R}$ is Hadamard differentiable uniformly in $u \in \UU$  at $(\Delta,\mu)$ tangentially  to $\mathbb{D}_0$, with  derivative map $\partial_{\Delta,\mu} \Delta_{\mu}^*(u): \mathbb{D}_0 \to \mathbb{R}$ defined by:
\begin{equation*}
(G,H)\mapsto \partial_{\Delta,\mu} \Delta_{\mu}^*(u)[G,H] := - \frac{\partial F_{\Delta,\mu}(\Delta^*_{\mu}(u))[G,H]}{f_{\Delta,\mu}(\Delta^*_{\mu}(u))},
\end{equation*}
where $\UU = \{\widetilde u \in [0,1] :  \Delta^*_{\mu}(\widetilde u) \in \DD, f_{\Delta,\mu}(\Delta_{\mu}^*(\widetilde u)) > \varepsilon\}$ for fixed $\varepsilon > 0$.
\end{lemma}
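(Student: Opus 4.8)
The plan is to reduce Lemma \ref{lemma:dhad} to the continuous-covariate results (Lemmas \ref{lemma:had0} and \ref{lemma:had3}) applied slice-by-slice over the finite set $\XX_d$. The key observation is that, under ${\sf S}.1'$, we can decompose
$$
F_{\Delta,\mu}(\delta) = \sum_{d \in \XX_d} \pi_d(d) \int 1\{\Delta(d,c) \leq \delta\} \, d\mu_{c \mid d}(c) = \sum_{d \in \XX_d} \pi_d(d) F_{\Delta(d,\cdot), \mu_{c \mid d}}(\delta),
$$
where each summand is exactly the object studied in Lemma \ref{lemma:had0} with ambient dimension $d_x - 1$, PE function $c \mapsto \Delta(d,c)$ on $B(\XX_{c \mid d})$, and measure $\mu_{c \mid d}$. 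Since $\delta$ is a regular value on $\overline \XX = \cup_d \{d\} \times \overline \XX_{c \mid d}$, it is a regular value of $c \mapsto \Delta(d,c)$ on each $\overline \XX_{c \mid d}$, so Lemma \ref{lemma:had0}(1) gives
$$
\partial_\delta F_{\Delta(d,\cdot), \mu_{c \mid d}}(\delta) = \int_{\MM_{\Delta \mid d}(\delta)} \frac{\mu'_{c \mid d}(c)}{\|\partial_c \Delta(d,c)\|} d\mathrm{Vol},
$$
and uniform continuity of each $\delta \mapsto$ (this integral) on $\DD$. Multiplying by $\pi_d(d)$ and summing over the finite set $\XX_d$ preserves differentiability and uniform continuity, yielding the claimed formula for $f_{\Delta,\mu}(\delta)$.

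For part (1), the plan is again to differentiate term by term. Write $F_{\Delta_n, \mu_n}(\delta) - F_{\Delta,\mu}(\delta)$ where $\Delta_n = \Delta + t_n G_n$, $\mu_n = \mu + t_n H_n$, and expand $\mu_n$ via its representation: since $H_n \in \mathbb{H}$ decomposes into a ``discrete-marginal'' piece $H_{n,d}$ and a ``conditional'' piece $H_{n, c \mid d}$, the perturbation of $\mu$ splits into (a) a perturbation of the weights $\pi_d(d)$ — contributing the term $\sum_d H_d(d) \int 1\{\Delta(d,c) \leq \delta\} \mu'_{c \mid d}(c) dc$ by linearity — and (b) perturbations of the conditional laws $\mu_{c \mid d}$, contributing $\sum_d \pi_d(d) H_{c \mid d}(1\{\Delta(\cdot,d) \leq \delta\})$. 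The perturbation of $\Delta$ enters each slice through the Hadamard derivative $\partial_\Delta F_{\Delta(d,\cdot), \mu_{c \mid d}}(\delta)[G(d,\cdot)] = -\int_{\MM_{\Delta \mid d}(\delta)} G(d,c) \mu'_{c \mid d}(c) / \|\partial_c \Delta(d,c)\| \, d\mathrm{Vol}$ from Lemma \ref{lemma:had1}(a). I would handle the cross terms (simultaneous perturbation of $\Delta$ and $\mu_{c \mid d}$) exactly as in the proof of Lemma \ref{lemma:had3}: the difference $H_n(g_{\Delta_n,\delta}) - H(g_{\Delta,\delta})$ vanishes because $g_{\Delta_n, \delta} \to g_{\Delta,\delta}$ in $L^2(\mu)$ (using $\Delta_n \to \Delta$ uniformly and absolute continuity of $\Delta(X)$ given $D=d$) together with $\|H_n - H\|_{\mathcal G} \to 0$ and the required $L^2(\mu)$-continuity of operators in $\mathbb{H}$. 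Uniformity in $\delta \in \DD$ follows from the equivalence of uniform and continuous convergence, as in Lemma \ref{lemma:had1}, since each slice derivative is uniformly continuous in $\delta$ and $\XX_d$ is finite.

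Part (2) is immediate from part (1) together with the chain rule and Hadamard differentiability of the generalized-inverse (quantile) map uniformly in the index (Lemma 3.9.20 in \citen{vdV-W}), since $\UU$ restricts to indices where $f_{\Delta,\mu}(\Delta^*_\mu(u)) > \varepsilon$, so $\partial_\delta F_{\Delta,\mu}$ is bounded away from zero there; this is the same step as in Lemma \ref{lemma:had3}(b) and Lemma \ref{lemma:had1}(b).

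The main obstacle — and the only genuinely new point relative to the continuous case — is bookkeeping the two-part structure of $\mathbb{H}$ correctly: one must check that the representation of $H_\infty$ (and of perturbations $H_n$) as a sum of a weight-perturbation operator plus a per-slice conditional-measure operator is the right tangent space, that it is closed under the operations used, and that the $L^2(\mu)$-uniform-continuity requirement on $\mathbb{H}$ translates into the per-slice $L^2(\mu_{c \mid d})$-continuity needed to run the argument of Lemma \ref{lemma:had3} inside each slice. Once the slice decomposition is set up cleanly, the analysis is a finite sum of already-proven statements, so no new differential-geometric work is required; the manifolds $\MM_{\Delta \mid d}(\delta)$ live in $\mathbb{R}^{d_x - 1}$ and are handled verbatim by ${\sf S}.2'$ and Theorem 5-1 in \citen{spivak-65}.
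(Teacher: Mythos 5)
Your proposal is correct and follows essentially the same route as the paper: decompose $F_{\Delta,\mu}$ as a finite $\pi_d$-weighted sum over the slices $d \in \XX_d$, apply the continuous-case arguments (Lemmas \ref{lemma:had0} and \ref{lemma:had3}) slice by slice for the density formula and the $\Delta$-perturbation, use the two-part representation of operators in $\mathbb{H}$ from ${\sf S}.4'$ for the $\mu$-perturbation, and invoke Lemma 3.9.20 of \citen{vdV-W} for part (2). The paper compresses the bookkeeping of the operator decomposition that you spell out, but the argument is the same.
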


\begin{proof}[Proof of Lemma \ref{lemma:dhad}]
Note that $F_{\Delta,\mu}(\delta)=\sum_{d\in\mathcal{X}_d}\pi_d(d)\int_{c\in\mathcal{X}_d} 1(\Delta(d,c)\leq \delta)\mu'_{c|d}(c)dc$. Given the results of Lemma \ref{lemma:had0}, for each $d$,
 $$\partial_\delta \int_{\mathcal{X}_{c \mid d}} 1(\Delta(d,c)\leq \delta)\mu'_{c|d}(c)dc=\int_{\MM_{\Delta|d}(\delta)} \frac{\mu'_{c|d}(c)}{\|\partial_c \partial(d,c)\|}d{\mathrm{Vol}}.$$
Therefore, averaging over $d \in\XX_d$,
$$f_{\Delta,\mu}(\delta):=\partial_\delta F_{\Delta,\mu}(\delta)= \sum_{d\in\mathcal{X}_d}\pi_d(d)\int_{\MM_{\Delta|d}(\delta)} \frac{\mu'_{c|d}(c)}{\|\partial_c \Delta(d,c)\|}d{\mathrm{Vol}},$$
where we use that $\XX_d$ is a finite set. 

Next we prove the statements (1) and (2). Let $G_n \in \mathbb{F}$ and $H_n \in  \mathbb{H}$ such that $G_n \to G \in \mathbb{F}_0$ and $H_n \to H \in  \mathbb{H}$. Let $\Delta_n=\Delta+t_n  G_n$ and ${\mu}_n=\mu+t_n H_n$, where $t_n \to 0$ as $n \to \infty$.

As in the proof of Lemma \ref{lemma:had3}, we decompose
$$F_{\Delta_n,{\mu}_n}(\delta)-F_{\Delta,\mu}(\delta)=
[F_{\Delta_n,{\mu}_n}(\delta)-F_{\Delta_n,{\mu}}(\delta)]+[F_{\Delta_n,\mu}(\delta)-F_{\Delta,\mu}(\delta)].$$
Applying the same argument as in the proof of Lemma \ref{lemma:had3}
to each $d$ and averaging over $d \in \XX_d$, for any $\delta \in \DD$
 $$\frac{F_{\Delta_n,\mu}(\delta)-F_{\Delta,\mu}(\delta)}{t_n}=-\sum_{d \in
\XX_d}\mu_d(d) \int_{\MM_{\Delta \mid d}(\delta)}\frac{G(d,c)\mu'_{c\mid d}(c)}{\|\partial_c
\Delta(d,c)\|}d\mathrm{Vol}+o(1),$$
where we use that $\XX_d$ is a finite set.
By assumption {\sf S.}$4'$ and a similar argument to the proof of Lemma \ref{lemma:had3},
$$\frac{F_{\Delta_n,\mu_n}(\delta)-F_{\Delta_n,\mu}(\delta)}{t_n}= H(g_{\Delta, \delta})+o(1), \ \ \ \ g_{\Delta,\delta}(c,d) = 1\{\Delta(c,d) \leq \delta \}$$
We conclude that for any $\delta \in \mathcal{D}$,
$$\frac{F_{\Delta_n,\mu_n}(\delta)-F_{\Delta,\mu}(\delta)}{t_n}\to
-\sum_{d \in
\XX_d}\mu_d(d) \int_{\MM_{\Delta \mid d}(\delta)}\frac{G(d,c)\mu'_{c\mid d}(c)}{\|\partial_c
\Delta(d,c)\|}d\mathrm{Vol}+H(g_{\Delta, \delta}) = \partial_{\Delta,\mu} F_{\Delta,\mu}(\delta)[G,H].$$

By an argument similar to the proof of Lemma \ref{lemma:had3}, it can be shown that the convergence is uniform in $\delta \in \DD$. This shows statement (1).

Statement (2) follows from statement (1) and  Theorem 3.9.20 of
\citen{vdV-W} for inverse maps, using an argument analogous  to the proof of statement (b) in Lemma \ref{lemma:had3}.
\end{proof}

We are now ready to derive a functional central limit theorem for the empirical SPE-function. As in Theorem \ref{thm:fclt}, let $r_n := a_n\wedge
b_n$, the slowest of the rates of convergence of $\widehat \Delta$ and $\widehat \mu$, where  $r_n/a_n \to s_{\Delta} \in [0,1]$ and $r_n/b_n \to s_{\mu} \in [0,1]$.

\begin{theorem}[FCLT for $\widehat{\Delta_{\mu}^*}(u)$ with discrete $X$]\label{thm:dfclt}
Suppose that ${\sf S}.1'$-${\sf S}.4'$ hold,  the convergence in ${\sf S}.3'$ and ${\sf S}.4'$ holds jointly,  and $\widehat \Delta \in \mathcal{F}$ with probability
approaching  1. Then, the empirical SPE-process obeys a functional central limit theorem, namely in $\ell^{\infty}(\UU)$,
\begin{equation}\label{eq:fclt2}
r_n(\widehat\Delta^*_{\widehat \mu}(u)-\Delta^*_{\mu}(u))\rightsquigarrow  \partial_{\Delta,\mu} \Delta^*_{\mu}(u)[s_{\Delta} G_{\infty},s_{\mu} H_{\infty}] ,
\end{equation}
as a stochastic process indexed by $u \in \UU$, where $\UU$ is defined in Lemma \ref{lemma:dhad}.
%
%
%
%
%
%
%
\end{theorem}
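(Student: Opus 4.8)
The plan is to follow the same route as the proof of Theorem \ref{thm:fclt}, with the continuous-covariate Hadamard differentiability result of Lemma \ref{lemma:had3} replaced by its discrete-covariate counterpart, Lemma \ref{lemma:dhad}, and the conclusion obtained from the functional delta method of Lemma \ref{lemma:HS}. Concretely I would set $\phi_\theta := \Delta^*_\mu(u)$ with index $\theta = u \in \UU$, $\mathbb{D}_\phi = \mathbb{D} = \mathbb{F}\times\mathbb{H}$, $\mathbb{E} = \mathbb{R}$, $\mathbb{D}_0 = \mathbb{F}_0\times\mathbb{H}$, $f = (\Delta,\mu)$, $\widehat f_n = (\widehat\Delta,\widehat\mu)$, and $J_\infty = (s_\Delta G_\infty,\, s_\mu H_\infty)$, where $\mathbb{F}$, $\mathbb{F}_0$, $\mathbb{H}$ are the spaces attached to the discrete-covariate setting in ${\sf S}.1'$--${\sf S}.4'$.

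The first step is to verify the hypotheses of Lemma \ref{lemma:HS}. By ${\sf S}.3'$ and ${\sf S}.4'$ together with the assumed joint convergence, $\big(a_n(\widehat\Delta-\Delta),\, b_n(\widehat\mu-\mu)\big) \rightsquigarrow (G_\infty, H_\infty)$ in $\mathbb{F}\times\mathbb{H}$. Rescaling by $r_n = a_n\wedge b_n$ and using $r_n/a_n \to s_\Delta$, $r_n/b_n \to s_\mu$ together with a Slutsky-type argument yields $r_n(\widehat f_n - f) \rightsquigarrow J_\infty$ in $\mathbb{D}$. The limit $J_\infty$ is tight, hence separable, and it takes values in $\mathbb{D}_0$: by ${\sf S}.3'$ the $G_\infty$-coordinate has almost surely uniformly continuous paths on each $B(\XX_{c\mid d})$, and since $\XX_d$ is finite this gives $G_\infty \in \mathbb{F}_0$ a.s.; the $H_\infty$-coordinate lies in $\mathbb{H}$ a.s.\ by ${\sf S}.4'$. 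Finally, $\widehat f_n$ takes values in $\mathbb{D}_\phi$ with probability approaching one, since $\widehat\Delta \in \mathcal{F}\subseteq\mathbb{F}$ with probability approaching one by hypothesis and $\widehat\mu$ is a distribution belonging to $\mathbb{H}$ by ${\sf S}.4'$.

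The second step is to supply the derivative. Lemma \ref{lemma:dhad}(2) asserts that $(\Delta,\mu) \mapsto \Delta^*_\mu(u)$ is Hadamard differentiable uniformly in $u \in \UU$ at $(\Delta,\mu)$ tangentially to $\mathbb{D}_0$, with continuous linear derivative $\partial_{\Delta,\mu}\Delta^*_\mu(u)$ and with $(u,(G,H)) \mapsto \partial_{\Delta,\mu}\Delta^*_\mu(u)[G,H]$ continuous on $\UU\times\mathbb{D}_0$. Applying Lemma \ref{lemma:HS} with the choices above then produces $r_n\big(\widehat\Delta^*_{\widehat\mu}(u) - \Delta^*_\mu(u)\big) \rightsquigarrow \partial_{\Delta,\mu}\Delta^*_\mu(u)[s_\Delta G_\infty,\, s_\mu H_\infty]$ as a process indexed by $u \in \UU$ in $\ell^\infty(\UU)$, which is exactly \eqref{eq:fclt2}. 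Expanding the derivative via its explicit formula in Lemma \ref{lemma:dhad} would then exhibit the limit as the claimed centered Gaussian process.

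Since all of the analytic work — integration on the conditional level sets $\MM_{\Delta\mid d}(\delta)$, the change-of-variables bounds, and the averaging over the finite set $\XX_d$ — is already packaged inside Lemma \ref{lemma:dhad} (itself proved by conditioning on each value of the discrete component, applying the continuous-case Lemmas \ref{lemma:had0}--\ref{lemma:had1}, and averaging), the only points of care in the theorem proof proper are: (i) carrying out the rescaling/Slutsky step so that $s_\Delta$ and $s_\mu$ enter the limit with the correct weights, and (ii) checking the tangency and separability conditions of Lemma \ref{lemma:HS} together with the compactness of the index set $\UU$ required for the uniform-in-index delta method. I expect neither to be substantive, so the discrete case should be no harder than Theorem \ref{thm:fclt} once Lemma \ref{lemma:dhad} is established.
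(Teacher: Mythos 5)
Your proposal is correct and follows exactly the paper's route: the paper proves Theorem \ref{thm:dfclt} by combining the discrete-covariate Hadamard differentiability result of Lemma \ref{lemma:dhad} with the uniform functional delta method of Lemma \ref{lemma:HS}, with the same identifications of $\phi_\theta$, $\mathbb{D}$, $\mathbb{D}_0$, $f$, $\widehat f_n$, and $J_\infty$ that you give. Your write-up merely spells out the verification steps that the paper leaves implicit.
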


\begin{remark}[Bootstrap FCLT for $\widehat{\Delta_{\mu}^*}(u)$ with discrete $X$] The exchangeable bootstrap is consistent to approximate the distribution of the limit process in \eqref{eq:fclt2} under the same conditions as in Theorem \ref{thm:bfclt}, replacing ${\sf S}.1$-${\sf S}.4$ by ${\sf S}.1'$-${\sf S}.4'$. Accordingly, we do not repeat the statement here. \qed
\end{remark}

\begin{remark}[CA with discrete covariates] The results of the classification analysis can also be extended to the case where $X$ contains discrete components following analogous arguments as for the SPE. We omit the details for the sake of brevity. \qed
\end{remark}

\begin{proof}[Proof of Theorem \ref{thm:dfclt}] The result follows from Lemma \ref{lemma:dhad} and Lemma \ref{lemma:HS}.
\end{proof}

\section{Some Numerical Illustrations}\label{subset:numerical} We evaluate the accuracy of the asymptotic approximations to the distribution of the empirical SPE in small samples using numerical simulations. In particular,  we compare pointwise 95\%  confidence intervals for the SPE based on the asymptotic and exact distributions of the empirical SPE. The exact distribution is approximated numerically  by simulation. The asymptotic distribution is obtained analytically from the FCLT of Theorem \ref{thm:fclt}, and approximated by  bootstrap using Theorem \ref{thm:bfclt}.  We first consider two simulation designs where the limit process in Theorem \ref{thm:fclt} has a  convenient closed-form analytical expression. The designs differ on whether the PE-function $x \mapsto \Delta(x)$ has critical points or not.  We hold fix the values of the covariate vector $X$ in all the calculations, and accordingly we  treat the distribution $\mu$ as known.  For the bootstrap inference, we use empirical bootstrap with $B = 3,000$ repetitions. All the results are based on $3,000$  simulations. The last design is calibrated to mimic the gender wage gap application.


\begin{design}[No critical points] We consider the PE-function
$$
\Delta(x) =   x_1 + x_2, \ \ x = (x_1,x_2),
$$
with the covariate vector $X$ uniformly distributed in $\XX = (-1,1)\times(-1,1)$.  The corresponding SPE is
$$
\Delta^*_{\mu}(u) = 2(\sqrt{2u} -1)1(u \leq 1/2) + 2(1 - \sqrt{2(1-u)})1(u > 1/2),
$$
where we use that $\Delta(X)$ has a triangular distribution with parameters $(-2,0,2)$. The sample size is $n = 441$ and the values of $X$ are held fixed in the grid $\{-1, -0.9, \ldots, 1 \}\times \{-1, -0.9, \ldots, 1 \}$.
Figure \ref{MC1-PE} plots $x \mapsto \Delta(x)$ on $\XX$, and $u \mapsto \Delta^*_{\mu}(u)$ on $(0,1)$. Here we see that  $x \mapsto \Delta(x)$ does not have critical values, and that $u \mapsto \Delta^*_{\mu}(u)$  is a smooth function.

\begin{figure}
\centering
\includegraphics[width=.49\textwidth,height=.6\textwidth]{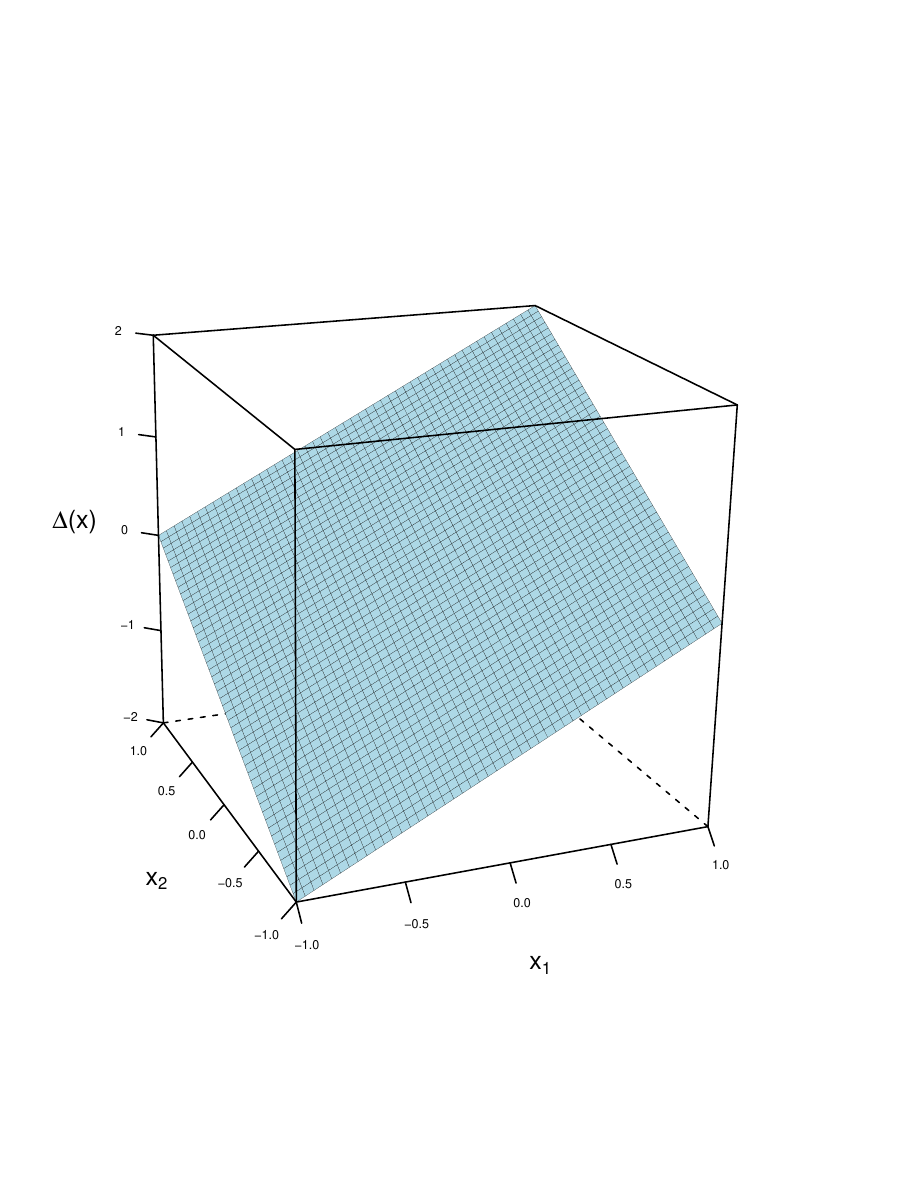}
\includegraphics[width=.49\textwidth,height=.6\textwidth]{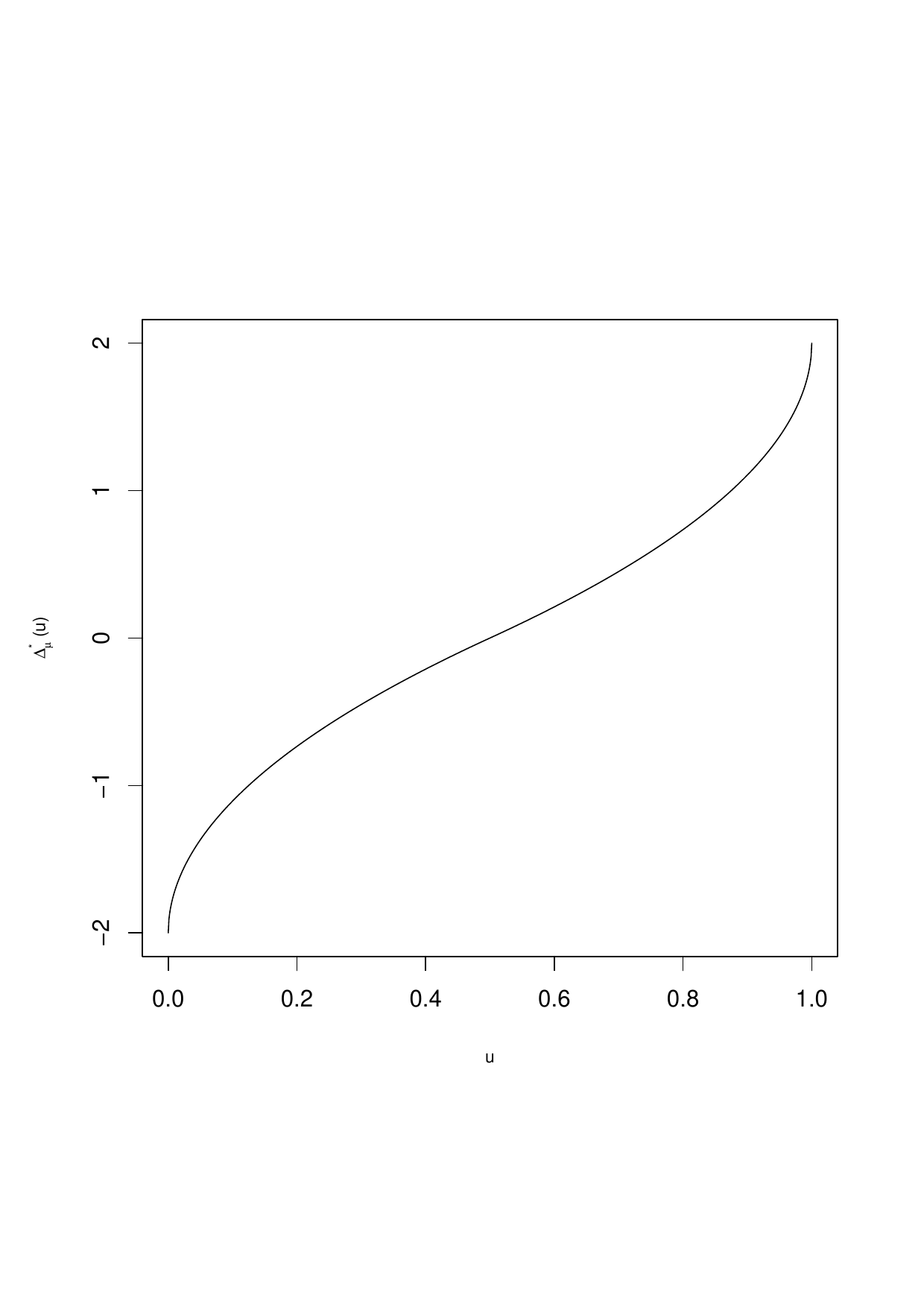}\caption{PE-function and SPE-function  in Design 1. Left: PE function $x \mapsto \Delta(x)$. Right: SPE function $u \mapsto \Delta_{\mu}^*(u)$.}\label{MC1-PE}
\end{figure}

To obtain an analytical expression of the limit  $Z_{\infty}(u)$ of Theorem \ref{thm:fclt}, we make the following  assumption on the  estimator of the PE:
$$
\sqrt{n} (\widehat \Delta(x) - \Delta(x)) = \exp[\Delta(x)]\sum_{i=1}^n Z_i/\sqrt{n},
$$
where $Z_1,\ldots,Z_n$ is an i.i.d. sequence of standard normal random variables. Hence
$$
Z_{\infty}(u) \sim N(0, \exp[2\Delta^*_{\mu}(u)]),
$$
so that $\widehat \Delta^*_{\mu}(u) \overset{a}{\sim} N(\Delta^*_{\mu}(u), \exp[2\Delta^*_{\mu}(u)]/n)$, where $ \overset{a}{\sim}$ denotes asymptotic approximation to the distribution.

Table \ref{MC1-SD} reports biases and compares the standard deviations of the empirical SPE  with the asymptotic standard deviations, $\exp[\Delta^*_{\mu}(u)]/\sqrt{n}$, at the  quantile indices  $u \in \{0.1, 0.2, \ldots, 0.9 \}$. The biases are small relative to dispersions and the asymptotic approximations are very close to the exact standard deviations. We also find that 95\% confidence intervals constructed using the asymptotic approximations, $\widehat \Delta^*_{\mu}(u)  \ \pm  1.96 \exp[\Delta^*_{\mu}(u)] /\sqrt{n}$, have  coverage probabilities close to their nominal levels at all indices.  These asymptotic confidence intervals are not feasible in general, either because  $\Delta^*_{\mu}(u)$ are unknown or more generally because it is not possible to characterize  analytically the distribution of $Z_{\infty}(u)$.  In practice we propose  approximating this distribution by bootstrap. In this case the empirical bootstrap version of the empirical SPE is constructed from the bootstrap PE
$$
\widetilde \Delta(x) =  \Delta(x) + \exp[\Delta(x)] \sum_{i=1}^n \omega_i Z_i/n,
$$
where $(\omega_1, \ldots, \omega_n)$ is a multinomial vector with dimension $n$ and probabilities $(1/n, \ldots, 1/n)$ independent of $Z_1, \ldots, Z_n$.
 The last column of the table shows that the empirical coverages of bootstrap 95\% confidence intervals are  close to their nominal levels at all quantile indices.

%
%
%
%

\begin{table}[ht] \caption{Properties of Empirical SPE in Design 1}
\centering
\begin{tabular}{cccccc}
  \hline\hline
  & \multicolumn{1}{c} {Bias } & \multicolumn{2}{c} {Std. Dev.}  & \multicolumn{2}{c}{Pointwise Coverage (\%)} \\
  \hline
 $u$ & ($\times$ 100) & Exact & Asymptotic & Asymptotic & Bootstrap$^{\dag}$ \\
  \hline
0.1 & 0.016 & 0.014 & 0.014 & 95.10 & 95.03 \\
  0.2 & 0.024 & 0.021 & 0.021 & 95.10 & 95.03 \\
  0.3 & 0.032 & 0.029 & 0.029 & 95.10 & 95.03 \\
  0.4 & 0.044 & 0.039 & 0.039 & 95.10 & 95.03 \\
  0.5 & 0.053 & 0.047 & 0.048 & 95.10 & 95.03 \\
  0.6 & 0.065 & 0.058 & 0.058 & 95.10 & 95.03 \\
  0.7 & 0.088 & 0.078 & 0.079 & 95.10 & 95.03 \\
  0.8 & 0.119 & 0.105 & 0.106 & 95.10 & 95.03 \\
  0.9 & 0.177 & 0.157 & 0.158 & 95.10 & 95.03 \\
   \hline\hline
    \multicolumn{6}{l} {\footnotesize{Notes: $3,000$ simulations with sample size $n = 441$.}}\\
    \multicolumn{6}{l} {\footnotesize{$^\dag$3,000 bootstrap repetitions. Nominal level is 95\%.}}\\
\end{tabular}\label{MC1-SD}
\end{table}

\end{design}

\begin{design}[Critical points] We consider the PE-function
$$
\Delta(x) =   x^3 - 3x,
$$
with covariate $X$ uniformly distributed on $\XX = (-3,3)$.  Figure \ref{MC2-PE} plots $x \mapsto \Delta(x)$ on $\XX$, and $u \mapsto \Delta^*_{\mu}(u)$ on $(0,1)$.\footnote{We obtain  $u \mapsto \Delta^*_{\mu}(u)$ analytically using the characterization of \citen{CFG-10} for the univariate case.} Here we see that  $x \mapsto \Delta(x)$ has two critical points at $x=-1$ and $x=1$ with corresponding critical values at $\delta = 2$ and $\delta = -2$. The SPE-function $u \mapsto \Delta^*_{\mu}(u)$  has two kinks at $u = 1/6$ and $u = 5/6$, the $\Delta^*_{\mu}$ pre-images of the critical values.

\begin{figure}
\centering
\includegraphics[width=.49\textwidth,height=.6\textwidth]{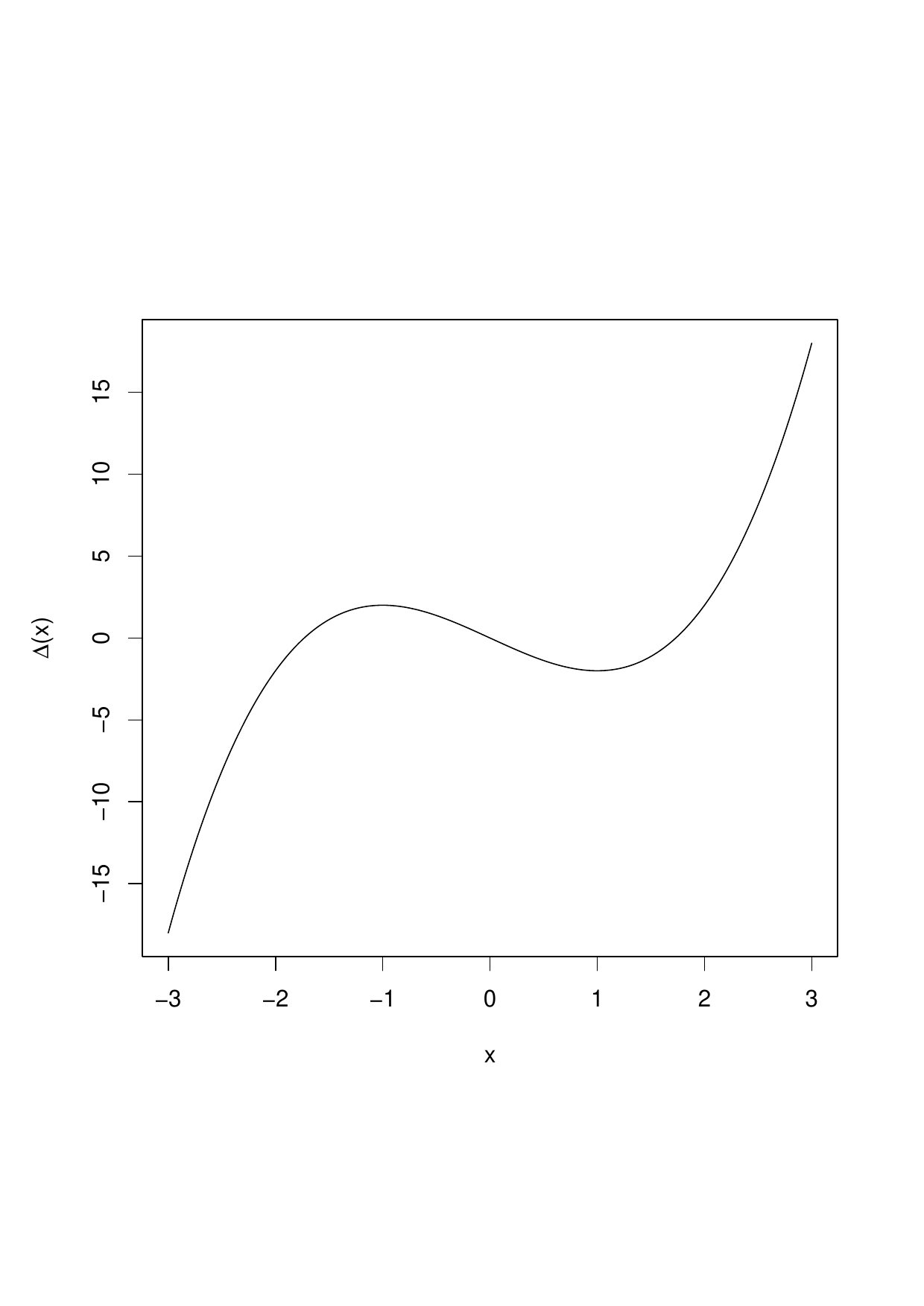}
\includegraphics[width=.49\textwidth,height=.6\textwidth]{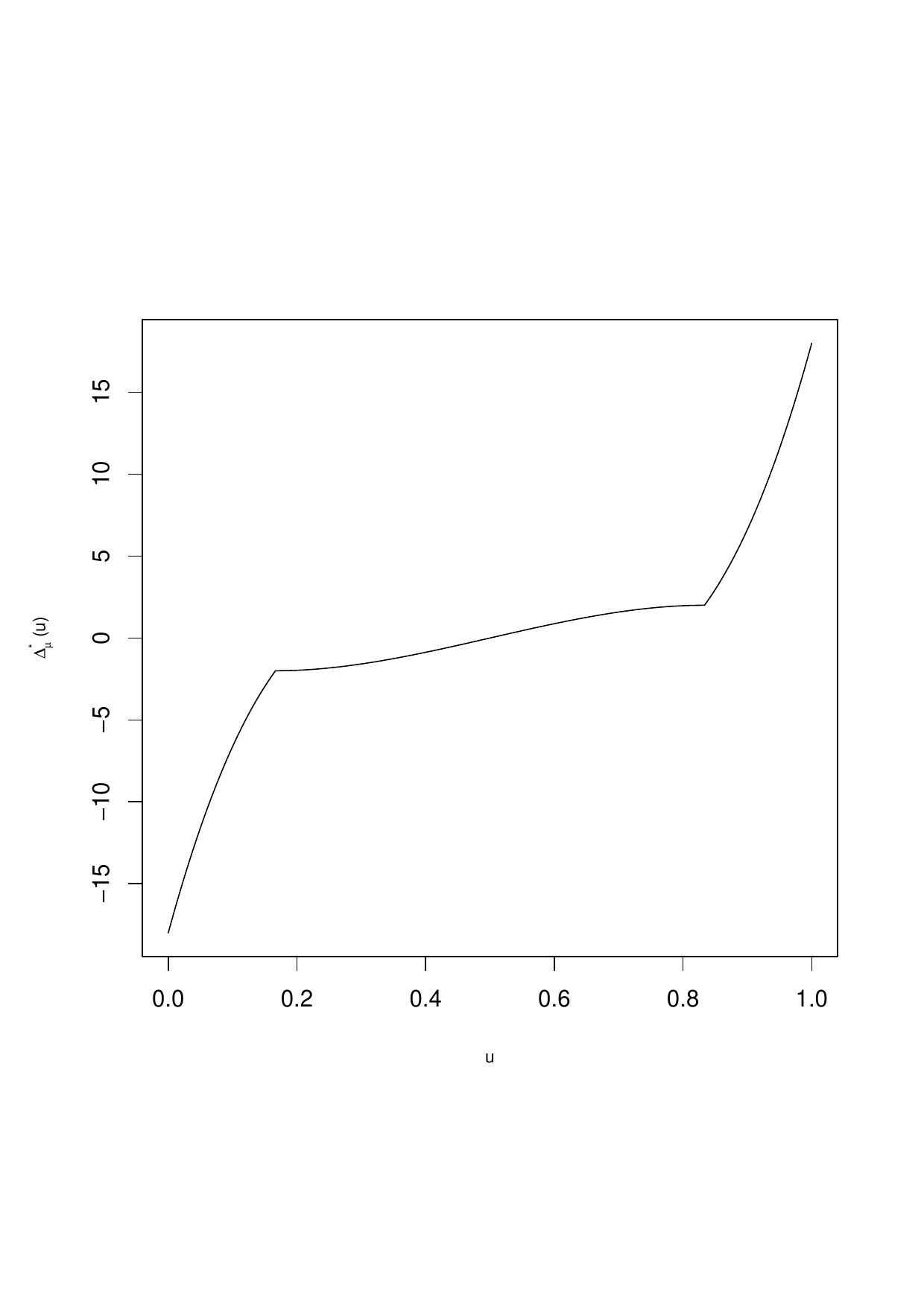}\caption{PE-function and SPE-function in Design 2. Left: PE function $x \mapsto \Delta(x)$. Right: SPE function $u \mapsto \Delta_{\mu}^*(u)$.}\label{MC2-PE}
\end{figure}

To obtain an analytical expression of the limit  $Z_{\infty}(u)$ of Theorem \ref{thm:fclt}, we make the following  assumption on the estimator of the PE:
$$
\sqrt{n} (\widehat \Delta(x) - \Delta(x))  =(x/2)^2 \sum_{i=1}^n Z_i/\sqrt{n},
$$
where $Z_1,\ldots,Z_n$ is an i.i.d. sequence of standard normal variables. This assumption is analytically convenient because  after some  calculations we find that for $u \notin \{1/6,5/6\}$,
$$
Z_{\infty}(u) \sim N(0, S(\Delta^*_{\mu}(u))^2/(4n)),
$$
where
$$S(\delta) = 1(\delta < -2) \breve{\Delta} _1(\delta)^2 + 1(-2 < \delta < 2) \sum_{k=1}^3 \frac{ \breve{\Delta} _k(\delta)^2 | \breve{\Delta} _k(\delta)^2 -1|^{-1}}{\sum_{j=1}^3 | \breve{\Delta} _j(\delta)^2 -1|^{-1}} + 1(\delta > 2) \breve{\Delta} _1(\delta)^2,
$$
and $\breve{\Delta} _1(\delta)$, $\breve{\Delta} _2(\delta)$ and $\breve{\Delta} _3(\delta)$ are  real roots of $\Delta(x) - \delta = 0$ sorted in increasing order.\footnote{The equation $\Delta(x) - \delta = x^3 -  3x - \delta = 0$ has three real roots when $\delta \in (-2,2)$, and one real root when $\delta < -2$ or $\delta > 2$.}
Hence,  $\widehat \Delta^*_{\mu}(u) \overset{a}{\sim} N(\Delta^*_{\mu}(u), S(\Delta^*_{\mu}(u))^2/(4n))$.

Table \ref{MC2-SD} reports biases and compares the standard deviations of the empirical SPE in samples of size $n = 601$ with the asymptotic standard deviations at the  quantile indices  $u \in \{1/12, 2/12, \ldots, 11/12 \}$, where the values of $X$ are held fixed in the grid $\{-3, -2.99, \ldots, 3\}$. The biases are small relative to dispersion except at the kinks $u = 1/6$ and $u = 5/6$ . The asymptotic approximation is  close to the exact standard deviation, except for the quantiles at the kinks where the asymptotic standard deviations are  not well-defined because $\breve{\Delta} _k(\delta)^2 -1 = 0$. We also find that pointwise 95\% confidence intervals constructed using the asymptotic distribution and empirical bootstrap  have  coverage probabilities close to their nominal levels.   Interestingly,  the bootstrap  provides coverages  close to the nominal levels even at the kinks.

\begin{table}[ht]\caption{Properties of Empirical SPE in Design 2}
\begin{center}
\begin{tabular}{cccccc}
  \hline\hline
  & \multicolumn{1}{c} {Bias } & \multicolumn{2}{c} {Std. Dev.}  & \multicolumn{2}{c}{Pointwise Coverage (\%)} \\
  \hline
 $u$ & ($\times$ 100) & Exact & Asymptotic & Asymptotic & Bootstrap$^{\dag}$ \\
  \hline
  1/12 & 0.068 & 0.126 & 0.127 & 95.67 & 95.80 \\
  1/6 & -2.393 & 0.054 & -- & -- & 95.67 \\
  1/4 & -0.005 & 0.025 & 0.025 & 95.83 & 95.77 \\
  1/3 & -0.016 & 0.028 & 0.028 & 95.80 & 95.90 \\
  5/12 & 0.045 & 0.030 & 0.030 & 95.63 & 95.47 \\
  1/2 & 0.023 & 0.030 & 0.031 & 92.73 & 97.53 \\
  7/12 & -0.020 & 0.030 & 0.030 & 95.20 & 95.80 \\
  2/3 & 0.049 & 0.028 & 0.028 & 95.53 & 95.67 \\
  3/4 & 0.039 & 0.025 & 0.025 & 95.53 & 95.73 \\
  5/6 & 2.447 & 0.053 & -- & -- & 95.73 \\
  11/12 & 0.068 & 0.126 & 0.127 & 95.67 & 95.80 \\
   \hline\hline
    \multicolumn{6}{l} {\footnotesize{Notes: $3,000$ simulations with sample size $n = 601$.}}\\
    \multicolumn{6}{l} {\footnotesize{$^\dag$3,000 bootstrap repetitions. Nominal level is 95\%.}}\\
\end{tabular}\label{MC2-SD}
\end{center}
\end{table}


\end{design}

\begin{design}[Calibration to CPS data] This design is calibrated to the interactive linear model with additive error  for the conditional expectation in the gender wage gap application of Section \ref{sec:empirics}. More specifically, we generate log wages as
$$
Y_i = P(T_i,W_i)'\beta + \sigma \varepsilon_i, \ \ i = 1,\ldots,n,
$$
where the covariates $X_i = (T_i,W_i)$ are fixed to the values in the 2015 CPS data set,  $P(T,W) = (TW, (1-T)W)$, $\beta$ and $\sigma^2$ are the least squares estimates of the regression coefficients and residual variance in the data set, $(\varepsilon_1,\ldots,\varepsilon_n)$ is a sequence of i.i.d. standard normal random variables independent of $X_i$, and $n = 32,523$, the sample size in the application. For each simulated sample $\{(Y_i,X_i) : 1 \leq i \leq n\}$, we reestimate the model by least squares, obtain the empirical   SPE-function on the treated over a grid of percentile indexes $\UU = \{0.02, 0.03, \ldots, 0.98\}$, and construct a 90\% uniform  confidence band for the SPE-function using Algorithm \ref{alg:computation} with standard exponential weights and $B=200$. We repeat this procedure $500$ times.

\begin{figure}
\centering
\includegraphics[width=.32\textwidth,height=.45\textwidth]{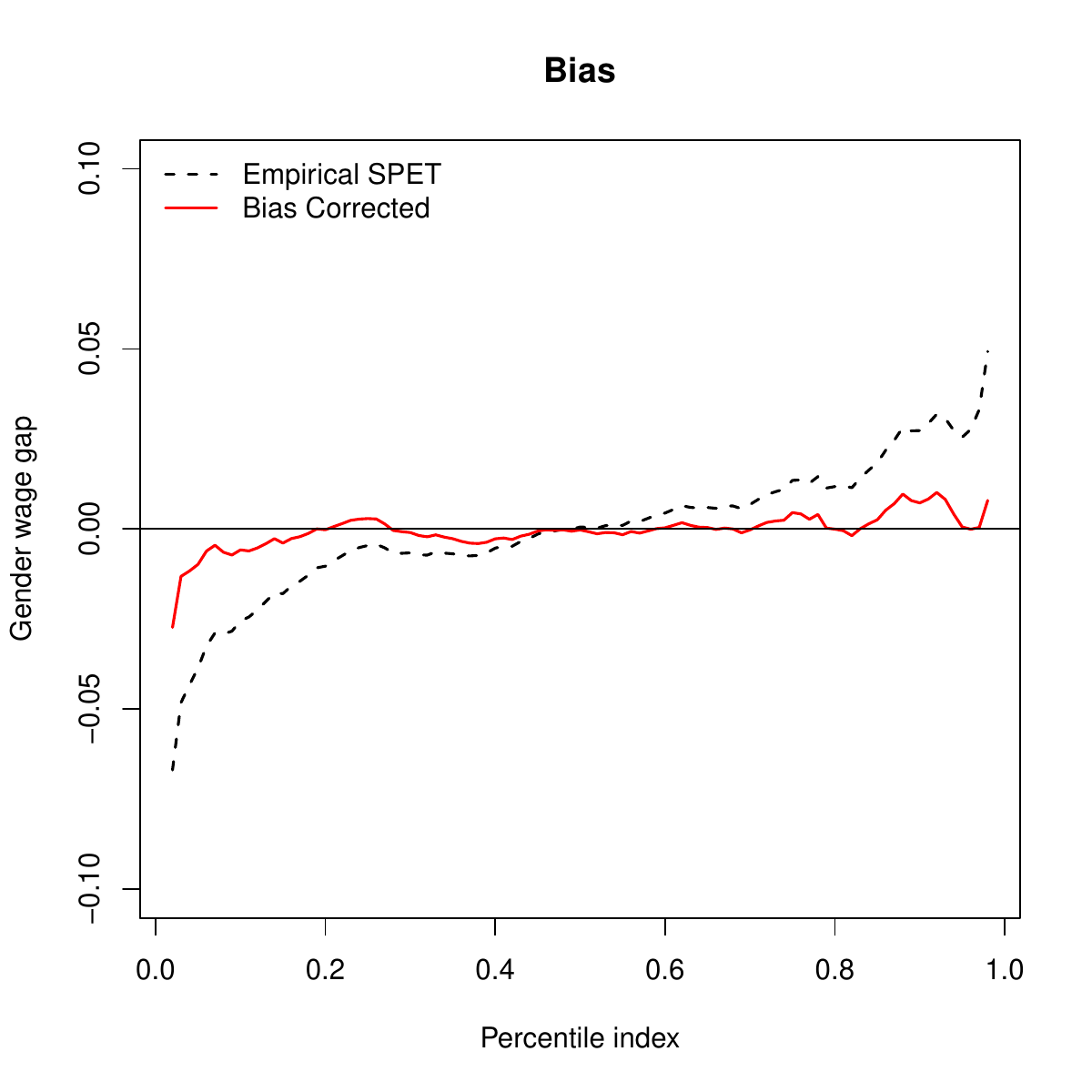}
\includegraphics[width=.32\textwidth,height=.45\textwidth]{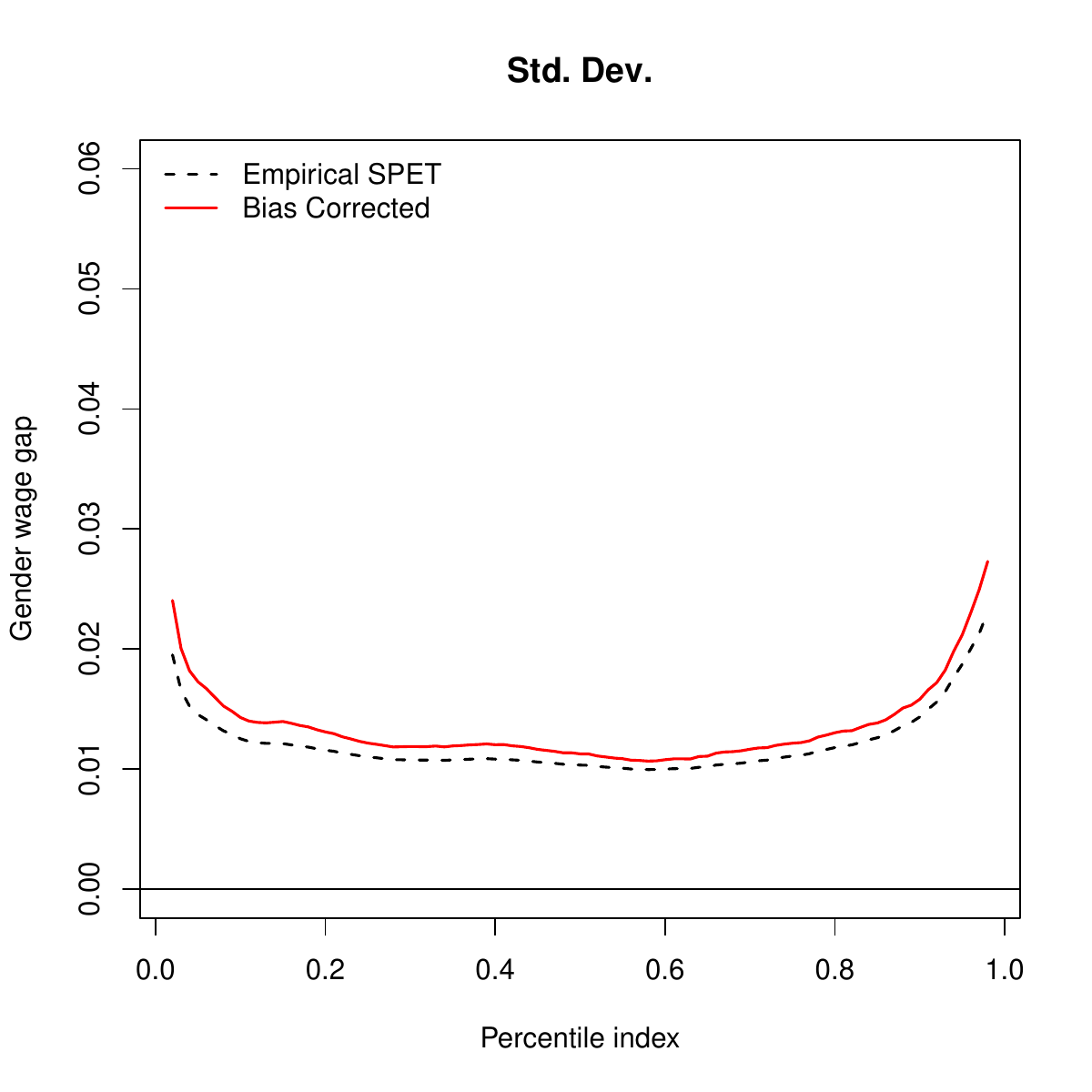}
\includegraphics[width=.32\textwidth,height=.45\textwidth]{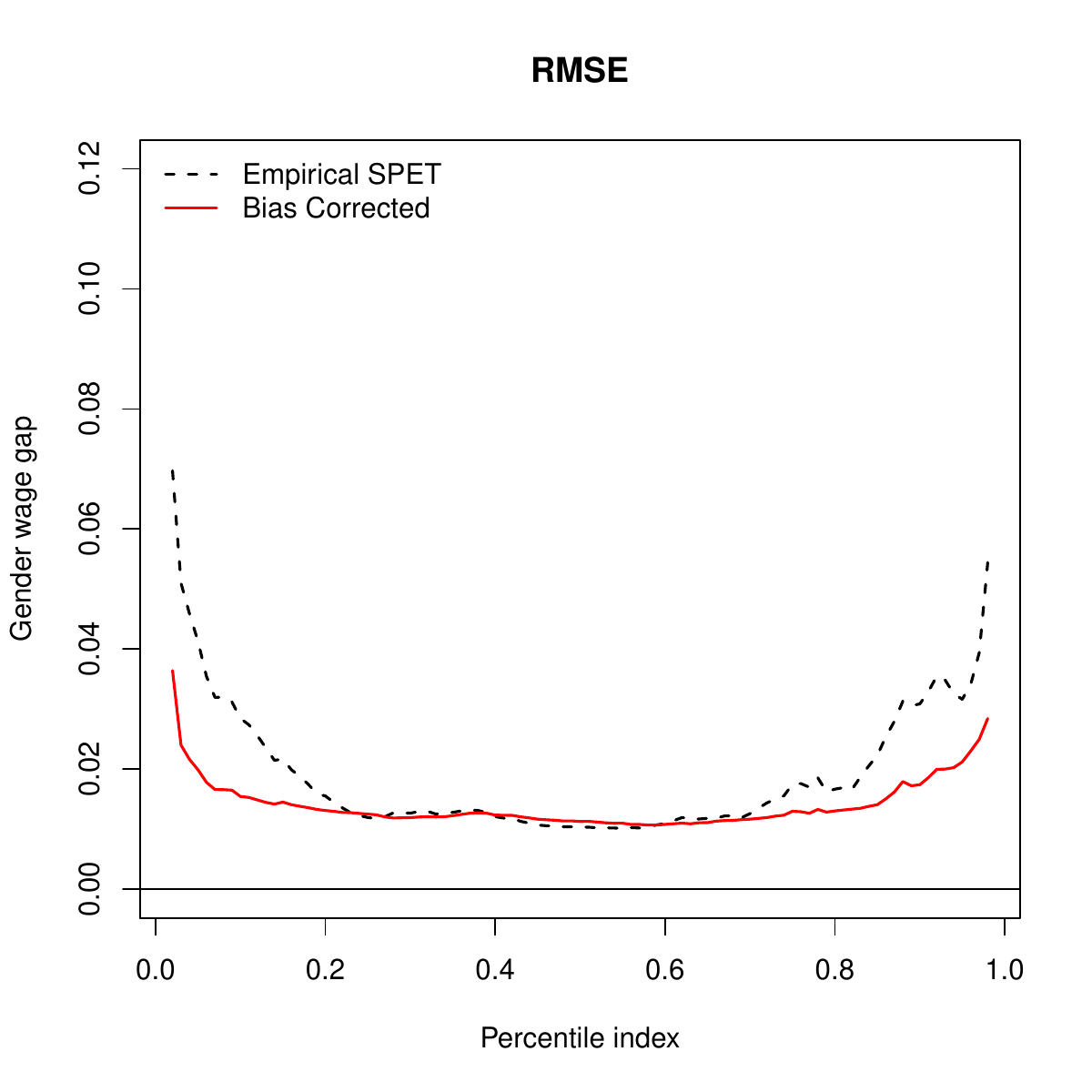}
\caption{Bias, standard deviation and root mean square error of empirical and bias corrected SPE functions. Results obtained from 500 repetitions of a design calibrated to the CPS 2015 data.}\label{MC3-bias}
\end{figure}

Figure  \ref{MC3-bias} reports bias, standard deviation (Std. Dev.) and root mean squared error (RMSE) of the empirical and bias corrected SPE functions, see Remark \ref{re:fsc}. We find that the empirical SPE displays negative bias in the lower tail and positive bias in the upper tail, which are reduced by the bootstrap bias correction. The correction slightly increases  dispersion, but reduces overall rmse for most percentiles, specially at the tails.   Table  \ref{MC3-bias}  reports the empirical coverage of 90\% confidence bands constructed around the empirical and
bias corrected SPE functions. Here we find that the uncorrected bands undercover the entire SPE function, whereas the corrected bands have coverage above the nominal level. One possible reason for the overcoverage of the bootstrap corrected bands is that we keep the covariates fixed across samples, which is not accounted by the bootstrap procedure. To sum up, we find that the bootstrap corrections of Remark \ref{re:fsc} reduce the bias of the empirical SPE and improve the coverage of the confidence bands in finite samples.


\begin{table}[ht]
\centering\caption{Coverage of 90\% Confidence Bands} \label{MC3-cover}
\begin{tabular}{rrr}
  \hline
 & Uncorrected & Bootstrap Bias Corrected \\
  \hline
Coverage & 0.80 & 0.98  \\
   \hline
   \multicolumn{3}{l}{
   \footnotesize{Notes: 500 simulations and 200 bootstrap repetitions.}}\\
   \multicolumn{3}{l}{
   \footnotesize{DGP calibrated to CPS 2015.}}
\end{tabular}
\end{table}

\end{design}

\section{Effect of Race on Mortgage Denials}\label{sec:mortgage}

To study the effect of race in the bank decisions of mortgage denials or racial mortgage denial gap, we use data on mortgage applications in Boston from 1990 (see \citen{MTBM1996}). The Federal Reserve Bank of Boston collected these data  in relation to the Home Mortgage Disclosure Act (HMDA), which was passed to monitor minority access to the mortgage market.
Providing better access to credit markets can arguably help the disadvantaged groups escape poverty traps.  Following \citen[Chap 11]{StockWatson}, we focus on white and black applicants for single-family residences. The sample includes $2,380$ observations corresponding to $2,041$ white applicants and $339$ black applicants.

We estimate a binary response model where the outcome variable $Y$ is an indicator for mortgage denial, the key covariate $T$ is an indicator for the applicant being black, and the controls $W$ contain financial and other characteristics of the applicant that banks take into account in the mortgage decisions. These include the monthly debt to income ratio; monthly housing expenses to income ratio; a categorial variable for ``bad'' consumer credit score with 6 categories (1 if no slow payments or delinquencies, 2 if one or two slow payments or delinquencies, 3 if more than two slow payments or delinquencies, 4 if insufficient credit history for determination, 5 if delinquent credit history with payments 60 days overdue, and 6 if delinquent credit history with payments 90 days overdue); a categorical variable for ``bad'' mortgage credit score with 4 categories (1 if no late mortgage payments, 2 if no mortgage payment history, 3 if one or two late mortgage payments, and 4 if more than two late mortgage payments); an indicator for public record of credit problems including bankruptcy, charge-offs, and collective actions; an indicator for denial of application for mortgage insurance; two indicators for medium and high loan to property value ratio, where medium is between .80 and .95 and high is above .95; and three indicators for self-employed, single, and high school graduate.

\begin{table}[htbp]\caption{\label{table:ds} Descriptive Statistics of Mortgage Applicants}
\centering
\begin{tabular}{lccc} \hline\hline
 & All & Black & White \\
 \hline
 Deny & 0.12 & 0.28 & 0.09 \\
  Black & 0.14 & 1.00 & 0.00 \\
  Debt-to-income ratio & 0.33 & 0.35 & 0.33 \\
  Expenses-to-income ratio & 0.26 & 0.27 & 0.25 \\
  Bad consumer credit & 2.12 & 3.02 & 1.97 \\
  Bad mortgage credit & 1.72 & 1.88 & 1.69 \\
  Credit problems & 0.07 & 0.18 & 0.06 \\
  Denied mortgage insurance & 0.02 & 0.05 & 0.02 \\
  Medium loan-to-value ratio & 0.37 & 0.56 & 0.34 \\
  High loan-to-value ratio & 0.03 & 0.07 & 0.03 \\
  Self-employed & 0.12 & 0.07 & 0.12 \\
  Single & 0.39 & 0.52 & 0.37 \\
  High school graduate & 0.98 & 0.97 & 0.99 \\
  \hline
   number of observations & 2,380 & 339 & 2,041 \\ \hline\hline
\end{tabular}
\end{table}

Table \ref{table:ds} reports the sample means of the variables used in the analysis. The probability of having the mortgage denied is $19\%$ higher for black applicants than for white applicants. However, black applicants are more likely to have socio-economic characteristics linked to a denial of the mortgage.

 \begin{figure}
\centering
\includegraphics[width=.45\textwidth,height=.45\textwidth]{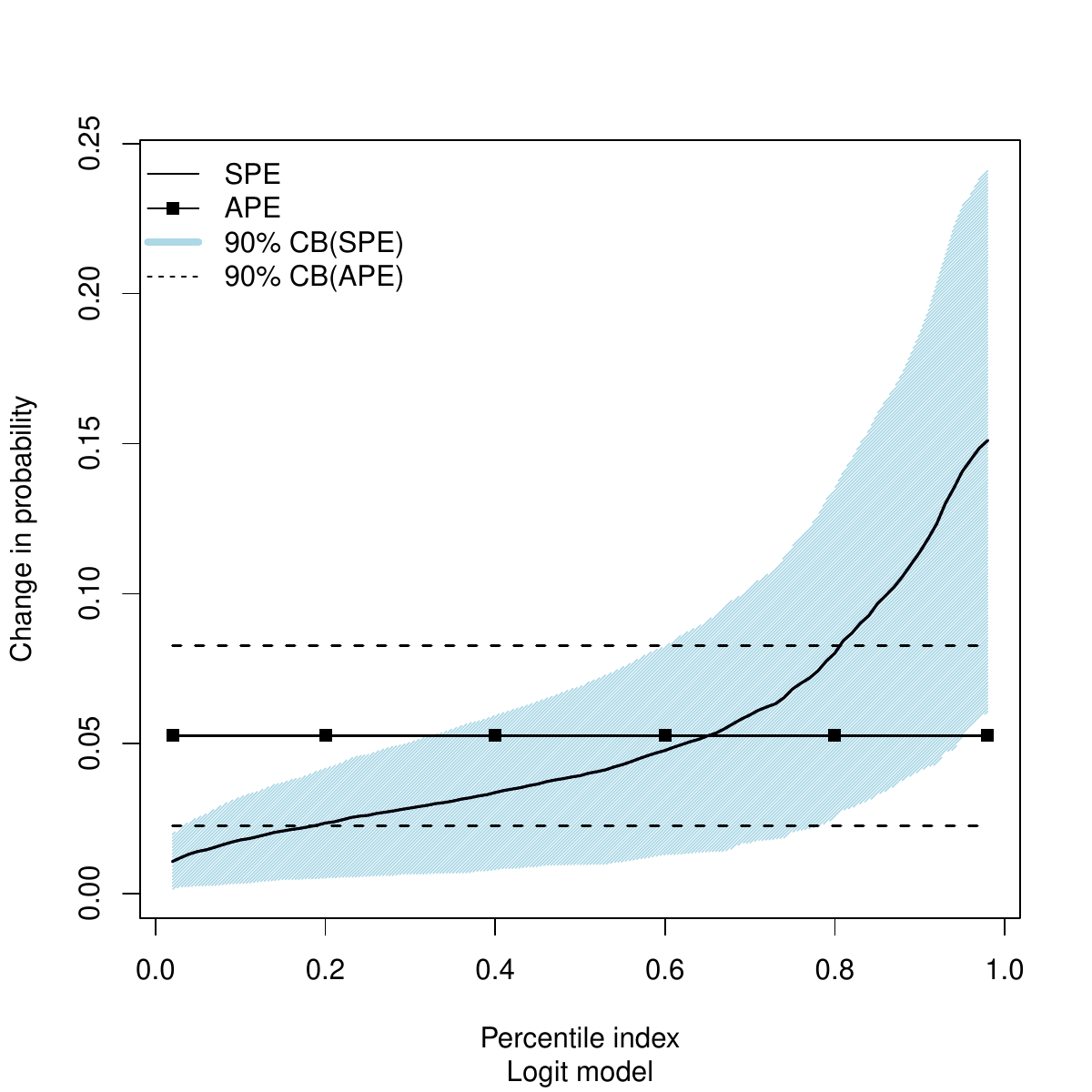}
\caption{APE and SPE (introduced in this paper) of being
black on the probability of mortgage denial.  Estimates and 90\% bootstrap uniform confidence bands (derived in this paper) based on a logit model are shown. }\label{fig:spe-black}
\end{figure}

Figure \ref{fig:spe-black} plots estimates and 90\% confidence sets of the population  APE and SPE-function of being black.  The  PEs are obtained as described in Example \ref{example:binary} of the main text using  a logit model with $P(X) = X = (T,W)$ and $\widehat \mu$ equal to the empirical distribution of $X$ in the whole sample. The confidence bands are constructed using Algorithm \ref{alg:computation} with multinomial weights (empirical bootstrap) and  $B=500$, and are uniform for the SPE-function over the grid $\mathcal{U} = \{.02, .03, \ldots, .98\}$.   We monotonize the bands using the rearrangement method of \citen{cfg-09}. 
After controlling for applicant characteristics, black applicants are still on average 5.3\% more likely to have the mortgage denied than white applicants.  Moreover, the SPE-function shows significant heterogeneity, with the PE ranging between 0 and  15\%.  Thus,
there exists a subgroup of applicants that is $15\%$ more likely to be denied a mortgage if they were black, and  there is a subgroup of  applicants that is not affected by racial mortgage denial gap.
Table \ref{table:class} shows  the results of the classification analysis, answering the question ``who is affected the most and who the least?"  The table shows
that the 10\% of the applicants \textit{most affected}  by racial mortgage denial gap are  \textit{more likely} to have either of the following characteristics relative to the 10\% of the \textit{least affected} applicants: self employed, single,  black, high debt to income ratio, high expense to income ratio, high loan to value ratio, medium or high loan-to-income ratio, bad consumer or credit scores, and credit problems.


\begin{table}[htbp]\caption{\label{table:class} Who is affected the most and who the least?  Classification Analysis -- Averages of Characteristics of the Mortgage  Applicants Least and Most Affected by Racial Discrimination}
\begin{center}
\begin{tabular}{llclc} \hline\hline 
Characteristics & \multicolumn{2}{c}{10\% Most Affected} & \multicolumn{2}{c}{10\% Least Affected}\\
 of the Group  & \multicolumn{2}{c}{PE  $> .11$} &  \multicolumn{2}{c}{PE $< .018$} \\ 
 \hline
  Deny & 0.44 & (0.03) & 0.11 & (0.04) \\ 
  Black & 0.37 & (0.04) & 0.07 & (0.02) \\ 
  Debt-to-income  & 0.39 & (0.01) & 0.25 & (0.02) \\ 
  Expenses-to-income & 0.28 & (0.01) & 0.21 & (0.02) \\ 
  Bad consumer credit & 4.64 & (0.25) & 1.31 & (0.09) \\ 
  Bad mortgage credit & 1.99 & (0.07) & 1.37 & (0.12) \\ 
  Credit problems & 0.45 & (0.05) & 0.05 & (0.02) \\ 
  Denied mortgage insurance & 0.01 & (0.01) & 0.06 & (0.04) \\ 
  Medium loan-to-house & 0.58 & (0.06) & 0.07 & (0.04) \\ 
  High loan-to-house& 0.13 & (0.03) & 0.02 & (0.01) \\ 
  Self employed & 0.18 & (0.05) & 0.05 & (0.03) \\ 
  Single & 0.59 & (0.05) & 0.11 & (0.06) \\ 
  High school grad & 0.93 & (0.03) & 1.00 & (0.01) \\ 
   \hline\hline
   \multicolumn{5}{l}{Std. errors in parentheses obtained by bootstrap with 200 repetitions.}
\end{tabular}\end{center}
\end{table}

\bibliographystyle{econometrica}
\bibliography{mybibVOLUME}

\end{document}